\def\llncs{0}											
\def\LNCSpreview{0}								
\def\papertype{0}								   
\def\fullpage{1}									 
\def\pagelimit{}									 
\def\anonymous{0}								 
\def\acknowledgments{0}						 
\def\overflow{0}									 
\def\showlabels{0}								   
\def\authnotes{0}									
\def\dieordollar{0}									
\def\notxfont{0}
\def\stuffedtitlepage{0}						 
\def\abbrevref{0}									
\def\dropargs{0}									
\def\rmvtheoremspace{0}						 
\def\allowbreaks{0}								  
\def\choosebibstyle{1}							
\def\changefont{0}								  
\def\submission{0}
\def\cameraready{0}
\def\noaux{1}

\ifnum\submission=1
\def\llncs{1}
\def\anonymous{1}
\def\authnotes{0}
\def\choosebibstyle{0}
\def\noaux{0}
\else
\fi

\ifnum\cameraready=1
\def\submission{1}
\def\llncs{1}
\def\authnotes{0}
\def\choosebibstyle{0}
\def\acknowledgments{1}	
\else
\fi

\ifnum\anonymous=1
\def\authnotes{0}
\else
\fi

\ifnum\authnotes=0  
\newcommand{\fuyuki}[1]{}
\newcommand{\ryo}[1]{}
\else
\newcommand{\fuyuki}[1]{$\ll$\textsf{\color{blue} Fuyuki: { #1}}$\gg$}
\newcommand{\ryo}[1]{$\ll$\textsf{\color{red} Ryo: { #1}}$\gg$}
\fi



\def\confvers{0}										   

\ifnum\llncs=1
\def\titletext{											
Functional Encryption with Secure Key Leasing
}
\def\runningtitle{									
Functional Encryption with Secure Key Leasing
}
\else
\def\titletext{											
Functional Encryption with Secure Key Leasing
}
\def\runningtitle{									
}
\fi

\date{}										  


\def\choosepubinfo{5}						   

\def\pubinfoYEAR{2016}								
\def\pubinfoSUBMISSIONDATE{}		  
\def\pubinfoDOI{ }								 
\def\pubinfoBIBDATA{ }						 
\def\pubinfoCONFERENCE{CRYPTO}				

\ifnum\submission=1
\ifnum\noaux=1
\def\pubinfoindividual{}
\else
\def\pubinfoindividual{{\color{red}{\textbf{We attached the full version of this paper as a supplementary material}}}.}
\fi
\else
\def\pubinfoindividual{}
\fi

\makeatletter
\expandafter\newcommand{\createauthor}[5]{%
	\@namedef{#1name}{#2}%
	\@namedef{#1running}{#3}%
	\@namedef{#1institute}{#4}%
	\@namedef{#1thanks}{#5}%
}

\expandafter\newcommand{\createinstitute}[4]{%
	\@namedef{#1instname}{#2}%
	\@namedef{#1mail}{#3}%
	\@namedef{#1number}{#4}%
}
\makeatother

\newcounter{authorcount}
\newcommand{\newauthor}[4]{
	\stepcounter{authorcount}
	\createauthor{\theauthorcount}{#1}{#2}{#3}{#4}
}

\newcounter{institutecount}
\newcommand{\newinstitute}[3]{
	\stepcounter{institutecount}
	\createinstitute{\theinstitutecount}{#1}{#2}{#3}
}



\newauthor{Fuyuki~Kitagawa}{F.~Kitagawa}{1}{}
\newauthor{Ryo~Nishimaki}{R.~Nishimaki}{1}{}


\newinstitute{NTT Social Informatics Laboratories, Tokyo, Japan}{\{fuyuki.kitagawa.yh,ryo.nishimaki.zk\}@hco.ntt.co.jp}{1}


\def\contactmail{}


\def

Secure software leasing is a quantum cryptographic primitive that enables us to lease software to a user by encoding it into a quantum state. Secure software leasing has a mechanism that verifies whether a returned software is valid or not. The security notion guarantees that once a user returns a software in a valid form, the user no longer uses the software.

In this work, we introduce the notion of secret-key functional encryption (SKFE) with secure key leasing, where a decryption key can be securely leased in the sense of secure software leasing. We also instantiate it with standard cryptographic assumptions. More specifically, our contribution is as follows.
\begin{itemize}
\item We define the syntax and security definitions for SKFE with secure key leasing.
\item We achieve a transformation from standard SKFE into SKFE with secure key leasing \emph{without using additional assumptions}. Especially, we obtain bounded collusion-resistant SKFE for $\Ppoly$ with secure key leasing based on post-quantum one-way functions since we can instantiate bounded collusion-resistant SKFE for $\Ppoly$ with the assumption.
\end{itemize}
Some previous secure software leasing schemes capture only pirate software that runs on an \emph{honest} evaluation algorithm (on a legitimate platform). However, our secure key leasing notion captures arbitrary attack strategies and does not have such a limitation.

As an additional contribution, we introduce the notion of single-decryptor FE (SDFE), where each functional decryption key is copy-protected. Since copy-protection is a stronger primitive than secure software leasing, this notion can be seen as a stronger cryptographic primitive than FE with secure key leasing. More specifically, our additional contribution is as follows.
\begin{itemize}
\item We define the syntax and security definitions for SDFE.
\item We achieve collusion-resistant single-decryptor PKFE for $\Ppoly$ from post-quantum indistinguishability obfuscation and quantum hardness of the learning with errors problem.
\end{itemize}

{

Secure software leasing is a quantum cryptographic primitive that enables us to lease software to a user by encoding it into a quantum state. Secure software leasing has a mechanism that verifies whether a returned software is valid or not. The security notion guarantees that once a user returns a software in a valid form, the user no longer uses the software.

In this work, we introduce the notion of secret-key functional encryption (SKFE) with secure key leasing, where a decryption key can be securely leased in the sense of secure software leasing. We also instantiate it with standard cryptographic assumptions. More specifically, our contribution is as follows.
\begin{itemize}
\item We define the syntax and security definitions for SKFE with secure key leasing.
\item We achieve a transformation from standard SKFE into SKFE with secure key leasing \emph{without using additional assumptions}. Especially, we obtain bounded collusion-resistant SKFE for $\Ppoly$ with secure key leasing based on post-quantum one-way functions since we can instantiate bounded collusion-resistant SKFE for $\Ppoly$ with the assumption.
\end{itemize}
Some previous secure software leasing schemes capture only pirate software that runs on an \emph{honest} evaluation algorithm (on a legitimate platform). However, our secure key leasing notion captures arbitrary attack strategies and does not have such a limitation.

As an additional contribution, we introduce the notion of single-decryptor FE (SDFE), where each functional decryption key is copy-protected. Since copy-protection is a stronger primitive than secure software leasing, this notion can be seen as a stronger cryptographic primitive than FE with secure key leasing. More specifically, our additional contribution is as follows.
\begin{itemize}
\item We define the syntax and security definitions for SDFE.
\item We achieve collusion-resistant single-decryptor PKFE for $\Ppoly$ from post-quantum indistinguishability obfuscation and quantum hardness of the learning with errors problem.
\end{itemize}

}

\def\keywords{
Functional encryption, secure leasing, copy-protection, quantum cryptography
}


\def\acknowledgmenttext{
}


\def\authorsofpdf{
\ifcsname 5name\endcsname
	\csname 1name\endcsname~et~al.
\else
	\ifcsname 1name\endcsname
		\csname 1name\endcsname
		\ifcsname 3name\endcsname
			,\ \csname 2name\endcsname
			\ifcsname 4name\endcsname
				,\ \csname 3name\endcsname\ and \csname 4name\endcsname
			\else%
				\ and\ \csname 3name\endcsname
			\fi
		\else
			\ifcsname 2name\endcsname
				\csname 2name\endcsname
			\fi
		\fi
	\fi
\fi
}
\ifnum\LNCSpreview=1
	\def\llncs{1}
\fi

\let\accentvec\vec

\ifnum\llncs=1
	\documentclass[envcountsect,a4paper,runningheads,10pt]{llncs}
\else
	\ifnum\papertype=0
		\documentclass[a4paper,11pt]{article}
	\else
		\documentclass[letterpaper,11pt]{article}
	\fi

	\ifnum\fullpage=1
		\usepackage{fullpage}
	\fi
\fi

\let\vec\accentvec
\usepackage[dvipsnames]{xcolor}
\definecolor{darkblue}{rgb}{0,0,0.6}
\definecolor{darkgreen}{rgb}{0,0.5,0}
\definecolor{darkviolet}{RGB}{130,95,141}

\usepackage
[pdfpagelabels=true,
linktocpage=false,
bookmarks=true,bookmarksnumbered=false,bookmarkstype=toc,
pagebackref,
colorlinks=true,linkcolor=darkblue,urlcolor=darkblue,citecolor=darkviolet]
{hyperref}


\usepackage{amscd,amsmath,amssymb,amsfonts}

\usepackage{amsthm}

\ifnum\abbrevref=0
	\usepackage[capitalise,noabbrev]{cleveref}
\else
	\usepackage[capitalise]{cleveref}
\fi
\usepackage[absolute]{textpos}
\usepackage[final]{microtype}
\usepackage[absolute]{textpos}
\usepackage{everypage}

\ifnum\dieordollar=2
	\usepackage{tikz}
\else
	\ifnum\dieordollar=1
		\usepackage{tikz}
	\fi
\fi

\ifnum\LNCSpreview=1
	\usepackage[paperwidth=152mm,paperheight=235mm,textwidth=122mm,textheight=193mm]{geometry}
	\AtBeginDocument{
		\begin{textblock}{5.25}[-0.59,0](0,12)
			\centering\Large
			\textcolor{red}{\textbf{LNCS Preview mode active.}}
		\end{textblock}
	}
	\def\choosepubinfo{0}
\fi

\ifnum\overflow=1
	\overfullrule=2mm
\fi

\ifnum\llncs=0
	\renewcommand*{\backref}[1]{(Cited on page~#1.)}
\fi

\ifnum\allowbreaks=1
	\allowdisplaybreaks
\fi

\ifnum\showlabels=1
	\usepackage{showkeys}
\fi

\ifnum\anonymous=1
	\def\authnotes{0}
\fi

\ifnum\changefont=1
	\usepackage{times}
\fi

\ifnum\LNCSpreview=0
	\ifx\pagelimit\empty
	\else
		\newcounter{pagewarning}
		\setcounter{pagewarning}{\pagelimit}
		\AddEverypageHook{
			\ifnum\thepage>\thepagewarning
				\begin{textblock}{1}[0.5,0](0,-13)
					\centering\Large
					\textcolor{red}{\textbf{This page is exceeding the page limit.}}
				\end{textblock}
			\fi
	}
	\fi
\fi

\newcommand{\authnote}[2]{\ifnum\authnotes=1 \begin{center}\fbox{\begin{minipage}{.98\textwidth}
\textbf{#1 says:} #2\end{minipage}}\end{center} \fi}

\newlength{\strutdepth}%
\settodepth{\strutdepth}{\strutbox}%

\newcommand{\notes}[3]{
\ifnum\authnotes=1
	\noindent{\bfseries
	\color{#1}{#3}\color{#1}}%
	\strut\vadjust{\kern-\strutdepth%
		\vtop to \strutdepth{%
			\baselineskip\strutdepth%
			\vss\llap{{\large\color{#1}\textbf{#2}\quad\color{black}}}\null%
		}%
	}%
\fi
}

\ifnum\LNCSpreview=0

\else

\fi

\mathchardef\hyphen="2D

\newtheoremstyle{thicktheorem}%
{\ifnum\rmvtheoremspace=1
	0.4
\fi
\topsep}
{\ifnum\rmvtheoremspace=1
	0.4
\fi
\topsep}
{\itshape}{}%
{\bfseries}%
{.}
{ }%
{\thmname{#1}\thmnumber{ #2}%
	\ifnum\dropargs=0
		\thmnote{ (#3)}%
	\fi
}

\newtheoremstyle{remark}
{\ifnum\rmvtheoremspace=1
	0.4
\fi
\topsep}
{\ifnum\rmvtheoremspace=1
	0.4
\fi
\topsep}
	{}
	{}
	{}
	{.}
	{ }
	{\textit{\thmname{#1}}\thmnumber{ #2}
		\ifnum\dropargs=0
			\thmnote{ (#3)}%
		\fi
	}

\ifnum\llncs=0
	\theoremstyle{thicktheorem}
	\newtheorem{theorem}{Theorem}[section]
	\newtheorem{lemma}[theorem]{Lemma}
	\newtheorem{corollary}[theorem]{Corollary}
	\newtheorem{proposition}[theorem]{Proposition}
	\newtheorem{definition}[theorem]{Definition}

	\theoremstyle{remark}
	\newtheorem{claim}[theorem]{Claim}
	\newtheorem{remark}[theorem]{Remark}

	\newtheorem{conjecture}[theorem]{Conjecture}
\fi

\theoremstyle{remark}
\newtheorem{assumption}[theorem]{Assumption}
\newtheorem{observation}[theorem]{Observation}
\newtheorem{fact}[theorem]{Fact}
\newtheorem{experiment}{Experiment}
\newtheorem{construction}[theorem]{Construction}
\newtheorem{counterexample}[theorem]{Counterexample}





\ifnum\abbrevref=0
	\crefname{assumption}{Assumption}{Assumptions}
	\crefname{construction}{Construction}{Constructions}
	\crefname{corollary}{Corollary}{Corollaries}
	\crefname{conjecture}{Conjecture}{Conjectures}
	\crefname{definition}{Definition}{Definitions}
	\crefname{exmaple}{Example}{Examples}
	\crefname{experiment}{Experiment}{Experiments}
	\crefname{counterexample}{Counterexample}{Counterexamples}
	\crefname{lemma}{Lemma}{Lemmata}
	\crefname{observation}{Observation}{Observations}
	\crefname{proposition}{Proposition}{Propositions}
	\crefname{remark}{Remark}{Remarks}
	\crefname{theorem}{Theorem}{Theorems}
\else
	\crefname{assumption}{Ass.}{Ass.}
	\crefname{construction}{Constr.}{Constr.}
	\crefname{corollary}{Cor.}{Cor.}
	\crefname{conjecture}{Conj.}{Conj.}
	\crefname{definition}{Def.}{Def.}
	\crefname{exmaple}{Ex.}{Ex.}
	\crefname{experiment}{Exp.}{Exp.}
	\crefname{counterexample}{Counterex.}{Counterex.}
	\crefname{lemma}{Lem.}{Lem.}
	\crefname{observation}{Obs.}{Obs.}
	\crefname{proposition}{Prop.}{Prop.}
	\crefname{remark}{Rem.}{Rem.}
	\crefname{theorem}{Thm.}{Thms.}
\fi

\crefname{claim}{Claim}{Claims}
\crefname{fact}{Fact}{Facts}
\crefname{note}{Note}{Notes}

\def\YYYSMcoin{\mbox{\begin{tikzpicture}[scale=0.0125]
\definecolor{coinbrown}{HTML}{D89E36}\definecolor{coindarkyellow}{HTML}{F8D81E}\definecolor{coinyellow}{HTML}{F8F800}\fill[coinyellow] (3,-1) rectangle (9,9);\fill(0,0) rectangle (1,8);\fill(1,8) rectangle (2,10);\fill(2,10) rectangle (4,11);\fill(4,11) rectangle (8,12);\fill(8,11) rectangle (10,10);\fill(10,10) rectangle (11,8);\fill(11,8) rectangle (12,0);\fill(10,-2) rectangle (11,0);\fill(8,-3) rectangle (10,-2);\fill(4,-4) rectangle (8,-3);\fill(2,-3) rectangle (4,-2);\fill(1,0) rectangle (2,-2);\fill (5,-1) rectangle (7,0);\fill (7,0) rectangle (8,8);\fill[coinbrown] (9,8) rectangle (10,10);\fill[coinbrown] (10,0) rectangle (11,8);\fill[coinbrown] (9,-2) rectangle (10,0);\fill[coinbrown] (8,-2) rectangle (9,-1);\fill[coinbrown] (4,-3) rectangle (8,-2);\fill[coindarkyellow] (2,-2) rectangle (3,8);\fill[coindarkyellow] (3,-2) rectangle (8,-1);\fill[coindarkyellow] (8,-1) rectangle (9,0);\fill[coindarkyellow] (9,0) rectangle (10,8);\fill[coindarkyellow] (8,8) rectangle (9,10);\fill[coindarkyellow] (4,9) rectangle (8,10);\fill[coindarkyellow] (3,8) rectangle (4,9);\fill[coindarkyellow] (5,0) rectangle (7,2);\fill[coindarkyellow] (6,2) rectangle (7,8);\fill[white] (4,0) rectangle (5,8);\fill[white] (5,8) rectangle (7,9);
\end{tikzpicture}}}

\def\YYYdie{\mbox{\begin{tikzpicture}[scale=0.85,x=1em,y=1em,radius=0.09]
\draw[rounded corners=1,line width=.25pt] (0,0) rectangle (1,1);\fill (0.275,0.275) circle;\fill (0.725,0.725) circle;\fill (0.5,0.5) circle;
\end{tikzpicture}}}

\newcommand{\getsr}{
	\ifnum\dieordollar=0
		\mathrel{\vbox{\offinterlineskip\ialign{
			\hfil##\hfil\cr
			\hspace{0.1em}$\scriptscriptstyle\$$\cr
			$\leftarrow$\cr
		}}}
	\fi
	\ifnum\dieordollar=1
		\mathrel{\vbox{\offinterlineskip\ialign{
			\hfil##\hfil\cr
			{\scalebox{0.5}{\hspace{0.4em}\YYYdie}}\cr
			\noalign{\kern0.05ex}
			$\leftarrow$\cr
		}}}
	\fi
	\ifnum\dieordollar=2
		\mathrel{\vbox{\offinterlineskip\ialign{
			\hfil##\hfil\cr
			\hspace{0.1em}$\YYYSMcoin$\cr
			$\leftarrow$\cr
		}}}
	\fi
}



\def\makeuppercase#1{
\expandafter\newcommand\csname sf#1\endcsname{\mathsf{#1}}
\expandafter\newcommand\csname frak#1\endcsname{\mathfrak{#1}}
\expandafter\newcommand\csname bb#1\endcsname{\mathbb{#1}}
\expandafter\newcommand\csname bf#1\endcsname{\textbf{#1}}
}

\def\makelowercase#1{
\expandafter\newcommand\csname frak#1\endcsname{\mathfrak{#1}}
\expandafter\newcommand\csname bf#1\endcsname{\textbf{#1}}
}

\newcounter{char}
\setcounter{char}{1}

\loop
	\edef\letter{\alph{char}}
	\edef\Letter{\Alph{char}}
	\expandafter\makelowercase\letter
	\expandafter\makeuppercase\Letter
	\stepcounter{char}
	\unless\ifnum\thechar>26
\repeat



\usepackage{lmodern}
\usepackage[T1]{fontenc}
\usepackage[utf8]{inputenc}
\usepackage{etex}
\usepackage{graphicx,pstricks}
\usepackage{tikz}
\usetikzlibrary{matrix,arrows}
\usetikzlibrary{positioning}
\usetikzlibrary{decorations,decorations.text}
\usetikzlibrary{decorations.pathmorphing}
\usetikzlibrary{shapes}
\usepackage{centernot}
\usepackage{xspace}
\usepackage{pifont}
\usepackage{paralist}
\usepackage{booktabs}
\usepackage{dashbox}
\usepackage{multirow}
\usepackage{lscape}
\usepackage{amsmath}
\usepackage{fancybox}
\usepackage{braket}
\usepackage{physics}
\usepackage{autonum}
\usepackage{dsfont}
\DeclareMathAlphabet{\mathpzc}{OT1}{pzc}{m}{it}

\ifnum\submission=1
\renewcommand*{\backref}[1]{}
\def\notxfont{1}
\pagestyle{plain}
\else
\fi

\ifnum\llncs=1
\renewcommand{\subparagraph}{\paragraph}
\else
\ifnum\notxfont=1
\else
\usepackage{mathpazo}
\usepackage{newtxtext}
\usepackage{helvet}
\fi
\fi
\ifnum\showlabels=1
\usepackage{showkeys}
\else
\fi

\usepackage{enumitem}
\ifnum\submission=1
\setlist[itemize]{
  topsep=0.2\baselineskip,
  itemsep=0\baselineskip,
}
\setlist[description]{
  topsep=0.2\baselineskip,
  itemsep=0\baselineskip,
}
\setlist[enumerate]{
  topsep=0.2\baselineskip,
  itemsep=0\baselineskip,
}

\else
\setlist[itemize]{
  topsep=0.4\baselineskip,
  itemsep=0.1\baselineskip,
}
\setlist[description]{
  topsep=0.4\baselineskip,
  itemsep=0.1\baselineskip,
}
\setlist[enumerate]{
  topsep=0.4\baselineskip,
  itemsep=0.1\baselineskip,
}
\fi

\usepackage{fontawesome}

\makeatletter
\newcounter{game}
\def\newGames#1#2#3{%
  \xdef\gameNS{#1}\xdef\gamePrefix{#2}\setcounter{game}{#3}\addtocounter{game}{-1}%
  \immediate\write\@auxout{\string\expandafter\string\xdef\noexpand\csname game-prefix-#1\string\endcsname{#2}}%
}
\def\newGame#1{%
  \xdef\prevGame{\gamePrefix\arabic{game}}\stepcounter{game}\xdef\thisGame{\gamePrefix\arabic{game}}%
  \immediate\write\@auxout{\string\expandafter\string\xdef\noexpand\csname game-\gameNS-#1\string\endcsname{\arabic{game}}}%
}
\makeatother
\def\safecsname#1{\expandafter\ifx\csname#1\endcsname\relax\else\csname#1\endcsname\fi}

\renewcommand{\Game}[1][]{\mathcmd{\textrm{Game\if!#1!\else~\ensuremath{#1}\fi}}}


\newcommand{\sel}{\mathsf{sel}}


\usepackage[font=small,
format=plain,
labelformat=simple, 
singlelinecheck=false,
labelfont=bf,
up
]{caption}
\DeclareCaptionFormat{myformat}{#1#2#3\hrulefill}
\captionsetup{labelfont={color=black,bf},format=myformat}




\newcommand{\run}{\qalgo{Run}}

\newcommand{\qA}{\qalgo{A}}
\newcommand{\qS}{\qalgo{S}}

\DeclareFontFamily{U}{skulls}{}
\DeclareFontShape{U}{skulls}{m}{n}{ <-> skull }{}

\newcommand{\qstateq}{\qstate{q}}


\usepackage{mathtools}

\newcommand{\chosen}{\leftarrow}

\renewcommand{\gets}{\leftarrow}
\newcommand{\lrun}{\leftarrow}
\newcommand{\out}{=}

\newcommand{\la}{\leftarrow}
\newcommand{\ra}{\rightarrow}

\newcommand{\seteq}{\coloneqq}

\newcommand{\tensor}{\otimes}
\newcommand{\concat}{\|}

\newcommand{\setbracket}[1]{\{#1\}}

\newcommand{\setbk}[1]{\{#1\}}


\newcommand{\cA}{\mathcal{A}}
\newcommand{\cB}{\mathcal{B}}
\newcommand{\cC}{\mathcal{C}}
\newcommand{\cD}{\mathcal{D}}
\newcommand{\cE}{\mathcal{E}}
\newcommand{\cF}{\mathcal{F}}
\newcommand{\cG}{\mathcal{G}}
\newcommand{\cH}{\mathcal{H}}
\newcommand{\cI}{\mathcal{I}}

\newcommand{\cM}{\mathcal{M}}

\newcommand{\cO}{\mathcal{O}}
\newcommand{\cP}{\mathcal{P}}

\newcommand{\cR}{\mathcal{R}}
\newcommand{\cS}{\mathcal{S}}

\newcommand{\cU}{\mathcal{U}}

\newcommand{\cX}{\mathcal{X}}
\newcommand{\cY}{\mathcal{Y}}

\newcommand{\N}{\mathbb{N}}
\newcommand{\Z}{\mathbb{Z}}


\newcommand{\Zq}{\mathbb{Z}_q}



\newcommand{\Fs}{\mathcal{F}}
\newcommand{\Ms}{\mathcal{M}}

\newcommand{\Ks}{\mathcal{K}}
\newcommand{\Ys}{\mathcal{Y}}
\newcommand{\Xs}{\mathcal{X}}

\newcommand{\coin}{\keys{coin}}

\newcommand{\M}{\cM}



\newcommand{\Params}{\algo{\Params}}

\newcommand{\Sampler}{\algo{Samp}}


\newcommand{\sep}{\lambda}

\newcommand{\secp}{\lambda}

\newcommand{\gtag}{\tau}

\newcommand{\cert}{\mathsf{cert}}

\newcommand{\aux}{\mathsf{aux}}

\newcommand{\params}{\mathsf{params}}



\newcommand{\rmOWF}{\textrm{one-way function}\xspace}

\newcommand{\rmIO}{\textrm{IO}\xspace}


\newcommand{\ind}{\mathsf{ind}}



\newcommand{\LWE}{\textrm{LWE}}



\newcommand{\sfreal}[2]{\mathsf{Real}^{\mathsf{#1}\textrm{-}\mathsf{#2}}}
\newcommand{\sfsim}[2]{\mathsf{Sim}^{\mathsf{#1}\textrm{-}\mathsf{#2}}}

\newcommand{\Adv}{\mathsf{Adv}}

\newcommand{\adva}[2]{\mathsf{Adv}_{#1}^{\mathsf{#2}}}
\newcommand{\advb}[3]{\mathsf{Adv}_{#1}^{\mathsf{#2} \mbox{-} \mathsf{#3}}}
\newcommand{\advc}[4]{\mathsf{Adv}_{#1}^{\mathsf{#2} \mbox{-} \mathsf{#3} \mbox{-} \mathsf{#4}}}

\newcommand{\expa}[2]{\mathsf{Expt}_{#1}^{\mathsf{#2}}}
\newcommand{\expb}[3]{\mathsf{Exp}_{#1}^{ \mathsf{#2} \mbox{-} \mathsf{#3}}}
\newcommand{\expc}[4]{\mathsf{Exp}_{#1}^{ \mathsf{#2} \mbox{-} \mathsf{#3} \mbox{-} \mathsf{#4}}}
\newcommand{\expd}[5]{\mathsf{Exp}_{#1}^{\mathsf{#2} \mbox{-} \mathsf{#3} \mbox{-} \mathsf{#4} \mbox{-} \mathsf{#5}}}

\newcommand{\hybi}[1]{\mathsf{Hyb}_{#1}}
\newcommand{\hybij}[2]{\mathsf{Hyb}_{#1}^{#2}}

\newcommand*{\sk}{\keys{sk}}
\newcommand*{\pk}{\keys{pk}}

\newcommand*{\msk}{\keys{msk}}

\newcommand*{\dk}{\keys{dk}}

\newcommand*{\vk}{\keys{vk}}

\newcommand*{\ct}{\keys{ct}}

\newcommand{\qsk}{\qalgo{sk}}

\newcommand{\cthat}{\widehat{\ct}}

\newcommand{\pkhat}{\widehat{\pk}}

\newcommand{\mskhat}{\widehat{\msk}}
\newcommand{\qskhat}{\widehat{\qsk}}

\newcommand{\prfkey}{\keys{K}}

\newcommand*{\keys}[1]{\mathsf{#1}}
\newcommand*{\qstate}[1]{\mathpzc{#1}}
\newcommand*{\qreg}[1]{{\color{gray}{\mathsf{#1}}}}

\newcommand*{\algo}[1]{\ensuremath{\mathsf{#1}}}
\newcommand*{\qalgo}[1]{\ensuremath{\mathpzc{#1}}}




\newenvironment{boxfig}[2]{\begin{figure}[#1]\fbox{\begin{minipage}{0.97\linewidth}
                        \vspace{0.2em}
                        \makebox[0.025\linewidth]{}
                        \begin{minipage}{0.95\linewidth}
            {{
                        #2 }}
                        \end{minipage}
                        \vspace{0.2em}
                        \end{minipage}}}{\end{figure}}

\newcommand{\pprotocol}[4]{
\begin{boxfig}{t!}{\footnotesize 
\centering{\textbf{#1}}
    #4
} \caption{#2}
\label{#3}
\end{boxfig}
}

\newcommand{\protocol}[4]{
\pprotocol{#1}{#2}{#3}{#4} }

 \newcounter{expitem}[table]
\newcommand\expitem[2]{\refstepcounter{expitem}%
   \arabic{expitem}.&#1&\arabic{expitem}.&#2\\%
}
\newcommand\Heading[2]{\multicolumn2{c}{#1}&\multicolumn2{c}{#2}\\}


\newcommand{\bit}{\{0,1\}}


\newcommand{\mat}[1]{\boldsymbol{#1}}

\newcommand{\mm}[1]{\boldsymbol{#1}}
\newcommand{\mv}[1]{\boldsymbol{#1}}




\newcommand{\Rand}{\algo{R}}

\newcommand{\prf}{\algo{F}}




\newcommand{\Setup}{\algo{Setup}}
\newcommand{\Gen}{\algo{Gen}}

\newcommand{\KG}{\algo{KG}}
\newcommand{\Enc}{\algo{Enc}}
\newcommand{\Dec}{\algo{Dec}}

\newcommand{\Vrfy}{\algo{Vrfy}}

\newcommand{\qEnc}{\qalgo{Enc}}

\newcommand\PKFE{\algo{PKFE}}

\newcommand\SKFE{\algo{SKFE}}

\newcommand\oneFE{\algo{1FE}}

\newcommand{\sfonefe}{\mathsf{1fe}}





\newcommand{\iO}{i\cO}


\newcommand{\PRF}{\algo{PRF}}

\newcommand{\prfgen}{\PRF.\Gen}

\newcommand{\Puncture}{\algo{Puncture}}








\newcommand{\TI}{\qalgo{TI}}
\newcommand{\ATI}{\qalgo{ATI}}

\newcommand{\projimp}{\algo{ProjImp}}


\newcommand{\Decode}{\algo{Decode}}








\newcommand{\negl}{{\mathsf{negl}}}






\newcommand{\poly}{{\mathrm{poly}}}
\newcommand{\polylog}{{\mathrm{polylog}}}

\newcommand{\zo}[1]{\{0,1\}^{#1}}
\newcommand{\bin}{\{0,1\}}

\newcommand{\class}[1]{\mathsf{#1}}
\newcommand{\classPpoly}{\class{P}/\class{poly}}

\newcommand{\Ppoly}{\classPpoly}

\newcommand{\NCone}{\class{NC}^1}






\newcommand{\List}[1]{L_\mathtt{#1}}


\newcommand{\calF}{\mathcal{F}}

\newcommand{\calM}{\mathcal{M}}

\newcommand{\calQ}{\mathcal{Q}}




\newcommand{\padsize}{\mathsf{pad}}
\newcommand{\pad}{\mathsf{pad}}



\newcommand{\SUC}{{\tt SUC}}




\newcommand{\SKE}{\algo{SKE}}
\newcommand{\ske}{\keys{ske}}

\newcommand{\E}{\algo{E}}
\newcommand{\D}{\algo{D}}

\newcommand{\PuncPRF}{\algo{PPRF}}

\newcommand{\OneCT}[1]{\algo{1CT}^{#1}}





\newcommand{\state}{\mathsf{st}}
\newcommand{\tlx}{\widetilde{x}}




\newcommand{\SDE}{\algo{SDE}}
\newcommand{\QKeyGen}{\qalgo{QKG}}
\newcommand{\pirateD}{\qalgo{D}}
\newcommand{\SDFE}{\algo{SDFE}}
\newcommand{\oneSDFE}{\algo{1SDFE}}
\newcommand{\sde}{\algo{sde}}
\newcommand{\qdk}{\qstate{dk}}
\newcommand{\fe}{\algo{fe}}

\newcommand{\Ssecure}{S_{\mathtt{secure}}}
\newcommand{\Sinsecure}{S_{\mathtt{insecure}}}

\newcommand{\SetHSS}{\algo{SetHSS}}
\newcommand{\SetGen}{\algo{SetGen}}
\newcommand{\numset}{m}
\newcommand{\numall}{\ell}
\newcommand{\element}{e}

\newcommand{\zSKFEsbSKL}{\mathsf{zSKFE}\textrm{-}\mathsf{sbSKL}}

\newcommand{\qzKG}{\qalgo{zKG}}
\newcommand{\zEnc}{\algo{zEnc}}
\newcommand{\qzDec}{\qalgo{zDec}}

\newcommand{\zmsk}{\mathsf{zmsk}}
\newcommand{\qzfsk}{\qstate{zfsk}}
\newcommand{\zct}{\mathsf{zct}}

\newcommand{\InpEncode}{\algo{InpEncode}}
\newcommand{\FuncEncode}{\algo{FuncEncode}}
\newcommand{\Decore}{\algo{Decode}}
\newcommand{\share}{s}

\newcommand{\SKFESKL}{\mathsf{SKFE\textrm{-}SKL}}
\newcommand{\iSKFESKL}{\mathsf{iSKFE\textrm{-}sbSKL}}
\newcommand{\qKG}{\qalgo{KG}}
\newcommand{\qDec}{\qalgo{Dec}}
\newcommand{\qcert}{\qalgo{Cert}}
\newcommand{\certvrfy}{\algo{Vrfy}}

\newcommand{\qfsk}{\qalgo{fsk}}

\renewcommand{\index}{j}

\newcommand{\instance}{i}
\newcommand{\bfinstance}{{\boldsymbol{I}}}

\newcommand{\iSetup}{\algo{iSetup}}
\newcommand{\qiKG}{\qalgo{iKG}}
\newcommand{\iEnc}{\algo{iEnc}}
\newcommand{\qiDec}{\qalgo{iDec}}
\newcommand{\qicert}{\qalgo{iCert}}
\newcommand{\iVrfy}{\algo{iVrfy}}

\newcommand{\Kske}{K}
\newcommand{\ctske}{\mathsf{sct}}
\newcommand{\sbSKL}{\mathsf{sbSKL}}
\newcommand{\sbskl}{\mathsf{sbskl}}

\newcommand{\CDSKE}{\mathsf{CDSKE}}
\renewcommand{\CD}{\mathsf{CD}}
\newcommand{\cd}{\mathsf{cd}}
\newcommand{\SKL}{\mathsf{SKL}}
\newcommand{\skl}{\mathsf{skl}}

\newcommand{\qSEnc}{\qalgo{SEnc}}
\newcommand{\qSKG}{\qalgo{SKG}}

\newcommand{\rskfe}{r}
\newcommand{\rcd}{w}

\newcommand{\fsk}{\mathsf{fsk}}

\newcommand{\SKFEsbSKL}{\algo{SKFE}\textrm{-}\algo{sbSKL}}
\newcommand{\SKFEdbSKL}{\algo{SKFE}\textrm{-}\algo{SKL}}
\newcommand{\dbSKL}{\algo{SKL}}
\newcommand{\dbskl}{\algo{skl}}

\newcommand{\qencrypt}{\qalgo{Enc}}
\newcommand{\qdelete}{\qalgo{Del}}
\newcommand{\qdecrypt}{\qalgo{Dec}}
\newcommand{\qct}{\qalgo{ct}}

\newcommand{\numkey}{q}
\newcommand{\numct}{n}
\newcommand{\numcti}[1]{n_{#1}}









\newcommand{\qB}{\qalgo{B}}

\newcommand{\qD}{\qalgo{D}}



\newcommand{\Oracle}[1]{O_{\mathtt{#1}}}

\begin{document}

\newcount\authorcounter
\newcommand{\provideauthors}{%
		\ifnum\authorcounter<\theauthorcount
			\csname\the\authorcounter name\endcsname
			\expandafter\ifx\csname\the\authorcounter thanks\endcsname\empty
			\else
				\thanks{\csname\the\authorcounter thanks\endcsname}
			\fi%
			\inst{\csname\the\authorcounter institute\endcsname} 
			\and 
			\global\advance\authorcounter by 1 
			\provideauthors
		\else
			\csname\the\authorcounter name\endcsname 
			\expandafter\ifx\csname\the\authorcounter thanks\endcsname\empty 
			\else
				\thanks{\csname\the\authorcounter thanks\endcsname} 
			\fi%
			\inst{\csname\the\authorcounter institute\endcsname} 
		\fi
}

\def\atleastoneauthorplaced{0}
\newcommand{\providerunning}{%
	\ifnum\authorcounter<\theauthorcount%
		\expandafter\ifx\csname\the\authorcounter running\endcsname\empty
		\else
			\ifnum\authorcounter>1
				\ifnum\atleastoneauthorplaced=1
					\and%
				\fi
			\fi
			\csname\the\authorcounter running\endcsname
			\def\atleastoneauthorplaced{1}
		\fi
		\global\advance\authorcounter by 1
		\providerunning%
	\else%
		\expandafter\ifx\csname\the\authorcounter running\endcsname\empty
		\else
			\ifnum\authorcounter>1
				\ifnum\atleastoneauthorplaced=1
					\and%
				\fi
			\fi
			\csname\the\authorcounter running\endcsname
		\fi
	\fi
}

\newcount\institutecounter

\newcommand{\provideinstitutes}{%
	\ifnum\institutecounter<\theinstitutecount%
		\ifnum\llncs=0
			$^{\csname\the\institutecounter number\endcsname}$
		\fi
		\csname\the\institutecounter instname\endcsname
		
		\email{
			\ifx\contactmail\empty
				\csname\the\institutecounter mail\endcsname
			\else
				\href{mailto:\contactmail}{\csname\the\institutecounter mail\endcsname}
			\fi
		}
		
		\and%
			\global\advance\institutecounter by 1
		\provideinstitutes%
	\else%
		\ifnum\llncs=0
			\ifcsname 1name\endcsname
				$^{\csname\the\institutecounter number\endcsname}$
			\fi
		\fi
		\csname\the\institutecounter instname\endcsname
		
		\email{
			\ifx\contactmail\empty
				\csname\the\institutecounter mail\endcsname
			\else
				\href{mailto:\contactmail}{\csname\the\institutecounter mail\endcsname}
			\fi
		}
	\fi
}

\title{
	\ifnum\stuffedtitlepage=1
		\ifnum\llncs=1
			\vspace*{-7ex}
		\else
		\vspace*{-3ex}
		\fi
		\textbf{\titletext}
		\ifnum\llncs=1
			\vspace*{-2ex}
		\else
			\vspace*{-1ex}
		\fi
	\else
		\textbf{\titletext}
	\fi
}
\ifnum\anonymous=1
	\author{}
\else
	\ifnum\llncs=0
		\newcommand{\inst}[1]{{
			\ifcsname 1name\endcsname
				$^{#1}$
			\fi
			}}
	\fi
	\ifcsname 1name\endcsname
		\author{
			\global\authorcounter 1
			\provideauthors
		}
	\fi
\fi

\ifnum\llncs=1
	\titlerunning{\runningtitle}
	\ifnum\anonymous=1
		\institute{}
		\authorrunning{}
	\else
		\ifcsname 1instname\endcsname{
			\institute{
				\global\institutecounter 1
				\provideinstitutes
			}
		\fi
		\ifcsname 1name\endcsname{
			\authorrunning{
				\global \authorcounter 1
				\providerunning
			}
		\fi
	\fi
\fi
\maketitle
\ifnum\stuffedtitlepage=1
	\ifnum\llncs=0
		\vspace{-4ex}
	\fi
\fi

\ifnum\llncs=0
	\ifnum\anonymous=0
		\newcommand{\email}[1]{
			\texttt{
				\ifx\contactmail\empty
					#1
				\else
					\href{mailto:\contactmail}{#1}
				\fi
			}
		}
		\newcommand{\and}{}
		\ifnum\stuffedtitlepage=1
			\ifnum\llncs=0
				\vspace{-2ex}
			\fi
		\fi
		\begin{small}
			\begin{center}
				\global \institutecounter 1
				\provideinstitutes
			\end{center}
		\end{small}
	\fi
\fi

\ifnum\stuffedtitlepage=1
	\ifnum\llncs=1
		\vspace*{-4ex}
	\else
		\vspace*{-2ex}
	\fi
\fi

\begin{abstract}

	\vspace{1ex}
\ifnum\llncs=1
\else

	\textbf{Keywords\ifnum\llncs=1{.}\else{:}\fi}\keywords
\fi
\end{abstract}
\ifnum\stuffedtitlepage=1
	\ifnum\llncs=1
		\vspace*{-2ex}
	\fi
\fi

\ifnum\llncs=0
	\vspace{1ex}
\fi

\ifnum\choosepubinfo=1
\def\pubinfo{
	\noindent An extended abstract of this paper will appear at
	\ifx\pubinfoCONFERENCE\empty \textcolor{red}{conference missing}\else \pubinfoCONFERENCE\fi.
}
\fi

\ifnum\choosepubinfo=2
	\def\pubinfo{
		\noindent \copyright\ IACR 
		\ifx\pubinfoYEAR\empty \textcolor{red}{year missing}\else \pubinfoYEAR\fi.
		This article is the final version submitted by the author(s) to the IACR and to Springer-Verlag on
		\ifx\pubinfoSUBMISSIONDATE\empty \textcolor{red}{submission date missing}\else \pubinfoSUBMISSIONDATE\fi.
		The version published by Springer-Verlag is available at
		\ifx\pubinfoDOI\empty \textcolor{red}{DOI missing}\else \pubinfoDOI\fi.
	}
\fi

\ifnum\choosepubinfo=3
	\def\pubinfo{
		\noindent \copyright\ IACR
		\ifx\pubinfoYEAR\empty \textcolor{red}{year missing}\else \pubinfoYEAR\fi.
		This article is a minor revision of the version published by Springer-Verlag available at
		\ifx\pubinfoDOI\empty \textcolor{red}{DOI missing}\else \pubinfoDOI\fi.
	}
\fi

\ifnum\choosepubinfo=4
	\def\pubinfo{
		\noindent This article is based on an earlier article:
		\ifx\pubinfoBIBDATA\empty \textcolor{red}{bibliographic data missing}\else \pubinfoBIBDATA\fi,
		\copyright\ IACR
		\ifx\pubinfoYEAR\empty \textcolor{red}{year missing}\else \pubinfoYEAR\fi,
		\ifx\pubinfoDOI\empty \textcolor{red}{DOI missing}\else \pubinfoDOI\fi.
	}
\fi

\ifnum\choosepubinfo=5
		\def\pubinfo{
			\noindent \pubinfoindividual
		}
	\fi

\textblockorigin{0.5\paperwidth}{0.9\paperheight}
\setlength{\TPHorizModule}{\textwidth}

\newlength{\pubinfolength}
\ifnum\choosepubinfo=0
\else
	\settowidth{\pubinfolength}{\pubinfo}
	\begin{textblock}{1}[0.5,0](0,.25)
		 \ifnum\pubinfolength<\textwidth
			\centering
		\fi
		\pubinfo
	\end{textblock}
\fi
\thispagestyle{empty}
		\ifnum\confvers=1
\else
\ifnum\submission=1
\else
\newpage
  \setcounter{tocdepth}{2}      
  \setcounter{secnumdepth}{2}   
  \setcounter{page}{0}          
  \tableofcontents
  \thispagestyle{empty}
\clearpage
\fi
\fi


\newcommand{\subEvs}{\class{subEVS}}

\section{Introduction}\label{sec:intro}

\subsection{Background}\label{sec:background}

Functional encryption (FE)~\cite{TCC:BonSahWat11} is an advanced encryption system that enables us to compute on encrypted data.
In FE, an authority generates a master secret key and an encryption key.
An encryptor uses the encryption key to generate a ciphertext $\ct_x$ of a plaintext $x$.
The authority generates a functional decryption key $\fsk$ from a function $f$ and the master secret key.
When a decryptor receives $\fsk$ and $\ct_x$, it can compute $f(x)$ and obtains nothing beyond $f(x)$.
In secret-key FE (SKFE), the encryption key is the same as the master secret key, while the encryption key is public in public-key FE (PKFE).

FE offers flexible accessibility to encrypted data since multiple users can obtain various processed data via functional decryption keys. Public-key encryption (PKE) and attribute-based encryption (ABE)~\cite{EC:SahWat05} do not have this property since they recover an entire plaintext if decryption succeeds. This flexible feature is suitable for analyzing sensitive data and computing new data from personal data without compromising data privacy. For example, we can compute medical statistics from patients' data without directly accessing individual data. Some works present practical applications of FE (for limited functionalities): non-interactive protocol for hidden-weight coin flips~\cite{USENIX:ConSch19}, biometric authentication, nearest-neighbor search on encrypted data~\cite{SCN:KLMMRW18}, private inference on encrypted data~\cite{NIPS:RPBDG19}.

One issue is that once a user obtains $\fsk$, it can compute $f(x)$ from a ciphertext of $x$ forever.
An authority may not want to provide users with the permanent right to compute on encrypted data.
A motivative example is as follows. A research group member receives a functional decryption key $\fsk$ to compute some statistics from many encrypted data for their research.
When the member leaves the group, an authority wants to prevent the member from doing the same computation on another encrypted data due to terms and conditions. However, the member might \emph{keep a copy} of their functional decryption key and penetrate the database of the group to do the same computation.
Another motivation is that the subscription business model is common for many services such as cloud storage services (ex. OneDrive, Dropbox), video on demand (ex. Netflix, Hulu), software applications (ex. Office 365, Adobe Photoshop).
If we can keep a copy of functional decryption keys, we cannot use FE in the subscription business model (for example, FE can be used as broadcast encryption in a video on demand).
We can also consider the following subscription service.
A company provides encrypted data sets for machine learning and a functional decryption key. A researcher can perform some tasks using the encrypted data set and the key.

Achieving a revocation mechanism~\cite{FC:NaoPin00} is an option to solve the issue above.
Some works propose revocation mechanisms for advanced encryption such as ABE~\cite{C:SahSeyWat12} and FE~\cite{EC:NisWicZha16}.
However, revocation is not a perfect solution since we need to update ciphertexts to embed information about revoked users.
We want to avoid updating ciphertexts for several reasons.
One is a practical reason. We possibly handle a vast amount of data, and updating ciphertexts incurs significant overhead.
Another one is more fundamental. Even if we update ciphertexts, \emph{there is no guarantee that all old ciphertexts are appropriately deleted.} If some user keeps copies of old ciphertexts, and a data breach happens after revocation, another functional decryption key holder whose key was revoked still can decrypt the old ciphertexts.

This problem is rooted in classical computation since we cannot prevent copying digital data.
Ananth and La Placa introduce the notion of secure software leasing~\cite{EC:AnaLaP21} to solve the copy problem by using the power of quantum computation.
Secure software leasing enables us to encode software into a leased version.
The leased version has the same functionality as the original one and must be a quantum state to prevent copying.
\emph{After a lessor verifies that the returned software from a lessee is valid (or that the lessee deleted the software)}, the lessee cannot execute the software anymore.
Several works present secure software leasing for simple functionalities such as a sub-class of evasive functions ($\subEvs$), PKE, signatures, pseudorandom functions (PRFs)~\cite{EC:AnaLaP21,C:ALLZZ21,TCC:KitNisYam21,TCC:BJLPS21,EPRINT:ColMajPor20}.
If we can securely implement leasing and returning mechanisms for functional decryption keys, we can solve the problem above.
Such mechanisms help us to use FE in real-world applications.

Thus, the main question in this work is as follows.
\begin{center}
\emph{Can we achieve secure a leasing mechanism for functional decryption keys of FE?}
\end{center}

We can also consider copy-protection, which is stronger security than secure leasing.
Aaronson~\cite{CCC:Aaronson09} introduces the notion of quantum copy-protection.
Copy-protection prevents users from creating a pirate copy. \emph{It does not have a returning process}, and prevents copying software.
If a user returns the original software, no copy is left behind on the user, and it cannot run the software.
Coladangelo, Liu, Liu, and Zhandry~\cite{C:CLLZ21} achieve copy-protected PRFs and single-decryptor encryption (SDE)\footnote{SDE is PKE whose decryption keys are copy-protected.}.
Our second question in this work is as follows.

\begin{center}
\emph{Can we achieve copy-protection for functional decryption keys of FE?}
\end{center}

We affirmatively answer those questions in this work.

\subsection{Our Result}\label{sec:result}

\paragraph{Secure key leasing.}
Our main contributions are introducing the notion of SKFE with secure key leasing and instantiating it with standard cryptographic assumptions. More specifically,
\begin{itemize}
\item We define the syntax and security definitions for SKFE with secure key leasing.
\item We achieve a transformation from standard SKFE into SKFE with secure key leasing \emph{without using additional assumptions}.
\end{itemize}
In SKFE with secure key leasing, a functional decryption key is a quantum state.
More specifically, the key generation algorithm takes as input a master secret key, a function $f$, and an availability bound $\numct$ (in terms of the number of ciphertexts), and outputs a quantum decryption key $\qfsk$ tied to $f$.
We can generate a \emph{certificate for deleting} the decryption key $\qfsk$.
If the user of this decryption key deletes $\qfsk$ within the declared availability bound $\numct$ and the generated certificate is valid, the user cannot compute $f(x)$ from a ciphertext of $x$ anymore.
We provide a high-level overview of the security definition in \cref{sec:tech_overview}.

We can obtain bounded collusion-resistant SKFE for $\Ppoly$ with secure key leasing from OWFs since we can instantiate bounded collusion-resistant SKFE for $\Ppoly$ with OWFs.\footnote{If we start with fully collusion-resistant SKFE, we can obtain fully collusion-resistant SKFE with secure key leasing.}
Note that all building blocks in this work are post-quantum secure since we use quantum computation and we omit ``post-quantum''.

Our secure key leasing notion is similar to but different from secure software leasing~\cite{EC:AnaLaP21} for FE because adversaries in secure software leasing (for FE) must run their pirate software by an \emph{honest} evaluation algorithm (on a legitimate platform). This is a severe limitation. In our FE with secure key leasing setting, adversaries do \emph{not} necessarily run their pirate software (for functional decryption) by an honest evaluation algorithm and can take arbitrary attack strategies.

We develop a transformation from standard SKFE into SKFE with secure key leasing by using quantum power.
In particular, we use (reusable) secret-key encryption (SKE) with certified deletion~\cite{TCC:BroIsl20,AC:HMNY21}, where we can securely delete \emph{ciphertexts}, as a building block.
We also develop a technique based on the security bound amplification for FE~\cite{C:AJLMS19,C:JKMS20} to amplify the availability bound, that is, the number of encryption queries before $\ct^\ast$ is given. This technique deviates from known multi-party-computation-based techniques for achieving bounded many-ciphertext security for SKFE~\cite{C:GorVaiWee12,TCC:AnaVai19}.\footnote{These techniques~\cite{C:GorVaiWee12,TCC:AnaVai19} work as transformations from single-key FE into bounded collusion-resistant FE. However, they also work as transformations from single-ciphertext SKFE into bounded many-ciphertext SKFE. See~\cref{sec:more_related_work} for the detail. Many-ciphertext means that SKFE is secure even if adversaries can send unbounded polynomially many queries to an encryption oracle.} The security bound amplification-based technique is of independent interest since the security bound amplification is not directly related to the amplification of the number of queries. These are the main technical contributions of this work. See~\cref{sec:tech_overview} and main sections for more details.

\paragraph{Copy-protected functional decryption keys.}
The other contributions are copy-protected functional decryption keys.
We introduce the notion of single-decryptor FE (SDFE), where each functional decryption key is copy-protected. This notion can be seen as a stronger cryptographic primitive than FE with secure key leasing, as we argued in~\cref{sec:background}.
\begin{itemize}
\item We define the syntax and security definitions for SDFE.
\item We achieve collusion-resistant public key SDFE for $\Ppoly$ from sub-exponentially secure indistinguishability obfuscation (IO) and the sub-exponential hardness of the learning with errors problem (QLWE assumption).
\end{itemize}
First, we transform single-key PKFE for $\Ppoly$ into \emph{single-key} SDFE for $\Ppoly$ by using SDE. Then, we transform single-key SDFE $\Ppoly$ into collusion-resistant SDFE for $\Ppoly$ by using an IO-based key bundling technique~\cite{JC:KitNisTan21,JC:BNPW20}. We can instantiate SDE with IO and the QLWE assumption~\cite{C:CLLZ21,ARXIV:CulVid21} and single-key PKFE for $\Ppoly$ with PKE~\cite{CCS:SahSey10,C:GorVaiWee12}.


\subsection{Technical Overview}\label{sec:tech_overview}
We provide a high-level overview of our techniques.
Below, standard math font stands for classical algorithms and classical variables, and calligraphic font stands for quantum algorithms and quantum states.

\paragraph{Syntax of SKFE with secure key leasing.}
We first recall a standard SKFE scheme. It consists of four algorithms $(\Setup,\KG,\Enc,\Dec)$.
$\Setup$ is given a security parameter $1^\secp$ and a collusion bound $1^\numkey$ and generates a master secret key $\msk$.
$\Enc$ is given $\msk$ and a plaintext $x$ and outputs a ciphertext $\ct$.
$\KG$ is given $\msk$ and a function $f$ and outputs a decryption key $\fsk$ tied to $f$.
$\Dec$ is given $\fsk$ and $\ct$ and outputs $f(x)$.
Then, the indistinguishability-security of SKFE roughly states that any QPT adversary cannot distinguish encryptions of $x_0$ and $x_1$ under the existence of the encryption oracle and the key generation oracle.
Here, the adversary can access the key generation oracle at most $\numkey$ times and can query only a function $f$ such that $f(x_0)=f(x_1)$.

An SKFE scheme with secure key leasing (SKFE-SKL) is a tuple of six algorithms $(\Setup, \qKG, \Enc, \qDec,\allowbreak\qcert,\Vrfy)$, where the first four algorithms form a standard SKFE scheme except the following difference on $\qKG$.
In addition to a function $f$, $\qKG$ is given an availability bound $1^\numct$ in terms of the number of ciphertexts.
Also, given those inputs, $\qKG$ outputs a verification key $\vk$ together with a decryption key $\qfsk$ tied to $f$ encoded in a quantum state, as $(\qfsk,\vk)\la\qKG(\msk,f,1^\numct)$.
By using $\qcert$, we can generate a (classical) certificate that a quantum decryption key $\qfsk$ is deleted, as $\cert\la\qcert(\qfsk)$.
We check the validity of certificates by using $\vk$ and $\Vrfy$, as $\top/\bot\la\Vrfy(\vk,\cert)$.
In addition to the decryption correctness, an SKFE-SKL scheme is required to satisfy the verification correctness that states that a correctly generated certificate is accepted, that is, $\top=\Vrfy(\vk,\cert)$ for $(\qfsk,\vk)\la\qKG(\msk,f,1^\numct)$ and $\cert\la\qcert(\qfsk)$.

\paragraph{Security of SKFE-SKL.}
The security notion of SKFE-SKL we call lessor security intuitively guarantees that if an adversary given $\qfsk$ deletes it and the generated certificate is accepted within the declared availability bound, the adversary cannot use $\qfsk$ any more.
The following indistinguishability experiment formalizes this security notion.
For simplicity, we focus on a selective setting where the challenge plaintext pair $(x_0^*,x_1^*)$ and the collusion bound $\numkey$ are fixed outside of the security experiment in this overview.

\begin{enumerate}
\item Throughout the experiment, $\qA$ can get access to the following oracles, where $\List{\qKG}$ is a list that is initially empty.
\begin{description}
\item[$\Oracle{\Enc}(x)$:]This is the standard encryption oracle that returns $\Enc(\msk,x)$ given $x$.
\item[$\Oracle{\qKG}(f,1^\numct)$:]This oracle takes as input a function $f$ and an availability bound $1^\numct$, generate $(\qfsk,\vk)\la\qKG(\msk,f,1^\numct)$, returns $\qfsk$ to $\qA$, and adds $(f,1^\numct,\vk,\bot)$ to $\List{\qKG}$.
Differently from the standard SKFE, $\qA$ can query a function $f$ such that $f(x_0^*)\ne f(x_1^*)$.
$\qA$ can get access to the key generation oracle at most $\numkey$ times.
\item[$\Oracle{\Vrfy}(f,\cert)$:]Also, $\qA$ can get access to the verification oracle. Intuitively, this oracle checks that $\qA$ deletes leased decryption keys correctly within the declared availability bounds.
Given $(f,\cert)$, it finds an entry $(f,1^\numct,\vk,M)$ from $\List{\qKG}$. (If there is no such entry, it returns $\bot$.) If $\top=\Vrfy(\vk,\cert)$ and the number of queries to $\Oracle{\Enc}$ at this point is less than $\numct$, it returns $\top$ and updates the entry into $(f,1^\numct,\vk,\top)$. Otherwise, it returns $\bot$.
\end{description}
\item When $\qA$ requests the challenge ciphertext, the challenger checks if $\qA$ has correctly deleted all leased decryption keys for functions $f$ such that $f(x_0^*)\ne f(x_1^*)$.
If so, the challenger gives the challenge ciphertext $\ct^*\la\Enc(\msk,x_\coin^*)$ for random bit $\coin\la\bit$ to $\qA$, and otherwise the challenger output $0$.
Hereafter, $\qA$ is not allowed to send a function $f$ such that $f(x_0^*)\ne f(x_1^*)$ to $\Oracle{\qKG}$.
\item $\qA$ outputs a guess $\coin'$ of $\coin$.
\end{enumerate}
We say that the SKFE-SKL scheme is lessor secure if no QPT adversary can guess $\coin$ significantly better than random guessing.
We see that if $\qA$ can use a decryption key after once $\qA$ deletes and the deletion certificate is accepted, $\qA$ can detect $\coin$ with high probability since $\qA$ can obtain a decryption key for $f$ such that $f(x_0^*)\ne f(x_1^*)$.
Thus, this security notion captures the above intuition.
We see that \emph{lessor security implies standard indistinguishability-security for SKFE}.

We basically work with the above indistinguishability based selective security for simplicity.
In \cref{sec:stronger_FESKL_security}, we also provide the definitions of adaptive security and simulation based security notions and general transformations to achieve those security notions from indistinguishability based selective security.

\paragraph{Dynamic availability bound vs. static availability bound.}
In SKFE-SKL, we can set the availability bound for each decryption key differently.
We can also consider a weaker variant where we statically set the single availability bound applied to each decryption key at the setup algorithm.
We call this variant SKFE with static bound secure key leasing (SKFE-sbSKL).
In fact, by using a technique developed in the context of dynamic bounded collusion FE~\cite{C:AMVY21,EPRINT:GGLW21}, we can generically transform SKFE-sbSKL into SKFE-SKL if the underlying SKFE-sbSKL satisfies some additional security property and efficiency requirement.
For the overview of this transformation, see \cref{sec:def_SKFESKL_variants}.
Therefore, we below focus on how to achieve SKFE-sbSKL. For simplicity, we ignore those additional properties required for the transformation to SKFE-SKL.

\paragraph{SKFE-sbSKL with the availability bound $0$ from certified deletion.}
We start with a simple construction of an SKFE-sbSKL scheme secure for the availability bound $0$ based on an SKE scheme with certified deletion~\cite{TCC:BroIsl20,AC:HMNY21}.
The availability bound is $0$ means that it is secure if an adversary deletes decryption keys without seeing any ciphertext.

SKE with certified deletion consists of five algorithms $(\KG,\qencrypt,\qdecrypt,\qdelete,\Vrfy)$.
The first three algorithms form a standard SKE scheme except that $\qencrypt$ output a verification key $\vk$ together with a ciphertext encoded in a quantum state $\qct$.
By using $\qdelete$, we can generate a (classical) certificate that $\qct$ is deleted.
The certificate is verified using $\vk$ and $\Vrfy$.
In addition to the decryption correctness, it satisfies the verification correctness that guarantees that a correctly generated certificate is accepted.
The security notion roughly states that once an adversary deletes a ciphertext $\qct$ and the generated certificate is accepted, the adversary cannot obtain any plaintext information encrypted inside $\qct$, even if the adversary is given the secret key after the deletion.

We now construct an SKFE-sbSKL scheme $\zSKFEsbSKL$ that is secure for the availability bound $0$, based on a standard SKFE scheme $\SKFE=(\Setup,\KG,\allowbreak\Enc,\Dec)$ and an SKE scheme with certified deletion $\CDSKE=(\CD.\KG,\CD.\qencrypt,\allowbreak\CD.\qdecrypt,\CD.\qdelete,\CD.\Vrfy)$.
In the setup of $\zSKFEsbSKL$, we generate $\msk\la\Setup(1^\secp,1^\numkey)$ and $\cd.\sk\la\CD.\KG(1^\secp)$, and the master secret key of $\zSKFEsbSKL$ is set to $\zmsk=(\msk,\cd.\sk)$.
To generate a decryption key for $f$, we generate a decryption key for $f$ by $\SKFE$ as $\fsk\gets\KG(\msk,f)$ and encrypt it by $\CDSKE$ as $(\cd.\qct,\vk)\gets\CD.\qencrypt(\cd.\sk,\fsk)$. The resulting decryption key is $\qzfsk\seteq\cd.\qct$ and the corresponding verification key is $\vk$.
To encrypt a plaintext $x$, we just encrypt it by $\SKFE$ as $\ct\la\Enc(\msk,x)$ and append $\cd.\sk$ contained in $\zmsk$, as $\zct\seteq(\ct,\cd.\sk)$.
To decrypt $\zct$ with $\qzfsk$, we first retrieve $\fsk$ from $\cd.\ct$ and $\cd.\sk$, and compute $f(x)\la\Dec(\fsk,\ct)$.
The certificate generation and verification are simply defined as those of $\CDSKE$ since $\qzfsk$ is a ciphertext of $\CDSKE$.

The security of $\zSKFEsbSKL$ is easily analyzed.
Let $(x_0^*,x_1^*)$ be the challenge plaintext pair.
When an adversary $\qA$ queries $f$ to $\Oracle{\qKG}$, $\qA$ is given $\qzfsk\seteq\cd.\qct$, where $\fsk\gets\KG(\msk,f)$ and $(\cd.\qct,\vk)\gets\CD.\qencrypt(\cd.\sk,\fsk)$.
If $f(x_0^*)\ne f(x_1^*)$, $\qA$ is required to delete $\qzfsk$ without seeing any ciphertext.
This means that $\qA$ cannot obtain $\cd.\sk$ before $\qzfsk$ is deleted.
Then, from the security of $\CDSKE$, $\qA$ cannot obtain any information of $\fsk$.
This implies that $\qA$ can obtain a decryption key of $\SKFE$ only for a function $f$ such that $f(x_0^*)=f(x_1^*)$, and thus the lessor security of $\zSKFEsbSKL$ follows form the security of $\SKFE$.

\paragraph{How to amplify the availability bound?}
We now explain how to amplify the availability bound from $0$ to any polynomial $\numct$.
One possible solution is to rely on the techniques for bounded collusion FE~\cite{C:GorVaiWee12,TCC:AnaVai19}.
Although the bounded collusion techniques can be used to amplify ``$1$-bounded security'' to ``poly-bounded security'', it is not clear how to use it starting from ``$0$-bounded security''.
For more detailed discussion on this point, see \cref{remark:GVW_not_work_for_dynamic}.
Therefore, we use a different technique from the existing bounded collusion FE.
At a high level, we reduce the task of amplifying the availability bound to the task of amplifying the security bound, which has been studied in the context of standard FE~\cite{C:AJLMS19,C:JKMS20}.

We observe that we can obtain an SKFE-sbSKL scheme with availability bound $\numct$ for any $\numct$ that is secure with only inverse polynomial probability by just using many instances of $\zSKFEsbSKL$ in parallel.
Concretely, suppose we use $N=\alpha\numct$ instances of $\zSKFEsbSKL$ to achieve a scheme with availability bound $\numct$, where $\alpha\in\bbN$.
To generate a decryption key for $f$, we generate $(\qzfsk_\index,\vk_\index)\la\qzKG(\zmsk_\index,f)$ for every $\index\in[N]$, and set the resulting decryption key as $(\qzfsk_\index)_{\index\in[N]}$ and the corresponding verification key as $(\vk_\index)_{\index\in[N]}$.
To encrypt $x$, we randomly choose $\index\la[N]$, generate $\zct_\index\la\zEnc(\zmsk_\index,x)$, and set the resulting ciphertext as $(\index,\zct_\index)$.
To decrypt this ciphertext with $(\qzfsk_\index)_{\index\in[N]}$, we just compute $f(x)\la\qzDec(\qzfsk_\index,\zct_\index)$.
The certification generation and verification are done by performing them under all $N$ instances.
The security of this construction is analyzed as follows.
The probability that the $\index^*$ chosen when generating the challenge ciphertext collides with some of $\numct$ indices $\index_1,\cdots,\index_\numct$ used by the first $\numct$ calls of the encryption oracle, is at most $n/N=1/\alpha$.
If such a collision does not happen, we can use the security of $\index^*$-th instance of $\zSKFEsbSKL$ to prove the security of this construction.
Therefore, this construction is secure with probability roughly $1-1/\alpha$ (denoted by $1/\alpha$-secure scheme).

Thus, all we have to do is to convert an SKFE-sbSKL scheme with inverse polynomial security into one with negligible security.
As stated above, such security amplification has been studied for standard FE.
In this work, we adopt the amplification technique using homomorphic secret sharing (HSS)~\cite{C:AJLMS19,C:JKMS20}.

\paragraph{Amplification using HSS.}
In this overview, we describe our construction using HSS that requires the LWE assumption with super-polynomial modulus to give a high-level intuition. However, our actual construction uses a primitive called set homomorphic secret sharing (SetHSS)~\cite{C:JKMS20}, which is a weak form of HSS and \emph{can be based on OWFs}.
\footnote{The definition of HSS provided below is not standard. We modify the definition to be close to SetHSS. Note that HSS defined below can be constructed from multi-key fully homomorphic encryption with simulatable partial decryption property~\cite{EC:MukWic16}.}
See~\cref{sec:removing_index} for our construction based on OWFs.

An HSS scheme consists of three algorithms $(\InpEncode,\allowbreak\FuncEncode,\allowbreak\Decode)$.
$\InpEncode$ is given a security parameter $1^\secp$, a number $1^\numset$, and an input $x$, and outputs $\numset$ input shares $(\share_\instance)_{\instance\in[\numset]}$.
$\FuncEncode$ is given a security parameter $1^\secp$, a number $1^\numset$, and a function $f$, and outputs $\numset$ function shares $(f_{\instance})_{\instance\in[\numset]}$.
$\Decode$ takes a set of evaluations of function shares on their respective input shares $(f_{\instance}(\share_{\instance}))_{\instance\in[\numset]}$, and outputs a value $f(x)$.
Then, the security property of an HSS scheme roughly guarantees that for any $(\instance^*,x_0^*,x_1^*)$, given a set of input shares $(\share_{\instance})_{\instance\in[\numset]\setminus\{\instance^*\}}$ for some $\instance^*$, an adversary cannot detect from which of the challenge inputs they are generated, under the existence of function encode oracle that is given $f$ such that $f(x_0^*)=f(x_1^*)$ and returns $(f_{\instance}(\share_{\instance}))_{\instance\in[\numset]}$.

\newcommand{\pSKFEsbSKL}{\mathsf{pSKFE}\textrm{-}\mathsf{sbSKL}}
\newcommand{\HSS}{\mathsf{HSS}}

We describe SKFE-sbSKL scheme $\SKFEsbSKL$ with the availability bound $\numct\ge1$ of our choice using a HSS scheme $\HSS=(\InpEncode,\allowbreak\FuncEncode,\Decode)$.
In the setup of $\SKFEsbSKL$, we first set up $1/2$-secure SKFE-sbSKL scheme $\SKFEsbSKL^\prime$ with the availability bound $\numct$.
This is done by parallelizing $2\numct$ instances of $\zSKFEsbSKL$ as explained before.
We generate $\numset$ master secret keys $\msk_1,\cdots,\msk_\numset$ of $\SKFEsbSKL^\prime$.
Then, to generate a decryption key for $f$ by $\SKFEsbSKL$, we first generate $(f_{\instance})_{\instance\in[\numset]}\la\FuncEncode(1^\secp,1^\numset,f)$, and generate a decryption key $\qfsk_{\instance}$ tied to $f_{\instance}$ under $\msk_\instance$ for each $\instance\in[\numset]$.
To encrypt $x$ by $\SKFEsbSKL$, we first generate $(\share_{\instance})_{\instance\in[\numset]}\la\InpEncode(1^\secp,1^\numset,x)$ and generate a ciphertext $\ct_{\instance}$ of $\share_{\instance}$ under $\msk_\instance$ for each $\instance\in[\numset]$.
The certification generation and verification are done by performing those of $\SKFESKL^\prime$ for all of the $\numset$ instances.
When decrypting the ciphertext $(\ct_\instance)_{\instance\in[\numset]}$ by $(\qfsk_\instance)_{\instance\in[\numset]}$, we can obtain $f_{\instance}(\share_{\instance})$ by decrypting $\ct_{\instance}$ with $\qfsk_{\instance}$ for every $\instance\in[\numset]$.
By combining $(f_{\instance}(\share_{\instance}))_{\instance\in[\numset]}$ using $\Decode$, we can obtain $f(x)$.

The lessor security of $\SKFEsbSKL$ can be proved as follows.
Each of $\numset$ instances of $\SKFEsbSKL^\prime$ is secure independently with probability $1/2$.
Thus, there is at least one secure instance without probability $1/2^\numset$, which is negligible by setting $\numset=\omega(\log \secp)$.
Suppose $\instance^*$-th instance is a secure instance.
Let $(x_0^*,x_1^*)$ be the challenge plaintext pair, and let $(\share^*_{\instance})_{\instance\in[\numset]}\la\InpEncode(1^\secp,1^\numset,x^*_\coin)$ for $\coin\la\bit$.
In the security experiment, from the security of $\SKFEsbSKL^\prime$ under $\msk_{\instance^*}$, an adversary cannot obtain the information of $\share_{\instance^*}$ except for its evaluation on function shares for a function $f$ queried to $\Oracle{\qKG}$ that satisfies that $f(x_0^*)=f(x_1^*)$.
Especially, from the security of $\SKFEsbSKL^\prime$ under $\msk_{\instance^*}$, the adversary cannot obtain an evaluation of $\share_{\instance^*}$ on function shares for a function $f$ such that $f(x_0^*)\ne f(x_1^*)$, though $\qA$ can query such a function to $\Oracle{\qKG}$.
Then, we see that the lessor security of $\SKFEsbSKL$ can be reduced to the security of $\HSS$.
\footnote{Actual construction and security proof needs to use a technique called the trojan method~\cite{C:ABSV15}. We ignore the issue here for simplicity.}

In the actual construction, we use SetHSS instead of HSS, as stated before.
Also, in the main body, we abstract the parallelized $\zSKFEsbSKL$ as index-based SKFE-sbSKL.
This makes the security proof of our construction using SetHSS simple.
Moreover, in the actual construction of an index-based SKFE-sbSKL, we bundle the parallelized instances of $\zSKFEsbSKL$ using a PRF.
This modification is necessary to achieve the efficiency required for the above transformation into SKFE-SKL.

Goyal et al.~\cite{EC:CiaGoyOst21} use a similar technique using HSS in a different setting (private simultaneous message protocols).
However, their technique relies on the LWE assumption unlike ours.

\paragraph{Single decryptor PKFE.}
In this work, we also define the notion of single decryptor PKFE, which is PKFE whose functional decryption key is copy-protected.
The definition is a natural extension of SDE (PKE with copy-protected decryption keys).
An adversary $\qA$ tries to copy a target functional decryption key $\qsk_{f^\ast}$.
More specifically, $\qA$ is given $\qsk_{f^\ast}$ and outputs two possibly entangled quantum distinguishers $\pirateD_1$ and $\pirateD_2$ and two plaintexts $(x_0,x_1)$ such that $f^\ast(x_0)\ne f^\ast(x_1)$. If $\pirateD_1$ or $\pirateD_2$ cannot distinguish a given ciphertext is encryption of $x_0$ or $x_1$, $\qsk_{f^\ast}$ is copy-protected.\footnote{In the security definition of SDE, quantum \emph{decryptors} try to recover the entire plaintext~\cite{C:CLLZ21}. We extend the definition because quantum decryptors are not sufficient for using SDFE as a building block. See~\cref{sec:single_dec_FE} for detail.} If both $\pirateD_1$ and $\pirateD_2$ have $\qsk_{f^\ast}$, they can trivially distinguish the challenge ciphertext. Thus, the definition guarantees copy-protection security.
We provide a collusion-resistant single-decryptor PKFE scheme, where adversaries obtain polynomially many functional decryption keys, based on IO.

We first show that a single-key single-decryptor PKFE can be constructed from a single-key standard PKFE scheme and SDE scheme.
The construction is simple nested encryption.
Namely, when encrypting a plaintext $x$, we first encrypt it by the standard PKFE scheme and then encrypt the ciphertext by the SDE scheme.
The secret key of the SDE scheme is included in the functional decryption key of the resulting single-decryptor PKFE scheme.
Although a PKFE functional decryption key can be copied, the SDE decryption key cannot be copied and adversaries cannot break the security of PKFE. This is because we need to run the SDE decryption algorithm before we run the PKFE decryption algorithm.

The security notion for SDE by Coladangelo et al.~\cite{C:CLLZ21} is not sufficient for our purpose since SDE plaintexts are ciphertexts of the standard PKFE in the construction.
We need to extend the security notion for SDE to prove the security of this construction because we need to handle randomized messages (the PKFE encryption is a randomized algorithm).
Roughly speaking, this new security notion guarantees that the security of SDE holds for plaintexts of the form $g(x;r)$, where $g$ and $x$ respectively are a function and an input chosen by an adversary and $r$ is a random coin chosen by the experiment. 
We can observe that the SDE scheme proposed by Coladangelo et al.~\cite{C:CLLZ21} based on IO satisfies this security notion.
Then, by setting $g$ as the encryption circuit of the standard PKFE, the security of the single-key single-decryptor PKFE scheme above can be immediately reduced to the security of the SDE scheme.
We also extend adversarial quantum decryptors, which try to output an entire plaintext, to adversarial quantum distinguishers, which try to guess a $1$-bit coin used to generate a ciphertext. We need this extension to use SDE as a building block. It is easy to observe the SDE scheme by Coladangelo et al.~\cite{C:CLLZ21} is secure even against quantum distinguishers.

Once we obtain a single-key single-decryptor PKFE scheme, we can transform it into a collusion-resistant single-decryptor PKFE scheme by again using IO. This transformation is based on one from a single-key standard PKFE scheme into a collusion-resistant standard PKFE scheme~\cite{JC:BNPW20,JC:KitNisTan21}. The idea is as follows. We need to generate a fresh instance of the single-key scheme above for \emph{each random tag} and bundle (unbounded) polynomially many instances to achieve collusion-resistance. We use IO to bundle multiple instances of single-key SDFE. More specifically, a public key is an obfuscated circuit of the following setup circuit. The setup circuit takes a public tag $\tau$ as input, generates a key pair $(\pk_\tau,\msk_\tau)$ of the single-key SDFE scheme using PRF value $\prf_{\prfkey}(\tau)$ as randomness, and outputs only $\pk_\tau$. The master secret key is the PRF key $\prfkey$. We can generate a functional decryption key for $f$ by choosing a random tag $\tau$ and generating a functional decryption key $\qsk_{f,\tau}$ under $\msk_{\tau}$. A functional decryption key of our collusion-resistant scheme consists of $(\tau,\qsk_{f,\tau})$. A ciphertext is an obfuscated circuit of the following encryption circuit, where a plaintext $x$ is hardwired. The encryption circuit takes a public tag $\tau$, generates $\pk_\tau$ by using the public key explained above, and outputs a ciphertext of $x$ under $\pk_\tau$. Due to this mechanism, only one functional decryption key $\qsk_{f,\tau}$ under $\msk_\tau$ is issued for each $\tau$, but we can generate polynomially many functional decryption keys by using many tags. If we use a different tag $\tau^\prime$, an independent key pair $(\pk_{\tau^\prime},\msk_{\tau^\prime})$ is generated and it is useless for another instance under $(\pk_\tau,\msk_\tau)$. The IO security guarantees that the information about $\prfkey$ (and $\msk_\tau$) is hidden.\footnote{We use puncturable PRFs and the puncturing technique here as the standard technique for cryptographic primitives based on IO~\cite{SIAMCOMP:SahWat21}.} Thus, we can reduce the collusion-resistance to the single-key security of the underlying single-decryptor PKFE. Note that we need to consider super-polynomially many hybrid games to complete the proof since the tag space size must be super-polynomial to treat \emph{unbounded} polynomially many tags. This is the reason why we need the sub-exponential security for building blocks.

\ifnum\submission=1

\subsection{Organization}
Due to the space limitation, we provide preliminaries including notations in \cref{sec:prelim}.
Also, we focus on SKFE-SKL in the main body, and we provide results on single decryptor FE in \cref{sec:single_dec_FE}.
We provide a high-level overview of our single-decryptor FE in~\cref{sec:tech_overview}.

In \cref{sec:def_SKFE_SKL}, we provide the definition of SKFE-SKL, and its variants SKFE-sbSKL and index-based SKFE-sbSKL.
In \cref{sec:iSKFESKL}, we construct an index-based SKFE-sbSKL scheme.
In \cref{sec:removing_index}, we show how to transform an index-based SKFE-sbSKL scheme into an SKFE-sbSKL scheme.
In \cref{sec:dynamic_SKFESKL}, we show how to construct an SKFE-SKL scheme from an SKFE-sbSKL scheme.
\else

\ifnum\submission=0
\subsection{More on Related Work}\label{sec:more_related_work}
\else
\section{More on Related Work}\label{sec:more_related_work}
\fi

\paragraph{Secure software leasing.}
Ananth and La Place~\cite{EC:AnaLaP21} achieve secure software leasing for $\subEvs$ from the QLWE assumption and IO.
Kitagawa, Nishimaki, and Yamakawa~\cite{TCC:KitNisYam21} achieve secure software leasing for PRFs and $\subEvs$ only from the QLWE (that is, without IO).
Broadbent, Jeffery, Lord, Podder, and Sundaram~\cite{TCC:BJLPS21} achieve secure software leasing for $\subEvs$ without assumptions.
Coladangelo, Majenz, and Poremba~\cite{EPRINT:ColMajPor20} also achieve secure software leasing for $\subEvs$ in the quantum random oracle (QROM) model (without assumptions).

In secure software leasing, we force adversaries to run pirate software by an \emph{honest} evaluation algorithm denoted by $\run$~\cite{EC:AnaLaP21}.
An honest evaluation algorithm verifies the format of software when software is executed, and adversaries cannot adopt arbitrary strategies for creating pirate software.
Aaronson, Liu, Liu, Zhandry, and Zhang~\cite{C:ALLZZ21} introduce copy-detection, which is essentially the same notion as secure software leasing. Although a check algorithm verifies returned software, there is no honest evaluation algorithm, and adversaries can adopt arbitrary strategies in the copy-detection setting. Aaronson et al.~\cite{C:ALLZZ21} achieve copy-detection for PRFs, PKE, and signatures from IO and standard cryptographic assumptions.
Those cryptographic functionalities are much weaker than FE.
Moreover, we need IO to achieve security against adversaries that adopt any strategy to execute pirate software.
Secure software leasing by Coladangelo et al.~\cite{EPRINT:ColMajPor20} is also secure against such arbitrary adversaries, but their construction relies on QROM, and its functionality is severely limited.

\paragraph{Certified deletion.}
Broadbent and Islam~\cite{TCC:BroIsl20} present the notion of quantum encryption with certified deletion. They achieve one-time SKE with certified deletion without assumptions.
Hiroka, Morimae, Nishimaki, and Yamakawa~\cite{AC:HMNY21} extend the notion to reusable SKE, PKE, and ABE with certified deletion, and instantiate them with standard SKE, PKE, and IO and OWFs, respectively.
Our FE with secure key leasing can also be seen as certified deletion for functional decryption keys.

\paragraph{Copy-protection.}
Aaronson~\cite{CCC:Aaronson09} achieves copy-protection for arbitrary unlearnable Boolean functions \emph{relative to a quantum oracle}, and presents two \emph{heuristic} copy-protection schemes for point functions.
Aaronson et al.~\cite{C:ALLZZ21} achieve quantum copy-protection for all unlearnable functions by using \emph{classical oracles}.
Georgiou and Zhandry~\cite{EPRINT:GeoZha20} present the notion of SDE and instantiate it with one-shot signatures~\cite{STOC:AGKZ20} and extractable witness encryption (WE)~\cite{C:GKPVZ13}. They also introduce the notion of broadcast encryption with unclonable decryption and splittable ABE, which can be seen as copy-protected variants of broadcast encryption and ABE, respectively. They instantiate splittable ABE with the same tools as those for SDE. They instantiate broadcast encryption with unclonable decryption with tokenized signatures~\cite{EPRINT:BenSat17}, extractable WE, and collision-resistant hash functions.
Coladangelo et al.~\cite{C:CLLZ21} achieve quantum copy-protection for PRFs and SDE from IO and OWFs, the QLWE assumption, and the strong monogamy-of-entanglement conjecture.
Later, Culf and Vidick prove that the strong monogamy-of-entanglement conjecture is true without any assumption~\cite{ARXIV:CulVid21}.
Those cryptographic functionalities are weaker than FE, as in the case of secure software leasing.
Most of those works rely on magical oracles or knowledge-type assumptions such as the existence of extractable WE, which is implausible~\cite{ALGMC:GGHW17}.
Only the constructions by Coladangelo et al.~\cite{C:CLLZ21,ARXIV:CulVid21} do not rely on such strong tools.
However, note that there is no provably secure \emph{post-quantum} IO from well-founded assumptions so far.\footnote{There are some candidates~\cite{TCC:BGMZ18,TCC:CHVW19,EC:AgrPel20,TCC:DQVWW21}. The known IO constructions from well-founded assumptions~\cite{STOC:JaiLinSah21,EC:JaiLinSah22} are vulnerable against quantum adversaries.}

Ben-David and Sattath~\cite{EPRINT:BenSat17} present the notion of tokenized signatures, where we can generate a delegated signing token from a signing key. A signing token enables a user to sign one and only one message. They instantiate it with virtual black-box obfuscation~\cite{JACM:BGIRSVY12}.
Amos, Georgiou, Kiayias, and Zhandry present the notion of one-shot signatures, where adversaries cannot generate two valid signatures for different two messages even under an adversarially generated verification key. They instantiate it with \emph{classical oracles}, which can be seen as an idealized virtual black-box obfuscation. Those are copy-protected variants of signatures. Relationships between (copy-protected) FE and them are not known so far.

\paragraph{Functional encryption.}
FE has been extensively studied since Boneh, Sahai, and Waters~\cite{TCC:BonSahWat11} formally defined its notion.
Although there are many works on FE, we focus on FE for $\Ppoly$ that are closely related to our work in this paper.

Sahai and Seyalioglu~\cite{CCS:SahSey10} present the first (selectively secure) single-key PKFE for $\Ppoly$ from PKE.
Gorbunov, Vaikuntanathan, and Wee~\cite{C:GorVaiWee12} present transformations from (adaptively secure) single-key PKFE for $\NCone$ into (adaptively secure) bounded collusion-resistant PKFE for $\Ppoly$ by using pseudorandom generators (PRGs) in $\NCone$.
The transformation also converts single-key and many-ciphertext SKFE for $\NCone$ into bounded collusion-resistant and many-ciphertext SKFE for $\Ppoly$.
They also show that we can achieve adaptively secure single-key PKFE (resp. SKFE) for $\NCone$ from PKE (resp. OWFs).
Ananth, Brakerski, Segev, and Vaikuntanathan~\cite{C:ABSV15} observe that we can invert the roles of the functions and plaintexts in SKFE by using the function-privacy of SKFE~\cite{JC:BraSeg18}, and obtain collusion-resistant and bounded many-ciphertext SKFE for $\Ppoly$ from OWFs by using the transformations.
Later, Ananth and Vaikuntanathan~\cite{TCC:AnaVai19} improve the assumptions (remove PRGs in $\NCone$) for the transformation and achieve the optimal ciphertext size (in the asymptotic sense).
These transformations~\cite{C:GorVaiWee12,TCC:AnaVai19} run $N$ independent copies of a single-key FE scheme and encrypt the views of some $N$-party multi-party computation (MPC) protocol.
Agrawal and Rosen~\cite{TCC:AgrRos17} construct bounded collusion-resistant PKFE for $\NCone$ by using FE for inner-product~\cite{PKC:ABDP15,C:AgrLibSte16} and the homomorphic property of some encryption scheme~\cite{C:BraVai11}.\footnote{We can upgrade FE for $\NCone$ to FE for $\Ppoly$ by using randomized encoding~\cite{C:ABSV15}.} Their construction supports arithmetic circuits.
Agrawal, Maitra, Vempati, and Yamada~\cite{C:AMVY21} and Garg, Goyal, Lu, and Waters~\cite{EPRINT:GGLW21} concurrently and independently present the notion of dynamic bounded collusion-resistant FE and how to achieve it from identity-based encryption (IBE).
In the dynamic bounded collusion setting, the setup algorithm of FE does not depend on the number of key queries, and only the encryption algorithm of FE depends on it.

Security amplification is transforming an $\epsilon$-secure cryptographic scheme into a $\nu$-secure one, where $\epsilon \in (0,1)$ is a constant and $\nu$ is a negligible function.
Ananth, Jain, Lin, Matt, and Sahai~\cite{C:AJLMS19} propose security amplification techniques for standard FE.
Their techniques are based on multi-key homomorphic encryption, which can be instantiated with the LWE assumption.
Jain, Korb, Manohar, and Sahai~\cite{C:JKMS20} also propose how to amplify the security of standard FE by using SetHSS, which can be instantiated with OWFs.

IO for $\Ppoly$ is necessary and sufficient for collusion-resistant FE (for $\NCone$) up to sub-exponential security loss~\cite{SIAMCOMP:GGHRSW16,JACM:BitVai18,C:AnaJai15,EPRINT:AnaJaiSah15,JC:KitNisTan22}.
Jain, Lin, and Sahai~\cite{STOC:JaiLinSah21,EC:JaiLinSah22} achieve the first IO (and collusion-resistant FE for $\Ppoly$) from well-founded assumptions. However, they are not post-quantum secure since the constructions heavily rely on cryptographic bilinear maps.

Revocation mechanisms have been extensively studied, especially in the context of broadcast encryption~\cite{FC:NaoPin00,C:NaoNaoLot01}.\footnote{We omit references since there are too many previous works on broadcast encryption and trace-and-revoke systems.}
Nishimaki et al.~\cite{EC:NisWicZha16} introduce revocable FE to achieve a trace-and-revoke system with flexible identities. They achieve the first revocable FE by using IO and OWFs. We need to manage a revocation list and update ciphertexts to revoke a user.

\fi


\ifnum\submission=0

\section{Preliminaries}\label{sec:prelim}

\paragraph{Notations and conventions.}
In this paper, standard math or sans serif font stands for classical algorithms (e.g., $C$ or $\algo{Gen}$) and classical variables (e.g., $x$ or $\keys{pk}$).
Calligraphic font stands for quantum algorithms (e.g., $\qalgo{Gen}$) and calligraphic font and/or the bracket notation for (mixed) quantum states (e.g., $\qstateq$ or $\ket{\psi}$).
For strings $x$ and $y$, $x \concat y$ denotes the concatenation of $x$ and $y$.
Let $\mv{0}$ denote a string consisting of an appropriate number of $0$.
Let $[\ell]$ denote the set of integers $\{1, \cdots, \ell \}$, $\secp$ denote a security parameter, and $y \seteq z$ denote that $y$ is set, defined, or substituted by $z$.

In this paper, for a finite set $X$ and a distribution $D$, $x \chosen X$ denotes selecting an element from $X$ uniformly at random, $x \chosen D$ denotes sampling an element $x$ according to $D$. Let $y \gets \algo{A}(x)$ and $y \gets \qalgo{A}(\qstate{x})$ denote assigning to $y$ the output of a probabilistic or deterministic algorithm $\algo{A}$ and a quantum algorithm $\qalgo{A}$ on an input $x$ and $\qstate{x}$, respectively. When we explicitly show that $\algo{A}$ uses randomness $r$, we write $y \gets \algo{A}(x;r)$.
PPT and QPT algorithms stand for probabilistic polynomial-time algorithms and polynomial-time quantum algorithms, respectively.
Let $\negl$ denote a negligible function.
\else\fi

\ifnum\submission=0

\subsection{Standard Cryptographic Tools}\label{sec:crypto_tools}

\begin{definition}[Pseudorandom Function]\label{def:prf}
Let $\{\algo{F}_{K}: \bin^{\ell_1} \ra \allowbreak \bin^{\ell_2} \mid K \in \bin^\secp\}$ be a family of polynomially computable functions, where $\ell_1$ and $\ell_2$ are some polynomials of $\secp$.
We say that $\prf$ is a pseudorandom function (PRF) family if, for any PPT distinguisher $\cA$, there exists $\negl(\cdot)$ such that it holds that
\begin{align}
\abs{\Pr[\qA^{\algo{F}_{K}(\cdot)}(1^\secp) \out 1 \mid K \chosen \bin^{\secp}]
-\Pr[\qA^{\Rand(\cdot)}(1^\secp) \out 1 \mid \Rand \chosen \cU]
} \le\negl(\secp),
\end{align}
where $\cU$ is the set of all functions from $\bin^{\ell_1}$ to $\bin^{\ell_2}$.
\end{definition}

\begin{theorem}[\cite{JACM:GolGolMic86}]\label{thm:prf-owf} If one-way functions exist, then for all efficiently computable functions $n(\lambda)$ and $m(\lambda)$, there exists a PRF that maps $n(\lambda)$ bits to $m(\lambda)$ bits.
\end{theorem}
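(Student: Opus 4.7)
The plan is to invoke the classical Goldreich--Goldwasser--Micali (GGM) construction together with the Håstad--Impagliazzo--Levin--Luby (HILL) theorem, adapted to produce the required input and output lengths $n(\lambda)$ and $m(\lambda)$. First, I would apply the HILL theorem to the assumed one-way function to obtain a pseudorandom generator $G : \{0,1\}^\lambda \to \{0,1\}^{\lambda+1}$. By iterating $G$ in a standard way (feeding its output back into $G$), one obtains a length-doubling PRG $G' : \{0,1\}^\lambda \to \{0,1\}^{2\lambda}$, which is the primitive the GGM construction requires.

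Next, I would carry out the GGM tree construction. Write $G'(K) = (G'_0(K), G'_1(K))$ where each half has length $\lambda$. For an input $x = x_1 \cdots x_n \in \{0,1\}^n$ with $n = n(\lambda)$, define $F_K(x) = G'_{x_n} \circ \cdots \circ G'_{x_1}(K)$, which produces a $\lambda$-bit pseudorandom value for each input. Pseudorandomness is proved by a standard hybrid argument over the $n(\lambda)$ levels of the tree, where each hybrid replaces one more layer of PRG outputs with truly random strings; since $n(\lambda)$ is polynomial, the accumulated distinguishing advantage remains negligible.

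Finally, to obtain $m(\lambda)$-bit outputs rather than $\lambda$-bit outputs, I would compose the GGM PRF with a further stretching PRG $G'' : \{0,1\}^\lambda \to \{0,1\}^{m(\lambda)}$ obtained again from HILL (iterating $G$ until the desired length is reached), defining the final PRF as $\algo{F}_K(x) \seteq G''(F_K(x))$. Pseudorandomness of the composition follows from a two-step hybrid: first replace $F_K(x)$ on each queried input by an independent uniform $\lambda$-bit seed (using GGM security), then replace each $G''$ evaluation by a uniform $m(\lambda)$-bit string (using PRG security of $G''$).

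The main obstacle is purely bookkeeping rather than conceptual: one must ensure that the polynomial factors in the hybrid arguments (the depth $n(\lambda)$ of the GGM tree and the number of queries made by the distinguisher) combine with the negligible PRG advantage to yield a negligible overall bound, and that all underlying primitives remain secure against the class of adversaries considered (classical PPT here, matching the statement). Since the paper requires only the classical PPT version of the theorem, no additional care is needed for quantum security, and the entire argument is standard; the statement is cited essentially as a reference to \cite{JACM:GolGolMic86}.
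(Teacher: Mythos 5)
Your proposal is correct and is exactly the argument behind the cited result: the paper offers no proof of this theorem, simply citing Goldreich--Goldwasser--Micali, and your HILL-plus-GGM-tree route with a final output-stretching PRG is the standard instantiation of that citation. The only caveat is that the paper actually needs all primitives to be post-quantum secure (its distinguishers are QPT, despite the definition saying ``PPT''), but since your reductions are black-box and the oracle access is classical, the same hybrid argument goes through against quantum distinguishers starting from a post-quantum OWF.
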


\begin{definition}[Secret Key Encryption]\label{def:ske}
An SKE scheme $\SKE$ is a two tuple $(\E, \D)$ of PPT algorithms. 
\begin{itemize}
\item The encryption algorithm $\E$, given a key $K \in \bin^\lambda$ and a plaintext $m \in \M$, outputs a ciphertext $\ct$,
where $\M$ is the plaintext space of $\SKE$.

\item The decryption algorithm $\D$, given a key $K$ and a ciphertext $\ct$, outputs a plaintext $\tilde{m} \in \{ \bot \} \cup \M$.
This algorithm is deterministic.
\end{itemize}

\begin{description}
\item[Correctness:] We require $\D(K, \E(K, m)) = m$ for every $m \in \M$ and key $K \in \bin^\lambda$.
\item[CPA Security:]
We define the following experiment $\expa{\qA,\SKE}{cpa}(1^\secp,\coin)$ between a challenger and an adversary $\qA$.
\begin{enumerate}
\item The challenger generates $K \chosen \bin^\lambda$.
Then, the challenger sends $1^\lambda$ to $\qA$.
\item $\qA$ may make polynomially many encryption queries adaptively.
$\qA$ sends $(m_0, m_1) \in \M \times \M$ to the challenger.
Then, the challenger returns $\ct \la \E(K, m_\coin)$.
\item $\qA$ outputs $\coin' \in \bin$.
\end{enumerate}
We say that $\SKE$ is CPA secure if for any QPT adversary $\qA$, we have 
\[
\adva{\SKE, \qA}{cpa}(\lambda) = \abs{\Pr[\expa{\qA,\SKE}{cpa}(1^\secp,0)=1]-\Pr[\expa{\qA,\SKE}{cpa}(1^\secp,1)=1]} \le \negl(\secp).
\]
\end{description}
\end{definition}

\begin{definition}[Ciphertext Pseudorandomness for SKE]\label{def:ske_pseudorandomct}
Let $\zo{\ell}$ be the ciphertext space of $\SKE$.
We define the following experiment $\expb{\qA,\SKE}{pr}{ct}(1^\secp,\coin)$ between a challenger and an adversary $\qA$.
\begin{enumerate}
\item The challenger generates $K \chosen \bin^\lambda$.
Then, the challenger sends $1^\lambda$ to $\qA$.
\item $\qA$ may make polynomially many encryption queries adaptively.
$\qA$ sends $m \in \M $ to the challenger.
Then, the challenger returns $\ct \la \E(K, m)$ if $\coin=0$, otherwise $\ct \chosen \zo{\ell}$.
\item $\qA$ outputs $\coin' \in \bin$.
\end{enumerate}
We say that $\SKE$ is pseudorandom-secure if for any QPT adversary $\qA$, we have 
\[
\advb{\SKE, \qA}{pr}{ct}(\lambda) = \abs{\Pr[\expb{\qA,\SKE}{pr}{ct}(1^\secp,0)=1]-\Pr[\expb{\qA,\SKE}{pr}{ct}(1^\secp,1)=1]} \le \negl(\secp).
\]
\end{definition}

\begin{theorem}\label{thm:pseudorandom_ske}
If OWFs exist, there exists a pseudorandom-secure SKE scheme.
\end{theorem}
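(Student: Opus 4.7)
The plan is to build a pseudorandom-secure SKE scheme from a PRF, which in turn follows from OWFs by \cref{thm:prf-owf}. Concretely, I would use the standard ``randomized counter mode'' construction: to encrypt a message $m \in \bin^{n}$ under key $K$, sample a fresh nonce $r \chosen \bin^{\secp}$ and output $\ct \seteq (r, \algo{F}_{K}(r) \oplus m)$, where $\algo{F}$ is a PRF mapping $\bin^{\secp}$ to $\bin^{n}$ (which exists by \cref{thm:prf-owf} assuming OWFs). The decryption algorithm parses $\ct$ as $(r, c)$ and returns $\algo{F}_{K}(r) \oplus c$. Correctness is immediate.

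For the pseudorandomness security, I would proceed by a standard two-step hybrid argument. In the first hybrid, replace $\algo{F}_{K}$ with a truly random function $\Rand$ from $\bin^{\secp}$ to $\bin^{n}$; any QPT adversary $\qA$ that distinguishes this hybrid from the real experiment $\expb{\qA,\SKE}{pr}{ct}(1^{\secp},0)$ with non-negligible advantage yields a distinguisher against the PRF (note that the reduction simulates the encryption oracle using only classical oracle access to $\algo{F}_{K}$, so post-quantum PRF security, inherited from post-quantum OWFs, is what is needed even though $\qA$ itself is a QPT algorithm). In the second hybrid, each encryption query returns $(r, \Rand(r) \oplus m)$; as long as the nonces $r_{1}, \ldots, r_{q}$ sampled across the at most $q = \poly(\secp)$ queries are all distinct, each $\Rand(r_{i}) \oplus m_{i}$ is an independent uniformly random $n$-bit string, so this hybrid is statistically identical to $\expb{\qA,\SKE}{pr}{ct}(1^{\secp},1)$ conditioned on no nonce collision. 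The probability of a collision is at most $\binom{q}{2} / 2^{\secp} = \negl(\secp)$ by a union bound (birthday bound).

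Combining the two hybrids with the triangle inequality bounds $\advb{\SKE, \qA}{pr}{ct}(\secp)$ by the PRF distinguishing advantage plus the negligible collision probability, which gives the desired negligible bound. I would not expect any genuine obstacle here: the only mild subtlety is making sure that the PRF we invoke is post-quantum secure so that the security reduction goes through against QPT adversaries $\qA$, but this is already guaranteed by our convention that all primitives in the paper are post-quantum, together with the fact that the GGM construction underlying \cref{thm:prf-owf} is post-quantum secure whenever the underlying OWF is, since the reduction only uses classical oracle queries to the PRF.
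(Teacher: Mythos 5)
Your proposal is correct: the paper states \cref{thm:pseudorandom_ske} without proof as a folklore consequence of OWFs, and your construction $(r,\algo{F}_K(r)\oplus m)$ together with the PRF-to-random-function hybrid and the birthday bound on nonce collisions is exactly the standard argument being invoked. Your remark that the reduction only needs classical oracle access to the PRF, so post-quantum security of the GGM PRF suffices against QPT distinguishers, is also the right observation and consistent with \cref{def:prf,thm:prf-owf}.
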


\begin{definition}[Secret-Key Functional Encryption]\label{def:SKFE}
An SKFE scheme $\SKFE$ is a tuple of four PPT algorithms $(\Setup, \KG, \Enc, \Dec)$. 
Below, let $\cX$, $\cY$, and $\cF$ be the plaintext, output, and function spaces $\SKFE$, respectively.
\begin{description}
\item[$\Setup(1^\secp,1^{\numkey})\ra\msk$:] The setup algorithm takes a security parameter $1^\lambda$ and a collusion bound $1^{\numkey}$, and outputs a master secret key $\msk$.
\item[$\KG(\msk,f)\ra\sk_f$:] The key generation algorithm takes a master secret key $\msk$ and a function $f \in \calF$, and outputs a functional decryption key $\sk_f$.

\item[$\Enc(\msk,x)\ra\ct$:] The encryption algorithm takes a master secret key $\msk$ and a plaintext $x \in \cX$, and outputs a ciphertext $\ct$.

\item[$\Dec(\sk_f,\ct)\ra y$:] The decryption algorithm takes a functional decryption key $\sk_f$ and a ciphertext $\ct$, and outputs  $y \in \{ \bot \} \cup \cY$.

\item[Correctness:] We require that for every $x \in \cX$, $f \in \calF$, $\numkey\in\bbN$, we have that
\[
\Pr\left[
\Dec(\sk_f, \ct) = f(x)
 \ \middle |
\begin{array}{rl}
 &\msk \la \Setup(1^\lambda,1^{\numkey}),\\
 & \sk_f \gets \KG(\msk,f), \\
 &\ct \gets \Enc(\msk,x)
\end{array}
\right]=1 - \negl(\secp).
\]
\end{description}
\end{definition}

\begin{definition}[Selective Indistinguishability-Security]\label{def:sel_ind_SKFE}
We say that $\SKFE$ is a selectively indistinguishability-secure SKFE scheme for $\Xs,\Ys$, and $\Fs$, if it satisfies the following requirement, formalized from the experiment $\expb{\qA,\SKFE}{sel}{ind}(1^\secp,\coin)$ between an adversary $\qA$ and a challenger:
        \begin{enumerate}
            \item At the beginning, $\qA$ sends $(1^{\numkey},x_0^*,x_1^*)$ to the challenger. The challenger runs $\msk\gets\Setup(1^\secp,1^\numkey)$. Also, the challenger generates $\ct^*\la\Enc(\msk,x_\coin^*)$ and sends $\ct^*$ to $\qA$.
            Throughout the experiment, $\qA$ can access the following oracles.
            \begin{description}
            \item[$\Oracle{\Enc}(x)$:] Given $x$, it returns $\Enc(\msk,x)$.
            \item[$\Oracle{\KG}(f)$:] Given $f$, if $f(x_0^*)\ne f(x_1^*)$, it returns $\bot$. Otherwise, it returns $\KG(\msk,f)$. $\qA$ can access this oracle at most $\numkey$ times.
            \end{description}
            \item $\qA$ outputs a guess $\coin^\prime$ for $\coin$. The challenger outputs $\coin'$ as the final output of the experiment.

        \end{enumerate}
        We say that $\SKFE$ is selectively indistinguishability-secure if, for any QPT $\qA$, it holds that
\begin{align}
\advb{\SKFE,\qA}{sel}{ind}(\secp) \seteq \abs{\Pr[\expb{\SKFESKL,\qA}{sel}{ind} (1^\secp,0) \out 1] - \Pr[\expb{\SKFE,\qA}{sel}{ind} (1^\secp,1) \out 1] }\le \negl(\secp).
\end{align}
\end{definition}

It is easy to see that we can consider the adaptively indistinguishability-secure variant as in~\cref{def:ad_ind_PKFE}.
\begin{theorem}[\cite{C:GorVaiWee12}]\label{thm:skfe_from_owf}
If there exist OWFs, then there exists selectively indistinguishability-secure SKFE for $\Ppoly$.
\end{theorem}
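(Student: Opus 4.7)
The plan is to invoke the classical two-step template of Gorbunov--Vaikuntanathan--Wee: first construct a single-key ($\numkey=1$), selectively indistinguishability-secure, many-ciphertext SKFE scheme for $\Ppoly$ from OWFs, and then amplify this to a $\numkey$-bounded collusion-resistant scheme for any a priori polynomial $\numkey$ declared at $\Setup$. OWFs yield both PRFs (\cref{thm:prf-owf}) and Yao's garbled circuits, which are the two primitives driving the construction.

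For the first step, the building blocks are a PRF $F$ and Yao's garbling. The master secret key consists of a PRF key $K$ together with an SKE key. To encrypt $x$, I would sample a fresh nonce $\tau$, derive per-ciphertext garbling randomness $r_\tau = F_K(\tau)$, and output $\tau$ together with the Yao input labels for $x$ under the garbling randomness $r_\tau$. A functional decryption key $\sk_f$ for $f$ is a short description that, combined with $(\tau, \text{labels})$, reconstructs the garbled circuit $\GC_f$ on the fly and evaluates it. Selective IND-security reduces first to PRF security (replacing the $r_\tau$ used on the challenge ciphertext by fresh randomness) and then to Yao's simulation security for the challenge garbling, using SKE to protect the information embedded in $\sk_f$ across ciphertexts. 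Many-ciphertext support comes for free because each nonce yields fresh randomness.

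For the second step, I would set $N = \Theta(\numkey^2 \secp)$ and instantiate $N$ independent copies of the single-key scheme with master keys $\msk_1, \ldots, \msk_N$. Using an explicit $(N,\numkey)$-cover-free family, each of the $\numkey$ functional-key queries is assigned a set $\Gamma_i \subseteq [N]$ such that no union of $\numkey-1$ other sets covers $\Gamma_i$. The functional decryption key for the $i$-th query $f_i$ is a bundle of single-keys, one under each $\msk_j$ with $j\in \Gamma_i$, each encoding an MPC ``view-computation'' functionality for $f_i$. To encrypt $x$, I secret-share $x$ among $N$ virtual MPC parties and encrypt each party's view under the corresponding $\msk_j$; decryption reconstructs sufficiently many views via the single-key schemes in $\Gamma_i$ and outputs $f_i(x)$.

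The main technical obstacle is the selective IND-security proof of the amplification. In the reduction, one must argue simultaneously that (i) for every index $j \notin \bigcup_i \Gamma_i$, the corresponding share on the challenge ciphertext is indistinguishable by the many-ciphertext single-key security of $\msk_j$ (no functional key under $\msk_j$ is ever issued), and (ii) for indices $j \in \bigcup_i \Gamma_i$, the shares reveal only $f_i(x_0^*)=f_i(x_1^*)$ thanks to the MPC simulator against $\numkey$ corruptions, which is enabled by the cover-free property. Orchestrating these steps requires a hybrid argument that interleaves the cover-free combinatorics, single-key SKFE security under the $\Oracle{\Enc}$ oracle (handled via fresh nonces per ciphertext so that only the challenge garbling must be simulated), and the MPC security; this is precisely the technical content of \cite{C:GorVaiWee12}, and yields the claimed SKFE for $\Ppoly$ from OWFs.
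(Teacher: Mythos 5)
Your overall architecture is the right one and matches the route the paper implicitly relies on: the paper gives no proof of this theorem, it simply imports it from Gorbunov--Vaikuntanathan--Wee, and their proof is exactly the two-step template you describe (a single-key, many-ciphertext scheme from garbled circuits, amplified to $\numkey$-bounded collusion via cover-free set systems and an MPC-in-the-head style decomposition, with PRGs in $\NCone$ -- also implied by OWFs -- used to keep the functions computed by the inner scheme low-degree). So the decomposition, the key lemma structure, and the assumptions are all correctly identified.

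However, your concrete single-key construction in step one would fail. You have the encryptor output only the input labels for $x$ under garbling randomness $r_\tau = F_K(\tau)$, and you have $\sk_f$ ``reconstruct the garbled circuit $\GC_f$ on the fly'' from $\tau$. To regenerate the garbled gate tables of $\GC_f$, the decryptor must know the full garbling randomness $r_\tau$ (every garbled gate is built from \emph{both} labels of each of its wires), which means $\sk_f$ must effectively contain the PRF key $K$ or a way to evaluate $F_K$ on arbitrary nonces. But a party holding both labels of every input wire can compare them against the labels in the ciphertext and read off $x$ bit by bit, breaking selective indistinguishability for any admissible $f$ with $f(x_0^*)=f(x_1^*)$ but $x_0^*\neq x_1^*$. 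The correct instantiation inverts the roles: the \emph{encryptor} freshly garbles the universal circuit $U(\cdot,x)$ (so the garbled circuit, not the labels for $x$, sits in the ciphertext), attaches SKE encryptions of both labels of each input wire under per-position, per-bit keys derived from $\msk$, and $\sk_f$ is the subset of those SKE keys selecting the bit-description of $f$; then the decryptor recovers only the labels for $f$ and Yao's simulation security applies. With that replacement (and the routine handling of the encryption oracle via fresh garblings per ciphertext), your step two goes through as in GVW12 and yields the claimed theorem.
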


\subsection{Advanced Tools}\label{sec:advenced_tools}

We present the definitions for reusable SKE with certified deletion introduced by Hiroka et al.~\cite{AC:HMNY21}
\begin{definition}[Reusable SKE with Certified Deletion (Syntax)]\label{def:reusable_sk_cert_del}
A secret key encryption scheme with certified deletion is a tuple of quantum algorithms $(\KG,\qencrypt,\qdecrypt,\qdelete,\Vrfy)$ with plaintext space $\Ms$ and key space $\Ks$.
\begin{description}
    \item[$\KG (1^\secp) \ra \sk$:] The key generation algorithm takes as input the security parameter $1^\secp$ and outputs a secret key $\sk \in \Ks$.
    \item[$\qencrypt(\sk,m) \ra (\vk,\qct)$:] The encryption algorithm takes as input $\sk$ and a plaintext $m\in\Ms$ and outputs a verification key $\vk$ and a ciphertext $\qct$.
    \item[$\qdecrypt(\sk,\qct) \ra m^\prime$:] The decryption algorithm takes as input $\sk$ and $\qct$ and outputs a plaintext $m^\prime \in \Ms$ or $\bot$.
    \item[$\qdelete(\qct) \ra \cert$:] The deletion algorithm takes as input $\qct$ and outputs a certification $\cert$.
    \item[$\Vrfy(\vk,\cert)\ra \top \mbox{ or }\bot$:] The verification algorithm takes $\vk$ and $\cert$ and outputs $\top$ or $\bot$.

\item[Decryption correctness:] There exists a negligible function $\negl$ such that for any $m\in\Ms$, 
\begin{align}
\Pr\left[
\qdecrypt(\sk,\qct)= m
\ \middle |
\begin{array}{ll}
\sk\lrun \KG(1^\secp)\\
(\vk,\qct) \lrun \qencrypt(\sk,m)
\end{array}
\right] 
=1-\negl(\secp).
\end{align}

\item[Verification correctness:] There exists a negligible function $\negl$ such that for any $m\in\Ms$, 
\begin{align}
\Pr\left[
\Vrfy(\vk,\cert)=\top
\ \middle |
\begin{array}{ll}
\sk\lrun \KG(1^\secp)\\
(\vk,\qct) \lrun \qencrypt(\sk,m)\\
\cert \lrun \qdelete(\qct)
\end{array}
\right] 
=1-\negl(\secp).
\end{align}
\end{description}
\end{definition}

\begin{definition}[IND-CPA-CD Security for Reusable SKE with Certified Deletion]\label{def:reusable_sk_certified_del}
Let $\Sigma=(\KG, \qencrypt, \qdecrypt, \qdelete, \Vrfy)$ be a secret key encryption with certified deletion.
We consider the following security experiment $\expc{\Sigma,\qA}{sk}{cert}{del}(\secp,b)$.

\begin{enumerate}
    \item The challenger computes $\sk \la \KG(1^\secp)$.
    \item $\qA$ sends an encryption query $m$ to the challenger. The challenger computes $(\vk,\qct)\lrun \qencrypt(\sk,m)$ to $\qA$ and returns $(\vk,\qct)$ to $\qA$. This process can be repeated polynomially many times.
    \item $\qA$ sends $(m_0,m_1)\in\cM^2$ to the challenger. 
    \item The challenger computes $(\vk_b,\qct_b) \la \qencrypt(\sk,m_b)$ and sends $\qct_b$ to $\qA$.
    \item Again, $\qA$ can send encryption queries.
    \item At some point, $\qA$ sends $\cert$ to the challenger.
    \item The challenger computes $\Vrfy(\vk_b,\cert)$. If the output is $\bot$, the challenger sends $\bot$ to $\qA$.
    If the output is $\top$, the challenger sends $\sk$ to $\qA$.
    \item If the challenger sends $\bot$ in the previous item, $\qA$ can send encryption queries again.
    \item $\qA$ outputs $b'\in \bit$.
\end{enumerate}
We say that $\Sigma$ is IND-CPA-CD secure if for any QPT $\qA$, it holds that
\begin{align}
\advc{\Sigma,\qA}{sk}{cert}{del}(\secp)\seteq \abs{\Pr[ \expc{\Sigma,\qA}{sk}{cert}{del}(\secp, 0)=1] - \Pr[ \expc{\Sigma,\qA}{sk}{cert}{del}(\secp, 1)=1] }\le \negl(\secp).
\end{align}
\end{definition}

We introduce an additional property for certificates.
\begin{definition}[Unique Certificate]\label{def:unique_cert_CDSKE}
We say that an SKE scheme with certified deletion has the unique certificate property if there is at most one certification $\cert$ for each $\vk$ output by $\Enc$ such that $\top = \Vrfy(\vk,\cert)$.
\end{definition}

Known constructions of SKE (and PKE) with certified deletion based on BB84 states has the unique certificate property.
\begin{theorem}[\cite{AC:HMNY21}]\label{thm:ske_cert_del_owf}
If there exist OWFs, there exists an IND-CPA-CD secure SKE scheme that has the unique certificate property.
\end{theorem}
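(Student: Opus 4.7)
The plan is to obtain the scheme by generically upgrading a one-time SKE with certified deletion (built unconditionally from BB84 states à la Broadbent--Islam) into a reusable one using a PRF, whose existence from OWFs is guaranteed by \cref{thm:prf-owf}. Concretely, the reusable secret key is a PRF key $\prfkey$. To encrypt a plaintext $m \in \Ms$, sample a fresh random tag $r$ of sufficient length, derive a one-time key $\mathsf{otk} \seteq \algo{F}_\prfkey(r)$, invoke the one-time $\qencrypt'(\mathsf{otk}, m)$ to obtain $(\vk', \qct')$, and set the reusable ciphertext to $(r, \qct')$ and the verification key to $(r, \vk')$. The algorithms $\qdecrypt$, $\qdelete$, and $\Vrfy$ simply parse the tag $r$ and invoke the corresponding one-time algorithms under $\mathsf{otk} = \algo{F}_\prfkey(r)$; correctness is inherited from the one-time scheme.

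The IND-CPA-CD proof proceeds by a hybrid argument. First, replace every invocation of $\algo{F}_\prfkey$ with a truly random function using PRF security; since the tag length is super-logarithmic and tags are freshly sampled, except with negligible probability the tag $r^*$ embedded in the challenge ciphertext is distinct from all other query tags, so $\mathsf{otk}^* = \algo{F}_\prfkey(r^*)$ is information-theoretically independent of all other one-time keys. A reduction to the one-time IND-CPA-CD game then embeds the one-time challenge into the challenge ciphertext, while all non-challenge encryption and verification queries are simulated by sampling fresh one-time keys directly. The subtle step is the post-verification key reveal: when the adversary produces an accepting certificate, the reduction must hand out the full $\prfkey$, which amounts to revealing \emph{all} one-time keys simultaneously. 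I would handle this via lazy sampling---tracking the table of PRF answers already used, and programming the remaining entries consistently with $\mathsf{otk}^*$ received from the one-time challenger upon its own key reveal. The hard part will be this reveal step, because post-reveal encryption queries must remain consistent with the table while the challenge one-time key must already be exposed; this is resolved by observing that once the certificate is accepted, the one-time IND-CPA-CD game itself exposes $\mathsf{otk}^*$, so the reduction can faithfully simulate all future oracles.

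For the unique certificate property, instantiate the underlying one-time scheme so that $\qencrypt'$ prepares BB84 states on $n$ qubits using a uniformly random basis string $\theta \in \bit^n$ and value string $x \in \bit^n$, and retains $(\theta, x|_{\theta=1})$ inside $\vk'$. Define $\qdelete$ to measure all $n$ qubits in the Hadamard basis and output \emph{only} the restriction of the outcome string to positions where $\theta_i = 1$; the certificate thus has length equal to the Hamming weight of $\theta$. Verification accepts iff this restricted string equals $x|_{\theta=1}$. With this definition, the accepting set is the singleton $\{x|_{\theta=1}\}$, so uniqueness holds by construction. Truncating the certificate string does not affect the deletion guarantee---the entropy bound in the Broadbent--Islam analysis depends only on the checked positions---so IND-CPA-CD security of the one-time building block is preserved, and composing with the PRF wrapper yields the required reusable IND-CPA-CD scheme with the unique certificate property.
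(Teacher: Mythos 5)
There is a genuine gap, and it sits exactly where you flagged it: the key-reveal step. In the IND-CPA-CD game the challenger hands the adversary the \emph{entire} long-term secret key after accepting the certificate; in your construction that key is the PRF key $\prfkey$, which deterministically fixes every one-time key, including the challenge one, via $\mathsf{otk}=\algo{F}_\prfkey(r)$. Your first hybrid (replace $\algo{F}_\prfkey$ by a truly random function) is therefore not justified by PRF security --- the PRF distinguisher is never given the key, but your game must eventually reveal it --- and worse, the resulting hybrid is \emph{detectably} different from the real game: the adversary is allowed to keep its non-challenge ciphertexts $(r_i,\qct'_i)$ for messages $m_i$ it chose, and once it receives a key $\prfkey$ it can recompute $\algo{F}_\prfkey(r_i)$ and check that decryption returns $m_i$ (indeed, the non-challenge queries even hand it the corresponding $\vk_i$, which in the BB84 instantiation exposes components of $\mathsf{otk}_i$ directly). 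Any lazily-sampled table is inconsistent with every short key except with negligible probability, so ``programming the remaining entries'' is not available: you would need to find a PRF key agreeing with an adversarially observed table \emph{and} with the externally supplied $\mathsf{otk}^*$, which is infeasible (and such a key generally does not exist). Without that hybrid, the reduction to the one-time game cannot treat $\mathsf{otk}^*$ as an unknown external key, so the whole argument collapses.

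This is precisely why the construction the paper imports from~\cite{AC:HMNY21} does not derive the per-ciphertext key from the long-term key: it samples a fresh, information-theoretically independent one-time key and transports it inside a reusable SKE ciphertext, so that the challenge one-time key is decoupled from the revealed secret key and the reduction can lean on the fact that the one-time certified-deletion game itself discloses $\mathsf{otk}^*$ only after a valid certificate. Note also that the paper does not reprove this theorem --- it is cited, with only the observation that the BB84-based certificate (truncated to the Hadamard-basis positions) is unique. Your treatment of that last part is fine: restricting the certificate to the positions with $\theta_i=1$ makes the accepting set a singleton and does not weaken the deletion guarantee. But the reusability argument needs to be restructured along the encrypt-a-fresh-one-time-key lines rather than the PRF-derived-key lines before the proof goes through.
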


We introduce a variant of IND-CPA-CD security where the adversary can send many verification queries, called indistinguishability against Chosen Verification Attacks (CVA). We use this security notion to achieve SKFE with secure key leasing in~\cref{sec:iSKFESKL}.
\begin{definition}[IND-CVA-CD Security for Reusable SKE with Certified Deletion]\label{def:reusable_sk-vo_certified_del}
Let $\Sigma=(\KG, \qencrypt, \qdecrypt, \qdelete, \Vrfy)$ be a secret key encryption with certified deletion.
We consider the following security experiment $\expc{\Sigma,\qA}{sk}{cert}{vo}(\secp,b)$.

\begin{enumerate}
    \item The challenger computes $\sk \la \KG(1^\secp)$.
    \item $\qA$ sends an encryption query $m$ to the challenger. The challenger computes $(\vk,\ct)\lrun \qencrypt(\sk,m)$ to $\qA$ and returns $(\vk,\ct)$ to $\qA$. This process can be repeated polynomially many times.
    \item $\qA$ sends $(m_0,m_1)\in\cM^2$ to the challenger. 
    \item The challenger computes $(\vk_b,\qct_b) \la \qencrypt(\sk,m_b)$ and sends $\qct_b$ to $\qA$.
    \item Again, $\qA$ can send encryption queries. $\qA$ can also send a verification query $\cert$ to the challenger. The challenger returns $\sk$ if $\top = \Vrfy(\vk_b,\cert)$, $\bot$ otherwise. This process can be repeated polynomially many times.
    \item $\qA$ outputs $b'\in \bit$.
\end{enumerate}
We say that $\Sigma$ is IND-CVA-CD secure if for any QPT $\qA$, it holds that
\begin{align}
\advc{\Sigma,\qA}{sk}{cert}{vo}(\secp)\seteq \abs{\Pr[ \expc{\Sigma,\qA}{sk}{cert}{vo}(\secp, 0)=1] - \Pr[ \expc{\Sigma,\qA}{sk}{cert}{vo}(\secp, 1)=1] }\le \negl(\secp).
\end{align}
\end{definition}

Known IND-CPA-CD secure SKE schemes satisfy IND-CVA-CD security thanks to the following theorem since they satisfy~\cref{def:unique_cert_CDSKE}.
\begin{theorem}\label{thm:cert_del_vo_equivalence_for_unique_cert}
If an SKE scheme is IND-CPA-CD secure and has the unique certificate property, then it also satisfies IND-CVA-CD security.
\end{theorem}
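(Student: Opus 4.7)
The plan is to reduce IND-CVA-CD security to IND-CPA-CD security via a standard guessing argument, using the unique certificate property to replace the potentially many verification queries of a CVA-CD adversary by at most one cert submission to the CPA-CD challenger, together with purely local bookkeeping.

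Given a QPT IND-CVA-CD adversary $\qA$ making at most $q_v = q_v(\secp)$ verification queries, I would construct a QPT IND-CPA-CD adversary $\qB$ that first samples $i^* \chosen \{0, 1, \ldots, q_v\}$ uniformly and then simulates $\qA$, forwarding every encryption query and the challenge pair $(m_0, m_1)$ to its own CPA-CD oracles. For verification queries $\qB$ maintains a counter $c$: whenever $c < i^*$ it answers $\bot$; when $c = i^*$ it submits the current cert to its challenger as its sole cert submission. If the response is $\sk$, $\qB$ records $\cert^* \seteq \cert$, forwards $\sk$ to $\qA$, and thereafter handles subsequent encryption queries locally via $\qencrypt(\sk, \cdot)$ and subsequent verification queries by returning $\sk$ exactly when the query equals $\cert^*$ and $\bot$ otherwise---which is correct by the unique certificate property. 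If the response is $\bot$, $\qB$ outputs a uniformly random bit. For the special case $i^* = 0$, $\qB$ answers $\bot$ to every verification query, submits a dummy cert to its challenger, and outputs a uniformly random bit. Otherwise $\qB$ outputs $\qA$'s final guess.

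Let $I^*$ be the smallest index at which $\qA$'s verification query verifies in the real CVA-CD game (with $I^* = 0$ if none does). A brief induction on the query index shows that the ``clean event'' $E$---``$i^* \geq 1$, the CPA-CD submission returns $\sk$, and no earlier simulated cert equals the revealed $\cert^*$''---coincides exactly with $i^* = I^* \geq 1$; conditioned on $E$, the view of $\qA$ in the simulation is distributed identically to its view in the real CVA-CD experiment with the same challenge bit $b$, since all $\bot$ responses before index $i^*$ are correct by minimality, the $\sk$ at index $i^*$ is correct, and uniqueness makes every local post-$i^*$ response correct. Writing $p_0^b = \Pr[I^* = 0 \mid b]$ and using the independence of $i^*$, the standard guessing-lemma calculation yields $\advc{\Sigma,\qA}{sk}{cert}{vo}(\secp) \leq (q_v + 1)\cdot\advc{\Sigma,\qB}{sk}{cert}{del}(\secp) + 2\lvert p_0^1 - p_0^0\rvert$, and the residual term $\lvert p_0^1 - p_0^0\rvert$ is itself bounded by $\advc{\Sigma,\qB'}{sk}{cert}{del}(\secp)$ via a second, simpler CPA-CD reduction in which $\qB'$ answers every verification query with $\bot$ and submits a dummy certificate; with $q_v = \poly(\secp)$, IND-CPA-CD security of $\Sigma$ then forces $\advc{\Sigma,\qA}{sk}{cert}{vo}(\secp) = \negl(\secp)$.

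The main obstacle is the set of silent-divergence runs where $i^* \neq I^*$: either $i^* \geq 1$ but some earlier simulated verification query actually equals $\cert^*$ (so $\qB$ mistakenly answered $\bot$ and the simulated trajectory diverges before $\qB$ could notice), or $i^* = 0$ but $\qA$ would have submitted $\cert^*$ at some point. These events are not rare, since $\qA$ can trivially manufacture a valid $\cert^*$ by running $\qdelete$ on its copy of $\qct_b$. The remedy of emitting a fresh uniform random bit on every non-clean branch ensures that the $b$-dependence of $\Pr[\qB = 1 \mid b]$ flows only through the clean event $E$, up to the $b$-dependence of $\Pr[E \mid b]$ itself which is absorbed into the auxiliary CPA-CD reduction mentioned above; this gives the desired polynomial advantage loss of $q_v + 1$ and completes the proof.
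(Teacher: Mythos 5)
Your reduction is essentially the paper's: guess the index $i^*$ of the first verification query that would succeed, answer $\bot$ before it, spend the single IND-CPA-CD certificate submission on that query, and use the unique-certificate property to answer all post-reveal verification queries locally by comparing against the revealed $\cert^*$. The clean event you isolate (guess correct, no earlier query already equal to $\cert^*$) is exactly the paper's non-abort condition, and conditioned on it the simulation is indeed perfect, so the core of the argument matches.

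Where you go beyond the paper is in trying to account for the divergent branches explicitly, and that extra bookkeeping is where the write-up breaks. First, the auxiliary reduction $\qB'$ that answers every verification query with $\bot$ and submits a \emph{dummy} certificate cannot have advantage related to $\lvert p_0^1-p_0^0\rvert$: it never learns anything about $\vk_b$, so it has no way to correlate its output with whether any of $\qA$'s certificates would have verified; to detect that event it would have to submit one of $\qA$'s actual certificates (e.g., a uniformly chosen first-occurrence one and output $1$ iff it receives $\sk$). Second, and more substantively, your stated bound drops a term: writing $I^*$ for the index of the first valid query, the real advantage decomposes as the sum of the $b$-differences of $\Pr[I^*\geq 1\wedge \qA\text{ outputs }1\mid b]$ and $\Pr[I^*=0\wedge \qA\text{ outputs }1\mid b]$; your clean-event analysis controls only the first, and the second is \emph{not} bounded by any multiple of $\lvert p_0^0-p_0^1\rvert$ (take $p_0^0=p_0^1=1/2$ with $\qA$'s output conditioned on $I^*=0$ differing across $b$). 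Handling it needs its own reduction, and the natural candidate (run the all-$\bot$ simulation and output $\qA$'s guess) is contaminated by the trajectories with $I^*\geq 1$, so it does not close the gap as stated. To be fair, the paper's own proof simply asserts that $\qB$ "correctly simulates with probability $1/(q+1)$" and writes the advantage as an exact $1/(q+1)$ fraction, glossing over precisely this issue; you have correctly identified the crux that the silent-divergence runs are not rare, but the patch you propose for them does not yet work.
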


\begin{proof}
We construct an adversary $\qB$ for IND-CPA-CD by using an adversary $\qA$ for IND-CVA-CD as follows.
\begin{enumerate}
\item At the beginning, the adversary $\qA$ is fixed, so the number of the verification query by $\qA$ is also fixed. We denote the number by $q$.
\item When $\qA$ sends an encryption query $m$, $\qB$ sends $m$ to its challenger, receives $(\vk,\qct)\gets \qEnc(\sk,m)$, and passes $(\vk,\qct)$ to $\qA$.
\item When $\qA$ sends $(m_0,m_1)$, $\qB$ sends $(m_0,m_1)$ to its challenger, receives $\qct_b$, passes $\qct_b$ to $\qA$.
\item Then, $\qB$ guesses an index $i^\ast \chosen [q+1]$.
\item When $\qA$ sends $\cert_i$ as the $i$-th verification query, if $i<i^\ast$, $\qB$ returns $\bot$. When $\qA$ sends $\cert_{i^\ast}$ as the $i^\ast$-th verification query, if there exits $i$ such that $\cert_i=\cert_{i^\ast}$, $\qB$ aborts. Otherwise, $\qB$ forwards $\cert_{i^\ast}$ to its challenger. If $\qB$ receives $\bot$, $\qB$ aborts. If $\qB$ receives $\sk$, then $\qB$ passes $\sk$ to $\qA$. After $\qB$ receives $\sk$, if a query $\cert_i = \cert_{i^\ast}$, $\qB$ answers $\sk$, otherwise $\bot$.
\item $\qB$ outputs what $\qA$ outputs.
\end{enumerate}
Since we consider unique certificate SKE with certified deletion, if $\cB$ receives $\sk$ from its challenger, it is the only one correct certificate under $\vk_b$. Hence, $\qB$ can correctly simulate the IND-CVA-CD game with probability $1/(q+1)$ (Note that $i^\ast= q+1$ means $\qA$ did not output a correct certificate). Therefore, we obtain $\advc{\Sigma,\qA}{sk}{cert}{del}(\secp) = \frac{1}{q+1}\advc{\Sigma,\qA}{sk}{cert}{vo}(\secp)$, and the proof completes.
\end{proof}

\begin{definition}[Set Homomorphic Secret Sharing]\label{def:SetHSS}
A set homomorphic secret sharing scheme $\SetHSS$ consists of three algorithms $(\InpEncode,\FuncEncode,\Decode)$.
Below, we let $\cX$, $\cY$, and $\cF$ be the input, output, and function spaces of $\SetHSS$, respectively.

\begin{description}
\item[$\SetGen(1^\secp)\ra\params\seteq (p,\numall,(T_\instance)_{\instance\in[\numset]})$:]The set generation algorithm takes as input a security parameter, and outputs parameters $p$ and $\numall$, and a collection of $\numset$ sets $(T_\instance)_{\instance\in[\numset]}$, where each set $T_\instance\subseteq [\numall]$.
\item[$\InpEncode(\params,x)\ra (\share_\instance)_{\instance\in[\numset]}$:] The input encoding algorithm takes as input $\params$ output by $\SetGen$ and an input $x\in\cX$, and outputs a set of input shares $(\share_\instance)_{\instance\in[\numset]}$.
\item[$\FuncEncode(\params,f)\ra (f_{\instance})_{\instance\in[\numset]}$:] The function encoding algorithm takes as input $\params$ output by $\SetGen$ and a function $f\in\cF$, and outputs a set of function shares $(f_{\instance})_{\instance\in[\numset]}$.
\item[$\Decode((f_{\instance}(\share_{\instance}))_{\instance\in[\numset]})\ra y$:] The decoding algorithm takes as input a set of evaluations of function shares on their respective input shares $(f_{\instance}(\share_{\instance}))_{\instance\in[\numset]}$ and outputs a value $y\in\cY$.

\item[Correctness:] We require that for every $x \in \cX$, $f \in \calF$, we have that
\[
\Pr\left[
\Decode((f_{\instance}(\share_{\instance}))_{\instance\in[\numset]}) \ne f(x)
 \ \middle |
\begin{array}{rl}
&\params=(p,\numall,(T_\instance)_{\instance\in[\numset]})\la\SetGen(1^\secp),\\
 &(\share_{\instance})_{\instance\in[\numset]}\la\InpEncode(\params,x), \\
 &(f_{\instance})_{\instance\in[\numset]}\la\FuncEncode(\params,f)
\end{array}
\right]=\negl(\secp).
\]

\item[Existence of Unmarked Element:]Let $\params=(p,\numall,(T_\instance)_{\instance\in[\numset]})\la\SetGen(1^\secp)$. Suppose we randomly generate a set $S\subseteq[\numset]$ so that each $\instance\in[\numset]$ is independently included in $S$ with probability at most $p$.
Then, without negligible probability, there exists $\element\in[\numall]$ such that $\element\notin\bigcup_{\instance\in S}T_\instance$.

\item[Selective Indistinguishability-Security:]Consider the following experiment $\expb{\qA,\SetHSS}{sel}{ind}(1^\secp,\coin)$ between an adversary $\qA$ and a challenger:
        \begin{enumerate}
            \item The challenger generates $\params=(p,\numall,(T_\instance)_{\instance\in[\numset]})\la\SetGen(1^\secp)$ and sends $\params$ to $\qA$. $\qA$ sends $(\element^*,x_0^*,x_1^*)$ to the challenger, where $\element^*\in[\numall]$. The challenger runs $(\share_{\instance})_{\instance\in[\numset]}\la\InpEncode(\params,x_\coin)$ and sends $(\share_{\instance})_{\instance\in[\numset]_{\element^* \notin}}$ to $\qA$, where $[\numset]_{\element^* \notin}$ denotes the subset of $[\numset]$ consisting of $\instance$ such that $\element^* \notin T_\instance$.
            $\qA$ can access the following oracle.
            \begin{description}
            \item[$\Oracle{\FuncEncode}(f)$:] Given $f$, if $f(x_0^*)\ne f(x_1^*)$, it returns $\bot$. Otherwise, it generates $(f_{\instance})_{\instance\in[\numset]}\la\FuncEncode(\params,f)$ and returns $(f_{\instance},f_{\instance}(\share_{\instance}))_{\instance\in[\numset]}$.
            \end{description}
            \item $\qA$ outputs a guess $\coin^\prime$ for $\coin$. The challenger outputs $\coin'$ as the final output of the experiment.

        \end{enumerate}
        We say that $\SetHSS$ is selectively indistinguishability-secure if, for any QPT $\qA$, it holds that
\begin{align}
\advb{\SetHSS,\qA}{sel}{ind}(\secp) \seteq \abs{\Pr[\expb{\SetHSS,\qA}{sel}{ind} (1^\secp,0) \out 1] - \Pr[\expb{\SetHSS,\qA}{sel}{ind} (1^\secp,0) \out 1] } \le \negl(\secp).
\end{align}

\end{description}

\end{definition}

\begin{theorem}[\cite{C:JKMS20}]\label{thm:sethss_owf}
If there exist OWFs, there exists a set homomorphic secret sharing.
\end{theorem}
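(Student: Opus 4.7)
The plan is to follow the JKMS20 blueprint, which combines a combinatorial set system with an OWF-based function secret sharing layer. First I would fix the combinatorics: let $\numall = \Theta(\secp / p)$ and $\numset = \poly(\secp)$, and let $\SetGen$ sample each $T_\instance\subseteq[\numall]$ by independently including each $\element\in[\numall]$ with probability $q$ for some $q=\Theta(p)$. The ``existence of unmarked element'' condition then reduces to the following calculation: for a random $S\subseteq[\numset]$ with each index included independently with probability at most $p$, the probability that a fixed $\element$ is covered by $\bigcup_{\instance\in S}T_\instance$ is at most $1-(1-pq)^{\numset}\le 1-e^{-pq\numset/2}$, and by choosing $\numall$ and $q$ appropriately one gets that with overwhelming probability at least one $\element\in[\numall]$ is not covered. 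This is purely a Chernoff/union-bound argument on $\SetGen$, independent of cryptography.

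Next I would build $\InpEncode$ and $\FuncEncode$ using only OWFs. For the input: additively secret-share $x$ as $x=r_1\oplus\cdots\oplus r_\numall$ and set the $\instance$-th input share $\share_\instance\seteq\{r_\element\}_{\element\in T_\instance}$. For the functions: use a decomposable OWF-based tool such as Yao's garbled circuits or a DDH-free MPC-in-the-head on an $\numall$-party protocol that computes $f(r_1\oplus\cdots\oplus r_\numall)$ from the shares $(r_\element)_{\element\in[\numall]}$. The $\instance$-th function share $f_\instance$ then consists of the pieces of the garbled circuit / simulated views that depend on the parties indexed by $T_\instance$, together with the labels/randomness needed to consume exactly $\share_\instance$. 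The decoding algorithm $\Decode$ stitches the $\numset$ evaluations $f_\instance(\share_\instance)$ in the natural way, using the set structure to identify redundant pieces; correctness follows from correctness of the underlying MPC/garbling.

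For selective IND security, the reduction uses the unmarked element. Given a forecasted $\element^*$, the challenger only releases $\{\share_\instance\}_{\instance\in[\numset]_{\element^*\notin}}$, and by construction these shares jointly reveal only $\{r_\element\}_{\element\neq\element^*}$, i.e.\ nothing about $r_{\element^*}$, and hence nothing about $x$ beyond what the function-share evaluations leak. The function-share oracle is answered by computing the full encoding honestly, but the constraint $f(x_0^*)=f(x_1^*)$ ensures that the simulated partial views output by the garbling/MPC simulator are identically distributed under $x_0^*$ and $x_1^*$; a hybrid over oracle calls then reduces distinguishing to the single-instance security of the OWF-based garbled circuits, which follows from~\cref{thm:prf-owf} (and thus from OWFs).

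The main obstacle, and the place where I would be most careful, is the function encoding layer: we need the pieces assigned to parties in $T_\instance$ to be locally evaluable on $\share_\instance$ alone, yet jointly reconstructible across all $\numset$ shares, without inadvertently leaking information from covered positions. Matching the garbling/MPC decomposition exactly to the set structure $(T_\instance)_{\instance\in[\numset]}$ — so that the simulator for the view outside of $\element^*$ is well-defined and computationally indistinguishable from the real distribution — is the technical core. Once that is set up, combining the combinatorial lemma with a standard hybrid over $\Oracle{\FuncEncode}$ queries completes the proof.
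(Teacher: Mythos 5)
First, note that the paper does not actually prove this theorem: it is imported verbatim from Jain--Korb--Manohar--Sahai, and the only content the paper adds is a remark that its definition repackages theirs (folding the set-generation algorithm and the unmarked-element property into the primitive) and that the statement follows from their Theorem~5.3, Lemmas~6.1 and~6.2, and the parameter setting $p=1/\poly(\secp)$, $\numall=\secp$, $\numset=\polylog(\secp)$. So there is no in-paper proof to match; what you have written is a reconstruction of the cited construction. Your outline has the right shape --- XOR-share $x$ into $\numall$ element shares, set $\share_\instance=\{r_\element\}_{\element\in T_\instance}$, realize function shares via an OWF-based MPC/garbling layer, and prove the unmarked-element property by a concentration argument over $\SetGen$ --- and the combinatorial part is essentially fine (conditioned on $S$ the coverage events for distinct $\element$ are independent, so the bound goes through with the stated parameters).

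The genuine gap is exactly the one you flag yourself and then do not close: the function-encoding layer. Saying ``use Yao's garbled circuits or MPC-in-the-head on an $\numall$-party protocol'' does not yet yield an object $f_\instance$ that is \emph{locally evaluable on $\share_\instance$ alone} while the $\numset$ evaluations are \emph{jointly} decodable to $f(x)$ --- garbled-circuit evaluation is inherently global, and overlapping sets $T_\instance$ would hold overlapping ``pieces'' whose independent local evaluation is undefined as written. The security argument inherits this gap: in the experiment the adversary receives $f_\instance(\share_\instance)$ for \emph{all} $\instance$, including those with $\element^*\in T_\instance$, so the reduction cannot ``answer the oracle honestly'' and appeal to $f(x_0^*)=f(x_1^*)$; it must exhibit a simulator that produces the evaluations touching $r_{\element^*}$ from $f(x)$ alone, which requires the MPC protocol to be semi-honest secure against corruption of all parties except $\element^*$ and the garbling of the next-message functions to be simulatable --- this is precisely the content of the cited lemmas. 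Until that decomposition is pinned down, the sketch asserts rather than proves the theorem.
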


The above definition of SetHSS is different from the original one by Jain et al.~\cite{C:JKMS20} in that the original definition does not have the set generation algorithm and does not require the existence of unmarked element property.
Jain et al. defined SetHSS so that it works for any collection of sets.
Then, Jain et al. separately introduced the set generation algorithm and properties of it, which perfectly match the security bound amplification for FE.
We can obtain \cref{thm:sethss_owf} from ~\cite[Theorem 5.3, Lemma 6.1 and 6.2, in eprint ver.]{C:JKMS20} and the parmeter setting provided in the proof of ~\cite[Theorem 8.1 in eprint ver.]{C:JKMS20}.
In their parameter setting, $p=1/\poly(\secp)$, $\numall=\secp$, and $\numset=\polylog(\secp)$.
\else\fi


\section{Definition of SKFE with Secure Key Leasing}\label{sec:def_SKFE_SKL}

We introduce the definition of SKFE with secure key leasing and its variants.

\subsection{SKFE with Secure Key Leasing}
We first define SKFE with secure key leasing (SKFE-SKL).

\begin{definition}[SKFE with Secure Key Leasing]
An SKFE-SKL scheme $\SKFESKL$ is a tuple of six algorithms $(\Setup, \qKG, \Enc, \qDec,\qcert,\certvrfy)$. 
Below, let $\cX$, $\cY$, and $\cF$ be the plaintext, output, and function spaces of $\SKFESKL$, respectively.
\begin{description}
\item[$\Setup(1^\secp,1^{\numkey})\ra\msk$:] The setup algorithm takes a security parameter $1^\lambda$ and a collusion bound $1^{\numkey}$, and outputs a master secret key $\msk$.
\item[$\qKG(\msk,f,1^{\numct})\ra(\qfsk,\vk)$:] The key generation algorithm takes a master secret key $\msk$, a function $f \in \calF$, and an availability bound $1^{\numct}$, and outputs a functional decryption key $\qfsk$ and a verification key $\vk$.

\item[$\Enc(\msk,x)\ra\ct$:] The encryption algorithm takes a master secret key $\msk$ and a plaintext $x \in \cX$, and outputs a ciphertext $\ct$.

\item[$\qDec(\qfsk,\ct)\ra\tlx$:] The decryption algorithm takes a functional decryption key $\qfsk$ and a ciphertext $\ct$, and outputs a value $\tilde{x}$.

\item[$\qcert(\qfsk)\ra\cert$:] The certification algorithm takes a function decryption key $\qfsk$, and outputs a classical string $\cert$.

\item[$\Vrfy(\vk,\cert)\ra\top/\bot$:] The certification-verification algorithm takes a verification key $\vk$ and a string $\cert$, and outputs $\top$ or $\bot$.

\item[Decryption correctness:]For every $x \in \cX$, $f \in \calF$, and $\numkey,\numct\in\bbN$, we have
\begin{align}
\Pr\left[
\qDec(\qfsk, \ct) \allowbreak = f(x)
\ \middle |
\begin{array}{ll}
\msk \la \Setup(1^\lambda,1^{\numkey})\\
(\qfsk,\vk)\gets\qKG(\msk,f,1^{\numct})\\
\ct\gets\Enc(\msk,x)
\end{array}
\right] 
=1-\negl(\secp).
\end{align}

\item[Verification correctness:]For every $f \in \calF$ and $\numkey,\numct\in\bbN$, we have 
\begin{align}
\Pr\left[
\Vrfy(\vk,\cert)=\top
\ \middle |
\begin{array}{ll}
\msk \la \Setup(1^\lambda,1^{\numkey})\\
(\qfsk,\vk)\gets\qKG(\msk,f,1^{\numct})\\
\cert \lrun \qcert(\qfsk)
\end{array}
\right] 
=1-\negl(\secp).
\end{align}

\end{description}
\end{definition}

\begin{definition}[Selective Lessor Security]\label{def:sel_lessor_SKFESKL}
We say that $\SKFESKL$ is a selectively lessor secure SKFE-SKL scheme for $\Xs,\Ys$, and $\Fs$, if it satisfies the following requirement, formalized from the experiment $\expb{\qA,\SKFESKL}{sel}{lessor}(1^\secp,\coin)$ between an adversary $\qA$ and a challenger:
        \begin{enumerate}
            \item At the beginning, $\qA$ sends $(1^{\numkey},x_0^*,x_1^*)$ to the challenger. The challenger runs $\msk\gets\Setup(1^\secp,1^\numkey)$.
            Throughout the experiment, $\qA$ can access the following oracles.
            \begin{description}
            \item[$\Oracle{\Enc}(x)$:] Given $x$, it returns $\Enc(\msk,x)$.
            \item[$\Oracle{\qKG}(f,1^{\numct})$:] Given $(f,1^{\numct})$, it generates $(\qfsk,\vk)\la\qKG(\msk,f,1^{\numct})$, sends $\qfsk$ to $\qA$, and adds $(f,1^{\numct},\vk,\bot)$ to $\List{\qKG}$. $\qA$ can access this oracle at most $\numkey$ times.
            \item[$\Oracle{\Vrfy}(f,\cert)$:] Given $(f,\cert)$, it finds an entry $(f,1^\numct,\vk,M)$ from $\List{\qKG}$. (If there is no such entry, it returns $\bot$.) If $\top=\Vrfy(\vk,\cert)$ and the number of queries to $\Oracle{\Enc}$ at this point is less than $\numct$, it returns $\top$ and updates the entry into $(f,1^\numct,\vk,\top)$. Otherwise, it returns $\bot$.
            \end{description}
            \item When $\qA$ requests the challenge ciphertext, the challenger checks if for any entry $(f,1^\numct,\vk,M)$ in $\List{\qKG}$ such that $f(x_0^*)\ne f(x_1^*)$, it holds that $M=\top$. If so, the challenger generates $\ct^*\la\Enc(\msk,x_\coin^*)$ and sends $\ct^*$ to $\qA$. Otherwise, the challenger outputs $0$. Hereafter, $\qA$ is not allowed to sends a function $f$ such that $f(x_0^*)\ne f(x_1^*)$ to $\Oracle{\qKG}$.
            \item $\qA$ outputs a guess $\coin^\prime$ for $\coin$. The challenger outputs $\coin'$ as the final output of the experiment.

        \end{enumerate}
        For any QPT $\qA$, it holds that
\ifnum\llncs=0        
\begin{align}
\advb{\SKFESKL,\qA}{sel}{lessor}(\secp) \seteq \abs{\Pr[\expb{\SKFESKL,\qA}{sel}{lessor} (1^\secp,0) \out 1] - \Pr[\expb{\SKFESKL,\qA}{sel}{lessor} (1^\secp,1) \out 1] }\leq \negl(\secp).
\end{align}
\else
\begin{align}
\advb{\SKFESKL,\qA}{sel}{lessor}(\secp) 
&\seteq \abs{\Pr[\expb{\SKFESKL,\qA}{sel}{lessor} (1^\secp,0) \out 1] - \Pr[\expb{\SKFESKL,\qA}{sel}{lessor} (1^\secp,1) \out 1] }\\
&\leq \negl(\secp).
\end{align}
\fi
\end{definition}

\begin{remark}[On the adaptive security]
We can similarly define adaptive lessor security where we allow $\qA$ to adaptively choose the challenge plaintext pair $(x_0^*,x_1^*)$.
For standard FE, we can generically convert a selectively secure one into an adaptively secure one without any additional assumption~\cite{C:ABSV15}.
We observe that a similar transformation works for SKFE with secure key leasing.
Thus, for simplicity, we focus on selective lessor security in this work.
See~\cref{sec:adatpive_FESKL_security} for the definition and transformation.
\end{remark}

\begin{remark}[On the simulation-based security]
We can also define a simulation-based variant of selective/adaptive lessor security where a simulator simulates a challenge ciphertext without the challenge plaintext $x^\ast$ as the simulation-based security for standard FE~\cite{TCC:BonSahWat11,C:GorVaiWee12}.
We can generically convert indistinguishability-based lessor secure SKFE with secure key leasing into a simulation-based lessor secure one without any additional assumptions as standard FE~\cite{C:DIJOPP13}.
See~\cref{sec:simulation_FESKL_security} for the simulation-based definition and the transformation.
\end{remark}

\subsection{SKFE with Static-Bound Secure Key Leasing}\label{sec:def_SKFESKL_variants}

In this section, we define SKFE with static-bound secure key leasing (SKFE-sbSKL).
It is a weaker variant of SKFE-SKL in which a single availability bound $\numct$ applied to every decryption key is fixed at the setup time.
We design SKFE-sbSKL so that it can be transformed into SKFE-SKL in a generic way.
For this reason, we require an SKFE-sbSKL scheme to satisfy an efficiency requirement called weak optimal efficiency and slightly stronger variant of the lessor security notion.
\footnote{We borrow the term weak optimal efficiency from the paper by Garg, Goyal, Lu, and Waters~\cite{EPRINT:GGLW21}, which studies dynamic bounded collusion security for standard FE.}

Below, we first introduce the syntax of SKFE-sbSKL.
Then, before introducing the definition of (selective) lessor security for it, we provide the overview of the transformation to SKFE-SKL since we think the overview makes it easy to understand the security notion.

\begin{definition}[SKFE with Static-Bound Secure Key Leasing]
An SKFE-sbSKL scheme $\SKFEsbSKL$ is a tuple of six algorithms $(\Setup, \qKG, \Enc, \qDec,\qcert,\certvrfy)$.
The only difference from a normal SKFE scheme with secure key leasing is that $\qKG$ does not take as input the availability bound $\numct$, and instead, $\Setup$ takes it as an input. Moreover, $\Setup$ takes it in binary as $\Setup(1^\secp,1^{\numkey},\numct)$, and we require the following weak optimal efficiency.

\begin{description}
\item[Weak Optimal Efficiency:] We require that the running time of $\Setup$ and $\Enc$ is bounded by a fixed polynomial of $\lambda$, $\numkey$, and $\log \numct$.
\end{description}

\end{definition}

\paragraph{Overview of the transformation to SKFE-SKL.}
As seen above, $\Setup$ and $\Enc$ of an SKFE-sbSKL scheme $\SKFEsbSKL$ is required to run in time $\log \numct$.
This is because, in the transformation to SKFE-SKL, we use $\secp$ instances of $\SKFEsbSKL$ where the $k$-th instance is set up with the availability bound $2^k$ for every $k\in[\secp]$.
The weak optimal efficiency ensures that $\Setup$ and $\Enc$ of all $\secp$ instances run in polynomial time.
The details of the transformation are as follows.

We construct an SKFE-SKL scheme $\SKFESKL$ from an SKFE-sbSKL scheme  $\SKFEsbSKL=(\Setup,\qKG,\Enc,\qDec,\qcert,\Vrfy)$.
When generating a master secret key $\skl.\msk$ of $\SKFESKL$, we generate $\msk_k\la\Setup(1^\secp,1^\numkey,2^k)$ for every $k\in[\secp]$, and set $\skl.\msk\seteq (\msk_k)_{k\in[\secp]}$.
To encrypt $x$ by $\SKFESKL$, we encrypt it by all $\secp$ instances, that is, generate $\ct_k\la\Enc(\msk_k,x)$ for every $k\in[\secp]$.
The resulting ciphertext is $\skl.\ct\seteq (\ct_k)_{k\in[\secp]}$.
To generate a decryption key of $\SKFESKL$ for a function $f$ and an availability bound $\numct$, we first compute $k'\in[\secp]$ such that $2^{k'-1}\leq \numct \leq 2^{k'}$.
Then, we generate $(\qfsk_{k'},\vk_{k'})\la\qKG(\msk_{k'},f)$. The resulting decryption key is $\skl.\qfsk\seteq (k',\qfsk_{k'})$ and the corresponding verification key is $\vk\seteq \vk_{k'}$.
Decryption is performed by decrypting $\ct_{k'}$ included in $\skl.\ct\seteq (\ct_k)_{k\in[\secp]}$ by $\qfsk_{k'}$.
The certification generation and verification of $\SKFESKL$ are simply those of $\SKFESKL$.

We now consider the security proof of $\SKFESKL$.
In the experiment $\expb{\qA,\SKFESKL}{sel}{lessor}(1^\secp,0)$, an adversary $\qA$ is given the challenge ciphertext $\skl.\ct^*\seteq (\ct^*_k)_{k\in[\secp]}$, where $\ct^*_k\la\Enc(\msk_k,x_0^*)$ for every $k\in[\secp]$.
The proof is done if we can switch all of $\ct^*_k$ into $\Enc(\msk_k,x_1^*)$ without being detected by $\qA$.
To this end, the underlying $\SKFEsbSKL$ needs to satisfy a stronger variant of lessor security notion where an adversary is allowed to declare the availability bound such that $\qKG$ does not run in polynomial time, if the adversary does not make any query to the key generation oracle.
For example, to switch $\ct^*_\secp$, the reduction algorithm attacking $\SKFEsbSKL$ needs to declare the availability bound $2^\secp$, under which $\qKG$ might not run in polynomial time.
Note that $\Setup$ and $\Enc$ run in polynomial time even for such an availability bound due to the weak optimal efficiency.
Thus, we formalize the security notion of SKFE-sbSKL as follows.

\begin{definition}[Selective Strtong Lessor Security]\label{def:sel_lessor_SKFEsbSKL}
We define selective strong lessor security for SKFE-sbSKL in the same way as that for SKFE-SKL defined in \cref{def:sel_lessor_SKFESKL} , except the following changes for the security experiment.
\begin{itemize}
\item $\qA$ outputs $\numct$ at the beginning, and the challenger generates $\msk\la\Setup(1^\secp,1^\numkey,\numct)$. If $\qA$ makes a query to $\Oracle{\qKG}$ or $\Oracle{\Vrfy}$, $\qA$ is required to output $\numct$ such that $\qKG$ and $\Vrfy$ run in polynomial time.
\item $\Oracle{\qKG}$ does not take $1^{\numct}$ as an input.
\end{itemize}

\end{definition}

\begin{remark}[Insufficiency of existing bounded collusion techniques]\label{remark:GVW_not_work_for_dynamic}
In \cref{sec:tech_overview}, we stated that it is not clear how to use the existing bounded collusion techniques~\cite{C:GorVaiWee12,TCC:AnaVai19} for constructing SKFE-sbSKL.
We provide a more detailed discussion on this point.

The bounded collusion technique essentially enables us to increase the number of decryption keys that an adversary can obtain.
Thus, to try to use the bounded collusion technique in our context, imagine the following naive construction using standard SKFE $\SKFE$ and SKE with certified deletion $\CDSKE$. This construction is a flipped version of the naive construction provided in~\cref{sec:tech_overview}.
In the construction, we encrypt a ciphertext of $\SKFE$ by $\CDSKE$, and we include the key of $\CDSKE$ into the decryption key of the resulting scheme.
The construction can be seen as an SKFE scheme with certified deletion (for ciphertexts) that is secure if an adversary deletes the challenge ciphertext before seeing any decryption key.
The roles of ciphertexts and decryption keys are almost symmetric in SKFE~\cite{JC:BraSeg18}.
Thus, if we can amplify the security of this construction so that it is secure if an adversary sees some decryption keys before deleting the challenge ciphertext, it would lead to SKFE-sbSKL.
The question is whether we can perform such an amplification using the existing bounded collusion techniques~\cite{C:GorVaiWee12,TCC:AnaVai19}.
We observe that it is highly non-trivial to adapt the existing bounded collusion technique starting from ``$0$-bounded'' security.
Especially, it seems difficult to design such a transformation so that the resulting SKFE-sbSKL obtained by flipping the roles of ciphertexts and decryption keys satisfies weak optimal efficiency and security against unbounded number of encryption queries such as \cref{def:sel_lessor_SKFEsbSKL}.

We develop a different technique due to the above reason.
Namely, we reduce the task of amplifying the availability bound of SKFE-sbSKL into the task of amplifying the security bound of it.
In fact, our work implicitly shows that security bound amplification for FE can be used to achieve bounded collusion-resistance.
We see that we can construct bounded collusion secure FE from single-key FE by our parallelizing then security bound amplification technique.
\end{remark}

\subsection{Index-Based SKFE with Static-Bound Secure Key Leasing}

We define index-based SKFE-sbSKL.
Similarly to SKFE-sbSKL, it needs to satisfy weak optimal efficiency and (selective) strong lessor security.

\begin{definition}[Index-Base SKFE with Static-Bound Secure Key Leasing]
An index-base SKFE-sbSKL scheme $\iSKFESKL$ is a tuple of six algorithms $(\Setup, \qKG, \iEnc,\allowbreak \qDec,\qcert,\certvrfy)$. The only difference from an SKFE-sbSKL scheme is that the encryption algorithm $\iEnc$ additionally takes as input an index $\index\in[\numct]$.
\begin{description}
\item[Decryption correctness:]For every $x \in \cX$, $f \in \calF$, $\numkey,\numct\in\bbN$, and $\index\in[\numct]$, we have
\begin{align}
\Pr\left[
\qDec(\qfsk, \ct) \allowbreak = f(x)
\ \middle |
\begin{array}{ll}
\msk \la \Setup(1^\lambda,1^{\numkey},\numct)\\
(\qfsk,\vk)\gets\qKG(\msk,f)\\
\ct\gets\Enc(\msk,\index,x)
\end{array}
\right] 
=1-\negl(\secp).
\end{align}

\item[Verification correctness:]For every $f \in \calF$ and $\numkey,\numct\in\bbN$, we have 
\begin{align}
\Pr\left[
\Vrfy(\vk,\cert)=\top
\ \middle |
\begin{array}{ll}
\msk \la \Setup(1^\lambda,1^{\numkey},\numct)\\
(\qfsk,\vk)\gets\qKG(\msk,f)\\
\cert \lrun \qcert(\qfsk)
\end{array}
\right] 
=1-\negl(\secp).
\end{align}

\item[Weak Optimal Efficiency:] We require that the running time of $\Setup$ and $\Enc$ is bounded by a fixed polynomial of $\lambda$, $\numkey$, and $\log \numct$.

\end{description}

\end{definition}

\begin{definition}[Selective Strong Lessor Security]\label{def:sel_lessor_iSKFEsbSKL}
We say that $\iSKFESKL$ is a selectively strong lessor secure index-based SKFE-sbSKL scheme for $\Xs,\Ys$, and $\Fs$, if it satisfies the following requirement, formalized from the experiment $\expc{\qA,\iSKFESKL}{sel}{s}{lessor}(1^\secp,\coin)$ between an adversary $\qA$ and a challenger:
        \begin{enumerate}
            \item At the beginning, $\qA$ sends $(1^{\numkey},\numct,\index^*,x_0^*,x_1^*)$ to the challenger. If $\qA$ makes a query to $\Oracle{\qKG}$ or $\Oracle{\Vrfy}$, $\qA$ is required to output $\numct$ such that $\qKG$ and $\Vrfy$ run in polynomial time. The challenger runs $\msk\gets\Setup(1^\secp,1^\numkey,\numct)$.
            Throughout the experiment, $\qA$ can access the following oracles.
            \begin{description}
            \item[$\Oracle{\Enc}(\index,x)$:] Given $\index$ and $x$, it returns $\Enc(\msk,\index,x)$.
            \item[$\Oracle{\qKG}(f)$:] Given $f$, it generates $(\qfsk,\vk)\la\qKG(\msk,f)$, sends $\qfsk$ to $\qA$, and adds $(f,\vk,\bot)$ to $\List{\qKG}$. $\qA$ can access this oracle at most $\numkey$ times.
            \item[$\Oracle{\Vrfy}(f,\cert)$:] Given $(f,\cert)$, it finds an entry $(f,\vk,M)$ from $\List{\qKG}$. (If there is no such entry, it returns $\bot$.) If $\top=\Vrfy(\vk,\cert)$, it returns $\top$ and updates the entry into $(f,\vk,\top)$. Otherwise, it returns $\bot$.
            \end{description}
           \item When $\qA$ requests the challenge ciphertext, the challenger checks if for any entry $(f,\vk,M)$ in $\List{\qKG}$ such that $f(x_0^*)\ne f(x_1^*)$, it holds that $M=\top$, and $\qA$ does not make a query with $\index^*$ to $\Oracle{\Enc}$ at this point. If so, the challenger generates $\ct^*\la\Enc(\msk,j^*,x_\coin^*)$ and sends $\ct^*$ to $\qA$. Otherwise, the challenger outputs $0$. Hereafter, $\qA$ is not allowed to sends a function $f$ such that $f(x_0^*)\ne f(x_1^*)$ to $\Oracle{\qKG}$.
            \item $\qA$ outputs a guess $\coin^\prime$ for $\coin$. The challenger outputs $\coin'$ as the final output of the experiment.

        \end{enumerate}
        For any QPT $\qA$, it holds that
\ifnum\llncs=0        
\begin{align}
\advc{\iSKFESKL,\qA}{sel}{s}{lessor}(\secp) \seteq \abs{\Pr[\expc{\iSKFESKL,\qA}{sel}{s}{lessor} (1^\secp,0) \out 1] - \Pr[\expc{\iSKFESKL,\qA}{sel}{s}{lessor} (1^\secp,0) \out 1] }\leq \negl(\secp).
\end{align}
\else
\begin{align}
\advc{\iSKFESKL,\qA}{sel}{s}{lessor}(\secp) 
&\seteq \abs{\Pr[\expc{\iSKFESKL,\qA}{sel}{s}{lessor} (1^\secp,0) \out 1] - \Pr[\expc{\iSKFESKL,\qA}{sel}{s}{lessor} (1^\secp,0) \out 1] }\\
&\leq \negl(\secp).
\end{align}
\fi
\end{definition}


\section{Index-Base SKFE with Static-Bound Secure Key Leasing}\label{sec:iSKFESKL}
We construct an index-based SKFE-sbSKL scheme $\iSKFESKL=(\iSetup,\qiKG,\allowbreak\iEnc,\qiDec,\qicert,\iVrfy)$ using the following tools:

\begin{itemize}
\item An SKFE scheme $\SKFE=(\Setup, \KG, \Enc, \Dec)$.
\item An SKE scheme with Certified Deletion $\CDSKE=(\CD.\KG,\CD.\qencrypt,\CD.\qdecrypt,\allowbreak\CD.\qdelete,\CD.\Vrfy)$.
\item A PRF $\prf$.
\end{itemize}

The description of $\iSKFESKL$ is as follows. 
\begin{description}

 \item[$\iSetup(1^\secp,1^{\numkey},\numct)$:] $ $
 \begin{itemize}
 \item Generate $K \gets \bit^\secp$.
 \item Output $\skl.\msk:=(q,n,K)$.
 \end{itemize}
 \item[$\qiKG(\msk,f)$:] $ $
 \begin{itemize}
 \item Parse $(q,n,K)\gets\skl.\msk$.
 \item Compute $\rskfe_\index\|\rcd_\index\gets\prf_K(\index)$, $\msk_\index\gets \Setup(1^\secp,1^{\numkey};\rskfe_\index)$, and $\cd.\sk_\index\gets \CD.\KG(1^\secp;\rcd_\index)$ for every $\index\in[\numct]$.
 \item Generate $\fsk_\index\gets\KG(\msk_\index,f)$ for every $\index\in[\numct]$.
 \item Generate $(\cd.\qct_\index,\vk_\index)\gets\CD.\qencrypt(\cd.\sk_\index,\fsk_\index)$ for every $\index\in[\numct]$.
 \item Output $\skl.\qfsk:=(\cd.\qct_\index)_{\index\in[\numct]}$ and $\vk:=(\vk_\index)_{\index\in[\numct]}$.
 \end{itemize}
 \item[$\iEnc(\skl.\msk,\index,x)$:] $ $
 \begin{itemize}
  \item Parse $(q,n,K)\gets\skl.\msk$.
 \item Compute $\rskfe_\index\|\rcd_\index\gets\prf_K(\index)$, $\msk_\index\gets \Setup(1^\secp,1^{\numkey};\rskfe_\index)$, and $\cd.\sk_\index\gets \CD.\KG(1^\secp;\rcd_\index)$.
 \item Generate $\ct_\index\gets\Enc(\msk_\index,x)$.
 \item Output $\skl.\ct:=(\index,\ct_\index,\cd.\sk_\index)$.
 \end{itemize}
\item[$\qiDec(\skl.\qfsk,\skl.\ct)$:] $ $
\begin{itemize}
\item Parse $(\cd.\qct_\index)_{\index\in[\numct]}\gets\skl.\qfsk$ and $(\index,\ct_\index,\cd.\sk_\index)\gets\skl.\ct$.
\item Compute $\fsk_\index\gets\CD.\qdecrypt(\cd.\sk_\index,\skl.\qfsk_\index)$.
\item Output $y\gets\Dec(\fsk_\index,\ct_\index)$.
\end{itemize}
\item[$\qicert(\skl.\qfsk)$:] $ $
\begin{itemize}
\item Parse $(\cd.\qct_\index)_{\index\in[\numct]}\gets\skl.\qfsk$.
\item Compute $\cert_\index\gets\CD.\qdelete(\cd.\ct_\index)$ for every $\index\in[\numct]$.
\item Output $\cert:=(\cert_\index)_{\index\in[\numct]}$.
\end{itemize}
\item[$\iVrfy(\vk,\cert)$:] $ $
\begin{itemize}
\item Parse $(\vk_\index)_{\index\in[\numct]}\gets\vk$ and $(\cert_\index)_{\index\in[\numct]}\gets\cert$.
\item Output $\top$ if $\top=\CD.\Vrfy(\vk_\index,\cert_\index)$ for every $\index\in[\numct]$, and otherwise $\bot$.
\end{itemize}
\end{description}

It is clear that $\iSKFESKL$ satisfies correctness and weak optimal efficiency.
For security, we have the following theorem.



\begin{theorem}\label{thm:iSKFESKL_lessor}
If $\SKFE$ is selective indistinguishability-secure, $\CDSKE$ is IND-CVA-CD secure,\footnote{See~\cref{def:reusable_sk-vo_certified_del} for the defition of IND-CVA-CD.} and $\prf$ is a secure PRF, then $\iSKFESKL$ satisfies selective strong lessor security.
\end{theorem}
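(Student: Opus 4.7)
The plan is to prove selective strong lessor security of $\iSKFESKL$ through a sequence of hybrid experiments that interpolate between $\expc{\iSKFESKL,\qA}{sel}{s}{lessor}(1^\secp, 0)$ and $\expc{\iSKFESKL,\qA}{sel}{s}{lessor}(1^\secp, 1)$. The first step is routine: replace $\prf_K$ with a truly random function by a reduction to PRF security, which makes the pairs $(\msk_\index, \cd.\sk_\index)$ independently and uniformly distributed across indices. After this, all that matters is the $\index^*$-th coordinate, where by the game rules the adversary (i) never queries $\index^*$ to $\Oracle{\Enc}$ before the challenge and (ii) must obtain $\top$ from $\Oracle{\Vrfy}$ on every challenging functional key (those with $f(x_0^*) \neq f(x_1^*)$) before the challenge is released.

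Let $q$ be the number of challenging key queries. I would then define hybrids $H_{2,j}$ for $j=0,1,\ldots,q$ where, in $H_{2,j}$, for the first $j$ challenging queries the CDSKE ciphertext at coordinate $\index^*$ encrypts $\mv{0}$ instead of the actual SKFE functional key $\fsk_{\index^*}$. Adjacent hybrids are indistinguishable by the IND-CVA-CD security of $\CDSKE$: the reduction plays against the CDSKE challenger that holds $\cd.\sk_{\index^*}$, samples $(\msk_\index)_{\index}$ and $(\cd.\sk_\index)_{\index \neq \index^*}$ itself, obtains all non-target CDSKE ciphertexts at $\index^*$ via the CDSKE encryption oracle, and plants the switch by sending $(\fsk_{\index^*}, \mv{0})$ as the CDSKE challenge pair at the $(j{+}1)$-th challenging query. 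All non-target verification queries at $\index^*$ can be handled locally because $\CD.\Vrfy$ needs only $\vk$, while the verification query on the $(j{+}1)$-th certificate is forwarded to the CDSKE verification oracle. By the game's precondition, every challenging key must be successfully deleted before the challenge, so the reduction obtains $\cd.\sk_{\index^*}$ from the CDSKE challenger precisely in time to assemble the challenge ciphertext $(\index^*, \ct_{\index^*}, \cd.\sk_{\index^*})$.

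In $H_{2,q}$, the only SKFE keys at index $\index^*$ that the adversary can ever recover (by decrypting CDSKE ciphertexts with $\cd.\sk_{\index^*}$ post-challenge) are for non-challenging functions, so $\SKFE$ selective indistinguishability applies at coordinate $\index^*$. I would define $H_3$ as $H_{2,q}$ with the challenge SKFE ciphertext encrypting $x_1^*$ instead of $x_0^*$ and reduce $H_{2,q} \approx H_3$ to this security: the reduction routes non-challenging SKFE key queries to the SKFE key oracle, uses the SKFE encryption oracle for any post-challenge $\Oracle{\Enc}$ query at $\index^*$, and embeds the received SKFE challenge ciphertext as $\ct_{\index^*}$. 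Finally, symmetric hybrids $H_{4,j}$ reverse the CDSKE switches (now with $x_1^*$ fixed) to reach $\expc{\iSKFESKL,\qA}{sel}{s}{lessor}(1^\secp, 1)$.

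The main obstacle is managing the CDSKE reduction's state machine so that $\cd.\sk_{\index^*}$ becomes available exactly when it is needed. This hinges on two game-level facts: first, the prohibition on pre-challenge $\Oracle{\Enc}$ queries at $\index^*$ ensures $\cd.\sk_{\index^*}$ is not leaked through encryption responses before the CDSKE verification-induced reveal; second, the required pre-challenge deletion of challenging keys forces the target certificate to be verified (and $\cd.\sk_{\index^*}$ to be handed over) before the reduction must produce the challenge ciphertext. A secondary subtlety is that functional decryption keys are quantum, and verification destroys a fresh copy of a CDSKE ciphertext, so one must check that the hybrids are well-defined even though $\qA$ might interleave unitary operations on the key register with oracle queries; the fact that each key query is answered with a fresh, independent ciphertext together with the reusable nature of IND-CVA-CD means this causes no trouble.
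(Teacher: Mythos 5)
Your proposal is correct and follows essentially the same route as the paper's proof: a PRF hop to make the per-index keys independent, then blanking out the CDSKE-encrypted functional keys at coordinate $\index^*$ for challenging functions (using the IND-CVA-CD verification oracle to answer $\Oracle{\Vrfy}$ for the target certificate and the game's preconditions to obtain $\cd.\sk_{\index^*}$ in time for the challenge), then a reduction to $\SKFE$ selective security at $\index^*$, then undoing the auxiliary changes. The only difference is that you make the CDSKE step an explicit hybrid over the $q$ challenging key queries, which is a more careful rendering of what the paper compresses into a single game hop.
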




\begin{proof}[Proof of~\cref{thm:iSKFESKL_lessor}]
We define a sequence of hybrid games to prove the theorem.
\begin{description}
\item[$\hybi{0}$:] This is the same as $\expc{\qA,\iSKFESKL}{sel}{s}{lessor}(1^\secp,0)$.
        \begin{enumerate}
        \item At the beginning, $\qA$ sends $(1^{\numkey},\numct,\index^*,x_0^*,x_1^*)$ to the challenger. The challenger generates $K\la\bit^\secp$. Below, we let $\rskfe_\index\|\rcd_\index\gets\prf_K(\index)$, $\msk_\index\gets \Setup(1^\secp,1^{\numkey};\rskfe_\index)$, and $\cd.\sk_\index\gets \CD.\KG(1^\secp;\rcd_\index)$ for every $\index\in[\numct]$.
            Throughout the experiment, $\qA$ can access the following oracles.
            \begin{description}
            \item[$\Oracle{\Enc}(\index,x)$:] Given $\index$ and $x$, it generates $\ct_\index\gets\Enc(\msk_\index,x)$ and returns $\skl.\ct\seteq(\index,\ct_\index,\cd.\sk_\index)$.
                        \item[$\Oracle{\qKG}(f)$:] Given $f$, it does the following.
 \begin{itemize}
 \item Compute $\fsk_\index\gets\KG(\msk_\index,f)$ for every $\index\in[\numct]$.
\item Compute $(\cd.\qct_\index,\vk_\index)\gets\CD.\qencrypt(\cd.\sk_\index,\fsk_\index)$ for every $\index\in[\numct]$.
\item Sets $\skl.\qfsk\seteq(\cd.\qct_\index)_{\index\in[\numct]}$ and $\skl.\vk\seteq(\vk_\index)_{\index\in[\numct]}$.
 \end{itemize}
It sends $\skl.\qfsk$ to $\qA$ and adds $(f,\skl.\vk,\bot)$ to $\List{\qKG}$.
$\qA$ is allowed to make at most $\numkey$ queries to this oracle.
            \item[$\Oracle{\Vrfy}(f,\cert\seteq(\cert_j)_{j\in[\numct]})$:] Given $(f,\cert\seteq(\cert_j)_{j\in[\numct]})$, it finds an entry $(f,\vk,M)$ from $\List{\qKG}$. (If there is no such entry, it returns $\bot$.) If $\top=\Vrfy(\vk_j,\cert_j)$ for every $j\in[\numct]$, it returns $\top$ and updates the entry into $(f,\vk,\top)$. Otherwise, it returns $\bot$.
            \end{description}

 \item When $\qA$ requests the challenge ciphertext, the challenger checks if for any entry $(f,\vk,M)$ in $\List{\qKG}$ such that $f(x_0^*)\ne f(x_1^*)$, it holds that $M=\top$, and $\qA$ does not make a query with $\index^*$ to $\Oracle{\Enc}$ at this point. If so, the challenger generates $\ct^*_{\index^*}\la\Enc(\msk_{\index^*},x_0^*)$ and sends $\skl.\ct^*\seteq(\index^*,\ct^*_{\index^*},\cd.\sk_{\index^*})$ to $\qA$. Otherwise, the challenger outputs $0$. Hereafter, $\qA$ is not allowed to sends a function $f$ such that $f(x_0^*)\ne f(x_1^*)$ to $\Oracle{\qKG}$.

            \item $\qA$ outputs $\coin^\prime$. The challenger outputs $\coin'$ as the final output of the experiment.

        \end{enumerate}
        \end{description}

\begin{description}
\item[$\hybi{1}$:] This is the same as $\hybi{0}$ except that $\rskfe_\index\|\rcd_\index$ is generated as a uniformly random string for every $\index\in[\numct]$.
\end{description}

We have $\abs{\Pr[\hybi{0}=1]-\Pr[\hybi{1}=1]}=\negl(\secp)$ from the security of $\prf$.

\begin{description}
\item[$\hybi{2}$:] This hybrid is the same as $\hybi{1}$ except that when $\qA$ sends $f$ to $\Oracle{\qKG}$, if $f(x_0^*)\ne f(x_1^*)$, the challenger generates $\cd.\qct_{\index^*}$ included in $\skl.\qfsk\seteq(\cd.\qct_\index)_{\index\in[\numct]}$ as $(\cd.\qct_{\index^*},\vk_{\index^*})\la\CD.\qencrypt(\cd.\sk_{\index^*},\mv{0})$.
\end{description}

We can show that $\abs{\Pr[\hybi{1}=1]-\Pr[\hybi{2}=1]}=\negl(\secp)$ from the security of $\CDSKE$ as follows.
We say $\qA$ is valid if when $\qA$ requests the challenge ciphertext, for any entry $(f,\vk,M)$ in $\List{\qKG}$ such that $f(x_0^*)\ne f(x_1^*)$, it holds that $M=\top$, and $\qA$ does not make a query with $\index^*$ to $\Oracle{\Enc}$ at this point.
In the estimation of $\abs{\Pr[\hybi{1}=1]-\Pr[\hybi{2}=1]}$, we have to consider the case where $\qA$ is valid since if $\qA$ is not valid, the output of the experiment is $0$.
In this transition of experiments, we change a plaintext encrypted under $\cd.\sk_{\index^*}$.
If $\qA$ is valid, $\qA$ cannot obtain $\cd.\sk_{\index^*}$ before $\qA$ is given $\skl.\ct^*$, and $\qA$ returns all ciphertexts under $\cd.\sk_{\index^*}$ before it gets $\cd.\sk_{\index^*}$.
Although the reduction does not have $\vk_{\index^\ast}$ here, it can simulate $\Oracle{\Vrfy}$ by using the verification oracle in IND-CVA-CD game.
Then, we see that $\abs{\Pr[\hybi{1}=1]-\Pr[\hybi{2}=1]}=\negl(\secp)$ follows from the security of $\CDSKE$ under the key $\cd.\sk_{\index^*}$.

\begin{description}
\item[$\hybi{3}$:] This hybrid is the same as $\hybi{2}$ except that the challenger generates $\ct^*_{\index^*}$ included in $\skl.\ct^*$ as $\ct^*_{\index^*}\la\Enc(\msk_{\index^*},x_1^*)$.
\end{description}

By the previous transition, in $\hybi{2}$ and $\hybi{3}$, $\qA$ can obtain a decryption key under $\msk_{\index^*}$ for a function $f$ such that $f(x_0^*)=f(x_1^*)$.
Thus, $\abs{\Pr[\hybi{2}=1]-\Pr[\hybi{3}=1]}=\negl(\secp)$ holds from the security of $\SKFE$.

\begin{description}
\item[$\hybi{4}$:] This hybrid is the same as $\hybi{3}$ except that we undo the changes from $\hybi{0}$ to $\hybi{2}$. $\hybi{4}$ is the same as $\expc{\qA,\iSKFESKL}{sel}{s}{lessor}(1^\secp,1)$.
\end{description}

$\abs{\Pr[\hybi{3}=1]-\Pr[\hybi{4}=1]}=\negl(\secp)$ holds from the security of $\prf$ and $\CDSKE$.

From the above discussions, $\iSKFESKL$ satisfies selective lessor security.
\end{proof}


\section{SKFE with Static-Bound Secure Key Leasing}\label{sec:removing_index}
We construct an SKFE-sbSKL scheme $\SKFEsbSKL=(\sbSKL.\Setup,\sbSKL.\qKG,\allowbreak\sbSKL.\Enc,\sbSKL.\qDec,\allowbreak\sbSKL.\qcert,\sbSKL.\Vrfy)$ from the following tools:
\begin{itemize}
\item An index-based SKFE-sbSKL scheme $\iSKFESKL=(\iSetup, \qiKG, \iEnc,\allowbreak \qiDec,\allowbreak\qicert,\iVrfy)$.
\item A set homomorphic secret sharing $\SetHSS=(\SetGen, \InpEncode,\FuncEncode,\allowbreak\Decode)$.
\item An SKE scheme $\SKE=(\E,\D)$.
\end{itemize}
 The description of $\SKFEsbSKL$ is as follows.
\begin{description}

 \item[$\sbSKL.\Setup(1^\secp,1^{\numkey},\numct)$:] $ $
 \begin{itemize}
 \item Generate $\params\seteq (p,\numall,(T_\instance)_{\instance\in[\numset]})\la\SetGen(1^\secp)$.
 \item Generate $\msk_\instance\gets \iSetup(1^\secp,1^{\numkey},N)$ for every $\instance\in[\numset]$, where $N=\numct/p$.
 \item Generate $\Kske\la\bit^\secp$.
 \item Output $\sbskl.\msk\seteq (\params,N,(\msk)_{\instance\in[\numset]},\Kske)$.
 \end{itemize}
 \item[$\sbSKL.\qKG(\sbskl.\msk,f)$:] $ $
 \begin{itemize}
 \item Parse $(\params,N,(\msk)_{\instance\in[\numset]},\Kske)\la\sbskl.\msk$.
 \item Generate $\ctske_{\instance}\la\E(\Kske,\mv{0})$ for every $\instance\in[\numset]$.
 \item Generate $(f_{\instance})_{\instance\in[\numset]}\la\FuncEncode(\params,f)$.
 \item Generate $(\qfsk_{\instance},\vk_{\instance})\gets\qiKG(\msk_\instance,F[f_{\instance},\ctske_{\instance}])$ for every $\instance\in[\numset]$, where the circuit $F$ is described in \cref{fig:F}.
 \item Output $\sbskl.\qfsk\seteq (\qfsk_\instance)_{\instance\in[\numset]}$ and $\sbskl.\vk\seteq (\vk_\instance)_{\instance\in[\numset]}$.
 \end{itemize}
 \item[$\sbSKL.\Enc(\sbskl.\msk,x)$:] $ $
 \begin{itemize}
 \item Parse $(\params,N,(\msk)_{\instance\in[\numset]},\Kske)\la\sbskl.\msk$.
\item Generate $(\share_{\instance})_{\instance\in[\numset]}\la\InpEncode(\params,x)$.
\item Generate $\index_\instance \la [N]$ for every $\instance\in[\numset]$.
\item Generate $\ct_{\instance}\la\iEnc(\msk_\instance,\index_\instance,(\share_{\instance},\mv{0},0))$ for every $\instance\in[\numset]$.
\item Output $\sbskl.\ct\seteq (\ct_{\instance})_{\instance\in[\numset]}$.
 \end{itemize}
\item[$\sbSKL.\qDec(\sbskl.\qfsk,\sbskl.\ct)$:] $ $
\begin{itemize}
\item Parse $(\qfsk_{\instance})_{\instance\in[\numset]}\gets\sbskl.\qfsk$ and $(\ct_{\instance})_{\instance\in[\numset]}\gets\sbskl.\ct$.
\item Compute $y_{\bfinstance}\la\qiDec(\qfsk_{\instance},\ct_{\instance})$ for every $\instance\in[\numset]$.
\item Output $y\la\Decore((y_{\instance})_{\instance\in[\numset]})$.
\end{itemize}
\item[$\sbSKL.\qcert(\sbskl.\qfsk)$:] $ $
\begin{itemize}
\item Parse $(\qfsk_{\instance})_{\instance\in[\numset]}\gets\sbskl.\qfsk$.
\item Compute $\cert_{\instance}\gets\qicert(\qfsk_{\instance})$ for every $\instance\in[\numset]$.
\item Output $\sbskl.\cert\seteq (\cert_{\instance})_{\instance\in[\numset]}$.
\end{itemize}
\item[$\sbSKL.\Vrfy(\sbskl.\vk,\sbskl.\cert)$:] $ $
\begin{itemize}
\item Parse $(\vk_\instance)_{\instance\in[\numset]}\gets\sbskl.\vk$ and $(\cert_\instance)_{\instance\in[\numset]}\gets\sbskl.\cert$.
\item Output $\top$ if $\top=\iVrfy(\vk_{\instance},\cert_{\instance})$ for every $\instance\in[\numset]$, and otherwise $\bot$.
\end{itemize}
\end{description}

\protocol
{Circuit $F[f_\instance,\ctske_\instance](\share_\instance,\Kske,b)$}
{Description of $F[f_\instance,\ctske_\instance](\share_\instance,\Kske,b)$.}
{fig:F}
{
\ifnum\llncs=1
\scriptsize
\else
\fi
\begin{description}
\setlength{\parskip}{0.3mm} 
\setlength{\itemsep}{0.3mm} 
\item[Hardwired:] A function share $f_\instance$ and an $\SKE$'s ciphertext $\ctske_\instance$.
\item[Input:] an input share $\share_\instance$, an $\SKE$'s secret key $\Kske$, and a bit $b$.
\end{description}
\begin{enumerate}
\setlength{\parskip}{0.3mm} 
\setlength{\itemsep}{0.3mm} 
\item If $b=1$, output $\D(\Kske,\ctske_{\instance})$.
\item Otherwise, output $f_{\instance}(\share_{\instance})$.
\end{enumerate}
}

We show the correctness of $\SKFEsbSKL$.
Let $\sbskl.\qfsk\seteq (\qfsk_\instance)_{\instance\in[\numset]}$ be a decryption key for $f$ and let $\sbskl.\ct\seteq (\ct_{\instance})_{\instance\in[\numset]}$ be a ciphertext of $x$.
From the correctness of $\iSKFESKL$, we obtain $f_{\instance}(\share_{\instance})$ by decrypting $\ct_{\instance}$ with $\qfsk_{\instance}$ for every $\instance\in[\numset]$, where $(f_{\instance})_{\instance\in[\numset]}\la\FuncEncode(\params,f)$ and  $(\share_{\instance})_{\instance\in[\numset]}\la\InpEncode(\params,x)$.
Thus, we obtains  $f(x)\la\Decore((f_{\instance}(\share_{\instance}))_{\instance\in[\numset]})$ from the correctness of $\SetHSS$.
It is clear that $\SKFEsbSKL$ also satisfies verification correctness.

Also, the weak optimal efficiency of $\SKFEsbSKL$ easily follows from that of $\iSKFESKL$ since the running time of algorithms of $\SetHSS$ is independent of $\numct$. Note that $\sbSKL.\Enc$ samples indices from $[N]=[n/p]$, but it can be done in time $\log \numct$.

For security, we have the following theorems.


\begin{theorem}\label{thm:removing_index_lessor_SetHSS}
If $\iSKFESKL$ is a selectively strong lessor secure index-based SKFE-sbSKL scheme and $\SetHSS$ is a set homomorphic secret sharing scheme, and $\SKE$ is a CPA secure SKE scheme, then $\SKFEsbSKL$ is selectively strong lessor secure.
\end{theorem}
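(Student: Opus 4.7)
The plan is to connect $\expc{\qA,\SKFEsbSKL}{sel}{s}{lessor}(1^\secp,0)$ and $\expc{\qA,\SKFEsbSKL}{sel}{s}{lessor}(1^\secp,1)$ via a sequence of hybrids that invoke $\SKE$ CPA security, $\iSKFESKL$ strong lessor security, and $\SetHSS$ security in turn. The structure follows the overview in \cref{sec:tech_overview}: first I would identify a ``safe'' set of $\SetHSS$-instances on which $\iSKFESKL$ security applies, then I would use the trojan method to make the challenge share irrelevant in those instances, and finally I would use $\SetHSS$ security to switch the underlying input from $x_0^*$ to $x_1^*$.

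In the first hybrid I would have the challenger pre-sample every index in a mental experiment: the challenge index $\index^*_\instance$ and the encryption-oracle indices $\index^{(k)}_\instance$ for $k\in[\numct]$ and each $\instance\in[\numset]$. Let $S\subseteq[\numset]$ denote the instances for which $\index^*_\instance$ equals some $\index^{(k)}_\instance$ with $k\in[\numct]$. Because $N=\numct/p$ and all indices are independent uniform in $[N]$, a union bound gives $\Pr[\instance\in S]\leq p$ independently across $\instance$. By $\SetHSS$'s existence-of-unmarked-element property, with overwhelming probability there is an $\element^*\in[\numall]$ satisfying $\element^*\notin\bigcup_{\instance\in S} T_\instance$; the hybrid would fix such an $\element^*$ and abort otherwise, which is statistically close to the previous hybrid.

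Next I would apply the trojan method one instance at a time for each $\instance^*$ with $\element^*\in T_{\instance^*}$. Using CPA security of $\SKE$, I would replace $\ctske_{\instance^*}$ in every good-key query (one with $f(x_0^*)=f(x_1^*)$) by $\E(\Kske,f_{\instance^*}(\share^{(0)}_{\instance^*}))$, where $(\share^{(0)}_\instance)_\instance$ is precomputed from $\InpEncode(\params,x_0^*)$ at setup; bad-key queries would keep $\ctske_{\instance^*}=\E(\Kske,\mv{0})$. Then, using selective strong lessor security of $\iSKFESKL$ for $\msk_{\instance^*}$, I would switch the challenge ciphertext of $\instance^*$ from $\iEnc(\msk_{\instance^*},\index^*_{\instance^*},(\share^{(0)}_{\instance^*},\mv{0},0))$ to $\iEnc(\msk_{\instance^*},\index^*_{\instance^*},(\share^{(0)}_{\instance^*},\Kske,1))$. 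For good keys the circuit $F[f_{\instance^*},\ctske_{\instance^*}]$ outputs $f_{\instance^*}(\share^{(0)}_{\instance^*})$ on both challenge plaintexts by construction, so they are compatible with the $\iSKFESKL$ restriction; for bad keys the two outputs may differ, but since the $\SKFEsbSKL$ adversary must verifiably return every bad key before requesting the challenge, the reduction would forward those verification queries to the $\iSKFESKL$ verification oracle, and strong lessor security absorbs the mismatch. The other $\numset-1$ instances are simulated locally since the reduction knows their master keys, and $\instance^*\notin S$ ensures no index collision within the first $\numct$ encryption queries.

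Once every instance with $\element^*\in T_\instance$ is in trojan mode, the share in the challenge ciphertext of such an instance is ignored by decryption, so it can be moved freely. I would then reduce to $\SetHSS$: the reduction receives $\params$ and $(\share^{(\cbit)}_\instance)_{\element^*\notin T_\instance}$, and on every good-key query asks the $\SetHSS$ $\FuncEncode$ oracle for $(f_\instance,f_\instance(\share^{(\cbit)}_\instance))_\instance$. It plugs the shares into the challenge ciphertext for $\instance$ with $\element^*\notin T_\instance$, and sets $\ctske_\instance=\E(\Kske,f_\instance(\share^{(\cbit)}_\instance))$ for $\instance$ with $\element^*\in T_\instance$. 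Bad keys are simulated locally by running $\FuncEncode(\params,f)$ and keeping $\ctske_\instance=\E(\Kske,\mv{0})$; enc-oracle ciphertexts use local $\InpEncode(\params,x)$ runs, and $\SetHSS$ correctness yields decryption consistency of bad keys on enc-oracle ciphertexts. Symmetric hybrids then unwind the trojan and $\SKE$ steps with $\share^{(1)}$ in place of $\share^{(0)}$, arriving at the $b=1$ experiment. The main obstacle will be the bookkeeping around bad keys: since they cannot be queried to the $\SetHSS$ $\FuncEncode$ oracle nor planted with $\cbit$-dependent $\ctske_\instance$, the hybrids must leave their $\ctske_\instance$ fixed at $\E(\Kske,\mv{0})$, and the resulting $F$-mismatch on those keys during the $\iSKFESKL$ switch can only be absorbed because $\iSKFESKL$ satisfies the \emph{strong} lessor notion that tolerates challenge-distinguishing outputs on keys whose certificates are verified in time. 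Threading the verification-oracle simulation through all three reductions without prematurely revealing $\element^*$ or $\cbit$ is the delicate piece the formal proof must carry out.
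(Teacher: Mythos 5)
Your proposal is correct and follows essentially the same route as the paper's proof: pre-sample all indices, define the collision set (the paper's $\Sinsecure$), invoke the existence-of-unmarked-element property to fix $\element^*$, trojan the relevant instances via SKE CPA security plus the \emph{strong} lessor security of $\iSKFESKL$ (with the verification oracle absorbing keys $f$ with $f(x_0^*)\ne f(x_1^*)$ that are returned in time), reduce the share switch to $\SetHSS$ security, and unwind. The only deviations are cosmetic: the paper trojans every instance in $\Ssecure$ and sets the trojaned challenge component to $(\mv{0},\Kske,1)$ directly, whereas you trojan only $\{\instance:\element^*\in T_\instance\}$ and leave the share in the first slot, so your ``it can be moved freely'' step should be made explicit as one additional $\iSKFESKL$-indistinguishability hybrid (trivial, since with flag $1$ the circuit $F$ ignores that slot) so that the $\SetHSS$ reduction can form those ciphertexts without the missing shares.
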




\begin{proof}[Proof of~\cref{thm:removing_index_lessor_SetHSS}]
We define a sequence of hybrid games to prove the theorem.
\begin{description}
\item[$\hybi{0}$:] This is the same as $\expc{\qA,\SKFEsbSKL}{sel}{s}{lessor}(1^\secp,0)$.
        \begin{enumerate}
            \item At the beginning, $\qA$ sends $(1^{\numkey},\numct,x_0^*,x_1^*)$ to the challenger. The challenger generates $\params\seteq (p,\numall,(T_\instance)_{\instance\in[\numset]})\la\SetGen(1^\secp)$, $\msk_\instance \gets \iSetup(1^\secp,1^{\numkey},N)$ for every $\instance\in[\numset]$, and $\Kske\la\bit^\secp$, where $N=n/p$. 
            Throughout the experiment, $\qA$ can access the following oracles.
            \begin{description}
            \item[$\Oracle{\Enc}(x^k)$:] Given the $k$-th query $x^k$, it returns $\sbskl.\ct^k$ generated as follows.
             \begin{itemize}
\item Generate $(\share^k_{\instance})_{\instance\in[\numset]}\la\InpEncode(\params,x^k)$.
\item Generate $\index^k_\instance \la [N]$ for every $\instance\in[\numset]$.
\item Generate $\ct^k_{\instance}\la\iEnc(\msk,\instance,\index^k_{\instance},(\share^k_{\instance},\mv{0},0))$ for every $\instance\in[\numset]$.
\item Set $\sbskl.\ct^k\seteq (\ct^k_{\instance})_{\instance\in[\numset]}$.
 \end{itemize}
            
            \item[$\Oracle{\qKG}(f)$:] Given $f$, it generates $\sbskl.\qfsk$ and $\sbskl.\vk$ as follows.
             \begin{itemize}
 \item Generate $(f_{\instance})_{\instance\in[\numset]}\la\FuncEncode(\params,f)$.
 \item Generate $\ctske_{\instance}\la\E(\Kske,\mv{0})$ for every $\instance\in[\numset]$.
 \item Generate $(\qfsk_{\instance},\vk_{\instance})\gets\qiKG(\msk_\instance,F[f_{\instance},\ctske_{\instance}])$ for every $\instance\in[\numset]$.
 \item Set $\sbskl.\qfsk\seteq (\qfsk_\instance)_{\instance\in[\numset]}$ and $\sbskl.\vk\seteq (\vk_\instance)_{\instance\in[\numset]}$.
 \end{itemize}
 It sends $\sbskl.\qfsk$ to $\qA$ and adds $(f,\sbskl.\vk,\bot)$ to $\List{\qKG}$.
 \item[$\Oracle{\Vrfy}(f,\cert\seteq (\cert_\instance)_{\instance\in[\numset]})$:] Given $(f,\cert\seteq (\cert_\instance)_{\instance\in[\numset]})$, it finds an entry $(f,\vk,M)$ from $\List{\qKG}$. (If there is no such entry, it returns $\bot$.) If $\top=\Vrfy(\vk_\instance,\cert_\instance)$ for every $\instance\in[\numset]$, it returns $\top$ and updates the entry into $(f,\vk,\top)$. Otherwise, it returns $\bot$.
            \end{description}
             \item When $\qA$ requests the challenge ciphertext, the challenger checks if for any entry $(f,\vk,M)$ in $\List{\qKG}$ such that $f(x_0^*)\ne f(x_1^*)$, it holds that $M=\top$, and the number of queries to $\Oracle{\Enc}$ at this point is less than $\numct$. If so, the challenger sends $\sbskl.\ct^*$ computed as follows to $\qA$.
\begin{itemize}
\item Generate $(\share^*_{\instance})_{\instance\in[\numset]}\la\InpEncode(\params,x_0^*)$.
\item Generate $\index^*_\instance \la [N]$ for every $\instance\in[\numset]$.
\item Generate $\ct^*_{\instance}\la\iEnc(\msk_\instance,(\share^*_{\instance},\mv{0},0))$ for every $\instance\in[\numset]$.
\item Set $\sbskl.\ct^*\seteq (\ct^*_{\instance})_{\instance\in[\numset]}$.
 \end{itemize}
Otherwise, the challenger outputs $0$. Hereafter, $\qA$ is not allowed to sends a function $f$ such that $f(x_0^*)\ne f(x_1^*)$ to $\Oracle{\qKG}$.
\item $\qA$ outputs $\coin^\prime$. The challenger outputs $\coin'$ as the final output of the experiment.

        \end{enumerate}
        \end{description}

Below, we call $\instance\in[\numset]$ \emph{a secure instance index} if $\index_\instance^*\neq \index_\instance^k$ holds for every $k\in[\numct]$.
We also call $\instance\in[\numset]$ \emph{an insecure instance index} if it is not a secure instance index.
Let $\Ssecure\subseteq[\numset]$ be the set of secure instance indices, and $\Sinsecure=S\setminus\Ssecure$.
Since each $\index_\instance^k$ is sampled from $[N]=[n/p]$, for  each $\instance\in[\numset]$, $\instance$ is independently included in $\Sinsecure$ with probability at most $n/N=p$.
Then, from the existence of unmarked element property of $\SetHSS$, without negligible probability, there exists $\element\in[\numall]$ such that $\element\notin\bigcup_{\instance\in \Sinsecure}T_\instance$.
Below, for simplicity, we assume that there always exists at least one such instance index, and we denote it as $\element^*$.

\begin{description}
\item[$\hybi{1}$:] This is the same as $\hybi{0}$ except that we generate $\index_{\instance}^k$ for every $\instance\in[\numset]$ and $k\in[\numct]$ and $\index_\instance^*$ for every $\instance\in[\numset]$ at the beginning of the experiment.
Note that by this change, secure instance indices and $\instance^*$ are determined at the beginning of the experiment.
\end{description}

$\abs{\Pr[\hybi{0}=1]- \Pr[\hybi{1}=1]}= 0$ holds since the change at this step is only conceptual.

\begin{description}
\item[$\hybi{2}$:] This is the same as $\hybi{1}$ except that when $\qA$ makes a query $f$ to $\Oracle{\qKG}$, if $f(x_0^*)=f(x_1^*)$, it generates $\ctske_{\instance}$ as $\ctske_{\instance}\la\E(\Kske,f_{\instance}(\share^*_{\instance}))$ for every $\instance\in\Ssecure$.
\end{description}

$\abs{\Pr[\hybi{1}=1]- \Pr[\hybi{2}=1]}= \negl(\secp)$ holds from the security of $\SKE$.

\begin{description}
\item[$\hybi{3}$:] This is the same as $\hybi{2}$ except that the challenger generates $\ct^*_{\instance}$ as $\ct^*_{\instance}\la\iEnc(\msk_\instance,\index^*_{\instance},(\mv{0},\Kske,1))$ for every $\instance\in\Ssecure$.
\end{description}

$\abs{\Pr[\hybi{2}=1]- \Pr[\hybi{3}=1]}= \negl(\secp)$ holds from the selective lessor security of $\iSKFESKL$.
We provide the proof of it in \cref{prop:SKFEsbSKL_23_setHSS}.

\begin{description}
\item[$\hybi{4}$:] This is the same as $\hybi{3}$ except that the challenger generates $(\share^*_{\instance})_{\instance\in[\numset]}$ as $(\share^*_{\instance})_{\instance\in[\numset]}\la\InpEncode(\params,x_1^*)$.
\end{description}

$\abs{\Pr[\hybi{3}=1]- \Pr[\hybi{4}=1]}= \negl(\secp)$ holds from the selective indistinguishability-security of $\SetHSS$.
We provide the proof of it in \cref{prop:SKFEsbSKL_34_setHSS}.

\begin{description}
\item[$\hybi{5}$:]This is the same as $\hybi{4}$ except that we undo the changes from $\hybi{0}$ to $\hybi{3}$. This is the same experiment as $\expc{\qA,\SKFEsbSKL}{sel}{s}{lessor}(1^\secp,1)$.
\end{description}

$\abs{\Pr[\hybi{4}=1]- \Pr[\hybi{5}=1]}= \negl(\secp)$ holds from the security of $\SKE$ and $\iSKFESKL$.

\begin{proposition}\label{prop:SKFEsbSKL_23_setHSS}
$\abs{\Pr[\hybi{2}=1]- \Pr[\hybi{3}=1]}= \negl(\secp)$ holds if $\iSKFESKL$ is selectively lessor secure.
\end{proposition}

\begin{proof}[Proof of~\cref{prop:SKFEsbSKL_23_setHSS}]
We define intermediate experiments $\hybi{2,\instance'}$ between $\hybi{2}$ and $\hybi{3}$ for $\instance'\in[\numset]$.
\begin{description}
\item[$\hybi{2,\instance'}$:] This is the same as $\hybi{2}$ except that the challenger generates $\ct^*_{\instance}$ as $\ct^*_{\instance}\la\iEnc(\msk_\instance,\index^*_{\instance},(\mv{0},\Kske,1))$ for every $\instance$ such that $\instance\in\Ssecure$ and $\instance\leq\instance'$.
\end{description}
Then, we have
\ifnum\llncs=0
\begin{align}
\abs{\Pr[\hybi{2}=1]- \Pr[\hybi{3}=1]}\leq\sum_{\instance'\in\numset}\abs{\Pr[\hybi{2,\instance'-1}=1\land \instance'\in\Ssecure]- \Pr[\hybi{2,\instance}=1\land\instance'\in\Ssecure]},\label{eqn:adv_removing_index}
\end{align}
\else
\begin{align}
&\abs{\Pr[\hybi{2}=1]- \Pr[\hybi{3}=1]}\\
\leq&\sum_{\instance'\in\numset}\abs{\Pr[\hybi{2,\instance'-1}=1\land \instance'\in\Ssecure]- \Pr[\hybi{2,\instance}=1\land\instance'\in\Ssecure]},\label{eqn:adv_removing_index}
\end{align}
\fi
where we define $\hybi{2,0}=\hybi{2}$ and $\hybi{2,\numset}=\hybi{3}$.
To estimate each term of \cref{eqn:adv_removing_index}, we construct the following adversary $\qB$ that attacks selective lessor security of $\iSKFESKL$.

\newcommand{\Ssecurele}{S_{\mathtt{secure},< \instance'}}

\begin{enumerate}
         \item $\qB$ executes $\qA$ and obtains $(1^{\numkey},\numct,x_0^*,x_1^*)$. $\qB$ generates $\params\seteq (p,\numall,(T_\instance)_{\instance\in[\numset]})\la\SetGen(1^\secp)$. $\qB$ generates $\index_\instance^k\la[N]$ for every $\instance\in[\numset]$ and $k\in[\numct]$ and $\index_{\instance}^*\la[N]$ for every $\instance\in[\numset]$, and identifies $\Ssecure$ and $\Sinsecure$, where $N=n/p$. If $\instance'\notin\Ssecure$, $\qB$ aborts with output $0$. Otherwise, $\qB$ behaves as follows. Below, we let $\Ssecurele=\Ssecure\cap[\instance'-1]$.
         $\qB$ computes $(\share_{\instance}^*)_{\instance\in[\numset]}\la\InpEncode(\params,x_0^*)$. $\qB$ also generates $\Kske\la\bit^\secp$. $\qB$ sends $(1^{\numkey},N,\index_{\instance'}^*,\allowbreak(\share^*_{\instance'},\mv{0},0),(\mv{0},\Kske,1))$. $\qB$ also generates $\msk_\instance\la\iSetup(1^\secp,1^\numkey,N)$ for every $\instance\in[\numset]\setminus\{\instance'\}$.
            $\qB$ simulates oracles for $\qA$ as follows.
            \begin{description}
            \item[$\Oracle{\Enc}(x^k)$:] Given the $k$-th query $x^k$, $\qB$ returns $\sbskl.\ct^k$ generated as follows.
             \begin{itemize}
\item Generate $(\share^k_{\instance})_{\instance\in[\numset]}\la\InpEncode(\params,x^k)$.
\item If $k\le \numct$, use $(\index_\instance^k)_{\instance\in[\numset]}$ generated at the beginning. Otherwise, Generate $\index^k_\instance \la [N]$ for every $\instance\in[\numset]$.
\item Query $(\index^k_{\instance'},(\share^k_{\instance'},\mv{0},0))$ to its encryption oracle and obtain $\ct^k_{\instance'}$.
\item Generate $\ct^k_{\instance}\la\iEnc(\msk_\instance,\index_\instance^k,(\share^k_\instance,\mv{0},0))$ for every $\instance\in[\numset]\setminus\{\instance'\}$.
\item Set $\sbskl.\ct^k\seteq (\ct^k_{\instance})_{\instance\in[\numset]}$.
 \end{itemize}
            
            \item[$\Oracle{\qKG}(f)$:] Given $f$, $\qB$ returns $\sbskl.\qfsk$ computed as follows.
             \begin{itemize}
              \item Generate $(f_{\instance})_{\instance\in[\numset]}\la\FuncEncode(\params,f)$.
 \item Generate $\ctske_{\instance}\la\E(\Kske,\mv{0})$ for every $\instance\in\Sinsecure$. Generate also $\ctske_{\instance}\la\E(\Kske,f_{\instance}(\share_{\instance}^*))$ for every $\instance\in\Ssecure$ if $f(x_0^*)=f(x_1^*)$, and otherwise generate $\ctske_{\instance}\la\E(\Kske,\mv{0})$ for every $\instance\in\Ssecure$.
 \item Query $F[f_{\instance'},\ctske_{\instance'}]$ to its key generation oracle and obtain $(\qfsk_{\instance'},\vk_{\instance'})$.
 \item Generate $(\qfsk_{\instance},\vk_{\instance})\la\qiKG(\msk_\instance,F[f_\instance,\ctske_\instance])$ for every $\instance\in[\numset]\setminus\{\instance'\}$.
 \item Set $\sbskl.\qfsk\seteq (\qfsk_\instance)_{\instance\in[\numset]}$.
 \end{itemize}
 Also, $\qB$ adds $(f,(\vk_\instance)_{i\in[\numset]\setminus\{\instance^\prime\}},\bot)$ to $\List{\qKG}$.
  \item[$\Oracle{\Vrfy}(f,\cert\seteq (\cert_\instance)_{\instance\in[\numset]})$:] Given $(f,\cert\seteq (\cert_\instance)_{\instance\in[\numset]})$, it finds an entry $(f,(\vk_\instance)_{i\in[\numset]\setminus\{\instance^\prime\}},\bot)$ from $\List{\qKG}$. (If there is no such entry, it returns $\bot$.) $\qB$ sends $(f,\cert_{\instance^\prime})$ to its verification oracle and obtains $M_{\instance^\prime}$.
If $M=\top$ and $\top=\Vrfy(\vk_\instance,\cert_\instance)$ for every $\instance\in[\numset]\setminus\{\instance^\prime\}$, $\qB$ returns $\top$ and updates the entry into $(f,(\vk_\instance)_{i\in[\numset]\setminus\{\instance^\prime\}},\top)$. Otherwise, $\qB$ returns $\bot$.
            \end{description}
            
            \item When $\qA$ requests the challenge ciphertext, $\qB$ checks if for any entry $(f,(\vk_\instance)_{i\in[\numset]\setminus\{\instance^\prime\}},M)$ in $\List{\qKG}$ such that $f(x_0^*)\ne f(x_1^*)$, it holds that $M=\top$. If so, $\qB$ requests the challenge ciphertext to its challenger and obtains $\ct_{\instance^\prime}^*$. $\qB$ also generates $\ct^*_{\instance}\la\iEnc(\msk_\instance,\index_\instance^*,(\mv{0},K,1))$ for every $\instance\in\Ssecurele$ and $\ct^*_{\instance}\la\iEnc(\msk_\instance,\index_\instance^*,(\share^*_\instance,\mv{0},0))$ for every $\instance\in[\numset]\setminus(\Ssecurele\cup\{\instance'\})$. $\qB$ sends $\sbskl.\ct\seteq (\ct^*_{\instance})_{\instance\in[\numset]}$ to $\qA$.
Hereafter, $\qB$ rejects $\qA$'s query $f$ to $\Oracle{\qKG}$ such that $f(x_0^*)\ne f(x_1^*)$.

            \item When $\qA$ outputs $\coin^\prime$, $\qB$ outputs $\coin'$.

        \end{enumerate}

$\qB$ simulates $\hybi{2,\instance'-1}$ (resp. $\hybi{2,\instance'}$) if $\qB$ runs in $\expc{\qB,\SKFEsbSKL}{sel}{s}{lessor}(1^\secp,0)$ (resp. $\expc{\qB,\SKFEsbSKL}{sel}{s}{lessor}(1^\secp,1)$) and $\instance'\in\Ssecure$.
This completes the proof.
\end{proof}

\begin{proposition}\label{prop:SKFEsbSKL_34_setHSS}
$\abs{\Pr[\hybi{3}=1]- \Pr[\hybi{4}=1]}= \negl(\secp)$ holds if $\SetHSS$ is a set homomorphic secret sharing.
\end{proposition}

\begin{proof}[Proof of~\cref{prop:SKFEsbSKL_34_setHSS}]
We construct the following adversary $\qB$ that attacks the selective indistinguishability-security of $\SetHSS$.

\begin{enumerate}
         \item Given $\params\seteq (p,\numall,(T_\instance)_{\instance\in[\numset]})$, $\qB$ executes $\qA$ and obtains $(1^{\numkey},\numct,x_0^*,x_1^*)$. $\qB$ generates $\index_\instance^k\la[N]$ for every $\instance\in[\numset]$ and $k\in[\numct]$ and $\index_{\instance}^*\la[N]$ for every $\instance\in[\numset]$, and identifies $\Ssecure$, $\Sinsecure$, and the unmarked element $\element^*$, where $N=n/p$.
         $\qB$ sends $(\element^*,x_0^*,x_1^*)$ to the challenger and obtains $(\share^*_{\instance})_{\instance\in[\numset]_{ \element^* \notin}}$, where $[\numset]_{\element^* \notin}$ denotes the subset of $[\numset]$ consisting of $\instance$ such that $\element^* \notin T_\instance$.
         $\qB$ also generates $\msk_\instance\la\iSetup(1^\secp,1^\numkey,N)$ for every $\instance\in[\numset]$ and $\Kske\la\bit^\secp$. 
            $\qB$ simulates oracles for $\qA$ as follows.
            \begin{description}
            \item[$\Oracle{\Enc}(x^k)$:] Given the $k$-th query $x^k$, $\qB$ returns $\sbskl.\ct^k$ generated as follows.
             \begin{itemize}
\item Generate $(\share^k_{\instance})_{\instance\in[\numset]}\la\InpEncode(\params,x^k)$.
\item If $k\le \numct$, use $(\index_\instance^k)_{\instance\in[\numset]}$ generated at the beginning. Otherwise, Generate $\index^k_\instance \la [N]$ for every $\instance\in[\numset]$.
\item Generate $\ct^k_{\instance}\la\iEnc(\msk_\instance,\index^k_{\instance},(\share^k_{\instance},\mv{0},0))$ for every $\instance\in[\numset]$.
\item Set $\sbskl.\ct^k\seteq (\ct^k_{\instance})_{\instance\in[\numset]}$.
 \end{itemize}

  \item[$\Oracle{\qKG}(f)$:] Given $f$, $\qB$ returns $\sbskl.\qfsk$ computed as follows.
             \begin{itemize}
               \item Queries $f$ to its function encode oracle and obtain $(f_{\instance},y_{\instance}\seteq f_{\instance}(\share^*_{\instance}))_{\instance\in[\numset]})$ if $f(x_0^*)=f(x_1^*)$. Otherwise, compute  $(f_{\instance})_{\instance\in[\numset]}\la\FuncEncode(\params,f)$.
 \item Generate $\ctske_{\instance}\la\E(\Kske,\mv{0})$ for every $\instance\in\Sinsecure$. Generate also $\ctske_{\instance}\la\E(\Kske,f_{\instance}(\share_{\instance}^*))$ for every $\instance\in\Ssecure$ if $f(x_0^*)=f(x_1^*)$, and otherwise generate $\ctske_{\instance}\la\E(\Kske,\mv{0})$ for every $\instance\in\Ssecure$.
\item Generate $(\qfsk_{\instance},\vk_{\instance})\la\qiKG(\msk_\instance,F[f_{\instance},\ctske_{\instance}])$ for every $\instance\in[\numset]$.
 \item Set $\sbskl.\qfsk\seteq (\qfsk_\instance)_{\instance\in[\numset]}$.
 \end{itemize}
 Also, $\qB$ adds $(f,(\vk_\instance)_{i\in[\numset]},\bot)$ to $\List{\qKG}$.
  \item[$\Oracle{\Vrfy}(f,\cert\seteq (\cert_\instance)_{\instance\in[\numset]})$:] Given $(f,\cert\seteq (\cert_\instance)_{\instance\in[\numset]})$, it finds an entry $(f,(\vk_\instance)_{i\in[\numset]},\bot)$ from $\List{\qKG}$. (If there is no such entry, it returns $\bot$.) If $\top=\Vrfy(\vk_\instance,\cert_\instance)$ for every $\instance\in[\numset]$ and the number of queries to $\Oracle{\Enc}$ at this point is less than $\numct$, $\qB$ returns $\top$ and updates the entry into $(f,(\vk_\instance)_{i\in[\numset]},\top)$. Otherwise, $\qB$ returns $\bot$.
  \end{description}

            \item  When $\qA$ requests the challenge ciphertext, $\qB$ checks if for any entry $(f,(\vk_\instance)_{i\in[\numset]\setminus\{\instance^\prime\}},M)$ in $\List{\qKG}$ such that $f(x_0^*)\ne f(x_1^*)$, it holds that $M=\top$. If so, $\qB$ generates $\ct^*_{\instance}\la\iEnc(\msk_\instance,\index^*_{\instance},(\mv{0},\Kske,1))$ for every $\instance\in\Ssecure$ and $\ct^*_{\instance}\la\iEnc(\msk_\instance,\index^*_{\instance},(\share^*_{\instance},\mv{0},0))$ for every $\instance\in\Sinsecure$, and $\qB$ sends $\sbskl.\ct\seteq (\ct^*_{\instance})_{\instance\in[\numset]}$ to $\qA$.
            Otherwise, $\qB$ outputs $0$ and terminates.
            Hereafter, $\qB$ rejects $\qA$'s query $f$ to $\Oracle{\qKG}$ such that $f(x_0^*)\ne f(x_1^*)$.
            \item When $\qA$ outputs $\coin^\prime$, $\qB$ outputs $\coin'$.

        \end{enumerate}

$\qB$ simulates $\hybi{3}$ (resp. $\hybi{4}$) if $\qB$ runs in $\expb{\SetHSS,\qB}{sel}{ind}(1^\secp,0)$ (resp. $\expb{\SetHSS,\qB}{sel}{ind}(1^\secp,1)$).
This completes the proof.
\end{proof}

From the above discussions, $\SKFEsbSKL$ satisfies selective strong lessor security.
\end{proof}

\begin{remark}[Difference from FE security amplification]\label{remark:diff_from_FE_amplification}
A savvy reader notices that although we use the technique used in the FE security amplification by Jain et al.~\cite{C:JKMS20}, we do not use their probabilistic replacement theorem~\cite[Theorem 7.1 in eprint ver.]{C:JKMS20} and the nested construction~\cite[Section 9 in eprint ver.]{C:JKMS20} in the proofs of~\cref{thm:removing_index_lessor_SetHSS}.
We do not need them for our purpose due to the following reason.

Jain et al. need the nested construction to achieve a secure FE scheme whose adversary's advantage is less than $1/6$ from one whose adversary's advantage is any constant $\epsilon\in (0,1)$.
We do not need the nested construction since we can start with a secure construction whose adversary's advantage is less than $1/6$ by setting a large index space in the index-based construction.

Jain et al. need the probabilistic replacement theorem due to the following reason. We do not know which FE instance is secure at the beginning of the FE security game in the security amplification context, while the adversary in set homomorphic secret sharing must declare the index of a secure instance at the beginning.
In our case, whether each index-based FE instance is secure or not depends on whether randomly sampled indices collide or not.
In addition, we can sample all indices used in the security game at the beginning of the game, and a secure FE instance is fixed at the beginning.
Thus, we can apply the security of set homomorphic secret sharing without the probabilistic replacement theorem.
\end{remark}


\section{SKFE with Secure Key Leasing}\label{sec:dynamic_SKFESKL}
We construct an SKFE-SKL scheme $\SKFESKL=(\SKL.\Setup,\SKL.\qKG,\SKL.\Enc,\allowbreak\SKL.\qDec,\allowbreak\SKL.\qcert,\allowbreak\SKL.\Vrfy)$ from
an SKFE-sbSKL scheme $\SKFEsbSKL\allowbreak=(\sbSKL.\Setup,\allowbreak \sbSKL.\qKG, \sbSKL.\Enc,\allowbreak \sbSKL.\qDec,\allowbreak\sbSKL.\qcert,\sbSKL.\Vrfy)$.
 The description of $\SKFEdbSKL$ is as follows.
\begin{description}

 \item[$\dbSKL.\Setup(1^\secp,1^{\numkey})$:] $ $
 \begin{itemize}
 \item Generate $\msk_k\gets \sbSKL.\Setup(1^\secp,1^{\numkey},2^k)$ for every $k\in[\secp]$.
 \item Output $\dbskl.\msk\seteq (\msk_k)_{k\in[\secp]}$.
 \end{itemize}
 \item[$\dbSKL.\qKG(\dbskl.\msk,f,1^\numct)$:] $ $
 \begin{itemize}
 \item Parse $(\msk_k)_{k\in[\secp]}\gets\dbskl.\msk$.
 \item Compute $k'$ such that $2^{k'-1}\leq \numct \leq 2^{k'}$.
 \item Generate $(\qfsk_{k'},\vk_{k'})\gets\sbSKL.\qKG(\msk_{k'},f)$.
 \item Output $\dbskl.\qfsk\seteq (k',\qfsk_{k'})$ and $\vk_{k'}$.
 \end{itemize}
 \item[$\dbSKL.\Enc(\dbskl.\msk,x)$:] $ $
 \begin{itemize}
 \item Parse $(\msk_k)_{k\in[\secp]}\gets\dbskl.\msk$.
 \item Generate $\ct_{k}\gets\sbSKL.\Enc(\msk_k,x)$ for every $k\in[\secp]$.
 \item Output $\skl.\ct\seteq (\ct_k)_{k\in[\secp]}$.
 \end{itemize}
\item[$\dbSKL.\qDec(\dbskl.\qsk_f,\dbskl.\ct)$:] $ $
\begin{itemize}
\item Parse $(k',\qfsk_{k'})\gets\dbskl.\qfsk$ and $(\ct_k)_{k\in[\secp]}\gets\dbskl.\ct$.
\item Output $y\gets\sbSKL.\qDec(\qfsk_{k'},\ct_{k'})$.
\end{itemize}
\item[$\dbSKL.\qcert(\dbskl.\qsk_f)$:] $ $
\begin{itemize}
\item Parse $(k',\qfsk_{k'})\gets\dbskl.\qsk_f$.
\item Output $\cert\gets\sbSKL.\qcert(\qfsk_{k'})$.
\end{itemize}
\item[$\dbSKL.\Vrfy(\vk,\cert)$:] $ $
\begin{itemize}
\item Output $\top/\bot\gets\sbSKL.\Vrfy(\vk,\cert)$.
\end{itemize}
\end{description}

The correctness of $\SKFEdbSKL$ follows from that of $\SKFEsbSKL$.
Also, we can confirm that all algorithms of $\SKFEdbSKL$ run in polynomial time since $\sbSKL.\Setup$ and $\sbSKL.\Enc$ of $\SKFEsbSKL$ run in polynomial time even for the availability bound $2^\secp$ due to its weak optimal efficiency.
For security, we have the following theorem.

\begin{theorem}\label{thm:SKFEdbSKL_lessor}
If $\SKFEsbSKL$ satisfies selective strong lessor security, then $\SKFEdbSKL$ satisfies selective lessor security.
\end{theorem}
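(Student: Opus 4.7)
The plan is a standard hybrid argument that swaps each of the $\secp$ parallel components of the challenge ciphertext from an encryption of $x_0^*$ to an encryption of $x_1^*$, reducing each swap to the selective strong lessor security of the $k$-th instance of $\SKFEsbSKL$ set up with availability bound $2^k$.

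First, I would define hybrids $\hybi{0}, \hybi{1}, \ldots, \hybi{\secp}$, where $\hybi{k}$ is identical to $\expb{\qA,\SKFEdbSKL}{sel}{lessor}(1^\secp, 0)$ except that the challenge ciphertext is computed as $\ct^*_j \la \sbSKL.\Enc(\msk_j, x_1^*)$ for $j \le k$ and $\ct^*_j \la \sbSKL.\Enc(\msk_j, x_0^*)$ for $j > k$. Thus $\hybi{0} = \expb{\qA,\SKFEdbSKL}{sel}{lessor}(1^\secp, 0)$ and $\hybi{\secp} = \expb{\qA,\SKFEdbSKL}{sel}{lessor}(1^\secp, 1)$. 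It suffices to bound $\abs{\Pr[\hybi{k-1}=1] - \Pr[\hybi{k}=1]}$ for each $k \in [\secp]$ and conclude by the triangle inequality.

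Second, for each $k \in [\secp]$ I would build a reduction $\qB_k$ attacking the selective strong lessor security of $\SKFEsbSKL$ with declared availability bound $\numct_{\mathsf{sb}} = 2^k$ and challenge plaintexts $(x_0^*, x_1^*)$. Using the weak optimal efficiency of $\SKFEsbSKL$, $\qB_k$ locally generates $\msk_{k'} \la \sbSKL.\Setup(1^\secp, 1^\numkey, 2^{k'})$ for every $k' \ne k$ in polynomial time (since this runs in time polynomial in $\log 2^{k'} = k' \le \secp$). To simulate $\Oracle{\Enc}(x)$, $\qB_k$ queries its own encryption oracle to obtain the $k$-th component and produces the remaining components locally. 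To simulate $\Oracle{\qKG}(f, 1^\numct)$, $\qB_k$ computes $k'$ with $2^{k'-1} \le \numct \le 2^{k'}$; if $k' = k$ it forwards $f$ to its key generation oracle, and otherwise it invokes $\sbSKL.\qKG(\msk_{k'}, f)$ locally. Verification queries are routed analogously, and when $k' = k$ the reduction first locally checks that the number of preceding $\Oracle{\Enc}$ queries is less than $\numct$ (the outer condition) before forwarding, returning $\bot$ otherwise. At the challenge phase, $\qB_k$ requests its own challenge ciphertext to obtain the $k$-th component and fills in the other components with $x_1^*$ for indices $<k$ and $x_0^*$ for indices $>k$, so that $\qB_k$ perfectly simulates $\hybi{k-1}$ when the challenger's bit is $0$ and $\hybi{k}$ when the bit is $1$.

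The main obstacle is justifying that $\qB_k$ is a \emph{legal} adversary against strong lessor security, i.e., that it only invokes $\Oracle{\qKG}$ or $\Oracle{\Vrfy}$ of its challenger when the corresponding $\qKG$ and $\Vrfy$ run in polynomial time, despite the fact that $\numct_{\mathsf{sb}} = 2^k$ may be super-polynomial for large $k$. This is precisely where strong lessor security (as opposed to plain lessor security) is needed. The resolution is that the outer adversary $\qA$ specifies the availability bound in \emph{unary} as $1^\numct$, so $\numct \le \poly(\secp)$ in every key query; hence the associated $k' = \lceil \log \numct \rceil \le O(\log \secp)$. Consequently, for every $k$ such that $2^k$ is super-polynomial, $\qB_k$ never forwards any key or verification query to its challenger, so the potential inefficiency of $\qKG$ and $\Vrfy$ under $\numct_{\mathsf{sb}} = 2^k$ is irrelevant; while for $k = O(\log \secp)$, the bound $2^k$ is polynomial and these oracles run in polynomial time. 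Finally, one checks that the challenge-phase precondition of the outer game (every $f$ with $f(x_0^*) \ne f(x_1^*)$ has been verified) implies the analogous precondition in the $\SKFEsbSKL$ game, because the number of outer $\Oracle{\Enc}$ queries equals the number of queries $\qB_k$ forwards to its challenger, and the outer condition number of enc queries $< \numct \le 2^k$ entails the inner condition. Combining the $\secp$ hybrid transitions yields $\advb{\SKFEdbSKL,\qA}{sel}{lessor}(\secp) \le \secp \cdot \negl(\secp) = \negl(\secp)$.
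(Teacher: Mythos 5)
Your proposal is correct and follows essentially the same route as the paper: the identical sequence of $\secp$ hybrids that swaps $\ct_k^*$ from an encryption of $x_0^*$ to one of $x_1^*$, each transition reduced to the selective strong lessor security of the $k$-th $\SKFEsbSKL$ instance with declared bound $2^k$, with the reduction forwarding exactly the encryption, key-generation, and verification queries that land on that instance. Your explicit justification of why the reduction is a legal adversary (unary availability bounds force any forwarded query onto a polynomially bounded instance, so super-polynomial instances receive no key or verification queries) is a correct elaboration of the point the paper relegates to its overview of the transformation rather than spelling out inside the proof.
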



\begin{proof}[Proof of~\cref{thm:SKFEdbSKL_lessor}]
We define a sequence of hybrid games to prove the theorem.
\begin{description}
\item[$\hybi{0}$:] This is the same as $\expb{\qA,\SKFESKL}{sel}{lessor}(1^\secp,0)$.
        \begin{enumerate}
            \item At the beginning, $\qA$ sends $(1^{\numkey},x_0^*,x_1^*)$ to the challenger. The challenger runs $\msk_k \gets \sbSKL.\Setup(1^\secp,1^{\numkey},2^k)$ for every $k\in[\secp]$. 
            Throughout the experiment, $\qA$ can access the following oracles.
            \begin{description}
            \item[$\Oracle{\Enc}(x)$:] Given $x$, it generates $\ct_k\gets\sbSKL.\Enc(\msk_k,x)$ for every $k\in[\secp]$ and returns $\dbskl.\ct\seteq (\ct_k)_{k\in[\secp]}$.
                        \item[$\Oracle{\qKG}(f,1^{\numct})$:] Given $(f,1^{\numct})$, it computes $k$ such that $2^{k-1}\leq \numct\leq 2^{k}$, generates $(\qfsk_{k},\vk_{k})\gets\sbSKL.\qKG(\msk_{k},f)$, and sets $\dbskl.\qfsk\seteq (k,\qfsk_{k})$. It returns $\dbskl.\qfsk$ to $\qA$ and adds $(f,1^{\numct},\vk_k,\bot)$ to $\List{\qKG}$. $\qA$ can access this oracle at most $\numkey$ times.
            \item[$\Oracle{\Vrfy}(f,\cert)$:] Given $(f,\cert)$, it finds an entry $(f,1^\numct,\vk,M)$ from $\List{\qKG}$. (If there is no such entry, it returns $\bot$.) If $\top=\Vrfy(\vk,\cert)$ and the number of queries to $\Oracle{\Enc}$ at this point is less than $\numct$, it returns $\top$ and updates the entry into $(f,1^\numct,\vk,\top)$. Otherwise, it returns $\bot$.
            \end{description}
            \item When $\qA$ requests the challenge ciphertext, the challenger checks if for any entry $(f,1^\numct,\vk,M)$ in $\List{\qKG}$ such that $f(x_0^*)\ne f(x_1^*)$, it holds that $M=\top$. If so, the challenger generates $\ct^*_k\la\sbSKL.\Enc(\msk_k,x_0^*)$ for every $k\in[\secp]$ and sends $\skl.\ct^*\seteq (\ct^*_k)_{k\in[\secp]}$ to $\qA$. Otherwise, the challenger outputs $0$. Hereafter, $\qA$ is not allowed to sends a function $f$ such that $f(x_0^*)\ne f(x_1^*)$ to $\Oracle{\qKG}$.
            \item $\qA$ outputs a guess $\coin^\prime$ for $\coin$. The challenger outputs $\coin'$ as the final output of the experiment.

        \end{enumerate}
        \end{description}

We define $\hybi{k'}$ for every $k'\in[\secp]$.

\begin{description}
\item[$\hybi{k'}$:] This hybrid is the same as $\hybi{k'-1}$ except that $\ct_{k'}^*$ is generated as $\ct_{k'}^*\la\Enc(\msk_{k'},x_1^*)$.

 
%
 
\end{description}
$\hybi{\secp}$ is exactly the same experiment as $\expb{\qA,\SKFESKL}{sel}{lessor}(1^\secp,1)$.

For every $k'\in[\secp]$, we let $\SUC_{k'}$ be the event that the output of the experiment $\hybi{k'}$ is $1$.
Then, we have
\begin{align}
\advb{\SKFESKL,\qA}{sel}{lessor}(\secp)=\abs{\Pr[\hybi{0}=1]-\Pr[\hybi{\secp}=1]}
\le
\sum_{k'=1}^\secp\abs{
\Pr[\SUC_{k'-1}]-\Pr[\SUC_{k'}]
}.
\end{align}

\begin{proposition}\label{prop:SKFEdbSKL_k-1k_k=}
It holds that $\abs{\Pr[\hybi{k'-1}=1]- \Pr[\hybi{k'}=1]}= \negl(\secp)$ for every $k'\in[\secp]$ if $\SKFEsbSKL$ is selectively lessor secure.
\end{proposition}

\begin{proof}[Proof of~\cref{prop:SKFEdbSKL_k-1k_k=}]
We construct the following adversary $\qB$ that attacks selective lessor security of $\SKFEsbSKL$ with respect to $\msk_{k'}$.
\begin{enumerate}
            \item $\qB$ executes $\qA$ and obtains $(1^{\numkey},x_0^*,x_1^*)$ from $\qA$. $\qB$ sends $(1^\numkey,x_0^*,x_1^*,2^{k'})$ to the challenger. 
            $\qB$ generates $\msk_k\la\sbSKL.\Setup(1^\secp,1^\numkey,2^k)$ for every $k\in[\secp]\setminus\{k'\}$. 
            $\qB$ simulates queries made by $\qA$ as follows.
            \begin{description}
            \item[$\Oracle{\Enc}(x)$:] Given $x$, $\qB$ generates $\ct_k\gets\sbSKL.\Enc(\msk_k,x)$ for every $k\in[\secp]\setminus\{k'\}$. $\qB$ also queries $x$ to its encryption oracle and obtains $\ct_{k'}$. $\qB$ returns $\dbskl.\ct\seteq (\ct_k)_{k\in[\secp]}$.
                        \item[$\Oracle{\qKG}(f,1^{\numct})$:] Given $(f,1^{\numct})$, $\qB$ computes $k$ such that $2^{k-1}\leq \numct\leq 2^{k}$. If $k\ne k'$, $\qB$ generates $(\qfsk_{k},\vk_{k})\gets\sbSKL.\qKG(\msk_{k},f)$, and otherwise $\qB$ queries $f$ to its key generation oracle and obtains $\qfsk_{k}$ and sets $\vk_k\seteq \bot$. $\qB$ returns $\dbskl.\qfsk\seteq \qfsk_{k}$. $\qB$ adds $(f,1^\numct,\vk_k,\bot)$ to $\List{\qKG}$.
                        \item[$\Oracle{\Vrfy}(f,\cert)$:] Given $(f,\cert)$, it finds an entry $(f,1^\numct,\vk,M)$ from $\List{\qKG}$. (If there is no such entry, it returns $\bot$.) If $\vk=\bot$, $\qB$ sends $\cert$ to its verification oracle and obtains $M$, and otherwise it computes $M=\Vrfy(\vk,\cert)$.
                        If $M=\top$ and the number of queries to $\Oracle{\Enc}$ at this point is less than $\numct$, it returns $\top$ and updates the entry into $(f,1^\numct,\vk,\top)$. Otherwise, it returns $\bot$.
                         
            \end{description}
            
            \item When $\qA$ requests the challenge ciphertext, the challenger checks if for any entry $(f,1^\numct,\vk,M)$ in $\List{\qKG}$ such that $f(x_0^*)\ne f(x_1^*)$, it holds that $M=\top$. If so, $\qB$ requests the challenge ciphertext to its challenger and obtains $\ct_{k'}^*$, generates $\ct^*_k\la\sbSKL.\Enc(\msk_k,x_1^*)$ for every $1\le k<k'$ and $\ct^*_k\la\sbSKL.\Enc(\msk_k,x_0^*)$ for every $k'<k\le \secp$, and sends $\skl.\ct^*\seteq (\ct^*_k)_{k\in[\secp]}$ to $\qA$.
             Otherwise, the challenger outputs $0$. Hereafter, $\qA$ is not allowed to sends a function $f$ such that $f(x_0^*)\ne f(x_1^*)$ to $\Oracle{\qKG}$.
            
            \item When $\qA$ outputs $\coin^\prime$, $\qB$ outputs $\coin^\prime$ and terminates.

        \end{enumerate}

$\qB$ simulates $\hybi{k'-1}$ (resp. $\hybi{k'}$) for $\qA$ if $\qB$ runs in $\expb{\qB,\SKFEsbSKL}{sel}{lessor}(1^\secp,0)$ (resp. $\expb{\qB,\SKFEsbSKL}{sel}{lessor}(1^\secp,1)$.).
This completes the proof.
\end{proof}

From the above discussions, $\SKFESKL$ satisfies selective lessor security.
\end{proof}

By~\cref{thm:SKFEdbSKL_lessor,thm:removing_index_lessor_SetHSS,thm:iSKFESKL_lessor,thm:prf-owf,thm:skfe_from_owf,thm:ske_cert_del_owf,thm:sethss_owf}, we obtain the following theorem.
\begin{theorem}
If there exist OWFs, there exists selectively lessor secure SKFE-SKL for $\Ppoly$ (in the sense of~\cref{def:sel_lessor_SKFESKL}).
\end{theorem}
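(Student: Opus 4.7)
The plan is to prove this theorem as a corollary by chaining together the four construction-theorems (\cref{thm:iSKFESKL_lessor}, \cref{thm:removing_index_lessor_SetHSS}, \cref{thm:SKFEdbSKL_lessor}) with the four OWF-based instantiation theorems (\cref{thm:prf-owf}, \cref{thm:skfe_from_owf}, \cref{thm:ske_cert_del_owf}, \cref{thm:sethss_owf}). There is essentially no new work to do beyond verifying that all the building blocks can be instantiated from OWFs and that the assumptions of each construction theorem are satisfied by the previously constructed object in the chain.

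First, I would invoke \cref{thm:prf-owf} to obtain a PRF $\prf$, \cref{thm:skfe_from_owf} to obtain a selectively indistinguishability-secure SKFE scheme $\SKFE$ for $\Ppoly$, and \cref{thm:ske_cert_del_owf} to obtain an IND-CPA-CD secure SKE scheme $\CDSKE$ with the unique certificate property, all from the assumed OWF. By \cref{thm:cert_del_vo_equivalence_for_unique_cert}, this $\CDSKE$ is in fact IND-CVA-CD secure, which is exactly the form needed by \cref{thm:iSKFESKL_lessor}. Plugging $(\SKFE,\CDSKE,\prf)$ into the construction of \cref{sec:iSKFESKL} and applying \cref{thm:iSKFESKL_lessor} yields a selectively strong lessor secure index-based SKFE-sbSKL scheme $\iSKFESKL$ for $\Ppoly$.

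Next, I would invoke \cref{thm:sethss_owf} to obtain a set homomorphic secret sharing scheme $\SetHSS$ from OWFs, and observe that OWFs also yield a CPA secure SKE scheme $\SKE$ (a standard fact; it is even implied by the stronger \cref{thm:pseudorandom_ske}). Feeding $(\iSKFESKL,\SetHSS,\SKE)$ into the construction of \cref{sec:removing_index} and applying \cref{thm:removing_index_lessor_SetHSS} produces a selectively strong lessor secure SKFE-sbSKL scheme $\SKFEsbSKL$ for $\Ppoly$ whose weak optimal efficiency is inherited from the index-based scheme.

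Finally, I would apply the transformation of \cref{sec:dynamic_SKFESKL} to $\SKFEsbSKL$ and invoke \cref{thm:SKFEdbSKL_lessor}, which requires precisely selective strong lessor security and weak optimal efficiency of the underlying SKFE-sbSKL scheme, to obtain a selectively lessor secure SKFE-SKL scheme for $\Ppoly$ in the sense of \cref{def:sel_lessor_SKFESKL}. I do not anticipate any real obstacle here since every intermediate primitive has already been constructed and proven secure in the excerpt; the only care needed is to check that the security notion exported by each theorem matches the one imported by the next (in particular, that the IND-CPA-CD scheme of \cref{thm:ske_cert_del_owf} has unique certificates so that \cref{thm:cert_del_vo_equivalence_for_unique_cert} upgrades it to IND-CVA-CD, and that weak optimal efficiency is preserved through the SetHSS-based amplification so that the final doubling-based transformation runs in polynomial time).
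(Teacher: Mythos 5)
Your proposal is correct and matches the paper's own argument, which derives this theorem as an immediate corollary by chaining \cref{thm:SKFEdbSKL_lessor}, \cref{thm:removing_index_lessor_SetHSS}, and \cref{thm:iSKFESKL_lessor} with the OWF-based instantiations in \cref{thm:prf-owf}, \cref{thm:skfe_from_owf}, \cref{thm:ske_cert_del_owf}, and \cref{thm:sethss_owf}. Your explicit note that the IND-CPA-CD scheme's unique-certificate property is what allows \cref{thm:cert_del_vo_equivalence_for_unique_cert} to supply the IND-CVA-CD security required by \cref{thm:iSKFESKL_lessor} is exactly the interface check the paper relies on implicitly.
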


Although we describe our results on SKFE-SKL in the bounded collusion-resistant setting, our transformation from standard SKFE to SKFE-SKL also works in the fully collusion-resistant setting. The fully collusion-resistance guarantees that the SKFE scheme is secure even if an adversary accesses the key generation oracle a-priori unbounded times.
Namely, if we start with fully collusion-resistant SKFE, we can obtain fully collusion-resistant SKFE-SKL by our transformations.

\ifnum\llncs=1
\else

\section{Single-Decryptor Functional Encryption}\label{sec:single_dec_FE}
This section introduces single-decryptor FE (SDFE), whose functional decryption keys are copy-protected.

\subsection{Preliminaries for SDFE}

\paragraph{Quantum information.}

We review some basics of qunatum information in this subsection.

Let $\cH$ be a finite-dimensional complex Hilbert space. A (pure) quantum state is a vector $\ket{\psi}\in \cH$.
Let $\cS(\cH)$ be the space of Hermitian operators on $\cH$. A density matrix is a Hermitian operator $\qstate{X} \in \cS(\cH)$ with $\Trace(\qstate{X})=1$, which is a probabilistic mixture of pure states.
A quantum state over $\cH=\bbC^2$ is called qubit, which can be represented by the linear combination of the standard basis $\setbk{\ket{0},\ket{1}}$. More generally, a quantum system over $(\bbC^2)^{\tensor n}$ is called an $n$-qubit quantum system for $n \in \bbN \setminus \setbk{0}$.

A Hilbert space is divided into registers $\cH= \cH^{\qreg{R}_1} \tensor \cH^{\qreg{R}_2} \tensor \cdots \tensor \cH^{\qreg{R}_n}$.
We sometimes write $\qstate{X}^{\qreg{R}_i}$ to emphasize that the operator $\qstate{X}$ acts on register $\cH^{\qreg{R}_i}$.\footnote{The superscript parts are gray colored.}
When we apply $\qstate{X}^{\qreg{R}_1}$ to registers $\cH^{\qreg{R}_1}$ and $\cH^{\qreg{R}_2}$, $\qstate{X}^{\qreg{R}_1}$ is identified with $\qstate{X}^{\qreg{R}_1} \tensor \mat{I}^{\qreg{R}_2}$.

A unitary operation is represented by a complex matrix $\mat{U}$ such that $\mat{U}\mat{U}^\dagger = \mat{I}$. The operation $\mat{U}$ transforms $\ket{\psi}$ and $\qstate{X}$ into $\mat{U}\ket{\psi}$ and $\mat{U}\qstate{X}\mat{U}^\dagger$, respectively.
A projector $\mat{P}$ is a Hermitian operator ($\mat{P}^\dagger =\mat{P}$) such that $\mat{P}^2 = \mat{P}$.

For a quantum state $\qstate{X}$ over two registers $\cH^{\qreg{R}_1}$ and $\cH^{\qreg{R}_2}$, we denote the state in $\cH^{\qreg{R}_1}$ as $\qstate{X}[\qreg{R}_1]$, where $\qstate{X}[\qreg{R}_1]= \Trace_2[\qstate{X}]$ is a partial trace of $\qstate{X}$ (trace out $\qreg{R}_2$).


\begin{definition}[Quantum Program with Classical Inputs and Outputs~\cite{C:ALLZZ21}]\label{def:Q_program_C_IO}
A quantum program with classical inputs is a pair of quantum state $\qstateq$ and unitaries $\setbk{\mat{U}_x}_{x\in[N]}$ where $[N]$ is the domain, such that the state of the program evaluated on input $x$ is equal to $\mat{U}_x \qstateq \mat{U}_x^\dagger$. We measure the first register of $\mat{U}_x \qstateq \mat{U}_x^\dagger$ to obtain an output. We say that $\setbk{\mat{U}_x}_{x\in[N]}$ has a compact classical description $\mat{U}$ when applying $\mat{U}_x$ can be efficiently computed given $\mat{U}$ and $x$.
\end{definition}

\begin{definition}[Positive Operator-Valued Measure]\label{def:POVM}
Let $\cI$ be a finite index set. A positive operator valued measure (POVM) $\cM$ is a collection $\setbk{\mat{M}_i}_{i\in\cI}$ of Hermitian positive semi-define matrices $\mat{M}_i$ such that $\sum_{i\in \cI}\mat{M}_i = \mat{I}$. When we apply POVM $\cM$ to a quantum state $\qstate{X}$, the measurement outcome is $i$ with probability $p_i =\Trace(\qstate{X}\mat{M}_i)$.
We denote by $\cM(\ket{\psi})$ the distribution obtained by applying $\cM$ to $\ket{\psi}$.
\end{definition}

\begin{definition}[Quantum Measurement]\label{def:quantum_measurement}
A quantum measurement $\cE$ is a collection $\setbk{\mat{E}_i}_{i\in\cI}$ of matrices $\mat{E}_i$ such that $\sum_{i\in\cI}\mat{E}_i^\dagger \mat{E}_i=\mat{I}$.
When we apply $\cE$ to a quantum state $\qstate{X}$, the measurement outcome is $i$ with probability $p_i =\Trace(\qstate{X}\mat{E}_i^\dagger \mat{E}_i)$. Conditioned on the outcome being $i$, the post-measurement state is $\mat{E}_i \qstate{X} \mat{E}_i^\dagger/p_i$.
\end{definition}
We can construct a POVM $\cM$ from any quantum measurement $\cE$ by setting $\mat{M}_i \seteq \mat{E}_i^\dagger \mat{E}_i$.
We say that $\cE$ is an implementation of $\cM$. The implementation of a POVM may not be unique.

\begin{definition}[Projective Measurement/POVM]\label{def:projective_measurement}
A quantum measurement $\cE=\setbk{\mat{E}_i}_{i\in \cI}$ is projective if for all $i \in \cI$, $\mat{E}_i$ is a projector.
This implies that $\mat{E}_i\mat{E}_j = \mat{0}$ for distinct $i,j\in \cI$.
In particular, two-outcome projective measurement is called a binary projective measurement, and is written as $\cE=(\mat{P},\mat{I}-\mat{P})$, where $\mat{P}$ is associated with the outcome $1$, and $\mat{I}-\mat{P}$ with the outcome $0$.
Similarly, a POVM $\cM$ is projective if for all $i\in \cI$, $\mat{M}_i$ is a projector. This also implies that $\mat{M}_i\mat{M}_j = \mat{0}$ for distinct $i,j\in \cI$.
\end{definition}


\begin{definition}[Projective Implementation]\label{def:projective_implementation}
Let:
\begin{itemize}
 \item $\cD$ be a finite set of distributions over an index set $\cI$.
 \item $\cP=\setbk{\mat{P}_i}_{i\in \cI}$ be a POVM
 \item $\cE = \setbk{\mat{E}_D}_{D\in\cD}$ be a projective measurement with index set $\cD$.
 \end{itemize}
 We consider the following measurement procedure.
 \begin{enumerate}
 \item Measure under the projective measurement $\cE$ and obtain a distribution $D$.
 \item Output a random sample from the distribution $D$.
 \end{enumerate}
 We say $\cE$ is the projective implementation of $\cP$, denoted by $\projimp(\cP)$, if the measurement process above is equivalent to $\cP$.
\end{definition}

\begin{theorem}[{\cite[Lemma 1]{TCC:Zhandry20}}]\label{lem:commutative_projective_implementation}
Any binary outcome POVM $\cP=(\mat{P},\mat{I}-\mat{P})$ has a unique projective implementation $\projimp(\cP)$.
\end{theorem}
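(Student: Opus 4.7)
The plan is to exploit the spectral decomposition of $\mat{P}$, which is well-defined because $\mat{P}$ is Hermitian and satisfies $\mat{0} \preceq \mat{P} \preceq \mat{I}$ (the latter follows because $\mat{I}-\mat{P}$ is positive semi-definite, by the POVM condition). Write $\mat{P} = \sum_{p} p\, \mat{\Pi}_p$ where the sum ranges over the distinct eigenvalues $p \in [0,1]$ of $\mat{P}$, and $\mat{\Pi}_p$ is the orthogonal projector onto the corresponding eigenspace. The projectors $\mat{\Pi}_p$ are pairwise orthogonal and $\sum_p \mat{\Pi}_p = \mat{I}$, so $\{\mat{\Pi}_p\}_p$ is a projective measurement. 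Index its outcomes by the Bernoulli distributions $D_p$ over $\{0,1\}$ where $D_p$ outputs $1$ with probability $p$ and $0$ with probability $1-p$; call the resulting projective measurement $\cE = \{\mat{E}_{D_p}\}_p$ with $\mat{E}_{D_p} = \mat{\Pi}_p$.

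Next, I would verify that $\cE$ is a projective implementation of $\cP$ in the sense of \cref{def:projective_implementation}. Applying the two-step procedure to a density matrix $\qstate{X}$ produces outcome $1$ with probability
\begin{equation*}
\sum_p p \cdot \Trace(\qstate{X}\, \mat{\Pi}_p) = \Trace\!\left(\qstate{X} \sum_p p\,\mat{\Pi}_p\right) = \Trace(\qstate{X}\,\mat{P}),
\end{equation*}
which coincides with the probability that $\cP$ outputs $1$ on $\qstate{X}$; the probability of outcome $0$ then agrees automatically because both procedures are normalized. Hence $\cE = \projimp(\cP)$ exists.

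For uniqueness, suppose $\cE' = \{\mat{E}'_{D}\}_{D \in \cD'}$ is any projective implementation of $\cP$. By \cref{def:projective_implementation}, each $D \in \cD'$ is a distribution on $\{0,1\}$, determined by a scalar $q_D \in [0,1]$ (its probability of $1$). Reproducing $\cP$ on every state $\qstate{X}$ forces
\begin{equation*}
\sum_{D \in \cD'} q_D\, \mat{E}'_D \;=\; \mat{P},
\end{equation*}
with the $\mat{E}'_D$ pairwise orthogonal projectors summing to $\mat{I}$. This is itself a spectral decomposition of $\mat{P}$, so by uniqueness of spectral decomposition, the multiset $\{q_D\}$ equals the set of eigenvalues of $\mat{P}$ with corresponding eigenspace projectors $\mat{E}'_D = \mat{\Pi}_{q_D}$. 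Thus $\cE'$ agrees with $\cE$ up to relabeling, and the projective implementation is unique.

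The step I expect to require the most care is the uniqueness argument: one has to rule out ``refinements'' of the canonical spectral projectors by merging or splitting outcomes associated to the same Bernoulli distribution. The cleanest way I see is to observe that if two distinct outcomes in $\cE'$ were labeled by the same $D_p$, they would correspond to orthogonal projectors in the same $p$-eigenspace of $\mat{P}$, and the definition of projective measurement indexed by $\cD$ (as a set of distributions) requires distinct indices; so the labeling by distributions already forces maximality of each block. Everything else reduces to elementary linear algebra and the positive semi-definiteness of $\mat{P}$ and $\mat{I}-\mat{P}$.
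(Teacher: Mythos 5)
Your proposal is correct and follows essentially the same route as the cited proof of Zhandry's Lemma 1: construct $\projimp(\cP)$ from the spectral decomposition of $\mat{P}$, labeling each eigenspace projector by the Bernoulli distribution with parameter equal to its eigenvalue, and derive uniqueness from the uniqueness of the spectral decomposition together with the observation that distinct outcomes of $\cE'$ must carry distinct distributions and hence distinct Bernoulli parameters. The paper itself imports this statement from \cite{TCC:Zhandry20} without reproving it, and your argument matches that standard proof.
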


\paragraph{Threshold implementation.}
We review the notion of threshold implementation and related notions since we need them for single decryptor (functional) encryption.
This part is mostly taken from the paper by Coladangelo et al.~\cite{C:CLLZ21}.

\begin{definition}[Threshold Implementation~\cite{TCC:Zhandry20,C:ALLZZ21}]\label{def:threshold_implementation}
Let
\begin{itemize}
\item $\cP=(\mat{P},\mat{I}-\mat{P})$ be a binary POVM
\item $\projimp(\cP)$ and $\cE$ be a projective implementation of $\cP$ and the projective measurement in the first step of $\projimp(\cP)$, respectively
\item  $\gamma>0$.
\end{itemize}
A threshold implementation of $\cP$, denoted by $\TI_\gamma(\cP)$, is the following measurement procedure.
\begin{itemize}
\item Apply $\cE$ to a quantum state and obtain $(p,1-p)$ as an outcome.
\item Output $1$ if $p\ge \gamma$, and $0$ otherwise.
\end{itemize}
For any quantum state $\qstateq$, we denote by $\Tr[\TI_{\gamma}(\cP)\qstateq]$ the probability that the threshold implementation applied to $\qstateq$ outputs $1$ as Coladangelo et al. did~\cite{C:CLLZ21}. This means that whenever $\TI_{\gamma}(\cP)$ appears inside a trace $\Tr$, we treat $\TI_{\gamma}(\cP)$ as a projection onto the $1$ outcome.
\end{definition}

\begin{lemma}[\cite{C:ALLZZ21}]
Any binary POVM $\cP=(\mat{P},\mat{I}-\mat{P})$ has a threshold implementation $\TI_\gamma(\cP)$ for any $\gamma$.
\end{lemma}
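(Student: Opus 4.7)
The plan is to obtain $\TI_\gamma(\cP)$ as a classical post-processing of the (unique) projective implementation of $\cP$. First I would invoke Lemma~\ref{lem:commutative_projective_implementation} to get $\projimp(\cP)$, which, unpacking Definition~\ref{def:projective_implementation}, consists of a projective measurement $\cE = \{\mat{E}_D\}_{D \in \cD}$ whose outcomes are distributions $D$ over $\{0,1\}$, followed by sampling from $D$. Since $\cP$ is binary, each outcome $D \in \cD$ is completely determined by the single parameter $p = D(1) \in [0,1]$, so I will identify $D$ with $p$ throughout.

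Next I would simply define $\TI_\gamma(\cP)$ to apply $\cE$ to the input state, read off the associated $p$, and output $1$ if $p \geq \gamma$ and $0$ otherwise. This matches verbatim the two-step procedure demanded by Definition~\ref{def:threshold_implementation}, so correctness is free. What must be verified is only that the overall procedure is a legitimate quantum measurement; but since the second step merely merges the outcomes of $\cE$ into the two disjoint sets $\cD_{\geq \gamma} = \{D \in \cD : D(1) \geq \gamma\}$ and $\cD_{< \gamma} = \cD \setminus \cD_{\geq \gamma}$, the overall measurement is equivalent to the binary projective measurement $(\mat{Q}_\gamma, \mat{I} - \mat{Q}_\gamma)$ with $\mat{Q}_\gamma \seteq \sum_{D \in \cD_{\geq \gamma}} \mat{E}_D$. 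Mutual orthogonality of the $\mat{E}_D$'s (which comes from $\cE$ being projective) ensures that $\mat{Q}_\gamma$ is itself a projector, so $\TI_\gamma(\cP)$ is indeed a well-defined binary projective measurement.

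The only mild subtlety is guaranteeing that the sum defining $\mat{Q}_\gamma$ is finite, so that the construction makes sense for an arbitrary real threshold $\gamma$. This is handled by the fact that the projective implementation can be realized through the spectral decomposition of the Hermitian operator $\mat{P}$: on a finite-dimensional Hilbert space $\mat{P}$ has only finitely many distinct eigenvalues $p \in [0,1]$, so $\cD$ is a finite subset of $[0,1]$ and the thresholding $p \geq \gamma$ is a well-defined partition. I expect this finiteness observation to be the only place the argument needs any care; everything else follows mechanically from Lemma~\ref{lem:commutative_projective_implementation} and the definitions.
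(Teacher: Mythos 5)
Your construction is correct and is exactly the intended argument: the paper cites this lemma from Aaronson et al.\ without proof, and the content is precisely that the two-step procedure in Definition~\ref{def:threshold_implementation} is well defined, which follows from the existence of the projective implementation (Theorem~\ref{lem:commutative_projective_implementation}) plus the observation that coarse-graining its finitely many projective outcomes by the threshold $p \geq \gamma$ yields a binary projective measurement. Your finiteness remark via the spectral decomposition of $\mat{P}$ is the right way to justify that the sum defining $\mat{Q}_\gamma$ makes sense, so nothing is missing.
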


\begin{definition}[Mixture of Projetive Measurement~\cite{TCC:Zhandry20}]\label{def:mixture_projective_measurement}
Let $D: \cR \ra \cI$ where $\cR$ and $\cI$ are some sets.
Let $\setbk{(\mat{P}_i,\mat{Q}_i)}_{\in \cI}$ be a collection of binary projective measurement.
The mixture of projective measurements associated to $\cR$, $\cI$, $D$, and $\setbk{(\mat{P}_i,\mat{Q}_i)}_{\in \cI}$ is the binary POVM $\cP_D =(\mat{P}_D,\mat{Q}_D)$ defined as follows
\begin{align}
& \mat{P}_D = \sum_{i\in\cI}\Pr[i \chosen D(R)]\mat{P}_i && \mat{Q}_D = \sum_{i\in\cI}\Pr[i \chosen D(R)]\mat{Q}_i,
\end{align}
where $R$ is uniformly distributed in $\cR$.
\end{definition}

\begin{theorem}[\cite{TCC:Zhandry20,C:ALLZZ21}]\label{thm:ind_distribution_TI}
Let
\begin{itemize}
\item $\gamma>0$
\item $\cP$ be a collection of projective measurements indexed by some sets
\item $\qstateq$ be an efficiently constructible mixed state
\item $D_0$ and $D_1$ be two efficienctly samplable and computationally indistinguishable distributions over $\cI$.
\end{itemize}
For any inverse polynomial $\epsilon$, there exists a negligible function $\delta$ such that
\[
\Tr[\TI_{\gamma -\epsilon}(\cP_{D_1})\qstateq] \ge \Tr[\TI_\gamma(\cP_{D_0})\qstateq] - \delta,
\]
where $\cP_{D_\coin}$ is the mixture of projective measurements associated to $\cP$, $D_\coin$, and $\coin \in \bit$.
\end{theorem}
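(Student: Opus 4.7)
The plan is to reduce the claim to the computational indistinguishability of $D_0$ and $D_1$, using an efficient approximate threshold implementation $\API$ as a bridge. Conceptually, $\TI_\gamma(\cP_D)$ is not directly efficient, but Zhandry~\cite{TCC:Zhandry20} shows that for any POVM $\cP_D$ with an efficient implementation one can build an efficient quantum procedure $\API_{\gamma,\epsilon'}$ that, on an efficiently preparable state, produces a binary outcome whose acceptance probability approximates that of $\TI_\gamma(\cP_D)$ up to additive error $\epsilon'$ combined with a small shift in the threshold parameter, using only polynomially many black-box applications of $\cP$ and polynomially many samples from $D$. The asymmetry between $\gamma$ on the right-hand side and $\gamma-\epsilon$ on the left-hand side of the theorem corresponds exactly to the slack consumed by this approximation.

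First I would set $\epsilon' \seteq \epsilon/4$ and observe that both $\API_{\gamma,\epsilon'}(\cP_{D_0})$ and $\API_{\gamma-\epsilon/2,\epsilon'}(\cP_{D_1})$ are efficient QPT procedures, since $D_0, D_1$ are efficiently samplable and $\qstateq$ is efficiently constructible. Using monotonicity of $\gamma \mapsto \Tr[\TI_\gamma(\cP)\qstateq]$ (lower thresholds accept at least as much) together with the $\API$ accuracy guarantee, the acceptance probabilities of these two $\API$ instances applied to $\qstateq$ sandwich $\Tr[\TI_\gamma(\cP_{D_0})\qstateq]$ from above and $\Tr[\TI_{\gamma-\epsilon}(\cP_{D_1})\qstateq]$ from below, each up to an $O(\epsilon')$ error. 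I would then proceed by contradiction: assume $\Tr[\TI_\gamma(\cP_{D_0})\qstateq] - \Tr[\TI_{\gamma-\epsilon}(\cP_{D_1})\qstateq] > \delta$ for some non-negligible $\delta$; the sandwich forces the corresponding gap between the two $\API$ acceptance probabilities to be at least $\delta - O(\epsilon)$, which is still non-negligible.

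Next I would build an efficient distinguisher $\qD$ against the computational indistinguishability of $D_0$ and $D_1$: given a polynomial stream of samples drawn from $D_\coin$, $\qD$ prepares $\qstateq$, internally simulates $\API$ on $\qstateq$ using the given stream in place of the internal sampling calls, and estimates the acceptance probability by repetition. The gap above then lets $\qD$ decide $\coin$ with non-negligible advantage. The main obstacle will be reconciling this many-sample distinguisher with the single-sample hypothesis of computational indistinguishability; this is handled by a standard hybrid argument replacing samples one at a time, which loses only a polynomial factor in distinguishing advantage and hence still contradicts the hypothesis. A secondary delicate point is the careful bookkeeping of the $\API$ accuracy parameters and threshold shifts so that the total slack fits within the $\epsilon$ budget on the left-hand side of the statement while the gap $\delta$ is preserved end-to-end; setting $\epsilon' = \epsilon/4$ and splitting the remaining $3\epsilon/4$ between the two $\API$ thresholds suffices.
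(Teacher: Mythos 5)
This theorem is imported verbatim from Zhandry and Aaronson et al.\ (\cite{TCC:Zhandry20,C:ALLZZ21}); the paper gives no proof of its own, so the only meaningful comparison is with the argument in those cited works. Your proposal is correct and is essentially that argument: pass through the efficient approximate threshold implementation, absorb its accuracy loss into the $\gamma\mapsto\gamma-\epsilon$ threshold shift, and turn any non-negligible gap into a distinguisher for $D_0$ versus $D_1$ via a standard per-sample hybrid.
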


\paragraph{Cryptographic tools.}

Before we introduce definitions, we introduce a convention.
We say that a cryptographic scheme is sub-exponentially secure if there exists some constant $0<\alpha<1$ such that for every QPT $\qA$ the advantage of the security game is bounded by $O(2^{-\secp^\alpha})$.
\begin{definition}[Learning with Errors]\label{def:LWE}
Let $n,m,q \in \N$ be integer functions of the security parameter $\secp$. Let $\chi = \chi(\secp)$ be an error distribution over $\Z$.
The LWE problem $\LWE_{n,m,q,\chi}$ is to distinguish the following two distributions.
\ifnum\llncs=0
\[
D_0 \seteq \setbracket{(\mm{A},\mv{s}^{\intercal}\mm{A}+\mv{e}) \mid \mm{A} \chosen \Zq^{n\times m}, \mv{s}\chosen \Zq^{n}, \mv{e}\chosen \chi^m}  \text{ and } D_1 \seteq \setbracket{(\mm{A},\mv{u}) \mid \mm{A} \chosen \Zq^{n\times m}, \mv{u}\chosen \Zq^m}.
\]
\else
\begin{align}
&D_0 \seteq \setbracket{(\mm{A},\mv{s}^{\intercal}\mm{A}+\mv{e}) \mid \mm{A} \chosen \Zq^{n\times m}, \mv{s}\chosen \Zq^{n}, \mv{e}\chosen \chi^m}  \text{ and }\\
&D_1 \seteq \setbracket{(\mm{A},\mv{u}) \mid \mm{A} \chosen \Zq^{n\times m}, \mv{u}\chosen \Zq^m}.
\end{align}
\fi

When we say we assume the hardness of the LWE problem or the QLWE assumption holds, we assume that for any QPT adversary $\qalgo{A}$, it holds that
\[
\abs{ \Pr[\qA (D_0)\out  1 ] - \Pr[\qA (D_1)\out 1]} \le \negl(\secp).
\]
\end{definition}

\begin{definition}[Puncturable PRF]\label{def:pprf}
A puncturable PRF (PPRF) is a tuple of algorithms $\PuncPRF = (\prfgen, \prf,\Puncture)$ where $\{\prf_{K}: \bin^{\ell_1} \ra \zo{\ell_2} \mid K \in \zo{\secp}\}$ is a PRF family and satisfies the following two conditions. Note that $\ell_1$ and $\ell_2$ are polynomials of $\secp$.
   \begin{description}
       \item[Punctured correctness:] For any polynomial-size set $S \subseteq \zo{\ell_1}$ and any $x\in \zo{\ell_1} \setminus S$, it holds that
       \begin{align}
       \Pr[\prf_{K}(x) = \prf_{K_{\notin S}}(x)  \mid K \gets \prfgen(1^{\secp}),
       K_{\notin S} \gets \Puncture(K,S)]=1.
       \end{align}
       \item[Pseudorandom at punctured point:] For any polynomial-size set $S \subseteq\zo{\ell_1}$
       and any QPT distinguisher $\qA$, it holds that
       \begin{align}
       \vert
       \Pr[\qA(\prf_{K_{\notin S}},\{\prf_{K}(x_i)\}_{x_i\in S}) \out 1] -
       \Pr[\qA(\prf_{K_{\notin S}}, (\cU_{\ell_2})^{\abs{S}}) \out 1]
       \vert \leq \negl(\secp),
       \end{align}
       where $K\gets \prfgen(1^{\secp})$,
       $K_{\notin S} \gets \Puncture(K,S)$ and $\cU_{\ell_2}$ denotes the uniform distribution over $\zo{\ell_2}$.
   \end{description}
   If $S = \setbk{x^\ast}$ (i.e., puncturing a single point), we simply write $\prf_{\ne x^\ast}(\cdot)$ instead of $\prf_{K_{\notin S}}(\cdot)$ and consider $\prf_{\ne x^\ast}$ as a keyed function.
\end{definition}

It is easy to see that the Goldwasser-Goldreich-Micali tree-based construction of PRFs (GGM PRF)~\cite{JACM:GolGolMic86} from \rmOWF yield puncturable PRFs where the size of the punctured key grows polynomially with the size of the set $S$ being punctured~\cite{AC:BonWat13,PKC:BoyGolIva14,CCS:KPTZ13}. Thus, we have:
\begin{theorem}[\cite{JACM:GolGolMic86,AC:BonWat13,PKC:BoyGolIva14,CCS:KPTZ13}]\label{thm:pprf-owf} If OWFs exist, then for any polynomials $\ell_1(\secp)$ and $\ell_2(\secp)$, there exists a PPRF that maps $\ell_1$-bits to $\ell_2$-bits.
\end{theorem}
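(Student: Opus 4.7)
My plan is to prove the theorem by exhibiting the tree-based Goldwasser--Goldreich--Micali (GGM) PRF and observing that it naturally supports puncturing. First I would invoke the classical Hastad--Impagliazzo--Levin--Luby construction to get, from any OWF, a length-doubling pseudorandom generator $G\colon\{0,1\}^\secp\to\{0,1\}^{2\secp}$, which I split as $G(s)=G_0(s)\,\|\,G_1(s)$ with $G_0,G_1\colon\{0,1\}^\secp\to\{0,1\}^\secp$. The PRF key is $K\in\{0,1\}^\secp$ and, viewing the tree of depth $\ell_1$ whose root label is $K$ and whose node at position $x_1\cdots x_i$ is labeled by $G_{x_i}(\cdots G_{x_1}(K)\cdots)$, the PRF value on input $x\in\{0,1\}^{\ell_1}$ is defined as the leaf label. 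To obtain output length $\ell_2$ rather than $\secp$, I would either run $\lceil \ell_2/\secp\rceil$ independent GGM trees in parallel (enlarging the key to a vector of seeds) or apply a final PRG expansion at the leaves; both are standard and preserve security.

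Next I would define $\Puncture(K,\{x^\ast\})$ to output the sibling path: for each $i=1,\dots,\ell_1$, include the label of the sibling of the node on the path from the root to leaf $x^\ast$, i.e.\ the value $G_{\bar{x}^\ast_i}\bigl(G_{x^\ast_{i-1}}(\cdots G_{x^\ast_1}(K)\cdots)\bigr)$. Punctured correctness is immediate: for any $x\ne x^\ast$ the paths diverge at some level $i^\ast$, so evaluation from the corresponding sibling node (which is contained in the punctured key) reproduces exactly $\prf_K(x)$. For a set $S$ of polynomial size I would iterate the single-point procedure (or use a minimal node-cover of $\{0,1\}^{\ell_1}\setminus S$), yielding a punctured key of size $O(|S|\cdot\ell_1\cdot\secp)$, which is polynomial as required.

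For pseudorandomness at punctured points I would give the standard hybrid argument: $S$ is fixed, and I walk down the tree level by level along the paths to the punctured leaves, replacing each PRG invocation whose input is only used internally by a fresh uniformly random string. Each hop is indistinguishable by the PRG security of $G$, and there are at most $\poly(\secp)\cdot\ell_1$ hops in total, so the overall advantage remains negligible. At the end of the hybrid sequence, the values $\{\prf_K(x_i)\}_{x_i\in S}$ are independent uniform strings, independent of the punctured key, which yields the desired distinguishing bound.

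The ``hard part'' here is purely bookkeeping: writing down the hybrids so that in each step exactly one PRG input is replaced by a random seed, and arguing that puncturing at a \emph{set} (rather than a single point) is handled either by iterating the single-point puncturing $|S|$ times (absorbing a union bound over $|S|$ hybrids) or, more efficiently, by the cover-node argument. Since OWFs imply PRGs by \cite{JACM:HILL} and the GGM construction together with the puncturing observation of~\cite{AC:BonWat13,PKC:BoyGolIva14,CCS:KPTZ13} gives both the required correctness and security properties, this completes the proof.
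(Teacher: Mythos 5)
Your proposal is correct and is precisely the standard argument the paper is invoking: the paper does not prove this theorem but simply cites that the GGM tree construction from a length-doubling PRG (obtained from OWFs) is puncturable by releasing the sibling labels along the path to each punctured point, with security via the usual level-by-level hybrid. Your construction, puncturing procedure, and hybrid argument match that reference proof, so there is nothing to add.
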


\begin{definition}[Public-Key Functional Encryption]\label{def:PKFE}
A PKFE scheme $\PKFE$ is a tuple of four PPT algorithms $(\Setup, \KG, \Enc, \Dec)$. 
Below, let $\cX$, $\cY$, and $\cF$ be the plaintext, output, and function spaces of $\PKFE$, respectively.
\begin{description}
\item[$\Setup(1^\secp)\ra(\pk,\msk)$:] The setup algorithm takes a security parameter $1^\secp$ and outputs a public key $\pk$ and master secret key $\msk$.
\item[$\KG(\msk,f)\ra\sk_f$:] The key generation algorithm $\KG$ takes a master secret key $\msk$ and a function $f \in \cF$, and outputs a functional decryption key $\sk_f$.

\item[$\Enc(\pk,x)\ra\ct$:] The encryption algorithm takes a public key $\pk$ and a message $x \in \cX$, and outputs a ciphertext $\ct$.

\item[$\Dec(\sk_f,\ct)\ra y$:] The decryption algorithm takes a functional decryption key $\sk_f$ and a ciphertext $\ct$, and outputs $y \in \{ \bot \} \cup \cY$.

\item[Correctness:] We require 
we have that
\[
\Pr\left[
\Dec(\sk_f, \ct) = f(x)
 \ \middle |
\begin{array}{rl}
 &(\pk,\msk) \la \Setup(1^\secp),\\
 & \sk_f \gets \KG(\msk,f), \\
 &\ct \gets \Enc(\pk,x)
\end{array}
\right]=1 -\negl(\secp).
\]
\end{description}
\end{definition}
Note that $\Setup$ does not take collusion bound $1^q$ unlike SKFE in~\cref{def:SKFE} since we consider only single-key and collusion-resistant PKFE in this work.

\begin{definition}[Adaptive Indistinguishability-Security for PKFE]\label{def:ad_ind_PKFE}
We say that $\PKFE$ is an \emph{adaptively indistinguishability-secure} SDFE scheme for $\Xs,\Ys$, and $\Fs$, if it satisfies the following requirement, formalized from the experiment $\expb{\qA}{ada}{ind}(1^\secp,\coin)$ between an adversary $\qA$ and a challenger:
        \begin{enumerate}
            \item The challenger runs $(\pk,\msk)\gets\Setup(1^\secp)$ and sends $\pk$ to $\qA$.
            \item $\qA$ sends arbitrary key queries. That is, $\qA$ sends function $f_{i}\in\Fs$ to the challenger and the challenger responds with $\sk_{f_i}\gets \KG(\msk,f_i)$ for the $i$-th query $f_{i}$.
            \item At some point, $\qA$ sends $(x_0,x_1)$ to the challenger. If $f_i(x_0)=f_i(x_1)$ for all $i$, the challenger generates a ciphertext $\ct^*\gets\Enc(\pk,x_\coin)$. The challenger sends $\ct^*$ to $\qA$.
            \item Again, $\qA$ can sends function queries $f_i$ such that $f_i(x_0)=f_i(x_1)$.
            \item $\qA$ outputs a guess $\coin^\prime$ for $\coin$.
            \item The experiment outputs $\coin^\prime$.
        \end{enumerate}
        We say that $\PKFE$ is adaptively indistinguishability-secure if, for any QPT $\qA$, it holds that
\begin{align}
\advb{\PKFE,\qA}{ada}{ind}(\secp) \seteq \abs{\Pr[\expb{\PKFE,\qA}{ada}{ind} (1^\secp,0) \out 1] - \Pr[\expb{\PKFE,\qA}{ada}{ind} (1^\secp,1) \out 1] }\leq \negl(\secp).
\end{align}
If $\qA$ can send only one key query during the experiment, we say $\PKFE$ is adaptively single-key indistinguishability-secure.
\end{definition}

\begin{theorem}[\cite{C:GorVaiWee12}]\label{thm:pkfe_from_pke}
If there exists PKE, there exists adaptively single-key indistinguishability-secure PKFE for $\Ppoly$.
\end{theorem}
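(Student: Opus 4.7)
The plan is to build the scheme following Sahai--Seyalioglu's single-key FE construction from PKE and garbled circuits (noting that Yao's garbled circuits follow from OWFs, which are implied by PKE). Concretely, I would let $s=s(\secp)$ be a polynomial bound on the size of the function description, set $\Setup(1^\secp)$ to generate $2s$ independent PKE key pairs $(\pk_{i,b},\sk_{i,b})_{i\in[s],b\in\bit}$ with $\pk:=(\pk_{i,b})$ and $\msk:=(\sk_{i,b})$, define $\KG(\msk,f)$ to output the ``selection'' $\sk_f:=(\sk_{i,f_i})_{i\in[s]}$ where $f_1\cdots f_s$ is the description of $f$, and define $\Enc(\pk,x)$ to garble the universal circuit $U_x(\cdot)=(\cdot)(x)$ obtaining $(\widetilde{U}_x,\{\lbl_{i,b}\}_{i,b})$, and then output $(\widetilde{U}_x,\{c_{i,b}\}_{i,b})$ with $c_{i,b}\gets\PKE.\Enc(\pk_{i,b},\lbl_{i,b})$. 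Decryption recovers the labels $\lbl_{i,f_i}$ via $\sk_{i,f_i}$ and evaluates $\widetilde{U}_x$. Correctness follows from garbled circuit correctness.

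For selective indistinguishability-security, the proof proceeds by a hybrid argument. After $\qA$ commits to $(x_0,x_1)$ and queries $f$ with $f(x_0)=f(x_1)$, first switch, one coordinate at a time, each PKE ciphertext $c_{i,1-f_i}$ (whose secret key is never leaked by $\KG$) to an encryption of $\mv{0}$, using semantic security of PKE in $s$ steps. Then invoke the selective simulation property of the garbled circuit to replace $(\widetilde{U}_x,\{\lbl_{i,f_i}\}_i)$ by the simulated pair produced only from $f(x_\coin)=f(x_0)=f(x_1)$, which is independent of $\coin$. Finally undo the PKE switches to obtain an experiment identical for $\coin=0,1$.

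The main obstacle is upgrading to adaptive security, since in the single-key setting $\qA$ may request $\sk_f$ \emph{after} seeing the challenge ciphertext, and at the moment the ciphertext is produced the challenger does not yet know which $f_i$ coordinate to simulate. I would address this by applying the generic selective-to-adaptive transformation for single-key FE (implicit in Gorbunov--Vaikuntanathan--Wee and made explicit in Ananth--Brakerski--Segev--Vaikuntanathan): wrap each PKE ciphertext in a layer that can be equivocated using non-committing-style techniques available from PKE in the single-key regime. Concretely, the encryptor commits to random labels on both wires, and the key $\sk_f$ contains trapdoor information that, in the proof, lets the simulator open the committed labels to the values produced by the garbled-circuit simulator at the moment $f$ is queried. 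Alternatively, one may instantiate an \emph{adaptively secure} garbled circuit (constructible from OWFs) for $U_x$ and reuse the same hybrid strategy, which, combined with PKE semantic security on the unopened slots, yields adaptive indistinguishability. Either route preserves the single-key property and needs only PKE as its underlying assumption, yielding the desired theorem.
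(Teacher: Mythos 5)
This theorem is imported from Gorbunov--Vaikuntanathan--Wee and is not proved in the paper, so I am comparing your reconstruction against the known argument. Your construction is the right one (Sahai--Seyalioglu: garble the universal circuit $U_x(\cdot)=(\cdot)(x)$, encrypt both labels of each wire of the function-description input under independent PKE keys, let $\sk_f$ select the keys for $f$'s bits), and your selective proof --- kill the off-path ciphertexts by PKE security, then invoke the selective garbled-circuit simulator on $f(x_0)=f(x_1)$ --- is correct. You also correctly identify that the entire difficulty is the post-challenge key query.

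That is, however, exactly where your proposal has a genuine gap: neither of your two routes closes it from plain PKE with polynomial security loss. Invoking the ABSV selective-to-adaptive transformation is circular --- that transformation consumes a single-key, single-ciphertext \emph{adaptively} secure FE as its inner building block, and cites precisely the present theorem to instantiate it. ``Adaptively secure garbled circuits from OWFs'' is likewise not an off-the-shelf tool for all of $\Ppoly$: known constructions lose a factor exponential in the depth/width or require a large online component, and in the FE syntax the online phase is the functional key, which is generated independently of $x$ and of the encryption randomness. Your third idea --- equivocating the encrypted labels so that revealing $\sk_{i,f_i}$ effectively transmits a label chosen only after $f$ is known --- is the right direction, but as sketched it does not suffice, and this is the crux. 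The ciphertext also commits to the garbled circuit $\widetilde{U}_x$ itself before the adversary chooses $f$ (which is the garbled circuit's \emph{input}), and the garbled-circuit simulator produces the garbled circuit and its output-decoding information as a function of the output $f(x)$, which is unknown at encryption time. Any reduction to \emph{selective} garbled-circuit security must hand the adversary a garbled circuit before the garbling challenger is willing to produce one, so equivocating only the $2s$ label ciphertexts leaves the argument stuck. What is missing is a mechanism that defers the entire output-dependent part of the simulated view to key-generation time: for instance, split the simulated garbling into an output-independent part and an output-dependent part (the decoding tables, or the whole garbled circuit), one-time-pad the latter inside the ciphertext, and carry the pad inside the equivocable encrypted labels, so that the simulator can program it once $f(x)$ is known. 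You would also need to justify that the required equivocation of a PKE ciphertext whose opened slot is dictated by the adversary's choice of $f_i$ (rather than chosen freely by the simulator) is achievable from plain PKE; this is a receiver-non-committing--style property and must be argued, not assumed. Without these steps the adaptive case does not go through.
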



\begin{definition}[Indistinguishability Obfuscator~\cite{JACM:BGIRSVY12}]\label{def:io}
A PPT algorithm $\iO$ is a secure \rmIO for a classical circuit class $\{\cC_\secp\}_{\secp \in \bbN}$ if it satisfies the following two conditions.

\begin{description}
\item[Functionality:] For any security parameter $\secp \in \bbN$, circuit $C \in \cC_\secp$, and input $x$, we have that
\begin{align}
\Pr[C^\prime(x)=C(x) \mid C^\prime \lrun \iO(C)] = 1\enspace.
\end{align}

\item[Indistinguishability:] For any PPT $\Sampler$ and QPT distinguisher $\qD$, the following holds:

If $\Pr[\forall x\ C_0(x)=C_1(x) \land \abs{C_0}=\abs{C_1} \mid (C_0,C_1,\aux)\gets \Sampler(1^\secp)]> 1 - \negl(\secp)$, then we have
 \begin{align}
\adva{\iO,\qD}{io}(\secp) &\seteq \left|
 \Pr\left[\qD(\iO(C_0),\aux) = 1 \mid (C_0,C_1,\aux)\gets \Sampler(1^\secp) \right] \right.\\
&~~~~~~~ \left. - \Pr\left[\qD(\iO(C_1),\aux)= 1\mid (C_0,C_1,\aux)\gets \Sampler(1^\secp)  \right] \right|
  \leq \negl(\secp).
 \end{align}
\end{description}
\end{definition}

There are a few candidates of secure IO for polynomial-size classical circuits against quantum adversaries~\cite{TCC:BGMZ18,TCC:CHVW19,EC:AgrPel20,TCC:DQVWW21}.

\subsection{Single-Decryptor Encryption}\label{sec:single_dec_PKE}

We review the notion of single-decryptor encryption (SDE)~\cite{EPRINT:GeoZha20,C:CLLZ21} since it is a crucial building block of our SDFE schemes.
We also extend the existing definitions for SDE.
Some definitions are taken from the paper by Coladangelo et al.~\cite{C:CLLZ21}.
\begin{definition}[Single-Decryptor Encryption~\cite{C:CLLZ21}]\label{def:single_dec_PKE}
A single-decryptor encryption scheme $\SDE$ is a tuple of four algorithms $(\Setup, \QKeyGen, \Enc, \qDec)$. 
Below, let $\calM$ be the message space of $\SDE$.
\begin{description}
\item[$\Setup(1^\secp)\ra(\pk,\sk)$:] The setup algorithm takes a security parameter $1^\secp$, and outputs a public key $\pk$ and a secret key $\sk$.
\item[$\QKeyGen(\sk)\ra\qsk$:] The key generation algorithm $\QKeyGen$ takes a secret key $\sk$, and outputs a quantum decryption key $\qsk$.
\item[$\Enc(\pk,m)\ra\ct$:] The encryption algorithm takes a public key $\pk$ and a message $m \in \calM$, and outputs a ciphertext $\ct$.
\item[$\qDec(\qsk,\ct)\ra\tilde{m}$:] The decryption algorithm takes a quantum decryption key $\qsk$ and a ciphertext $\ct$, and outputs a message $\tilde{m} \in \{ \bot \} \cup \calM$.

\item[Correctness:] We require $\qDec(\qsk, \ct) \allowbreak = m$ for every $m \in \calM$, $(\sk,\pk) \la \Setup(1^\secp)$, $\qsk\gets\QKeyGen(\sk)$, and $\ct\gets\Enc(\pk,m)$.
\end{description}
\end{definition}
Note that $\Setup$ and $\Enc$ are classical algorithms.
Although they could be quantum algorithms, we select the definition above since the construction by Coladangelo et al.~\cite{C:CLLZ21} satisfies it.
This property is crucial in our schemes.

\begin{definition}[CPA Security]\label{def:single_dec_PKE_CPA}
An SDE scheme satisfies CPA security if for any (stateful) QPT $\qA$, it holds that
\[
2\abs{\Pr\left[
\qA(\ct_b)=b
 \ \middle |
\begin{array}{rl}
 &(\pk,\sk)\lrun \Setup(1^\secp),\\
 &(m_0,m_1) \lrun \qA(\pk), b \chosen \zo{}, \\
 &\ct_b \lrun \Enc(\pk,m_b)
 \end{array}
\right] -\frac{1}{2}} \le  \negl(\secp).
\]
\end{definition}

\begin{definition}[Quantum Decryptor~\cite{C:CLLZ21}]\label{def:qunatum_decryptor}
Let $\qstate{p}$ and $\mat{U}$ be a quantum state and a general quantum circuit acting on $n+m$ qubits, where $m$ is the number of qubits of $\qstate{p}$ and $n$ is the length of ciphertexts. A quantum decryptor for ciphertexts is a pair $(\qstate{p},\mat{U})$.

When we say that we run the quantum decryptor $(\qstate{p},\mat{U})$ on ciphertext $\ct$, we execute the circuit $\mat{U}$ on inputs $\ket{\ct}$ and $\qstate{p}$.
\end{definition}

Coladangelo et al.~\cite{C:CLLZ21} introduced a few security notions for SDE.
We introduce the stronger security notion among them, $\gamma$-anti-piracy security.

\begin{definition}[Testing a Quantum Decryptor~\cite{C:CLLZ21}]\label{def:test_quantum_decryptor}
Let $\gamma \in [0,1]$. Let $\pk$ and $(m_0,m_1)$ be a public key and a pair of messages, respectively.
A test for a $\gamma$-good quantum decryptor with respect to $\pk$ and $(m_0,m_1)$ is the following procedure.
\begin{itemize}
\item The procedure takes as input a quantum decryptor $\pirateD =(\qstateq,\mat{U})$.
\item Let $\cP = (\mat{P}, \mat{I} - \mat{P})$ be the following mixture of projective measurements acting on some quantum state $\qstateq^\prime$:
\begin{itemize}
\item Sample a uniformly random $\coin\chosen \bit$ and compute $\ct \gets \Enc(\pk,m_\coin)$.
\item Run $m^\prime \gets \pirateD(\ct)$. If $m^\prime = m_\coin$, output $1$, otherwise output $0$.
\end{itemize}
Let $\TI_{1/2+\gamma}(\cP)$ be the threshold implementation of $\cP$ with threshold $\frac{1}{2}+\gamma$.
Apply $\TI_{1/2+\gamma}(\cP)$ to $\qstateq$. If the outcome is $1$, we say that the test passed, otherwise the test failed.
\end{itemize}
\end{definition}


\begin{definition}[Strong Anti-Piracy Security~\cite{C:CLLZ21}]\label{def:strong_anti-piracy_PKE}
Let $\gamma \in [0,1]$.
We consider the strong $\gamma$-anti-piracy game $\expc{\SDE,\qA}{strong}{anti}{piracy}(\secp,\gamma(\secp))$ between the challenger and an adversary $\qA$ below.
\begin{enumerate}
\item The challenger generates $(\pk,\sk)\gets \Setup(1^\secp)$.
\item The challenger generates $\qsk \gets \QKeyGen(\sk)$ and sends $(\pk,\qsk)$ to $\qA$.
\item $\qA$ outputs $(m_0,m_1)$ and two (possibly entangled) quantum decryptors $\pirateD_1 = (\qstateq[\qreg{R}_1],\mat{U}_1)$ and $\pirateD_2 = (\qstateq[\qreg{R}_2],\mat{U}_2)$, where $m_0 \ne m_1$, $\abs{m_0}=\abs{m_1}$, $\qstateq$ is a quantum state over registers $\qreg{R}_1$ and $\qreg{R}_2$, and $\mat{U}_1$ and $\mat{U}_2$ are general quantum circuits.
\item The challenger runs the test for a $\gamma$-good decryptor with respect to $(m_0,m_1)$ on $\pirateD_1$ and $\pirateD_2$.
The challenger outputs $1$ if both tests pass; otherwise outputs $0$.
\end{enumerate}

We say that $\SDE$ is strong $\gamma$-anti-piracy secure if for any QPT adversary $\qA$, it satisfies that
\[
\Pr[\expc{\SDE,\qA}{strong}{anti}{piracy}(\secp,\gamma(\secp))=1]  \le  \negl(\sep).
\]
\end{definition}

\begin{theorem}[\cite{C:CLLZ21,ARXIV:CulVid21}]
Assuming the existence of sub-exponentially secure IO for $\Ppoly$ and OWFs, the hardness of QLWE, there exists an SDE scheme that satisfies strong $\gamma$-anti-piracy security for any inverse polynomial $\gamma$.
\end{theorem}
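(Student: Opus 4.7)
The plan is to follow the Coladangelo-Liu-Liu-Zhandry construction based on coset states, leveraging the monogamy-of-entanglement statement of Culf-Vidick (which eliminated the original conjecture). First, I would define the SDE scheme so that the setup algorithm samples a random subspace $A \subseteq \mathbb{F}_2^n$ of dimension $n/2$ and uniform vectors $s, s' \in \mathbb{F}_2^n$ as the classical secret key $\sk$, while the public key $\pk$ is the IO of an encryption circuit. A ciphertext of a message $m$ is the IO of a ``coset membership and plaintext recovery'' program that, on input a pair of vectors $(v,w)$, outputs $m$ if and only if $v \in A+s$ and $w \in A^\perp + s'$ (and $\bot$ otherwise). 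The quantum decryption key output by $\QKeyGen$ is the coset state
\[
\ket{A_{s,s'}} \;=\; \frac{1}{\sqrt{|A|}} \sum_{a \in A} (-1)^{s' \cdot a}\, \ket{a + s},
\]
and $\qDec$ coherently prepares the two input registers from this single copy (one measured in the computational basis, the other after a Hadamard transform) before feeding them into the obfuscated ciphertext program. Correctness and CPA security are standard once the coset-state machinery is in place.

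Next, I would prove strong $\gamma$-anti-piracy by reduction to a computational monogamy game for coset states. Given an adversary $\qA$ producing two decryptors $\pirateD_1, \pirateD_2$ that both pass the $\gamma$-good test of \cref{def:test_quantum_decryptor} with noticeable probability, I apply the threshold-implementation framework of \cref{thm:ind_distribution_TI}: by sub-exponential IO security and sub-exponentially secure compute-and-compare obfuscation from QLWE, the two challenge-plaintext ciphertexts (for messages $m_0,m_1$) are replaced by ``unpredictable'' compute-and-compare programs that recognize only inputs of the form $(v,w)$ with $v \in A+s$ and $w \in A^\perp + s'$. After this replacement, any pair of decryptors passing both $\TI_{1/2+\gamma}$ tests must, with non-negligible probability, extract such a pair of vectors from their respective registers \emph{simultaneously}. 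This is precisely the event forbidden by the strong monogamy-of-entanglement theorem for coset states proved by Culf-Vidick, yielding the desired contradiction.

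The main technical obstacle will be the hybrid switching from honest ciphertexts to the unpredictable compute-and-compare obfuscations while preserving the pass probability of both threshold-implementation tests. This switch must range over super-polynomially many encryption randomness values (to cover all challenges sampled by the threshold implementation), which is why sub-exponential IO and sub-exponentially secure compute-and-compare obfuscation for sub-exponentially unpredictable distributions are essential. A second subtlety is that \cref{thm:ind_distribution_TI} introduces an inverse-polynomial additive slack $\epsilon$ in the threshold, so the chain of hybrids must be tuned so that the residual inverse-polynomial gap to the monogamy bound $2^{-\Omega(n)}$ remains non-negligible; this forces $n$ to scale polynomially with $1/\gamma$ and $\secp$, which in turn is exactly where the sub-exponential hardness assumptions become indispensable. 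Finally, achieving security for \emph{every} inverse polynomial $\gamma$ is handled by making $n$ a sufficiently large polynomial in $\secp/\gamma$ at setup time, so that the same construction works uniformly for any $\gamma = 1/\poly(\secp)$.
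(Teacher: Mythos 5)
This theorem is imported by citation from Coladangelo--Liu--Liu--Zhandry and Culf--Vidick; the paper offers no proof of its own, so the relevant comparison is against the cited construction. Your high-level plan (coset states, IO-obfuscated membership programs, compute-and-compare obfuscation, threshold implementations, and a final reduction to the Culf--Vidick monogamy-of-entanglement game) is the right skeleton, but the concrete scheme you describe is broken.

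The gap is in correctness of decryption. Your ciphertext program demands a pair $(v,w)$ with $v \in A+s$ \emph{and} $w \in A^\perp + s'$, to be produced from a \emph{single} copy of $\ket{A_{s,s'}}$. That is exactly the task that monogamy of entanglement forbids even a party holding the state from doing: measuring the coset state in the computational basis yields $v \in A+s$ but collapses it, and applying a Hadamard transform to the post-measurement state $\ket{a+s}$ yields a uniformly random vector, not one in $A^\perp + s'$. So the honest decryptor fails with overwhelming probability, and there is no scheme left to prove secure. The actual CLLZ construction avoids this by using $\lambda$ independent coset states $\{\ket{A_{i,s_i,s_i'}}\}_i$ and having the encryption algorithm sample a hidden random string $r \in \{0,1\}^\lambda$; the obfuscated ciphertext program checks, for each $i$, membership in $A_i + s_i$ if $r_i = 0$ and in $A_i^\perp + s_i'$ if $r_i = 1$ --- exactly \emph{one} check per coset state, which can be performed coherently without disturbing the state. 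The randomness $r$ is also what makes the anti-piracy reduction work: two freebie decryptors are tested against ciphertexts carrying independent strings $r, r'$, one finds an index $i$ with $r_i \neq r_i'$, and only then does simultaneous extraction of a primal-coset vector from one decryptor and a dual-coset vector from the other contradict monogamy. Without multiple coset states and the string $r$ there is no index to pivot on, so even if you repaired correctness your reduction to the monogamy game would not go through. A secondary issue: you propose choosing $n$ as a function of $1/\gamma$ at setup, which yields a different scheme per $\gamma$; the cited result gives a single scheme secure for every inverse-polynomial $\gamma$, with $\gamma$ entering only through the extraction/threshold analysis, not the parameters.
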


\paragraph{Extension of SDE.}
We define a more liberal security notion for $\gamma$-anti-piracy security of SDE to use an SDE scheme as a building block of some cryptographic primitive.

We define a slightly stronger version of quantum decryptor than that in~\cref{def:qunatum_decryptor}.
\begin{definition}[Testing a Quantum Distinguisher for Randomized Message]\label{def:test_quantum_decryptor_rand_msg}
Let $\gamma \in [0,1]$. Let $\pk$ and $(m_0,m_1)$ be a public key and a pair of messages, respectively. Let $\cG=\setbk{\cG_\secp}_{\secp \in \N}$ be a function family.
A test for a $\gamma$-good quantum distinguisher with respect to $\pk$ and $(m_0,m_1,g \in \cG_\secp)$ is the following procedure.
\begin{itemize}
\item The procedure takes as input a quantum decryptor $\pirateD =(\qstateq,\mat{U})$.
\item Let $\cP = (\mat{P}, \mat{I} - \mat{P})$ be the following mixture of projective measurements acting on some quantum state $\qstateq^\prime$:
\begin{itemize}
\item Sample a uniformly random $\coin\chosen \bit$ and $r \chosen \cR$, and compute $\ct \gets \Enc(\pk,g(m_\coin;r))$.
\item Run $\coin^\prime \gets \pirateD(\ct)$. If $\coin^\prime = \coin$, output $1$, otherwise output $0$.
\end{itemize}
Let $\TI_{1/2+\gamma}(\cP)$ be the threshold implementation of $\cP$ with threshold $\frac{1}{2}+\gamma$.
Apply $\TI_{1/2+\gamma}(\cP)$ to $\qstateq$. If the outcome is $1$, we say that the test passed, otherwise the test failed.
\end{itemize}
\end{definition}
The definition above is different from~\cref{def:test_quantum_decryptor}. First, we apply a randomized function to $m_\coin$ and encrypt $g(m_\coin;r)$. Second, $\pirateD$ distinguish whether $\ct$ is a ciphertext of $g(m_0;r)$ or $g(m_1;r)$ rather than computing $g(m_\coin;r)$ (or $m_\coin$).
These differences are crucial when using SDE as a building block of some cryptographic primitive.


We finished preparation for defining a new security notion for SDE.
In the security game defined below, the adversary can send a randomized function applied to challenge messages by the challenger.
\begin{definition}[Strong Anti-Piracy Security for Randomized Message]\label{def:strong_anti-piracy_rand_msg_PKE}
Let $\gamma \in [0,1]$.
We consider the strong $\gamma$-anti-piracy with randomized function family $\cG=\setbk{\cG_\secp}_{\secp \in \N}$ game $\expd{\SDE,\qA,\cG}{strong}{anti}{piracy}{rand}(\secp,\gamma(\secp))$ between the challenger and an adversary $\qA$ below.
\begin{enumerate}
\item The challenger generates $(\pk,\sk)\gets \Setup(1^\secp)$.
\item The challenger generates $\qsk \gets \QKeyGen(\sk)$ and sends $(\pk,\qsk)$ to $\qA$.
\item $\qA$ outputs $(m_0,m_1)$, $g\in \cG_\secp$, and two (possibly entangled) quantum decryptors $\pirateD_1 = (\qstateq[\qreg{R}_1],\mat{U}_1)$ and $\pirateD_2 = (\qstateq[\qreg{R}_2],\mat{U}_2)$, where $\abs{m_0}=\abs{m_1}$, $\qstateq$ is a quantum state over registers $\qreg{R}_1$ and $\qreg{R}_2$, and $\mat{U}_1$ and $\mat{U}_2$ are general quantum circuits.
\item The challenger runs the test for a $\gamma$-good distinguisher with respect to $\pk$ and $(m_0,m_1,g)$ on $\pirateD_1$ and $\pirateD_2$.
The challenger outputs $1$ if both tests pass; otherwise outputs $0$.
\end{enumerate}

We say that $\SDE$ is strong $\gamma$-anti-piracy secure with randomized function family $\cG$ if for any QPT adversary $\qA$, it satisfies that
\[
\Pr[\expd{\SDE,\qA,\cG}{strong}{anti}{piracy}{rand}(\secp,\gamma(\secp))=1]  \le  \negl(\sep).
\]
\end{definition}
Note that we remove the restriction $m_0\ne m_1$, which is in~\cref{def:strong_anti-piracy_PKE}, since the adversary has no advantage even if it sets $m_0 =m_1$.

We do not know how to prove that an SDE scheme that satisfies strong $\gamma$-anti-piracy secure also satisfies strong $\gamma$-anti-piracy secure with randomized function in a black-box way.
This is because the adversary $\qA$ does not receive a challenge ciphertext, but quantum decryptor $\pirateD_1$ and $\pirateD_2$ receives it. That is, a reduction that plays the role of $\qA$ does not receive the challenge ciphertext.
\footnote{In the standard PKE setting, it is easy to see that the standard CPA security implies the liberal CPA security variant, where the adversary select not only $(m_0,m_1)$ but also $g\in \cG_\secp$.}

However, it is easy to see that the SDE scheme by Coladangelo et al.~\cite{C:CLLZ21} satisfies strong $\gamma$-anti-piracy secure with randomized function family $\cG$.
Intuitively, selecting $g$ does not give more power to adversaries since they can select \emph{any} $(m_0,m_1)$ in the original strong anti-piracy security.
We can easily obtain the following theorem.
\begin{theorem}[\cite{C:CLLZ21,ARXIV:CulVid21}]\label{thm:SDE_rand_from_IO_LWE}
Assuming the existence of sub-exponentially secure IO for $\Ppoly$ and OWFs, the hardness of QLWE, and $\cG$ is a randomized function family, there exists an SDE scheme that satisfies strong $\gamma$-anti-piracy security with randomize function family $\cG$ for any inverse polynomial $\gamma$.
\end{theorem}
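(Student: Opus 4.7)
The plan is to instantiate the SDE scheme exactly as in Coladangelo et al.~\cite{C:CLLZ21} (built from sub-exponentially secure IO, OWFs, and QLWE via their coset-state techniques), and then inspect their strong $\gamma$-anti-piracy proof to verify that essentially the same argument establishes the randomized-message variant defined in \cref{def:strong_anti-piracy_rand_msg_PKE}. The authors already flagged that a generic black-box implication from \cref{def:strong_anti-piracy_PKE} to \cref{def:strong_anti-piracy_rand_msg_PKE} is unclear because any reduction would need to forward a fresh random coin $r$ into each iterated challenge performed by the threshold-implementation test, so a reduction playing the role of $\qA$ in the original game cannot inject the $r$-dependence. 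Hence the strategy is non-black-box: we redo their proof, tracking the message pair through every hybrid.

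First, I would set up the test for a $\gamma$-good distinguisher with respect to $(m_0, m_1, g)$ as a mixture-of-projectives POVM $\cP_{D_g}$ (in the sense of \cref{def:mixture_projective_measurement}), where the indexing distribution $D_g$ samples $(\coin, r)$ uniformly and outputs the encryption $\Enc(\pk, g(m_\coin; r))$ together with the target bit $\coin$. The key observation is that $\cP_{D_g}$ fits the same framework as the original $\cP_{D}$ from Coladangelo et al., except that the distribution over plaintexts is now parametrized by $(g, m_0, m_1)$ instead of just $(m_0, m_1)$; from the viewpoint of the SDE construction, whose ciphertext distribution depends only on $\pk$ and the plaintext, this is simply a different efficiently samplable distribution over pairs of plaintexts.

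Next, I would walk through the sequence of hybrids in the anti-piracy proof of~\cite{C:CLLZ21}. Each step changes the obfuscated encryption or decryption circuit, invoking subexponential IO security, puncturable-PRF security, and the monogamy-of-entanglement / coset-state extraction argument based on QLWE. At every step, I would replace the fixed message pair $(m_0, m_1)$ by the randomized pair $(g(m_0; r), g(m_1; r))$ sampled at test time, and invoke \cref{thm:ind_distribution_TI} to move between slightly different challenge distributions with only an inverse-polynomial loss in the threshold $\gamma$. Because $g$ and $(m_0, m_1)$ are chosen by $\qA$ before the tests begin, they may be hardwired into the reduction; only $r$ is fresh per test iteration, and the reduction samples it internally, exactly as the original reduction samples $\coin$ internally. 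At the final hybrid, the pirate's view becomes statistically independent of the bit $\coin$, so both threshold tests cannot simultaneously pass with non-negligible probability, giving the claimed bound.

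The main obstacle will be the step where the original proof exploits the fact that, after switching the obfuscated program, the challenge reduces to distinguishing two \emph{fixed} plaintexts. Here the plaintexts $g(m_0; r)$ and $g(m_1; r)$ vary with $r$ across test iterations, so I cannot directly plug into that sub-argument. My plan for handling this is to fix a ``worst-case'' $r^*$ by an averaging argument (the threshold implementation is a mixture over $r$), reduce to the fixed-plaintext game for that $r^*$, and then derandomize by embedding $r$ generation inside the encryption circuit being obfuscated so that the indistinguishability of $D_0$ and $D_1$ in \cref{thm:ind_distribution_TI} is witnessed by IO-security of the modified encryption program rather than by plaintext-level indistinguishability. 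Once this step is in place, the rest of the hybrid sequence carries over verbatim, yielding strong $\gamma$-anti-piracy security with randomized function family $\cG$ for every inverse polynomial $\gamma$.
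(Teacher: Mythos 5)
Your overall route---re-deriving the CLLZ anti-piracy proof with $g(m_\coin;r)$ in place of $m_\coin$ rather than attempting a black-box reduction---is exactly what the paper does. The paper's proof of this theorem is little more than the observation that the CLLZ hybrid argument never uses any specific property of the challenge plaintexts: the mixture $\cP_{D_g}$ is indexed by an efficiently samplable distribution, the reduction samples $(\coin,r)$ and forms $\Enc(\pk,g(m_\coin;r))$ internally in each test iteration exactly as the original reduction samples $\coin$, and \cref{thm:ind_distribution_TI} applies to any pair of efficiently samplable, computationally indistinguishable challenge distributions. The second point the paper makes---which your write-up passes over---is why the decryptor-based CLLZ argument survives the move to \emph{distinguishers}: the final step breaks the (decision-type) security of compute-and-compare obfuscation using only the pirate's one-bit guess of $\coin$, never the value $m_\coin$ itself, so a one-bit output suffices.

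Where your plan goes wrong is the ``main obstacle'' and its fix. First, the obstacle is not there: no step of the CLLZ proof requires the two challenge plaintexts to be fixed across test iterations, precisely because the reduction regenerates the whole ciphertext (including $r$) each time it runs the pirate. Second, the proposed workaround would not work as stated. Fixing a ``worst-case $r^*$ by an averaging argument'' is not sound for threshold implementations: the $\gamma$-goodness certified by $\TI_{1/2+\gamma}(\cP_{D_g})$ is a property of the success probability averaged over all of $D_g$ (including $r$), and the projective implementations associated with different fixed values of $r$ need not commute, so you cannot conclude that the pirate's state is $\gamma$-good for any particular $r^*$, nor measure that without disturbing the state. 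Moreover, ``embedding $r$ generation inside the encryption circuit being obfuscated'' modifies the scheme whose security you are supposed to prove. Dropping that detour and simply carrying $(g,m_0,m_1)$ through the CLLZ hybrids, as the paper indicates, is both necessary and sufficient.
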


The proof of~\cref{thm:SDE_rand_from_IO_LWE} is almost the same as that of the strong anti-piracy security of the SDE by Coladangelo et al.~\cite[Theorem 6.13 in the eprint ver.]{C:CLLZ21}.
We explain only the differences to avoid replication.
\paragraph{On randomized message.}
To obtain this theorem, we simply replace $m_b$ with $g(m_b;r_b)$ in the proof by Coladangelo et al.~\cite[Section 6.4 in the ePrint ver.]{C:CLLZ21}, where $r_b$ is randomness chosen by the challenger.
The intuition is as follows.
Even if the challenger applies the adversarially chosen randomized function $g$ to $m_0$ or $m_1$, it does not give more power to the adversary for breaking strong $\gamma$-anti-piracy security. This is because the adversary can send \emph{any} two messages $(m_0,m_1)$ as a challenge message pair in the original strong $\gamma$-anti-piracy security game. The proof by Coladangelo et al.~\cite{C:CLLZ21} does not use any specific property of challenge messages.
The one issue is whether $g(m_0;r_0)\ne g(m_1;r_1)$ holds for any $r_0,r_1 \in \cR$ since the original strong $\gamma$-anti-piracy game requires that challenge messages must be different.
However, this restriction is just a convention. If the adversary sets $m_0=m_1$, it just loses how to distinguish ciphertexts.
That is, even if $g(m_0;r_0)= g(m_1;r_1)$ happens, it does not give more power to the adversary, and is not a problem.

\paragraph{On distinguish-based definition.}
We require a pirate to be a distinguisher rather than a decryptor (computing entire $m_\coin$).
In general, security against distinguisher-based pirates is not implied by security against decryptor-based pirates since adversaries do not need to compute the entire $m_\coin$ in the distinguisher-based definition (the requirement on adversaries is milder).

In the original proof by Coladangelo et al.~\cite{C:CLLZ21}, they show that if pirate decryptors compute $m_\coin$, the reduction can distinguish whether a ciphertext is a real compute-and-compare obfuscation or a simulated compute-and-compare obfuscation.\footnote{The SDE scheme by Coladangelo et al.~\cite{C:CLLZ21} is constructed from IO and compute-and-compare obfuscation~\cite{FOCS:WicZir17,FOCS:GoyKopWat17}.} This breaks the unpredictability of the supported distribution in compute-and-compare obfuscation. Thus, pirate decryptors cannot compute $m_\coin$.
Here, even if pirate decryptors only distinguish $m_0$ from $m_1$ (that is, the output of $\pirateD$ is only one-bit information) instead of computing entire $m_\coin$, the reduction can distinguish real or simulation for compute-and-compare obfuscation. This is because the security of compute-and-compare obfuscation is the decision-type. The reduction in the original proof use only information about $\coin$ and \emph{does not use $m_\coin$}. Thus, we can prove strong $\gamma$-anti-piracy security with any randomized function family $\cG$ defined in~\cref{def:strong_anti-piracy_rand_msg_PKE}.

\paragraph{Other security notions for SDE.}
Coladangelo et al. introduced a few variants of anti-piracy security~\cite{C:CLLZ21}.
One is the CPA-style anti-piracy security, and the other is the anti-piracy security with random challenge plaintexts.
Both are weaker than the strong anti-piracy security in the SDE setting.
See~\cref{def:anti-piracy_CPA_PKE} for the CPA-style definition. We omit the anti-piracy security with random challenge plaintexts since we do not use it in this work.

\subsection{Definitions for Single-Decryptor Functional Encryption}\label{sec:def_single_dec_FE}
We define the notion of single-decryptor functional encryption (SDFE).

\begin{definition}[Single-Decryptor Functional Encryption]\label{def:single_dec_FE}
A single-decryptor functional encryption scheme $\SDFE$ is a tuple of five algorithms $(\Setup,\allowbreak\KG, \QKeyGen, \Enc, \qDec)$. 
Below, let $\cX$, $\cY$, and $\cF$ be the plaintext, output, and function spaces of $\SDFE$.
\begin{description}
\item[$\Setup(1^\secp)\ra(\pk,\msk)$:] The setup algorithm takes a security parameter $1^\secp$, and outputs a public key $\pk$ and a master secret key $\msk$.
\item[$\QKeyGen(\msk,f)\ra\qsk_f$:] The key generation algorithm takes a master secret key $\msk$ and a function $f\in\cF$, and outputs a quantum functional decryption key $\qsk_f$.
\item[$\Enc(\pk,x)\ra\ct$:] The encryption algorithm takes a public key $\pk$ and a plaintext $x \in \cX$, and outputs a ciphertext $\ct$.
\item[$\qDec(\qsk_f,\ct)\ra y$:] The decryption algorithm takes a quantum functional decryption key $\qsk_f$ and a ciphertext $\ct$, and outputs $y \in \{ \bot \} \cup \cY$.

\item[Correctness:] For every $\secp\in \bbN$, $x \in \cX$, $f \in \cF$, $(\pk,\msk)$ in the support of $\Setup(1^\secp)$,
we require that
\[\qDec(\QKeyGen(\msk,f), \Enc(\pk,x)) \allowbreak = f(x).\]
\end{description}
\end{definition}

Note that $\Setup$ and $\Enc$ are classical algorithms as~\cref{def:single_dec_PKE}.

\begin{definition}[Adaptive Security for SDFE]\label{def:adaptive_SDFE}
We say that $\SDFE$ is an \emph{adaptively secure} SDFE scheme for $\Xs,\Ys$, and $\Fs$, if it satisfies the following requirement, formalized from the experiment $\expb{\SDFE,\qA}{ada}{ind}(1^\secp,\coin)$ between an adversary $\qA$ and a challenger:
        \begin{enumerate}
            \item The challenger runs $(\pk,\msk)\gets\Setup(1^\secp)$ and sends $\pk$ to $\qA$.
            \item $\qA$ sends arbitrary key queries. That is, $\qA$ sends function $f_{i}\in\Fs$ to the challenger and the challenger responds with $\qsk_{f_{i}}\gets\QKeyGen(\msk,f_i)$ for the $i$-th query $f_{i}$.
            \item At some point, $\qA$ sends $(x_0,x_1)$ to the challenger. If $f_i(x_0)=f_i(x_1)$ for all $i$, the challenger generates a ciphertext $\ct^*\gets\Enc(\pk,x_\coin)$. The challenger sends $\ct^*$ to $\qA$.
            \item Again, $\qA$ can sends function queries $f_i$ such that $f_i(x_0)=f_i(x_1)$.
            \item $\qA$ outputs a guess $\coin^\prime$ for $\coin$.
            \item The experiment outputs $\coin^\prime$.
        \end{enumerate}
        We say that $\SDFE$ is adaptively secure if, for any QPT $\qA$, it holds that
\begin{align}
\advb{\SDFE,\qA}{ada}{ind}(\secp) \seteq \abs{\Pr[\expb{\SDFE,\qA}{ada}{ind} (1^\secp,0) \out 1] - \Pr[\expb{\SDFE,\qA}{ada}{ind} (1^\secp,1) \out 1] }\leq \negl(\secp).
\end{align}
If $\qA$ can send only one key query during the experiment, we say $\SDFE$ is adaptively single-key secure.
\end{definition}

\begin{definition}[Testing a Quantum FE Distinguisher]\label{def:test_quantum_decryptor_FE}
Let $\gamma \in [0,1]$. Let $\pk$, $(x_0,x_1)$, and $f^\ast$ be a public key, a pair of plaintexts, and a function respectively such that $f^\ast(x_0)\ne f^\ast(x_1)$.
A test for a $\gamma$-good quantum FE distinguisher with respect to $\pk$, $(x_0,x_1)$, and $f^\ast$ is the following procedure.
\begin{itemize}
\item The procedure takes as input a quantum FE decryptor $\pirateD =(\qstateq,\mat{U})$.
\item Let $\cP = (\mat{P}, \mat{I} - \mat{P})$ be the following mixture of projective measurements acting on some quantum state $\qstateq^\prime$:
\begin{itemize}
\item Sample a uniformly random $\coin\chosen \bit$ and compute $\ct \gets \Enc(\pk,x_\coin)$.
\item Run $\coin^\prime \gets \pirateD(\ct)$. If $\coin^\prime = \coin$, output $1$; otherwise output $0$.
\end{itemize}
Let $\TI_{1/2+\gamma}(\cP)$ be the threshold implementation of $\cP$ with threshold $\frac{1}{2}+\gamma$.
Apply $\TI_{1/2+\gamma}(\cP)$ to $\qstateq$. If the outcome is $1$, we say that the test passed, otherwise the test failed.
\end{itemize}
\end{definition}
We follow the spirit of~\cref{def:test_quantum_decryptor_rand_msg}. That is, we consider a quantum distinguisher rather than a quantum decryptor that computes $f^\ast(x_\coin)$.


\begin{definition}[Strong Anti-Piracy Security for FE]\label{def:strong_anti-piracy_FE}
Let $\gamma \in [0,1]$.
We consider the strong $\gamma$-anti-piracy game $\expc{\SDFE,\qA}{strong}{anti}{piracy}(\secp,\gamma(\secp))$ between the challenger and an adversary $\qA$ below.
\begin{enumerate}
\item The challenger generates $(\pk,\msk)\gets \Setup(1^\secp)$ and sends $\pk$ to $\qA$.
\item $\qA$ sends key queries $f_i$ to the challenger and receives $\qsk_{f_i} \lrun \QKeyGen(\msk,f_i)$.
\item At some point $\qA$ sends a challenge query $f^*$ to the challenger and receives $\qsk_{f^*} \lrun \QKeyGen(\msk,f^*)$.
\item Again, $\qA$ sends $f_i$ to the challenger and receives $\qsk_{f_i} \lrun \QKeyGen(\msk,f_i)$.
\item $\qA$ outputs $(x_0,x_1)$ and two (possibly entangled) quantum decryptors $\pirateD_1 = (\qstateq[\qreg{R}_1],\mat{U}_1)$ and $\pirateD_2 = (\qstateq[\qreg{R}_2],\mat{U}_2)$, where $\forall i\ f_i(x_0)=f_i(x_1)$, $f^*(x_0)\ne f^*(x_1)$, $\qstateq$ is a quantum state over registers $\qreg{R}_1$ and $\qreg{R}_2$, and $\mat{U}_1$ and $\mat{U}_2$ are general quantum circuits.
\item The challenger runs the test for a $\gamma$-good FE distinguisher with respect to $\pk$, $(x_0,x_1)$, and $f^\ast$ on $\pirateD_1$ and $\pirateD_2$.
The challenger outputs $1$ if both tests pass, otherwise outputs $0$.
\end{enumerate}

We say that $\SDFE$ is strong $\gamma$-anti-piracy secure if for any QPT adversary $\qA$, it satisfies that
\[
\advc{\SDFE,\qA}{strong}{anti}{piracy}(\secp,\gamma(\secp)) \seteq \Pr[\expc{\SDFE,\qA}{strong}{anti}{piracy}(\secp,\gamma(\secp))=1]  \le  \negl(\sep).
\]
If $\qA$ can send only the challenge query $f^\ast$ during the experiment, we say that $\SDFE$ is challenge-only strong $\gamma$-anti-piracy secure.
\end{definition}
If a pirate has $\qsk_{f^\ast}$, it can easily compute $f^\ast(x_\coin)$ and is a good FE distinguisher since $f^\ast(x_0)\ne f^\ast(x_1)$.
The definition says either $\pirateD_1$ or $\pirateD_2$ do not have the power of $\qsk_{f^\ast}$.

\paragraph{Other security notions for SDFE.}
We can define CPA-style anti-piracy security for SDFE as SDE. We provide the CPA-style anti-piracy security for SDFE in~\cref{def:anti-piracy_CPA_FE}.
We can also define anti-piracy security with random challenge plaintexts for SDFE. However, if we allow adversaries to select a constant-valued function as $f^\ast$, the security is trivially broken. It might be plausible to define security with random challenge plaintexts for functions with high min-entropy. It limits supported classes of functions. However, we focus on FE for all circuits in this work. We omit the definition of anti-piracy with random challenge plaintexts.

\paragraph{Implication to FE with secure key leasing.}
Loosely speaking, SDFE implies FE with secure key leasing. However, there are subtle issues for this implication. Concretely, the implication holds if
\begin{itemize}
\item we allow the deletion certificate to be a quantum state, and
\item restrict our attention to the setting where an adversary is given single decryption key that can be used to detect the challenge bit. We consider this setting for SDFE by default as we can see in~\cref{def:strong_anti-piracy_FE}. However, we consider more powerful adversaries who can obtain multiple such decryption keys for secure key leasing as we can see in~\cref{def:sel_lessor_SKFESKL}. (Although we consider PKFE for SDFE and SKFE for secure key leasing in this work, we ignore the difference between public key and secret key for simplicity.)
\end{itemize}

Under these conditions, we can see the implication as follows. We require an adversary to send back the original quantum state encoding the decryption key as the deletion certificate. We can check the validity of this (quantum) certificate by estimating its success probability using the threshold implementation defined in~\cref{def:threshold_implementation}. Then, we see that an adversary who breaks the secure key leasing security of this construction clearly violates the single-decryptor security of the underlying scheme.

We can remove the second condition above if SDFE is also secure against adversaries who can obtain multiple decryption keys that can be used to detect the challenge bit. Such scheme is called collusion-resistant SDFE. Currently, we do not even have a construction of collusion-resistant SDE (that is, single decryptor PKE).

\subsection{Single-Key Secure Single-Decryptor Functional Encryption}\label{sec:const_1key_SDFE}

We use the following tools:
\begin{itemize}
\item A single-decryptor encryption $\SDE=(\SDE.\Setup, \SDE.\QKeyGen, \SDE.\Enc, \SDE.\qDec)$.
\item An adaptively secure single-key PKFE scheme $\PKFE=(\PKFE.\Setup,\PKFE.\KG,\allowbreak\PKFE.\Enc,\PKFE.\Dec)$.
\end{itemize}

The description of $\oneSDFE$ is as follows.
\begin{description}

 \item[$\oneSDFE.\Setup(1^\secp)$:] $ $
 \begin{itemize}
 \item Generate $(\fe.\pk,\fe.\msk)\gets \PKFE.\Setup(1^\secp)$.
 \item Generate $(\sde.\pk,\sde.\dk)\gets \SDE.\Setup(1^\secp)$.
 \item Output $\pk \seteq (\fe.\pk,\sde.\pk)$ and $\msk \seteq (\fe.\msk,\sde.\dk)$.
 \end{itemize}
 \item[$\oneSDFE.\QKeyGen(\msk,f)$:] $ $
 \begin{itemize}
 \item Parse $\msk =(\fe.\msk,\sde.\dk)$.
 \item Generate $\fe.\sk_f \gets \PKFE.\KG(\msk,f)$.
 \item Generate $\sde.\qdk \gets\SDE.\QKeyGen(\sde.\dk)$.
 \item Output $\qsk_f \seteq (\fe.\sk_f,\sde.\qdk)$.
 \end{itemize}
 \item[$\oneSDFE.\Enc(\pk,x)$:] $ $
 \begin{itemize}
 \item Parse $\pk = (\fe.\pk,\sde.\pk)$.
 \item Generate $\fe.\ct_x \gets \PKFE.\Enc(\fe.\pk,x)$.
 \item Generate $\sde.\ct \gets \SDE.\Enc(\sde.\pk,\fe.\ct_x)$.
 \item Output $\ct \seteq \sde.\ct$.
 \end{itemize}
\item[$\oneSDFE.\qDec(\qsk_f,\ct)$:] $ $
\begin{itemize}
\item Parse $\qsk_f =(\fe.\sk_f,\sde.\qdk)$ and $\ct = \sde.\ct$.
\item Compute $\fe.\ct_x^\prime\gets\SDE.\qDec(\sde.\qdk,\sde.\ct)$.
\item Output $y\gets\PKFE.\Dec(\fe.\ct_x^\prime)$.
\end{itemize}
\end{description}

\begin{theorem}\label{thm:1SDFE_from_1FE}
If $\PKFE$ is adaptively single-key indistinguishability-secure, $\oneSDFE$ is adaptively single-key secure.
\end{theorem}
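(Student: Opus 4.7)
Observe first that the stated security here is plain adaptive indistinguishability in the sense of \cref{def:adaptive_SDFE}, not anti-piracy; the SDE layer appears only as a syntactic wrapper around a PKFE ciphertext and its key material is \emph{sampled entirely by the reduction}. The plan is therefore a direct black-box reduction to the adaptive single-key indistinguishability of the underlying $\PKFE$: any QPT distinguisher $\qA$ against $\oneSDFE$ yields a QPT $\qB$ against $\PKFE$ with identical advantage.

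\medskip
\noindent\textbf{Reduction.} Given a $\PKFE$ challenger, $\qB$ receives $\fe.\pk$, runs $(\sde.\pk,\sde.\dk)\la\SDE.\Setup(1^\secp)$ itself, and hands $\pk\seteq(\fe.\pk,\sde.\pk)$ to $\qA$. When $\qA$ makes its (single) key query $f$, $\qB$ forwards $f$ to its own $\PKFE$ key generation oracle to obtain $\fe.\sk_f$, computes $\sde.\qdk\la\SDE.\QKeyGen(\sde.\dk)$ on its own, and returns $\qsk_f\seteq(\fe.\sk_f,\sde.\qdk)$. When $\qA$ outputs the challenge pair $(x_0,x_1)$ with $f(x_0)=f(x_1)$, $\qB$ forwards the same pair to the $\PKFE$ challenger, receives $\fe.\ct^*\la\PKFE.\Enc(\fe.\pk,x_\coin)$, computes $\sde.\ct^*\la\SDE.\Enc(\sde.\pk,\fe.\ct^*)$, and returns $\ct^*\seteq\sde.\ct^*$ to $\qA$. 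Finally $\qB$ outputs whatever bit $\qA$ outputs. Note that the constraint on $\qA$'s key query ($f(x_0)=f(x_1)$) coincides exactly with the admissibility constraint on $\qB$'s key query, so $\qB$ is a legitimate adversary in the $\PKFE$ game.

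\medskip
\noindent\textbf{Analysis.} By construction, $\qB$'s simulated view for $\qA$ is distributed identically to the real $\expb{\oneSDFE,\qA}{ada}{ind}(1^\secp,\coin)$ experiment: the public key, the single functional decryption key, and the challenge ciphertext all have exactly the prescribed distribution, because the SDE layer and its quantum decryption key are produced by honestly executing $\SDE.\Setup$ and $\SDE.\QKeyGen$ using randomness held by $\qB$, while the only piece that depends on $\coin$ — namely $\fe.\ct^*$ — is obtained directly from the $\PKFE$ challenger. Consequently
\[
\advb{\oneSDFE,\qA}{ada}{ind}(\secp)\;=\;\advb{\PKFE,\qB}{ada}{ind}(\secp),
\]
which is negligible by the assumed adaptive single-key indistinguishability of $\PKFE$, completing the proof.

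\medskip
\noindent\textbf{Where the mild subtlety lies.} There is essentially no technical obstacle; the only point worth flagging is that $\qA$ is a QPT adversary and receives a quantum functional decryption key $\sde.\qdk$, whereas the underlying $\PKFE$ is a purely classical primitive. This causes no trouble because $\qB$ itself is allowed to be QPT and sources all quantum state locally from $\SDE.\QKeyGen$; $\qB$'s interaction with the $\PKFE$ challenger remains entirely classical. Put differently, the quantum features of $\oneSDFE$ (namely the key being a quantum state and its use by pirate decryptors) are irrelevant to the adaptive indistinguishability statement, which is why this construction already suffices at the single-key indistinguishability level, and more elaborate machinery (threshold implementations, SDE anti-piracy with randomized messages) is only needed when one wishes to upgrade to anti-piracy security.
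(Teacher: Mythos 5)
Your reduction is exactly the one the paper gives: $\qB$ samples the SDE key pair itself, forwards the key query and the challenge pair to the $\PKFE$ challenger, wraps the returned ciphertext with $\SDE.\Enc$, and outputs $\qA$'s bit, yielding a perfect simulation and identical advantage. The proposal is correct and takes essentially the same approach as the paper's proof.
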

\begin{proof}
Suppose that $\advb{\oneSDFE,\qA}{ada}{ind}(\secp)$ is non-negligible for a contradiction.
We construct a QPT algorithm $\qB$ for the adaptive single-key security of $\PKFE$ by using the adversary $\qA$ of the adaptive single-key security game of $\oneSDFE$ as follows.
\begin{enumerate}
\item $\qB$ receives $\fe.\pk$ from its challenger, generates $(\sde.\pk,\sde.\dk)\gets \SDE.\Setup(1^\secp)$, and sends $\pk \seteq(\fe.\pk,\sde.\pk)$ to $\qA$.
\item When $\qA$ sends a key query $f$, $\qB$ sends it to its challenger, receives $\fe.\sk_{f} \gets \PKFE.\KG(\fe.\msk,f)$, computes $\sde.\qdk \gets \SDE.\QKeyGen(\sde.\dk)$, and passes $\qsk_{f} \seteq (\fe.\sk_{f},\sde.\qdk)$ to $\qA$.
\item When $\qA$ sends a pair $(x_0,x_1)$, $\qB$ passes $(x_0,x_1)$ to its challenger and receives $\fe.\ct^\ast \gets \PKFE.\Enc(\fe.\pk,x_\coin)$. $\qB$ generates $\sde.\ct \gets \SDE.\Enc(\sde.\pk,\fe.\ct^\ast)$ and passes $\sde.\ct$ to $\qA$.
\item If $f(x_0)\ne f(x_1)$, $\qB$ aborts. Otherwise, go to the next step.
\item $\qB$ outputs whatever $\qA$ outputs.
\end{enumerate}
It is easy to see that $\qB$ perfectly simulates $\expb{\oneSDFE,\qA}{ada}{ind}(\secp)$ since $\qA$ must send $(f,x_0,x_1)$ such that $f(x_0)=f(x_1)$ due to the condition of $\expb{\oneSDFE,\qA}{ada}{ind}(\secp)$. Thus, $\qB$ breaks the adaptive single-key security of $\PKFE$ by using the distinguishing power of $\qA$. This is a contradiction and we finished the proof.
\end{proof}

\begin{theorem}\label{thm:1SDFE_anti-piracy_from_SDE}
If $\SDE$ is strong $\gamma/4$-anti-piracy secure for randomized function family, $\oneSDFE$ is challenge-only strong $\gamma$-anti-piracy secure.
\end{theorem}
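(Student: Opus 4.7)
The plan is to reduce anti-piracy of $\oneSDFE$ to anti-piracy of the underlying $\SDE$ scheme with respect to a suitably chosen randomized function family. Assume for contradiction that there is a QPT adversary $\qA$ winning $\expc{\oneSDFE,\qA}{strong}{anti}{piracy}(\secp,\gamma)$ with non-negligible probability. We build a reduction $\qB$ attacking $\expd{\SDE,\qB,\cG}{strong}{anti}{piracy}{rand}(\secp,\gamma/4)$, where $\cG$ is the randomized function family indexed by $\fe.\pk$ defined by $g_{\fe.\pk}(x;r) \seteq \PKFE.\Enc(\fe.\pk,x;r)$.

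First, $\qB$ receives $(\sde.\pk,\sde.\qdk)$ from its SDE challenger, generates $(\fe.\pk,\fe.\msk)\gets \PKFE.\Setup(1^\secp)$ by itself, and sends $\pk\seteq(\fe.\pk,\sde.\pk)$ to $\qA$. When $\qA$ makes its single challenge query $f^\ast$, $\qB$ computes $\fe.\sk_{f^\ast}\gets\PKFE.\KG(\fe.\msk,f^\ast)$ and returns $\qsk_{f^\ast}\seteq(\fe.\sk_{f^\ast},\sde.\qdk)$. When $\qA$ outputs $(x_0,x_1,\pirateD_1,\pirateD_2)$ with $f^\ast(x_0)\ne f^\ast(x_1)$, $\qB$ forwards $(x_0,x_1,g_{\fe.\pk})$ to its own challenger and builds two SDE distinguishers $\pirateD'_1,\pirateD'_2$ as follows: $\pirateD'_i$ is simply $\pirateD_i$, which on input an SDE ciphertext $\sde.\ct$ (which is exactly the form of an $\oneSDFE$ ciphertext) outputs the guess $\coin'$.

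The key observation is that the mixture of projective measurements $\cP^{\SDFE}$ used to test $\pirateD_i$ in the $\oneSDFE$ game samples $\coin\chosen\bit$ and $\ct\gets \SDE.\Enc(\sde.\pk,\PKFE.\Enc(\fe.\pk,x_\coin))$, while the measurement $\cP^{\SDE}_{g_{\fe.\pk}}$ used in the SDE-with-randomized-function game samples $\coin\chosen\bit$, $r\chosen \cR$ and $\ct\gets\SDE.\Enc(\sde.\pk,g_{\fe.\pk}(x_\coin;r))$. By the definition of $g_{\fe.\pk}$, these two distributions over ciphertexts are identical, hence $\cP^{\SDFE}=\cP^{\SDE}_{g_{\fe.\pk}}$ as POVMs. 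Consequently, a state that passes $\TI_{1/2+\gamma}(\cP^{\SDFE})$ certainly passes $\TI_{1/2+\gamma/4}(\cP^{\SDE}_{g_{\fe.\pk}})$, since the subspace that succeeds with probability at least $1/2+\gamma$ is contained in the one succeeding with probability at least $1/2+\gamma/4$. Therefore, $\qB$'s winning probability in the SDE game is at least $\qA$'s winning probability in the SDFE game (up to negligible slack), contradicting the $\gamma/4$-anti-piracy of $\SDE$.

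The main subtlety will be in handling the threshold implementations carefully, since threshold measurements are projections rather than pointwise comparisons: one should invoke the fact (implicit in the definition of $\TI_\gamma$ via the projective implementation $\projimp(\cP)$ from \cref{def:projective_implementation,def:threshold_implementation}) that $\TI_{\gamma_1}(\cP)$ projects into a subspace of $\TI_{\gamma_2}(\cP)$ whenever $\gamma_1 \ge \gamma_2$, so the success probability can only go up when the threshold is lowered. The factor of $4$ in $\gamma/4$ provides slack that could alternatively be used to invoke \cref{thm:ind_distribution_TI} if one needed to go through a hybrid where the PKFE randomness is replaced by SDE-internal randomness via a computational indistinguishability argument; in our setting the two distributions are identical and any constant factor suffices. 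Thus the reduction succeeds and $\oneSDFE$ is challenge-only strong $\gamma$-anti-piracy secure.
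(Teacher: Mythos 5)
Your proposal is correct and follows essentially the same route as the paper: the reduction hands $\qA$ a self-generated PKFE key pair together with the challenger's $\sde.\qdk$, forwards the pirate decryptors unchanged, and wins the SDE game with randomized function family $g_{\fe.\pk}(\cdot;\cdot)=\PKFE.\Enc(\fe.\pk,\cdot;\cdot)$ because the ciphertext distributions in the two tests coincide exactly, so the mixtures of projective measurements are identical. Your observation that the $\gamma$ vs.\ $\gamma/4$ gap is unnecessary slack (handled by monotonicity of $\TI_\gamma$ in the threshold) matches the paper, whose internal lemma in fact reduces to $\gamma$-anti-piracy of $\SDE$ directly.
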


\begin{proof}
We recall the original game.
\begin{description}
\item[$\hybi{0}$:] This is the same as $\expc{\qA,\oneSDFE}{strong}{anti}{piracy}(\secp,\gamma(\secp))$. The detailed description is as follows.
\begin{enumerate}
\item Compute $(\fe.\pk,\fe.\msk) \gets \PKFE.\Setup(1^\secp)$ and $(\sde.\pk,\sde.\dk)\gets\SDE.\Setup(1^\secp)$. Set $\pk \seteq (\fe.\pk,\sde.\pk)$ and $\msk \seteq (\fe.\msk,\sde.\dk)$.
\item Compute $\fe.\sk_{f^\ast} \gets \PKFE.\KG(\fe.\msk,f^\ast)$ and $\sde.\qdk \gets \SDE.\QKeyGen(\sde.\dk)$, and  send $\qsk_{f^\ast} \seteq (\fe.\sk_{f^\ast},\sde.\qdk)$ to $\qA$.
\item Receive $(x_0,x_1)$, $\pirateD_1 =(\qstateq[\qreg{R}_1],\mat{U}_1)$, and $\pirateD_2 =(\qstateq[\qreg{R}_2],\mat{U}_2)$, where $f^\ast(x_0)\ne f^\ast(x_1)$.
\item For $i\in \setbk{1,2}$, let $\cP_{i,D}$ be the following mixture of projective measurements acting on some quantum state $\qstateq^\prime$:
\begin{itemize}
\item Sample a uniform $\coin \chosen \bit$. Compute $\fe.\ct \gets \PKFE.\Enc(\fe.\pk,x_\coin)$ and $\sde.\ct \gets \SDE.\Enc(\sde.\pk,\fe.\ct)$ and set $\ct \seteq \sde.\ct$.
\item Run the quantum decryptor $(\qstateq^\prime,\mat{U}_i)$ on input $\ct$. If the outcome is 
$\coin$, output $1$. Otherwise, output $0$.
\end{itemize}
Let $D$ be the distribution over pairs $(\coin,\ct)$ defined in the first item above, and let $\cE_i = \setbk{\mat{E}^i_{(\coin,\ct)}}_{\coin,\ct}$ be a collection of projective measurements where $\mat{E}^i_{(\coin,\ct)}$ is the projective measurement described in the second item above.
$\cP_{i,D}$ is the mixture of projective measurements associated to $D$ and $\cE_i$.
\item Run $\TI_{\frac{1}{2}+\gamma}(\cP_{i,D})$ for $i\in \setbk{1,2}$ on quantum decryptor $\pirateD_1=(\qstateq[\qreg{R}_1],\mat{U}_1)$ and $\pirateD_2=(\qstateq[\qreg{R}_2],\mat{U}_2)$.
Output $1$ if both tests pass, otherwise output $0$.
\end{enumerate}
\end{description}

We prove the following.
\begin{lemma}\label{lem:1SDFE_hyb1}
If $\SDE$ is strong $\gamma$-anti-piracy secure with randomized function family $\setbk{\PKFE.\Enc(\fe.\pk,\cdot;\cdot)}_{\secp}$, it holds that $\advc{\SDFE,\qA}{strong}{anti}{piracy}(\secp,\gamma(\secp)) \le \negl(\secp)$.
\end{lemma}
\begin{proof}[Proof of~\cref{lem:1SDFE_hyb1}]
Suppose that $\advc{\SDFE,\qA}{strong}{anti}{piracy}(\secp,\gamma(\secp))$ is non-negligible for a contradiction.
We construct a QPT algorithm $\qB$ for the strong $\gamma$-anti-piracy game with randomized function $\setbk{\PKFE.\Enc(\fe.\pk,\cdot;\cdot)}_{\secp}$ of $\SDE$ by using the adversary $\qA$ of the strong $\gamma$-anti-piracy game of $\SDFE$ as follows.
\begin{enumerate}
\item $\qB$ receives $\sde.\pk$ and $\sde.\qdk$ from its challenger.
\item $\qB$ generates $(\fe.\pk,\fe.\msk) \gets \PKFE.\Setup(1^\secp)$, sets $\pk\seteq (\fe.\pk,\sde.\pk)$, and sends it to $\qA$.
\item When $\qA$ sends the challenge query $f^\ast$, $\qB$ generates $\fe.\sk_{f^\ast} \gets \PKFE.\KG(\fe.\msk,f^\ast)$, sets $\qsk_{f^\ast} \seteq (\fe.\sk_{f^\ast},\sde.\qdk)$, and sends $\qsk_{f^\ast}$ to $\qA$.
\item At some point, $\qB$ receives $(x_0,x_1)$ and two (possibly entangled) quantum decryptors $\pirateD_1 = (\qstateq[\qreg{R}_1],\mat{U}_1)$ and $\pirateD_2 = (\qstateq[\qreg{R}_2],\mat{U}_2)$, where $f^*(x_0)\ne f^*(x_1)$, from $\qA$.
\item $\qB$ sets quantum decryptors $\pirateD_1^\ast$ and $\pirateD_2^\ast$ as follows.
$\qB$ sets $\pirateD_1^\ast \seteq (\qstateq[\qreg{R}_1],\mat{U}_1)$ and $\pirateD_2^\ast \seteq (\qstateq[\qreg{R}_2],\mat{U}_2)$.
$\qB$ sets a randomized function $g(\cdot;\cdot)\seteq \PKFE.\Enc(\fe.\pk,\cdot;\cdot)$ and sends $(x_0,x_1,g))$ and $\pirateD_1^\ast = (\qstateq[\qreg{R}_1],\mat{U}_1)$ and $\pirateD_2^\ast = (\qstateq[\qreg{R}_2],\mat{U}_2)$ to its challenger.
\end{enumerate}
Recall that, for testing quantum decryptors $\pirateD_1$ and $\pirateD_2$ in $\advc{\SDFE,\qA}{strong}{anti}{piracy}(\secp,\gamma(\secp))$, we use the following: For $i\in \setbk{1,2}$, let $\cP_{i,D}$ be the following mixture of projective measurements acting on some quantum state $\qstateq^\prime$:
\begin{itemize}
\item Sample a uniform $\coin \chosen \bit$. Compute $\fe.\ct \gets \PKFE.\Enc(\fe.\pk,x_\coin)$ and $\sde.\ct \gets \SDE.\Enc(\sde.\pk,\fe.\ct)$ and set $\ct \seteq \sde.\ct$.
\item Run the quantum decryptor $(\qstateq^\prime,\mat{U}_i)$ on input $\ct$. If the outcome is 
$\coin$, output $1$. Otherwise, output $0$.
\end{itemize}

The challenger of $\SDE$ runs the test by using the following.
Let $\cP_{i,D^\prime}$ be the following mixture of projective measurements acting on some quantum state $\qstateq_{\SDE}$:
\begin{itemize}
\item Sample a uniformly random $\coin\chosen \bit$ and $r \chosen \cR$, and compute $m_\coin^\prime \seteq \PKFE.\Enc(\fe.\pk,x_\coin;r)$. Note that $\qB$ sets $g(\cdot,\cdot) \seteq \PKFE.\Enc(\fe.\pk,\cdot;\cdot)$. Then, compute $\ct \gets \SDE.\Enc(\sde.\pk,m_\coin^\prime)$.
\item Run $\coin^\prime \gets \pirateD(\ct)$. If $\coin^\prime = \coin$, output $1$, otherwise output $0$.
\end{itemize}
The distribution is the same as $D$, which $\qB$ simulates. 

We assumed that $\advc{\SDFE,\qA}{strong}{anti}{piracy}(\secp,\gamma(\secp))$ is non-negligible at the beginning of this proof.
That is, applying $\TI_{\frac{1}{2}+\gamma}(\cP_{i,D})$ on $\qstateq[\qreg{R}_i]$ results in two outcomes $1$ with non-negligible probability:
\begin{align}
\Tr[\TI_{\frac{1}{2}+\gamma }(\cP_{1,D}) \tensor \TI_{\frac{1}{2}+\gamma }(\cP_{2,D})\qstateq] > \negl(\secp),
\end{align}
where $\mu >\negl(\secp)$ means $\mu$ is non-negligible.
This means that $\qstateq[\qreg{R}_i]$ is a $\gamma$-good distinguisher with respect to ciphertexts generated according to $D$.
It is easy to see that $(\pirateD_1^\ast,\pirateD_2^\ast)$ is a $\gamma$-good distinguisher for randomized function $\PKFE.\Enc(\fe.\pk,\cdot;\cdot)$ since $D^\prime$ and $D$ are the same distribution.
Thus, if $\qA$ outputs $\gamma$-good distinguisher, $\qB$ can also outputs $\gamma$-good distinguisher for randomized function $\PKFE.\Enc(\fe.\pk,\cdot;\cdot)$. This completes the proof of~\cref{lem:1SDFE_hyb1}.
\end{proof}
By~\cref{lem:1SDFE_hyb1}, we complete the proof of~\cref{thm:1SDFE_anti-piracy_from_SDE}.
\end{proof}

We obtain the following corollary from~\cref{thm:SDE_rand_from_IO_LWE,thm:1SDFE_from_1FE,thm:1SDFE_anti-piracy_from_SDE}.

\begin{corollary}\label{cor:strong_1SDFE_from_IO_LWE}
Assuming the existence of sub-exponentially secure IO for $\Ppoly$ and OWFs, and the hardness of QLWE, there exists an SDFE scheme for $\Ppoly$ that satisfies adaptive single-key security and challenge-only strong $\gamma$-anti-piracy security for any inverse polynomial $\gamma$.
\end{corollary}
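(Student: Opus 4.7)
The plan is to assemble \cref{cor:strong_1SDFE_from_IO_LWE} directly from the building blocks already established in the paper, using the single-key single-decryptor FE construction $\oneSDFE$ from \cref{sec:const_1key_SDFE}. No new construction is required; the task reduces to instantiating each component under the stated assumptions and chaining the implications.

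First, I would instantiate the two primitives needed by $\oneSDFE$. The inner primitive is an adaptively single-key indistinguishability-secure PKFE scheme $\PKFE$ for $\Ppoly$. By \cref{thm:pkfe_from_pke} this exists from PKE, and PKE follows from the hardness of QLWE via standard lattice-based constructions (and is implied a fortiori by the sub-exponential IO + OWF assumption already in our hypothesis). The outer primitive is an SDE scheme $\SDE$. For any inverse polynomial $\gamma$, set $\gamma' \seteq \gamma/4$ (also inverse polynomial) and apply \cref{thm:SDE_rand_from_IO_LWE} with the randomized function family $\cG_\secp = \{\PKFE.\Enc(\fe.\pk,\cdot\,;\cdot)\}_{\fe.\pk}$, which is efficiently sampleable: this yields an SDE scheme that is strong $\gamma'$-anti-piracy secure with randomized function family $\cG$ under sub-exponentially secure IO for $\Ppoly$, post-quantum OWFs, and QLWE.

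Next, I would plug these two pieces into the construction of $\oneSDFE$ given in \cref{sec:const_1key_SDFE}. Correctness follows from the correctness of $\SDE$ and $\PKFE$ as in that subsection. For adaptive single-key security, I apply \cref{thm:1SDFE_from_1FE} to the chosen $\PKFE$, which immediately yields $\advb{\oneSDFE,\qA}{ada}{ind}(\secp) \leq \negl(\secp)$ for every QPT $\qA$. For challenge-only strong $\gamma$-anti-piracy, I apply \cref{thm:1SDFE_anti-piracy_from_SDE} to the chosen $\SDE$: since $\SDE$ is strong $(\gamma/4)$-anti-piracy secure with randomized function family matching exactly the family $\{\PKFE.\Enc(\fe.\pk,\cdot\,;\cdot)\}$ required by the reduction inside that theorem, the conclusion gives challenge-only strong $\gamma$-anti-piracy of $\oneSDFE$. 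Since this argument applies to every inverse polynomial $\gamma$, the quantifier in the statement is obtained.

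The only non-trivial bookkeeping is making sure that the randomized function family against which the SDE is declared secure is the \emph{same} family used by the reduction in the proof of \cref{thm:1SDFE_anti-piracy_from_SDE}, namely $\{\PKFE.\Enc(\fe.\pk,\cdot\,;\cdot)\}$ indexed by $\fe.\pk$; this is the part that most plausibly hides a subtlety, because $\fe.\pk$ is generated after the challenger in the SDE game fixes its parameters. However, since $\PKFE.\Setup$ is a classical PPT algorithm and the reduction $\qB$ generates $\fe.\pk$ itself before outputting $g$ to the SDE challenger (as already carried out in the proof of \cref{thm:1SDFE_anti-piracy_from_SDE}), the function $g = \PKFE.\Enc(\fe.\pk,\cdot\,;\cdot)$ lies in $\cG_\secp$ and \cref{thm:SDE_rand_from_IO_LWE} applies. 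All other steps are routine composition of the already-proved theorems, so no additional obstacle arises.
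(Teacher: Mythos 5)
Your proposal is correct and follows exactly the paper's route: the paper obtains this corollary by directly combining \cref{thm:SDE_rand_from_IO_LWE}, \cref{thm:1SDFE_from_1FE}, and \cref{thm:1SDFE_anti-piracy_from_SDE}, with the PKFE instantiated via \cref{thm:pkfe_from_pke}, which is precisely the chain you describe (including the $\gamma/4$ parameter choice and the observation that the randomized function family is $\{\PKFE.\Enc(\fe.\pk,\cdot\,;\cdot)\}$). No gap; your writeup just makes explicit the instantiation steps the paper leaves implicit.
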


\begin{remark}[On approximation version of threshold implementation]\label{remark:on_ATI}
We do not need to use approximate version of $\TI$ that runs in polynomial time (denoted by $\ATI$ by previous works~\cite{C:ALLZZ21,C:CLLZ21}) because we do not extract information from pirate decryptors $\pirateD_1$ and $\pirateD_2$.
We use the decision bits output by them for breaking the security of cryptographic primitives in security reductions.
This is the same for the construction in~\cref{sec:const_multi-key_SDFE}.
\end{remark}
\subsection{Collusion-Resistant Single-Decryptor Functional Encryption}\label{sec:const_multi-key_SDFE}
\begin{construction}
We use the following tools:
\begin{itemize}
\item A single-key SDFE $\oneSDFE=(\oneSDFE.\Setup,\oneSDFE.\KG, \oneSDFE.\QKeyGen, \oneSDFE.\Enc,\allowbreak \oneSDFE.\qDec)$.
\item An IO $\iO$.
\item A puncturable PRF $\PuncPRF=(\prfgen,\prf,\Puncture)$, where $\prf: \zo{\secp}\times \zo{\ell} \ra \cR_{\Setup}$ and $\cR_{\Setup}$ is the randomness space of $\oneSDFE.\Setup$.
\item A puncturable PRF $\PuncPRF^\prime=(\prfgen^\prime,\prf^\prime,\Puncture^\prime)$, where $\prf^\prime: \zo{\secp}\times \zo{\ell} \ra \cR_{\Enc}$ and $\cR_{\Enc}$ is the randomness space of $\oneSDFE.\Enc$.
\item A (standard) PRF $\prf^{(1)}: \zo{\secp}\times \cC_{\oneSDFE} \ra \cR_{\prfgen^\prime}$, $\cC_{\oneSDFE}$ is the ciphertext space of $\oneSDFE$ and $\cR_{\prfgen^\prime}$ is the randomness space of $\prfgen^\prime$.
\item A (standard) PRF $\prf^{(2)}: \zo{\secp}\times \cC_{\oneSDFE} \ra \cR_{\iO}$, $\cC_{\oneSDFE}$ is the ciphertext space of $\oneSDFE$ and $\cR_{\iO}$ is the randomness space of $\iO$.
\end{itemize}
Note that we use $\prf^{(1)}$ and $\prf^{(2)}$ only in security proofs.

\begin{description}
\item[$\SDFE.\Setup(1^\secp)$:] $ $
\begin{itemize}
\item Generate $\prfkey \gets \prfgen(1^\secp)$ and $\sfS_{\sfonefe}[\prfkey]$ defined in~\cref{fig:setup-pkfe}.
\item Return $(\pkhat ,\mskhat) \seteq (\iO_1(\sfS_{\sfonefe}),\prfkey)$.
\end{itemize}

\item[$\SDFE.\QKeyGen(\mskhat,f)$:] $ $
\begin{itemize}
\item Parse $\prfkey = \mskhat$ and choose $\gtag \chosen \zo{\ell}$.
\item Compute $r_{\gtag} \gets \prf_{\prfkey}(\gtag)$ and $(\msk_{\gtag}, \pk_{\gtag}) \gets \oneSDFE.\Setup(1^\secp; r_{\gtag})$.
\item Generate $\qsk_{f,\gtag} \gets \oneSDFE.\QKeyGen(\msk_{\gtag},f)$.
\item Output $\qskhat_{f,\gtag} \seteq (\gtag,\qsk_{f,\gtag})$.
\end{itemize}
\item[$\SDFE.\Enc(\pkhat, x)$:] $ $
\begin{itemize}
\item Generate $\prfkey^\prime \gets \prfgen^\prime(1^\secp)$ and $\sfE_\sfonefe[\pkhat,\prfkey^\prime,x]$ defined in~\cref{fig:enc-one-pkfe}.
\item Return $\cthat \gets \iO_2(\sfE_\sfonefe[\pkhat,\prfkey^\prime,x])$.
\end{itemize}

\item[$\SDFE.\qDec(\qskhat_f, \cthat)$:] $ $
\begin{itemize}
\item Parse $(\gtag, \qsk_{f,\gtag}) = \qskhat_f$.
\item Evaluate the circuit $\cthat$ on input $\gtag$, that is $\ct_{\gtag} \gets \cthat(\gtag)$.
\item Return $y \gets \oneSDFE.\qDec(\qsk_{f,\gtag}, \ct_{\gtag})$.
\end{itemize}
\end{description}
 \end{construction}

\protocol
{Setup Circuit $\sfS_{\sfonefe} [\prfkey](\gtag)$}
{Description of $\sfS_{\sfonefe} [\prfkey]$.}
{fig:setup-pkfe}
{
\ifnum\llncs=1
\scriptsize
\else
\fi
\begin{description}
\setlength{\parskip}{0.3mm} 
\setlength{\itemsep}{0.3mm} 
\item[Hardwired:] puncturable PRF key $\prfkey$.
\item[Input:] tag $\gtag \in \zo{\ell}$.
\item[Padding:] circuit is padded to size $\padsize_\sfS \seteq \padsize_{\sfS} (\secp,n,s,\ell)$, which is determined in analysis.
\end{description}
\begin{enumerate}
\setlength{\parskip}{0.3mm} 
\setlength{\itemsep}{0.3mm} 
\item Compute $r_\gtag \gets \prf_{\prfkey}(\gtag)$.
\item Compute $(\pk_{\gtag}, \msk_{\gtag}) \gets \oneSDFE.\Setup(1^\secp;r_{\gtag})$ and output $\pk_{\gtag}$.
\end{enumerate}
}
\ifnum\llncs=1
\vspace{-5mm}
\else
\fi

\protocol
{Encryption Circuit $\sfE_\sfonefe[\pkhat,\prfkey^\prime, x](\gtag)$}
{Description of $\sfE_\sfonefe[\pkhat,\prfkey^\prime, x]$.}
{fig:enc-one-pkfe}
{
\ifnum\llncs=1
\scriptsize
\else
\fi
\begin{description}
\setlength{\parskip}{0.3mm} 
\setlength{\itemsep}{0.3mm} 
\item[Hardwired:] circuit $\pkhat$, puncturable PRF key $\prfkey^\prime$, and message $x$.
\item[Input:] tag $\gtag \in \zo{\ell}$.
\item[Padding:] circuit is padded to size $\padsize_{\sfE} \seteq \padsize_{\sfE}(\secp,n,s,\ell)$, which is determined in analysis.
\end{description}
\begin{enumerate}
\setlength{\parskip}{0.3mm} 
\setlength{\itemsep}{0.3mm} 
\item Evaluate the circuit $\pkhat$ on input $\gtag$, that is $\pk_{\gtag} \gets \pkhat(\gtag)$.
\item Compute $r_{\gtag}^\prime \gets \prf^\prime_{\prfkey^\prime}(\gtag)$ and output $\ct_{\gtag} \gets \oneSDFE.\Enc(\pk_{\gtag},x;r^\prime_{\gtag})$.
\end{enumerate}
}

\begin{theorem}\label{thm:SDFE_from_1FE_IO}
If $\oneSDFE$ is sub-exponentially adaptively single-key secure, $\iO$ is a sub-exponentially secure IO for $\Ppoly$, and $\PuncPRF$ is a sub-exponentially secure PPRF, $\SDFE$ is adaptively secure.
\end{theorem}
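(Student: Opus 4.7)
The plan is to prove adaptive security of $\SDFE$ via a hybrid argument ranging over the tag space $\zo{\ell}$, transforming the game with $\coin=0$ into the game with $\coin=1$ one tag at a time. Since $\SDFE.\QKeyGen$ samples tags uniformly at random from $\zo{\ell}$, as long as $2^{\ell}$ is super-polynomial in $\secp$, a standard union bound shows that the polynomially many tags produced during the game are pairwise distinct except with negligible probability; I would assume distinctness hereafter.

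For each $\gtag^* \in \zo{\ell}$ (ordered lexicographically), I would define a hybrid $H_{\gtag^*}$ in which the challenge ciphertext is an $\iO$ obfuscation of a modified encryption circuit that, on input $\gtag$, returns $\oneSDFE.\Enc(\pk_{\gtag}, x_1; r^\prime_{\gtag})$ if $\gtag < \gtag^*$ and $\oneSDFE.\Enc(\pk_{\gtag}, x_0; r^\prime_{\gtag})$ otherwise. The endpoint hybrids coincide with $\expb{\SDFE,\qA}{ada}{ind}(1^\secp,0)$ and $\expb{\SDFE,\qA}{ada}{ind}(1^\secp,1)$.

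To move from $H_{\gtag^*}$ to $H_{\gtag^* + 1}$, I would employ four sub-hybrids. \textbf{(i)} Puncture $\prfkey$ at $\gtag^*$ inside the setup circuit and hardwire $\pk_{\gtag^*}$ sampled with fresh randomness; indistinguishable by $\iO$ security on $\pkhat$ and the pseudorandomness-at-punctured-point property of $\PuncPRF$. \textbf{(ii)} Puncture $\prfkey^\prime$ at $\gtag^*$ inside the encryption circuit and hardwire the exceptional value $\ct_{\gtag^*} = \oneSDFE.\Enc(\pk_{\gtag^*}, x_0; r^\prime_{\gtag^*})$; indistinguishable by $\iO$ security on $\cthat$ and pseudorandomness of $\PuncPRF^\prime$. \textbf{(iii)} Switch the hardwired $\ct_{\gtag^*}$ from an encryption of $x_0$ to an encryption of $x_1$ under $(\pk_{\gtag^*}, \msk_{\gtag^*})$, via a reduction to the adaptive single-key security of $\oneSDFE$. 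The reduction embeds the external $\pk_{\gtag^*}$ into the hardwired setup circuit, generates all other $(\pk_\gtag, \msk_\gtag)$ itself, and forwards the (at most one, by distinctness of tags) adversarial key query whose sampled tag equals $\gtag^*$ to its external challenger. The global admissibility condition $f_i(x_0) = f_i(x_1)$ for every queried $f_i$ in $\expb{\SDFE,\qA}{ada}{ind}$ implies the forwarded query is admissible for $\oneSDFE$. \textbf{(iv)} Reverse the punctures of (i) and (ii) to obtain $H_{\gtag^* + 1}$.

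The main obstacle is the super-polynomial number $2^\ell$ of hybrid steps. Controlling the total advantage forces us to rely on sub-exponential security of all primitives: if each sub-hybrid transition contributes advantage at most $2^{-\secp^\alpha}$ for some fixed $\alpha \in (0,1)$, then choosing $\ell$ such that $\ell = \secp^{\alpha/2}$ (say) keeps $2^\ell$ super-polynomial (so that the distinct-tags argument goes through) while making $2^\ell \cdot 2^{-\secp^\alpha}$ negligible. Two secondary points I would settle carefully: the padding parameters $\padsize_\sfS$ and $\padsize_\sfE$ must be chosen large enough to accommodate the punctured-PRF variants with hardwired $\pk_{\gtag^*}$ or $\ct_{\gtag^*}$ that appear in intermediate hybrids; and the functionality ``encrypt $x_1$ when $\gtag < \gtag^*$, else encrypt $x_0$'' must be realized by a small threshold comparator so all intermediate circuits remain polynomial-size in $\secp$, letting $\iO$ apply.
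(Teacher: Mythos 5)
Your proposal is correct and follows essentially the same route as the paper's proof: a lexicographic walk over the $2^\ell$ tags, where each step punctures the two PRF keys, hardwires $\pk_{\gtag^*}$ and $\ct_{\gtag^*}$ into the obfuscated circuits (justified by $\iO$ on functionally equivalent, equally padded circuits plus punctured pseudorandomness), switches the hardwired ciphertext from $x_0$ to $x_1$ by a reduction to the adaptive single-key security of $\oneSDFE$ that forwards the at-most-one colliding key query, and then un-punctures, with sub-exponential security absorbing the $O(2^\ell)$ loss. The only cosmetic difference is that you fold the paper's separate "puncture via $\iO$" and "replace PRF output by fresh randomness" steps into a single sub-hybrid, which is fine since you invoke both assumptions there.
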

\begin{proof}
We define a sequence of hybrid games.
For the hybrid games, we present some definitions.
Let $q$ be the total number of key queries by $\qA$. Note that $q$ could be any polynomial.
When we start the adaptive security game, the adversary $\qA$ is fixed, and $q$ is also fixed. We choose tags $\gtag_1,\ldots,\gtag_q \chosen \zo{\ell}$ for $q$ key queries at the beginning of the game.
We can interpret $\ell$ bit strings as integers and assume that there is no $i,j$ such that $i\ne j$ and $\gtag_i = \gtag_j$ without loss of generality.

\begin{description}
\item[$\hybi{0}(\coin)$:] This is the original adaptive security game.
\begin{enumerate}
\item The challenger generates $(\pkhat,\mskhat)=(\iO_1(\sfS_\sfonefe),\prfkey)\gets \SDFE.\Setup(1^\secp)$ and sends $\pkhat$ to $\qA$.
\item $\qA$ sends key queries $f_k$ to the challenger and the challenger generates $r_{\gtag_k} \gets \prf_\prfkey(\gtag_k)$ and $(\pk_{\gtag_k},\msk_{\gtag_k}) \gets \oneSDFE.\Setup(1^\secp;r_{\gtag_k})$, and returns $\qskhat_{f_k} = (\gtag_k,\qsk_{f_k,\gtag_k})$ where $\qsk_{f_k,\gtag_k}\lrun \oneSDFE.\QKeyGen(\msk_{\gtag_k},f_k)$.
\item At some point $\qA$ sends a pair $(x_0,x_1)$ to the challenger.
If $f_i(x_0)=f_i(x_1)$, the challenger generates $\prfkey^\prime \gets \prfgen^\prime(1^\secp)$ and $\sfE_\sfonefe[\pkhat,\prfkey^\prime,x_{\coin}]$ defined in~\cref{fig:enc-one-pkfe} and sends $\cthat \gets \iO_2(\sfE_\sfonefe[\pkhat,\prfkey^\prime,x_{\coin}])$ to $\qA$.
\item Again, $\qA$ sends $f_k$ to the challenger and the challenger returns $\qskhat_{f_k} \lrun \SDFE.\QKeyGen(\mskhat,f_k)$ if $f_k(x_0)=f_k(x_1)$.
\item When $\qA$ outputs $\coin^\prime$, the game outputs $\coin^\prime$.
\end{enumerate}

Let $i \in [2^\ell]$.
In the following hybrid games, we gradually change $\sfS_\sfonefe$ and $\sfE_\sfonefe$ for the $i$-th tag $\gtag^\prime_i$, which is the $\ell$-bit string representation of $i \in [2^\ell]$.
So, it could happen that $\gtag^\prime_i = \gtag_j$ for some $i\in [2^\ell]$ and $j \in[q]$.
Let $\Adv_{0}(\coin)$ and $\Adv_{x}^{i}$ be the advantage of $\qA$ in $\hybi{0}(\coin)$ and $\hybij{x}{i}$, respectively.

 \item[$\hybij{1}{i}$:] We generate $\pkhat$ as obfuscated $\sfS^*_\sfonefe$ described in~\cref{fig:setup-pkfe-punc}. In this hybrid game, we set $r_{i} \la \prf_{\prfkey}(\gtag^\prime_i)$, $\prf_{\ne \gtag^\prime_i} \la \Puncture(\prfkey,\gtag^\prime_i)$, $(\pk_{\gtag^\prime_i},\msk_{\gtag^\prime_i}) \la \oneSDFE.\Setup(1^\secp ;r_{i})$, and $\pkhat \gets \iO_1(\sfS^\ast_\sfonefe [\gtag^\prime_i,\prf_{\ne \gtag^\prime_i},\pk_{\gtag^\prime_i}])$.

When $i=1$, the behavior of $\sfS^*_\sfonefe$ is the same as that of $\sfS_\sfonefe$ since the hard-wired $\pk_{\gtag^\prime_1}$ in $\sfS^*_\sfonefe$ is the same as the output of $\sfS_\sfonefe$ on the input $\gtag^\prime_1$. Their size is also the same since we pad circuit $\sfS_\sfonefe$ to have the same size as $\sfS_\sfonefe^*$. Then, we can use the indistinguishability of $\iO_1$ and it holds that $\abs{\Adv_{0}(0) - \Adv_{1}^{1}} \le \negl(\secp)$.

\item[$\hybij{2}{i}$:] The challenge ciphertext is generated by obfuscating $\sfE^*_\sfonefe$ described in~\cref{fig:enc-one-pkfe-punc}. In this hybrid game, we set $r^\prime_{i} \la \prf^\prime_{\prfkey^\prime}(\gtag^\prime_i)$, $\prf^\prime_{\ne \gtag^\prime_i} \la \Puncture(\prfkey^\prime,\gtag^\prime_i)$, $\ct_{i} \la \oneSDFE.\Enc(\pk_{\gtag^\prime_i},x_0 ;r'_{i})$, $\pk_{\gtag^\prime_i} \la \pkhat(\gtag^\prime_i)$, and $\cthat \gets \iO_2(\sfE_\sfonefe[\gtag^\prime,\pkhat,\prf^\prime_{\ne \gtag^\prime},\allowbreak x_0,x_1,\ct_{\gtag^\prime}])$.

When $i=1$, the behavior of $\sfE^*_\sfonefe$ is the same as that of $\sfE_\sfonefe$ since the hard-wired $\ct_1$ in $\sfE^*_\sfonefe$ is the same as the output of $\sfE_\sfonefe$ on the input $1$. Both circuits have the same size by padding $\pad_\sfE$.

In addition, for $i \geq 2$, the behavior of $\sfE^*_\sfonefe$ does not change between $\hybij{1}{i}$ and $\hybij{2}{i}$.
Thus, it holds that $\abs{\Adv_{2}^{i}- \Adv_{1}^{i} }\le \negl(\secp)$ for every $i \in [2^\ell]$ due to the indistinguishability of $\iO_2$.

\protocol
{Setup Circuit $\sfS^*_{\sfonefe} [\gtag^\prime,\prf_{\ne \gtag^\prime},\pk_{\gtag^\prime}](\gtag)$}
{Description of $\sfS^*_{\sfonefe} [\gtag^\prime,\prf_{\ne \gtag^\prime},\pk_{\gtag^\prime}]$.
}
{fig:setup-pkfe-punc}
{
\ifnum\llncs=1
\scriptsize
\else
\fi
\begin{description}
\setlength{\parskip}{0.3mm} 
\setlength{\itemsep}{0.3mm} 
\item[Hardwired:] tag $\gtag^\prime$, puncturable PRF key $\prf_{\ne \gtag^\prime}$, and $\oneSDFE$ public-key $\pk_{\gtag^\prime}$.
\item[Input:] tag $\gtag \in \zo{\ell}$.
\item[Padding:] circuit is padded to size $\padsize_{\sfS}\seteq\padsize_{\sfS}(\secp,n,s)$, which is determined in analysis.
\end{description}
\begin{enumerate}
\setlength{\parskip}{0.3mm} 
\setlength{\itemsep}{0.3mm} 
\item If $\gtag^\prime = \gtag$, output $\pk_{\gtag^\prime}$.
\item Else, compute $r \gets \prf_{\ne \gtag^\prime}(\gtag)$.
\item Compute $(\pk_{\gtag}, \msk_{\gtag}) \gets \oneSDFE.\Setup(1^\secp;r)$ and output $\pk_{\gtag}$.
\end{enumerate}
}

\protocol
{Encryption Circuit $\sfE^*_\sfonefe[\gtag^\prime,\pkhat,\prf^\prime_{\ne \gtag^\prime}, x_0,x_1,\ct_{\gtag^\prime}](\gtag)$}
{Description of $\sfE^*_\sfonefe[\gtag^\prime,\pkhat,\prf^\prime_{\ne \gtag^\prime}, x_0,x_1,\ct_{\gtag^\prime}]$.
}
{fig:enc-one-pkfe-punc}
{
\ifnum\llncs=1
\scriptsize
\else
\fi
\begin{description}
\setlength{\parskip}{0.3mm} 
\setlength{\itemsep}{0.3mm} 
\item[Hardwired:] tag $\gtag^\prime$, public key $\pkhat$ (this is an obfuscated circuit), puncturable PRF key $\prf^\prime_{\ne \gtag^\prime}$, plaintexts $x_0,x_1$, and ciphertext $\ct_{\gtag^\prime}$.
\item[Input:] tag $\gtag \in \zo{\ell}$.
\item[Padding:] circuit is padded to size $\padsize_{\sfE}\seteq\padsize_{\sfE}(\secp,n,s)$, which is determined in analysis.
\end{description}
\begin{enumerate}
\setlength{\parskip}{0.3mm} 
\setlength{\itemsep}{0.3mm} 
\item If $\gtag^\prime = \gtag$, output $\ct_{\gtag^\prime}$.
\item Else, compute $r^\prime_{\gtag} \gets \prf^\prime_{\ne \gtag^\prime}(\gtag)$ and the circuit $\pkhat$ on input $\gtag$, that is, $\pk_{\gtag} \gets \pkhat(\gtag)$,
\begin{description}
\item[If $\gtag > \gtag^\prime$:] Output $\ct_\gtag \gets \oneSDFE.\Enc(\pk_{\gtag},x_0;r^\prime_{\gtag})$.
\item[If $\gtag < \gtag^\prime$:] Output $\ct_\gtag \gets \oneSDFE.\Enc(\pk_{\gtag},x_1;r^\prime_{\gtag})$.
\end{description}
\end{enumerate}
}

\item[$\hybij{3}{i}$:] We change $r_{i} = \prf_{\prfkey}(\gtag^\prime_i)$ and $r^\prime_{i} = \prf^\prime_{\prfkey^\prime}(\gtag^\prime_i)$ into uniformly random $r_{i}$ and $r^\prime_{i}$. Due to the pseudorandomness at punctured points of puncturable PRF, it holds that $\abs{\Adv_{3}^{i}- \Adv_{2}^{i} }\le \negl(\secp)$ for every $i \in [2^\ell]$.

\item[$\hybij{4}{i}$:] We change $\ct_{\gtag^\prime_i}$ from $\oneSDFE.\Enc(\pk_{\gtag^\prime_i},x_0)$ to $\oneSDFE.\Enc(\pk_{\gtag^\prime_i},x_1)$.
In $\hybij{3}{i}$ and $\hybij{4}{i}$, we do not need randomness to generate $\pk_{\gtag^\prime_i}$ and $\ct_{\gtag^\prime_i}$.
We just hardwire $\pk_{\gtag^\prime_i}$ and $\ct_{\gtag^\prime_i}$ into $\sfS^*_\sfonefe$ and $\sfE^*_\sfonefe$, respectively. Therefore, for every $i \in [2^\ell]$, $\abs{\Adv_{4}^{i}- \Adv_{3}^{i} }\le \negl(\secp)$ follows from the adaptive security of $\oneSDFE$ under the master public key $\pk_{\gtag^\prime_i}$.
\begin{lemma}\label{lem:ind-by-one-pkfe}
It holds that $\abs{\Adv_{4}^{i}- \Adv_{3}^{i} }\le \negl(\secp)$ for all $i \in [2^\ell]$ if $\oneSDFE$ is adaptively single-key secure.
\end{lemma}
\begin{proof}[Proof of~\cref{lem:ind-by-one-pkfe}]
We construct an adversary $\qB$ in the selective security game of $\oneSDFE$ as follows. To simulate the adaptive security game of $\SDFE$, $\qB$ runs $\qA$ attacking $\oneSDFE$. $\qA$ adaptively sends key queries $f_1,\cdots,f_q$. $\qB$ simulates the game of $\SDFE$ as follows.
\begin{description}
\item[Setup:] $\qB$ receives a public key $\pk_{\gtag^\prime_i}$. Then, $\qB$ chooses $\gtag_1,\ldots,\gtag_q \in \zo{\ell}$ and generates $\prfkey$ and $\prfkey^\prime$ by using $\prfgen(1^\secp)$ and $\prfgen^\prime(1^\secp)$, $\prf_{\ne \gtag^\prime_i} \gets \Puncture(\prfkey,\gtag^\prime_i)$, $\prf^\prime_{\ne \gtag^\prime_i} \gets \Puncture(\prfkey^\prime,\gtag^\prime_i)$, and the public key $\pkhat \seteq \iO_1(\sfS^*_\sfonefe[\gtag^\prime_i,\prf_{\ne \gtag^\prime_i},\pk_{\gtag^\prime_i}])$ for $\SDFE$ according to \cref{fig:setup-pkfe-punc} by using the given $\pk_{\gtag^\prime_i}$. $\qB$ sends $\pkhat$ to $\qA$.
\item[Key Generation:] When $f_k$ (i.e., $k$-th query) is queried, $\qB$ checks $f_k(x_0)=f_k(x_1)$ and outputs $\bot$ if it does not hold. Otherwise, $\qB$ checks whether $\gtag^\prime_{i}=\gtag_{k}$ (i.e., $i$-th tag in $[2^\ell]$ is the same as the tag for $k$-th key query). $\qB$ passes $f_{k}$ to its challenger, receives $\qsk_{f_{k}}$, and sends $(\gtag^\prime_i,\qsk_{f_{k}})$ to $\qA$. If $\gtag^{\prime}_i \ne \gtag_{k} $, $\qB$ generates $\qsk_{f_k} \gets \oneSDFE.\QKeyGen (\msk_{\gtag_{k}},f_k)$ by using $(\pk_{\gtag_{k}},\msk_{\gtag_{k}}) \gets \oneSDFE.\Setup(1^\secp;\prf_{\ne \gtag^\prime_{i}}(\gtag_{k}))$ and returns it. Note that we do not need $\msk_{\gtag^\prime_i}$ for this simulation since $\qB$ receives $\qsk_{f_{k}}$ from its challenger if $\gtag^\prime_i = \gtag_k$. 
\item[Encryption:] When $\qA$ sends $(x_0,x_1)$ to $\qB$, $\qB$ passes $(x_0,x_1)$ to its challenger and receives $\ct^\ast_{\gtag^\prime_i}$. Then, $\qB$ generates the challenge ciphertext $\cthat \gets \iO_2(\sfE^\ast[\gtag^\prime_i,\pkhat,\prf^\prime_{\ne \gtag^\prime_{i}},x_0,x_1,\ct^\ast_{\gtag^\prime_i}])$ (obfuscated $\sfE^*_\sfonefe$~\cref{fig:enc-one-pkfe-punc}).
\end{description}
The simulation is completed. If $\qB$ receives $\ct^\ast_{\gtag^\prime_{i}} = \oneSDFE.\Enc(\pk_{\gtag^\prime_i},x_0)$, it perfectly simulates $\hybij{3}{i}$. If $\qB$ receives $\ct^\ast_{\gtag^\prime_i} = \oneSDFE.\Enc(\pk_{\gtag^\prime_i},x_1)$, it perfectly simulates $\hybij{4}{i}$. This completes the proof of the lemma.
\end{proof}
\item[$\hybij{5}{i}$:] We change $r_i$ and $r^\prime_{i}$ into $r_{i} = \prf_{\prfkey}(\gtag^\prime_i)$ and $r^\prime_{i} = \prf^\prime_{\prfkey^\prime}(\gtag^\prime_i)$. We can show $\abs{\Adv_{5}^{i}- \Adv_{4}^{i} }\le \negl(\secp)$ for every $i \in [2^\ell]$ by using the pseudorandomness at punctured point of puncturable PRF.
\end{description}
From the definition of $\sfS^*_{\sfonefe}$, $\sfE^*_{\sfonefe}$, and $\hybij{1}{i}$, the behaviors of $\sfS^*_{\sfonefe}$ and $\sfE^*_{\sfonefe}$ in $\hybij{5}{i}$ and $\hybij{1}{i+1}$ are the same. Thus, $\abs{\Adv_{1}^{i+1}- \Adv_{5}^{i} }\le \negl(\secp)$ holds for every $i \in [2^\ell -1]$ due to the indistinguishability of $\iO_1$ and $\iO_2$.
It also holds that $\abs{\Adv_{0}(1)- \Adv_{5}^{2^\ell} }\le \negl(\secp)$ based on the security guarantee of $\iO_1$ and $\iO_2$.

There are $O(2^\ell)$ hybrid games. However, if $\negl(\secp)$ is sub-exponentially small for all transitions, $\abs{\Adv_{0}(0)- \Adv_{0}(1)}$ is negligible.
Note that it is sufficient to set $\ell = \polylog(\secp)$.

\paragraph{Padding Parameter.} The proof of security relies on the indistinguishability of obfuscated $\sfS_{\sfonefe}$ and $\sfS^*_{\sfonefe}$ defined in~\cref{fig:setup-pkfe,fig:setup-pkfe-punc},
and that of obfuscated $\sfE_\sfonefe$ and $\sfE_\sfonefe^\ast$ defined in~\cref{fig:enc-one-pkfe,fig:enc-one-pkfe-punc}. Accordingly, we set $\padsize_{\sfS} \seteq \max(\abs{\sfS_{\sfonefe}},\abs{\sfS_{\sfonefe}^*})$ and $\padsize_{\sfE} \seteq \max(\abs{\sfE_{\sfonefe}},\abs{\sfE_{\sfonefe}^*})$.

The circuits $\sfS_\sfonefe$ and $\sfS_\sfonefe^*$ compute a puncturable PRF over domain $\zo{\ell}$ and a key pair of $\oneFE$, and may have punctured PRF keys and a public key hardwired. The circuits $\sfE_\sfonefe$ and $\sfE_\sfonefe^*$ run the circuit $\pkhat$ and compute a puncturable PRF over domain $\zo{\ell}$ and a ciphertext of $\oneSDFE$, and may have punctured PRF keys, tags, plaintexts, and a hard-wired ciphertext. Note that $\ell$ is a polynomial of $\secp$. Thus, it holds that
 \begin{align}
 \pad_{\sfS} & \le \poly(\secp,n,s),\\
 \pad_{\sfE} & \le \poly(\secp,n,s,\abs{\pkhat}).
 \end{align}
 Therefore, we complete the proof of~\cref{thm:SDFE_from_1FE_IO}.
\end{proof}

\begin{theorem}\label{thm:SDFE_anti-piracy_from_1SDFE_IO}
If $\oneSDFE$ is challenge-only strong $\gamma/2$-anti-piracy secure and sub-exponentially adaptively single-key secure, $\iO$ is a sub-exponentially secure IO for $\Ppoly$, and $\PuncPRF$ and $\PuncPRF^\prime$ are sub-exponentially secure puncturable PRFs, then $\SDFE$ is strong $\gamma$-anti-piracy secure.
\end{theorem}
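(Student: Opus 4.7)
The plan is to reduce strong $\gamma$-anti-piracy of $\SDFE$ to challenge-only strong $\gamma/2$-anti-piracy of $\oneSDFE$, sub-exponential security of $\iO$ and the puncturable PRFs, and sub-exponential adaptive single-key security of $\oneSDFE$. The proof mirrors the structure of \cref{thm:SDFE_from_1FE_IO} but is carried out on \emph{threshold implementations} rather than on distinguishing advantages, using \cref{thm:ind_distribution_TI} to translate distributional indistinguishability into $\TI$-closeness at each hybrid step. The ultimate goal is to reach a hybrid in which the only $\coin$-dependent component of the challenge ciphertext $\cthat$ is the $\oneSDFE$ ciphertext slotted at the challenge tag $\tau^*$ associated with $f^*$.

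First, I would set up a sequence of hybrid distributions indexed by $i \in [2^\ell]$ that iterate over all tags $\tau'_i \in \zo{\ell}$ in the natural order, exactly as in the proof of \cref{thm:SDFE_from_1FE_IO}. In the $i$-th hybrid, for every already-processed tag $\tau'_j$ with $j < i$ and $\tau'_j \ne \tau^*$, the obfuscated encryption circuit $\sfE^*_\sfonefe$ outputs a hardwired $\oneSDFE.\Enc(\pk_{\tau'_j}, x_0)$ instead of computing $\oneSDFE.\Enc(\pk_{\tau'_j}, x_\coin)$ on the fly. Each single-tag swap factors through: (i) an $\iO$ hop, puncturing both the setup and encryption circuits at $\tau'_j$ and hardwiring $(\pk_{\tau'_j}, \ct_{\tau'_j})$; (ii) a PPRF hop, replacing the derived randomness at $\tau'_j$ by uniform randomness; and (iii) an adaptive single-key $\oneSDFE$ hop under the key $\pk_{\tau'_j}$, switching the hardwired ciphertext from $x_\coin$ to $x_0$. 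This last hop is legal because the reduction answers all non-challenge key queries at other tags with self-generated $\oneSDFE$ keys, and the adversary's constraint $f_k(x_0)=f_k(x_1)$ precisely matches the requirement of adaptive single-key indistinguishability; the challenge tag $\tau^*$ is explicitly excluded from the swap so the hardwired ciphertext there remains $\coin$-dependent.

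Next, to upgrade these classical distributional hops into $\TI$-closeness, I would apply \cref{thm:ind_distribution_TI} at each step with slack $\epsilon \seteq \gamma/(4 \cdot 2^\ell)$ per transition, so that the accumulated threshold loss over all $2^\ell$ transitions is at most $\gamma/2$ and the final threshold stays at $1/2 + \gamma/2$. Using sub-exponential security of $\iO$, the PPRFs, and $\oneSDFE$'s adaptive single-key indistinguishability, each hop's distinguishing advantage is bounded by $2^{-\secp^\alpha}$ for some constant $\alpha > 0$; choosing $\ell = \polylog(\secp)$ ensures $2^{-\secp^\alpha} \ll \epsilon$ and every $\delta$ produced by \cref{thm:ind_distribution_TI} is negligible. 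In the final hybrid, the challenge ciphertext $\cthat$ has the form $\iO(\sfE^*_\sfonefe[\tau^*,\pkhat,\prf'_{\ne\tau^*},x_0,x_0,\ct^*_{\tau^*}])$, where $\ct^*_{\tau^*}\gets\oneSDFE.\Enc(\pk_{\tau^*},x_\coin)$ is the only component depending on $\coin$.

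Finally, I would build a reduction $\qB$ against the challenge-only strong $\gamma/2$-anti-piracy game of $\oneSDFE$. Upon receiving $(\pk_{\tau^*}, \qsk_{f^*,\tau^*})$, $\qB$ samples $\tau^*$ as the tag of $f^*$, sets $\pkhat \gets \iO_1(\sfS^*_\sfonefe[\tau^*, \prf_{\ne\tau^*}, \pk_{\tau^*}])$, and simulates all non-challenge key queries internally. When the adversary outputs $(x_0,x_1)$ together with $\pirateD_1, \pirateD_2$, the reduction turns them into $\oneSDFE$ distinguishers $\pirateD_1^*, \pirateD_2^*$ that, on input $\ct^*$, obfuscate $\sfE^*_\sfonefe[\tau^*, \pkhat, \prf'_{\ne\tau^*}, x_0, x_0, \ct^*]$ and feed the result to the original $\pirateD_i$; the submitted randomized function family is $g(\cdot;\cdot) \seteq \oneSDFE.\Enc(\pk_{\tau^*},\cdot;\cdot)$, as permitted by \cref{def:strong_anti-piracy_rand_msg_PKE}. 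The distribution that $\cP_{i,D^\prime}$ of the $\oneSDFE$ game induces on the reduction's pirates exactly matches the distribution against which $\pirateD_1,\pirateD_2$ were tested in the final $\SDFE$ hybrid, so passing the $\gamma/2$-test in the $\SDFE$ game translates into passing the $\gamma/2$-test in the $\oneSDFE$ game. The main obstacle, and the place that needs the most care, is the bookkeeping of the cumulative $\TI$ loss over super-polynomially many hybrids together with the adaptivity of $(x_0,x_1)$: the hybrid swaps $x_\coin \to x_0$ must be deferred to the moment after the adversary commits to $(x_0,x_1)$, which is exactly what adaptive single-key security of $\oneSDFE$ affords, and the $\epsilon$-slack per step must be chosen consistently with the sub-exponential bound that controls every $\iO$, PPRF, and $\oneSDFE$ hop.
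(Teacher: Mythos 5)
Your hybrid skeleton (puncture-and-swap over all $2^\ell$ tags, deferring the challenge tag $\gtag^*$ to the end, then reducing to single-key SDFE) matches the paper's, but two steps as you describe them do not go through. First, your per-hybrid invocation of \cref{thm:ind_distribution_TI} with slack $\epsilon=\gamma/(4\cdot 2^\ell)$ is not licensed by that theorem: with $\ell=\polylog(\secp)$ this $\epsilon$ is quasi-polynomially small, whereas the theorem only guarantees a negligible $\delta$ for \emph{inverse-polynomial} $\epsilon$; moreover, even if each step yielded a negligible $\delta_i$, the sum of $2^\ell$ negligible quantities need not be negligible without a quantitative refinement you do not supply. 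The paper avoids this entirely by never applying the threshold-implementation lemma inside the hybrid chain: it first argues that the initial distribution $D$ and the final distribution $D^{(6)}$ over $(\coin,\cthat)$ pairs are computationally indistinguishable (the $2^\ell$ classical hops each cost a sub-exponentially small advantage, so the total is still negligible), and only then applies \cref{thm:ind_distribution_TI} \emph{once} with the inverse-polynomial slack $\epsilon=\gamma/2$.

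Second, your final reduction constructs pirate decryptors that, given the $\oneSDFE$ challenge ciphertext $\ct^*$, ``obfuscate $\sfE^*_\sfonefe[\ldots,\ct^*]$ and feed the result to $\pirateD_i$.'' As written this wrapper is a randomized map (fresh $\iO_2$ coins and a fresh $\prfkey'$ per invocation), so it is not a quantum decryptor $(\qstateq,\mat{U})$ in the sense the anti-piracy test requires, and the mixture of projective measurements it induces does not coincide with the distribution against which $\pirateD_1,\pirateD_2$ were certified good. The paper resolves this with an extra hybrid ($\hybi{6}$) that de-randomizes the generation of $\prfkey'$ and of the obfuscation by deriving both from $\ct_{\gtag^*}$ via the auxiliary PRFs $\prf^{(1)},\prf^{(2)}$, making $\cthat$ a deterministic function of $\ct_{\gtag^*}$ and hence implementable by a fixed unitary $\mat{U}_i^*$ with classical keys in ancilla. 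Your proposal omits this step, and it also misidentifies the reduction target: at this level one reduces to the challenge-only strong $\gamma/2$-anti-piracy of $\oneSDFE$ (\cref{def:strong_anti-piracy_FE}), submitting $(x_0,x_1)$ with $f^*(x_0)\ne f^*(x_1)$; the randomized-message SDE notion of \cref{def:strong_anti-piracy_rand_msg_PKE} with $g=\PKFE.\Enc$ is used one layer below, in the proof of \cref{thm:1SDFE_anti-piracy_from_SDE}, not here.
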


\begin{proof}[Proof of~\cref{thm:SDFE_anti-piracy_from_1SDFE_IO}]
Let $\qA$ be an adversary attacking the strong $\gamma$-anti-piracy of $\SDFE$.
We define a sequence of hybrid games.
For the hybrid games, we define the following values.
Let $q$ be the total number of key queries by $\qA$ except for the challenge query $f^\ast$. Note that $q$ could be any polynomial.
When we start the strong $\gamma$-anti-piracy security game, the adversary $\qA$ is fixed, and $q$ is also fixed. We choose tags $\gtag_1,\ldots,\gtag_q \chosen \zo{\ell}$ for $q$ key queries and $\gtag^\ast$ for the challenge query at the beginning of the game.
We can interpret $\ell$ bit strings as integers and assume that there is no $i,j$ such that $i\ne j$ and $\gtag_i = \gtag_j$ without loss of generality.

\begin{description}
\item[$\hybi{0}$:] The first game is the original strong $\gamma$-anti-piracy experiment, that is, $\expc{\SDFE,\qA}{strong}{anti}{piracy}(1^\secp,\gamma)$.
\begin{enumerate}
\item The challenger generates $(\pkhat,\mskhat)=(\iO_1(\sfS_\sfonefe),\prfkey)\gets \SDFE.\Setup(1^\secp)$ and sends $\pkhat$ to $\qA$.
\item $\qA$ sends key queries $f_k$ to the challenger and the challenger returns $\qskhat_{f_k} = (\gtag_k,\qsk_{f_k,\gtag_k}) \lrun \SDFE.\QKeyGen(\mskhat,f_k)$ where $\qsk_{f_k,\gtag_k}\lrun \oneSDFE.\QKeyGen(\msk_{\gtag_k},f_k)$.
\item At some point $\qA$ sends a challenge query $f^*$ to the challenger and the challenger returns $\qskhat =(\gtag^\ast,\qsk_{f^\ast}) \lrun \SDFE.\QKeyGen(\mskhat,f^\ast)$ where $\qsk_{f^\ast}\lrun \oneSDFE.\QKeyGen(\msk_{\gtag^\ast},f^*)$.
\item Again, $\qA$ sends $f_k$ to the challenger and the challenger returns $\qskhat_{f_k} \lrun \SDFE.\QKeyGen(\mskhat,f_k)$.
\item $\qA$ outputs $(x_0,x_1)$ and two (possibly entangled) quantum decryptors $\pirateD_1 = (\qstateq[\qreg{R}_1],\mat{U}_1)$ and $\pirateD_2 = (\qstateq[\qreg{R}_2],\mat{U}_2)$, where $\forall i\ f_i(x_0)=f_i(x_1)$, $f^*(x_0)\ne f^*(x_1)$, $\qstateq$ is a quantum state over registers $\qreg{R}_1$ and $\qreg{R}_2$, and $\mat{U}_1$ and $\mat{U}_2$ are general quantum circuits.
\item For $i\in \setbk{1,2}$, let $\cP_{i,D}$ be the following mixture of projective measurements acting on some quantum state $\qstateq^\prime$:
\begin{itemize}
\item Sample a uniform $\coin \chosen \bit$. Compute $\cthat \gets \SDFE.\Enc(\pkhat,x_\coin)$ and set $\ct \seteq \cthat$.
\item Run the quantum decryptor $(\qstateq^\prime,\mat{U}_i)$ on input $\ct$. If the outcome is 
$\coin$, output $1$. Otherwise, output $0$.
\end{itemize}
Let $D$ be the distribution over pairs $(\coin,\ct)$ defined in the first item above, and let $\cE_i = \setbk{\mat{E}^i_{(\coin,\ct)}}_{\coin,\ct}$ be a collection of projective measurements where $\mat{E}^i_{(\coin,\ct)}$ is the projective measurement described in the second item above.
$\cP_{i,D}$ is the mixture of projective measurements associated to $D$ and $\cE_i$.
\item The challenger runs the test for a $\gamma$-good FE decryptor with respect to $\pk$, $(x_0,x_1)$, and $f^\ast$ on $\pirateD_1$ and $\pirateD_2$.
The challenger outputs $1$ if both tests pass, otherwise outputs $0$.
\end{enumerate}
In the following hybrid games, we gradually change $\sfS_\sfonefe$ and $\sfE_\sfonefe$ for the $i$-th tag $\gtag^\prime_i$, which is the $\ell$-bit string representation of $i \in [2^\ell]$, with the following exception.
Let $i^\ast \in [2^\ell]$ is the integer representation of $\gtag^\ast \in \zo{\ell}$.
Instead of going over indices $\setbk{1,\ldots, i^\ast -1,i^\ast,i^\ast+1,\ldots, 2^\ell}$, we go over indices $I_{i^\ast} \seteq \setbk{1,\ldots,i^\ast -1,i^\ast+1,\ldots,2^\ell,i^\ast}$. That is, we skip the tag $\gtag_{i^\ast}$ and go to $\gtag_{i^\ast}$ after we finish the $2^\ell$-th tag $\gtag_{2^\ell}$.
It could happen that $\gtag^\prime_i = \gtag_j$ for some $i\in [2^\ell]$ and $j \in[q]$.
Let $\Adv_{x}^{i}$ be the advantage of $\qA$ in $\hybij{x}{i}$.

\item[$\hybij{1}{i}$:] This is defined for $i \in [I_{i^\ast}]$ and the same as $\hybij{5}{i-1}$ except that we generate $\pkhat$ as obfuscated $\sfS^*_\sfonefe$ described in~\cref{fig:setup-sdfe-punc}. Note that we define $\hybij{5}{0}=\hybi{0}$. In this hybrid game, we set $r_{\gtag^\prime_i} \la \prf_{\prfkey}(\gtag^\prime_i)$, $\prf_{\ne \gtag^\prime_i} \la \Puncture(\prfkey,\gtag^\prime_i)$, and $(\pk_{\gtag^\prime_i},\msk_{\gtag^\prime_i}) \la \oneSDFE.\Setup(1^\secp ;r_{\gtag^\prime_i})$.

The behavior of $\sfS^*_\sfonefe$ is the same as that of $\sfS_\sfonefe$ since the hard-wired $\pk_{\gtag^\prime_1}$ in $\sfS^*_\sfonefe$ is the same as the output of $\sfS_\sfonefe$ on the input $\gtag^\prime_1$. Their size is also the same since we pad circuit $\sfS_\sfonefe$ to have the same size as $\sfS_\sfonefe^*$. Then, we can use the indistinguishability guarantee of $\iO_1$ and it holds that $\iO_1(\sfS_\sfonefe[\prfkey])$ is computationally indistinguishable from $\iO_1(\sfS^\ast_\sfonefe[\gtag^\prime_i,\prf_{\ne \gtag^\prime_i},\pk_{\gtag^\prime_i}])$. Let $D^{(1,i)}$ be the distribution over pairs $(\coin,\ct)$ defined in $\hybi{0}$ except that $\iO_1(\sfS^\ast_\sfonefe[\gtag^\prime_i,\prf_{\ne \gtag^\prime_i},\allowbreak\pk_{\gtag^\prime_i}])$ as $\pkhat$. Then, $D^{(1,1)}$ is computationally indistinguishable from $D$.


\item[$\hybij{2}{i}$:] This is defined for $i \in [I_{i^\ast}]$ and the same as $\hybij{1}{i}$ except that we change the distribution $D^{(1,i)}$ over pairs $(\coin,\ct)$ into $D^{(2,i)}$ as follows.
\begin{itemize}
 \item Sample a uniform $\coin \chosen \bit$.
 \item Compute $r^\prime_{\gtag^\prime_i} \la \prf^\prime_{\prfkey^\prime}(\gtag^\prime_i)$, $\prf^\prime_{\ne \gtag^\prime_i} \la \Puncture^\prime(\prfkey^\prime,\gtag^\prime_i)$, $\ct_{\gtag^\prime_i} \la \oneSDFE.\Enc(\pk_{\gtag^\prime_i},x_\coin ;r^\prime_{\gtag^\prime_i})$, and $\cthat \gets \iO_2(\sfE^\ast_\sfonefe[\gtag^\prime_i,\gtag^\ast,\pkhat,\prf^\prime_{\ne \gtag^\prime_i},x_\coin,x_1,\ct_{\gtag^\prime_i}])$ where $\sfE^\ast_\sfonefe$ is described in~\cref{fig:enc-one-sdfe-punc}. Set $\ct \seteq \cthat$.
 \end{itemize} 

\protocol
{Setup Circuit $\sfS^*_{\sfonefe} [\gtag^\prime,\prf_{\ne \gtag^\prime},\pk_{\gtag^\prime}](\gtag)$}
{Circuit $\sfS^*_{\sfonefe} [\gtag^\prime,\prf_{\ne \gtag^\prime},\pk_{\gtag^\prime}]$.
}
{fig:setup-sdfe-punc}
{
\ifnum\llncs=1
\scriptsize
\else
\fi
\begin{description}
\setlength{\parskip}{0.3mm} 
\setlength{\itemsep}{0.3mm} 
\item[Hardwired:] tag $\gtag^\prime$, puncturable PRF key $\prf_{\ne \gtag^\prime}$, and $\oneSDFE$ public-key $\pk_{\gtag^\prime}$.
\item[Input:] tag $\gtag \in \zo{\ell}$.
\item[Padding:] circuit is padded to size $\padsize_{\sfS}\seteq\padsize_{\sfS}(\secp,n,s)$, which is determined in analysis.
\end{description}
\begin{enumerate}
\setlength{\parskip}{0.3mm} 
\setlength{\itemsep}{0.3mm} 
\item If $\gtag^\prime = \gtag$, output $\pk_{\gtag^\prime}$.
\item Else, compute $r \gets \prf_{\ne \gtag^\prime}(\gtag)$.
\item Compute $(\pk_{\gtag}, \msk_{\gtag}) \gets \oneSDFE.\Setup(1^\secp;r_i)$ and output $\pk_{\gtag}$.
\end{enumerate}
}

\protocol
{Encryption Circuit $\sfE^*_\sfonefe[\gtag^\prime,\gtag^\ast,\pkhat,\prf^\prime_{\ne \gtag^\prime},x_\coin, x_1,\ct_{\gtag^\prime}](\gtag)$}
{Circuit $\sfE^*_\sfonefe[\gtag^\prime,\gtag^\ast,\pkhat,\prf^\prime_{\ne \gtag^\prime},x_\coin,x_1,\ct_{\gtag^\prime}]$.
}
{fig:enc-one-sdfe-punc}
{
\ifnum\llncs=1
\scriptsize
\else
\fi
\begin{description}
\setlength{\parskip}{0.3mm} 
\setlength{\itemsep}{0.3mm} 
\item[Hardwired:] tags $\gtag^\prime$, $\gtag^\ast$, public key $\pkhat$ (this is an obfuscated circuit), puncturable PRF key $\prf^\prime_{\ne \gtag^\prime}$, plaintexts $x_\coin,x_1$, and ciphertext $\ct_{\gtag^\prime}$.
\item[Input:] tag $\gtag \in \zo{\ell}$.
\item[Padding:] circuit is padded to size $\padsize_{\sfE}\seteq\padsize_{\sfE}(\secp,n,s)$, which is determined in analysis.
\end{description}
\begin{enumerate}
\setlength{\parskip}{0.3mm} 
\setlength{\itemsep}{0.3mm} 
\item If $\gtag^\prime = \gtag$, output $\ct_{\gtag^\prime}$.
\item Else, compute $r^\prime_{\gtag} \gets \prf^\prime_{\ne \gtag^\prime_i}(\gtag)$ and the circuit $\pkhat$ on input $\gtag$, that is, $\pk_{\gtag} \gets \pkhat(\gtag)$,
\begin{description}
    \item[If $\gtag = \gtag^\ast$:] Output $\ct_{\gtag^\ast} \gets \oneSDFE.\Enc(\pk_{\gtag^\ast},x_\coin;r^\prime_{\gtag^\ast})$
\item[If $\gtag > \gtag^\prime$:] Output $\ct_\gtag \gets \oneSDFE.\Enc(\pk_{\gtag},x_\coin;r^\prime_{\gtag})$.
\item[If $\gtag < \gtag^\prime$:] Output $\ct_\gtag \gets \oneSDFE.\Enc(\pk_{\gtag},x_1;r^\prime_{\gtag})$.
\end{description}
\end{enumerate}
}

The behavior of $\sfE^*_\sfonefe$ is the same as that of $\sfE_\sfonefe$ since the hard-wired $\ct_{\gtag^\prime_i}$ in $\sfE^*_\sfonefe$ is the same as the output of $\sfE_\sfonefe$ on the input $\gtag^\prime_i$.
Moreover, both circuits have the same size by padding $\pad_\sfE$. Then, we can use the indistinguishability of $\iO_2$, and it holds that $D^{(2,i)}$ is computationally indistinguishable from $D^{(1,i)}$.


\item[$\hybij{3}{i}$:] This is defined for $i \in [I_{i^\ast}]$ and the same as $\hybij{2}{i}$ except that we use distribution $D^{(3,i)}$, which is the same as $D^{(2,i)}$ except that setup randomness $r_{\gtag^\ast}\chosen \cR_{\Setup}$ and encryption randomness $r^\prime_{\gtag^\ast}\chosen \cR_{\Enc}$ are uniformly random. By the punctured pseudorandomness of $\prf$ and $\prf^\prime$, $D^{(3,i)}$ is computationally indistinguishable from $D^{(2,i)}$.


\item[$\hybij{4}{i}$:] This is defined for $i \in [I_{i^\ast}]\setminus \setbk{i^\ast}$ and the same as $\hybij{3}{i}$ except that we use distribution $D^{(4,i)}$, which is the same as $D^{(3,i)}$ except that the ciphertext $\ct_{\gtag^\prime_i}$ is $\oneSDFE.\Enc(\pk_{\gtag^\prime_i},x_1)$ instead of $\oneSDFE.\Enc(\pk_{\gtag^\prime_i},x_\coin)$.
In $\hybij{3}{i}$ and $\hybij{4}{i}$, we do not need randomness to generate $\pk_{\gtag^\prime_i}$ and $\ct_{\gtag^\prime_i}$.
We just hardwire $\pk_{\gtag^\prime_i}$ and $\ct_{\gtag^\prime_i}$ into $\sfS^*_\sfonefe$ and $\sfE^*_\sfonefe$, respectively.
In addition, it holds that $f_k(x_0)=f_k(x_1)$ for all $k\in[q]$ by the requirement of strong $\gamma$-anti-piracy security. For every $i \in [I_{i^\ast}] \setminus \setbk{i^\ast}$, by the adaptive security of $\oneSDFE$ under the public key $\pk_{\gtag^\prime_i}$, $D^{(4,i)}$ is computationally indistinguishable from $D^{(3,i)}$. Note that we do not need to generate a functional decryption key for $i\in[I_{i^\ast}]$ such that $\gtag^\prime_{i} \ne \gtag_{k}$.
The detail of this indistinguishability is almost the same as~\cref{lem:ind-by-one-pkfe}, and we omit it.


\item[$\hybij{5}{i}$:] This is defined for $i\in [I_{i^\ast}]\setminus \setbk{i^\ast}$ and the same as $\hybij{4}{i}$ except that we use distribution $D^{(5,i)}$, which is the same as $D^{(4,i)}$ except that we use $r_{i} = \prf_{\prfkey}(\gtag^\prime_i)$ and $r^\prime_{i} = \prf^\prime_{\prfkey^\prime}(\gtag^\prime_i)$.
By the punctured pseudorandomness of $\prf$ and $\prf^\prime$, $D^{(5,i)}$ is computationally indistinguishable from $D^{(4,i)}$.


We finish describing one cycle of our hybrid games.
We move from $\hybij{5}{i}$ to $\hybij{1}{i+1}$.
From the definition of $\sfS^*_{\sfonefe}$ and $\hybij{1}{i}$, the behaviors of $\sfS^*_{\sfonefe}$ in $\hybij{5}{i}$ and $\hybij{1}{i+1}$ are the same.
It holds that $D^{(1,i+1)}$ is computationally indistinguishable from $D^{(5,i)}$.
So, we can arrive at $\hybi{3}^{i^\ast}$ by iterating the transitions so far.
\item[$\hybij{4}{i^\ast}$:] This is the same as $\hybij{3}{i^\ast}$ except that we change the distribution $D^{(3,i^\ast)}$ over pairs $(\coin,\ct)$ into $D^{(4,i^\ast)}$ as follows.
\begin{itemize}
 \item Sample a uniform $\coin \chosen \bit$.
 \item Sample $r^\prime_{\gtag^\ast} \chosen \cR_{\Enc}$ and compute $\ct_{\gtag^\ast} \la \oneSDFE.\Enc(\pk_{\gtag^\ast},x_\coin ;r^\prime_{\gtag^\ast})$, $\prfkey^\prime \la \prfgen^\prime(1^\secp)$, $\prf^\prime_{\ne \gtag^\ast} \la \Puncture^\prime(\prfkey^\prime,\gtag^\ast)$, and $\cthat \gets \iO_2(\sfE^{\ast\ast}_\sfonefe[\gtag^\ast,\pkhat,\allowbreak\prf^\prime_{\ne \gtag^\ast},x_1,\ct_{\gtag^\ast}])$ where $\sfE^{\ast\ast}_\sfonefe$ is described in~\cref{fig:enc-one-sdfe-punc-last}. Set $\ct \seteq \cthat$.
 \end{itemize}

\protocol
{Encryption Circuit $\sfE^{\ast\ast}_\sfonefe[\gtag^\ast,\pkhat,\prf^\prime_{\ne \gtag^\ast}, x_1,\ct_{\gtag^\ast}](\gtag)$}
{Circuit $\sfE^{\ast\ast}_\sfonefe[\gtag^\ast,\pkhat,\prf^\prime_{\ne \gtag^\ast},x_1,\ct_{\gtag^\ast}]$.
}
{fig:enc-one-sdfe-punc-last}
{
\ifnum\llncs=1
\scriptsize
\else
\fi
\begin{description}
\setlength{\parskip}{0.3mm} 
\setlength{\itemsep}{0.3mm} 
\item[Hardwired:] tag $\gtag^\ast$, public key $\pkhat$ (this is an obfuscated circuit), puncturable PRF key $\prf^\prime_{\ne \gtag^\ast}$, plaintext $x_1$, and ciphertext $\ct_{\gtag^\ast}$.
\item[Input:] tag $\gtag \in \zo{\ell}$.
\item[Padding:] circuit is padded to size $\padsize_{\sfE}\seteq\padsize_{\sfE}(\secp,n,s)$, which is determined in analysis.
\end{description}
\begin{enumerate}
\setlength{\parskip}{0.3mm} 
\setlength{\itemsep}{0.3mm} 
\item If $\gtag^\ast = \gtag$, output $\ct_{\gtag^\ast}$.
\item Else, compute $r^\prime_{\gtag} \gets \prf^\prime_{\ne \gtag^\ast}(\gtag)$ and the circuit $\pkhat$ on input $\gtag$, that is, $\pk_{\gtag} \gets \pkhat(\gtag)$,
\item Output $\ct_{\gtag} \gets \oneSDFE.\Enc(\pk_{\gtag},x_1;r^\prime_{\gtag})$.
\end{enumerate}
}

From the definitions of $\sfE^\ast_{\sfonefe}$, $\sfE^{\ast\ast}_{\sfonefe}$, and $\hybij{3}{i^\ast}$, the behaviors of $\sfE^{\ast}_{\sfonefe}$ in $\hybij{3}{i^\ast}$ is the same as that of $\sfE^{\ast\ast}_{\sfonefe}$ in $\hybij{4}{i^\ast}$.
Moreover, both circuits have the same size by padding $\pad_\sfE$. Then, we can use the indistinguishability of $\iO_2$, and it holds that $D^{(4,i^\ast)}$ is computationally indistinguishable from $D^{(3,i^\ast)}$.
This change is for erasing information about $\coin$ from $\sfE^{\ast}_{\sfonefe}$.


\item[$\hybi{6}$:] This is the same as $\hybij{4}{i^\ast}$ except that we choose PRF keys $\prfkey_1,\prfkey_2 \chosen \zo{\secp}$ at the beginning of the game and change the distribution $D^{(4,i^\ast)}$ over pairs $(\coin,\ct)$ into $D^{(6)}$ as follows.
\begin{itemize}
 \item Sample a uniform $\coin \chosen \bit$.
 \item Sample $r^\prime_{\gtag^\ast} \chosen \cR_{\Enc}$, compute $\ct_{\gtag^\ast} \seteq \oneSDFE.\Enc(\pk_{\gtag^\ast},x_\coin ;r^\prime_{\gtag^\ast})$, $r_1\gets \prf^{(1)}_{\prfkey_1}(\ct_{\gtag^\ast})$, $r_2\gets \prf^{(2)}_{\prfkey_2}(\ct_{\gtag^\ast})$, $\prfkey^\prime \la \PRF^\prime.\Gen(1^\secp;r_1)$, $\prf^\prime_{\ne \gtag^\ast} \la \Puncture^\prime(\prfkey^\prime,\gtag^\ast)$, and $\cthat \seteq \iO_2(\sfE^{\ast\ast}_\sfonefe[\gtag^\ast,\pkhat,\allowbreak \prf^\prime_{\ne \gtag^\ast},x_1,\ct_{\gtag^\ast}];r_2)$ where $\sfE^{\ast\ast}_\sfonefe$ is described in~\cref{fig:enc-one-sdfe-punc-last}. Set $\ct \seteq \cthat$.
 \end{itemize} 
 That is, we de-randomize generation of $\prfkey^\prime$ and $\cthat$. Note that the puncturing algorithm $\Puncture^\prime$ is deterministic in the GGM construction based puncturable PRF~\cite{JACM:GolGolMic86,AC:BonWat13,CCS:KPTZ13,PKC:BoyGolIva14}.
By the pseudorandomness of $\prf^{(1)}$ and $\prf^{(2)}$, $D^{(6)}$ is computationally indistinguishable from $D^{(4,i^\ast)}$.

There are $O(2^\ell)$ hybrid games between $\hybi{0}$ and $\hybi{6}$. However, if all building blocks are sub-exponentially secure, $D^{(6)}$ and $D$ (the original distribution in $\hybi{0}$) are computationally indistinguishable since we can set $\ell=\polylog(\secp)$.
Note that we do not need the sub-exponential security for the challenge-only strong $\gamma/2$-anti-piracy security of $\oneSDFE$ since we do not use the anti-piracy security so far.

Thus, by~\cref{thm:ind_distribution_TI}, it holds that
\begin{align}
& \Tr[(\TI_{\frac{1}{2}+\gamma -\epsilon}(\cP_{1,D^{(6)}}) \tensor \TI_{\frac{1}{2} +\gamma - \epsilon }(\cP_{2,D^{(6)}}))\qstateq] \\
& \ge \Tr[(\TI_{\frac{1}{2}+\gamma }(\cP_{1,D^{}})\tensor \TI_{\frac{1}{2} +\gamma }(\cP_{2,D^{}}))\qstateq] - \delta,
\end{align}
where $\delta$ is some negligible function.
That is, it holds that $\Adv_{6} \ge  \Adv_{0} - \negl(\secp)$.
\end{description}

We prove the following.
\begin{lemma}\label{lem:1SDFE_reduction}
If $\oneSDFE$ is challenge-only strong $\gamma/2$-anti-piracy secure, it holds that $\Adv_{6} \le \negl(\secp)$.
\end{lemma}
\begin{proof}[Proof of~\cref{lem:1SDFE_reduction}]
Suppose that $\Adv_{6}$ is non-negligible for a contradiction.
We construct a QPT algorithm $\qB$ for the strong $\gamma/2$-anti-piracy game of $\oneSDFE$ by using the adversary $\qA$ of the strong $\gamma$-anti-piracy game of $\SDFE$ as follows.
\begin{enumerate}
\item  $\qB$ chooses $\gtag_1,\ldots,\gtag_q, \gtag^\ast \chosen \zo{\ell}$ and $\prfkey_1,\prfkey_2 \chosen \zo{\secp}$, and generates $\prfkey \gets \prfgen(1^\secp)$.
\item $\qB$ generates $\prf_{\ne \gtag^\ast} \la \Puncture(\prfkey,\gtag^\ast)$. 
\item $\qB$ receives $\pk^\ast$ from its challenger, sets $\pk_{\gtag^\ast} \seteq \pk^\ast$, and constructs $\sfS^*_\sfonefe[\gtag^\ast,\prf_{\ne \gtag^\ast},\allowbreak\pk_{\gtag^\ast}]$ described in~\cref{fig:setup-sdfe-punc} by choosing $\gtag^\ast$ for the challenge query $f^\ast$.
$\qB$ sets $\pkhat \seteq \iO_1(\sfS^*_\sfonefe[\gtag^\ast,\prf_{\ne \gtag^\ast},\pk_{\gtag^\ast}])$ and sends it to $\qA$.
\item When $\qA$ sends a key query $f_i$, $\qB$ generates $(\pk_{\gtag_i},\msk_{\gtag_i})\gets \oneSDFE.\Setup(1^\secp;\allowbreak\PRF_{\prfkey_{\ne \gtag^\ast}}(\gtag_i))$ and $\qsk_{f_i,\gtag_i} \gets \oneSDFE.\QKeyGen(\msk_{\gtag_i},f_i)$, sets $\qskhat_{f_i,\gtag_i} \seteq (\gtag_i,\qsk_{f_i,\gtag_i})$, and sends $\qskhat_{f_i}$ to $\qA$.
\item When $\qA$ sends the challenge query $f^\ast$, $\qB$ passes it to its challenger and receives $\qsk_{f^\ast} \gets \oneSDFE.\QKeyGen(\msk^\ast,f^\ast)$, and sends $\qskhat_{f^\ast} \seteq (\gtag^\ast,\qsk_{f^\ast})$ to $\qA$.
\item Again, $\qB$ can answer key queries $f_i$ from $\qA$ as the fourth item.
\item At some point, $\qB$ receives $(x_0,x_1)$ and two (possibly entangled) quantum decryptors $\pirateD_1 = (\qstateq[\qreg{R}_1],\mat{U}_1)$ and $\pirateD_2 = (\qstateq[\qreg{R}_2],\mat{U}_2)$, where $f^*(x_0)\ne f^*(x_1)$, from $\qA$. For $i\in \setbk{1,2}$, let $\cP_{i,D^{(6)}}$ be the following mixture of projective measurements acting on some quantum state $\qstateq^\prime$:
\begin{itemize}
 \item Sample a uniform $\coin \chosen \bit$.
 \item Sample $r^\prime_{\gtag^\ast} \chosen \cR_{\Enc}$, $\ct_{\gtag^\ast} \la \oneSDFE.\Enc(\pk_{\gtag^\ast},x_\coin ;r^\prime_{\gtag^\ast})$.
\item Generate $r_1\gets \prf^{(1)}_{\prfkey_1}(\ct_{\gtag^\ast})$, $r_2\gets \prf^{(2)}_{\prfkey_2}(\ct_{\gtag^\ast})$, $\prfkey^\prime \gets \prfgen^\prime(1^\secp;r_1)$, and $\prf^\prime_{\ne \gtag^\ast}=\Puncture^\prime(\prfkey^\prime,\gtag^\ast)$.
\item Generate $\cthat \gets \iO_2(\sfE^{\ast\ast}_\sfonefe[\gtag^\ast,\pkhat,\prf^\prime_{\ne \gtag^\ast},x_1,\ct_{\gtag^\ast}];r_2)$ where $\sfE^{\ast\ast}_\sfonefe$ is described in~\cref{fig:enc-one-sdfe-punc-last}.
 \item Run the quantum decryptor $(\qstateq^\prime,\mat{U}_i)$ on input $\cthat$. If the outcome is 
 $\coin$, output $1$. Otherwise, output $0$.
\end{itemize}
\item $\qB$ sets quantum decryptors $\pirateD_1^\ast$ and $\pirateD_2^\ast$ as follows.

In this item, we denote pure state $\ket{x}\bra{x}$ by $\ket{x}$ for ease of notation.
First, $\qB$ constructs $(\qstateq^\ast[\qreg{R}_i],\mat{U}_i^\ast)$ such that
\begin{itemize}
\item Set  $\qstateq^\ast[\qreg{R}_i] \seteq \qstateq[\qreg{R}_i] \tensor \ket{0^{\abs{\cthat}}} \tensor \ket{\gtag^\ast, \pkhat, 0^\ell, x_1, \prfkey_1,\prfkey_2,0^{\abs{r_1}},0^{\abs{r_2}},0^{\ell}} \tensor \ket{\ct_{\gtag^\ast}}$. Note that $\ket{\ct_{\gtag^\ast}}$ is the input for $(\qstateq^\ast[\qreg{R}_i],\mat{U}_i^\ast)$.
\item Unitary $\mat{U}^\ast_i$ acts on $\qstateq^\ast[\qreg{R}_i]$ and the result is
\[
\qstateq^\prime[\qreg{R}_i] \seteq \qstateq[\qreg{R}_i] \tensor \ket{\cthat} \tensor \ket{\gtag^\ast, \pkhat, \prf^\prime_{\ne \gtag^\ast}, x_1,\prfkey_1,\prfkey_2,r_1,r_2,\prfkey^\prime}\tensor \ket{\ct_{\gtag^\ast}},
\]
where $r_1 \seteq \prf^{(1)}_{\prfkey_1}(\ct_{\gtag^\ast})$, $r_2 \seteq \prf^{(2)}_{\prfkey_2}(\ct_{\gtag^\ast})$, $\prfkey^\prime \seteq \prfgen^\prime(1^\secp;r_1)$, $\prf^\prime_{\ne \gtag^\ast} \seteq \Puncture^\prime(\prfkey^\prime,\gtag^\ast)$, and $\cthat \seteq  \iO_2(\sfE^{\ast\ast}_\sfonefe[\gtag^\ast,\pkhat,\prf^\prime_{\ne \gtag^\ast},x_1,\ct_{\gtag^\ast}];r_2)$. That is, $\mat{U}_i^\ast$ constructs $\sfE^{\ast\ast}_\sfonefe[\gtag^\ast,\pkhat,\prf^\prime_{\ne \gtag^\ast},x_1,\ct_{\gtag^\ast}]$ described in~\cref{fig:enc-one-sdfe-punc-last} and generates $\cthat \seteq \iO_2(\sfE^{\ast\ast}_\sfonefe[\gtag^\ast,\pkhat,\prf^\prime_{\ne \gtag^\ast},x_1,\ct_{\gtag^\ast}];r_2)$. Note that $\qB$ can construct it since $\qB$ has $\gtag^\ast$, $\pkhat$, $\prfkey_1$, $\prfkey_2$, and $x_1$.
\end{itemize}
It is easy to see that if we apply $(\mat{U}_i \tensor \mat{I})$ to $\qstateq^\prime[\qreg{R}_i]$, we obtain an output of $\pirateD_i$ for input $\cthat$.
Thus, $\qB$ sets $\pirateD_1^\ast \seteq (\qstateq^\ast[\qreg{R}_1],(\mat{U}_1\tensor \mat{I})\mat{U}_1^\ast)$ and $\pirateD_2^\ast \seteq (\qstateq^\ast[\qreg{R}_2],(\mat{U}_2 \tensor \mat{I})\mat{U}_2^\ast)$.
$\qB$ sends $(x_0,x_1)$ and $\pirateD_1^\ast$ and $\pirateD_2^\ast$ to its challenger.
\end{enumerate}
The challenger of $\oneSDFE$ runs the test by using the following.
Let $\cP_{i,D^\prime}$ be the following mixture of projective measurements acting on some quantum state $\qstateq_{\oneSDFE}$:
\begin{itemize}
\item Sample a uniformly random $\coin\chosen \bit$ and compute $\ct_{\gtag^\ast} \gets \oneSDFE.\Enc(\pk_{\gtag^\ast},\allowbreak x_\coin)$.
\item Run $\coin^\prime \gets \pirateD^\ast(\ct)$. If $\coin^\prime = \coin$, output $1$, otherwise output $0$.
\end{itemize}

By the construction of $\mat{U}_i^\ast$, quantum decryptor $\pirateD_i^\ast$ takes a ciphertext $\ct_{\gtag^\ast}$ under $\pk_{\gtag^\ast}$ as input, converts it into a ciphertext $\cthat$ of $\SDFE$, and apply the quantum decryptor $\pirateD_i$ from $\qA$. The converted ciphertext $\cthat$ perfectly simulates a ciphertext in $\hybi{6}$ if an input to $\pirateD_i^\ast$ is a ciphertext under $\pk_{\gtag^\ast}$.

We assumed that $\Adv_{6}$ is non-negligible at the beginning of this proof.
That is, applying $\TI_{\frac{1}{2}+\gamma -\epsilon}(\cP_{i,D^{(6)}})$ on $\qstateq^\ast[\qreg{R}_i]$ results in two outcomes $1$ with non-negligible probability.
That is, it holds that
\begin{align}
\Tr[\TI_{\frac{1}{2}+\gamma -\epsilon}(\cP_{1,D^{(6)}}) \tensor \TI_{\frac{1}{2}+\gamma - \epsilon}(\cP_{2,D^{(6)}})\qstateq^\ast] > \negl(\secp).
\end{align}
This means that for $i\in \setbk{1,2}$, $\qstateq^\ast[\qreg{R}_i]$ is a $(\gamma -\epsilon)$-good distinguisher with respect to ciphertexts generated according to $D^{(6)}$.

Thus, if $\pirateD_i$ works as a $(\gamma- \epsilon)$-good quantum distinguisher for $\SDFE$, $\pirateD_i^\ast$ works as a $(\gamma - \epsilon )$-good quantum distinguisher for $\oneSDFE$, where $\epsilon= \frac{\gamma}{2}$. This completes the proof of~\cref{lem:1SDFE_reduction}.
\end{proof}

By using the relationships among $\adva{\SDFE,\qA}{}$, $\Adv_{x}^{i}$ for $x\in \setbk{1,\cdots,5}$ and $i\in [I_{i^\ast}]$, $\Adv_{x}^{i^\ast}$ for $x\in \setbk{1,\cdots,4}$, and $\Adv_6$ that we show above, we obtain $\advc{\SDFE,\qA}{strong}{anti}{piracy}(\secp,\gamma) \le \negl(\secp)$.
We almost complete the proof of~\cref{thm:SDFE_anti-piracy_from_1SDFE_IO}. Lastly, we need to set the padding parameters for IO security.

\paragraph{Padding Parameter.} The proof of security relies on the indistinguishability of obfuscated $\sfS_{\sfonefe}$ and $\sfS^*_{\sfonefe}$ defined in Figures~\ref{fig:setup-pkfe} and \ref{fig:setup-sdfe-punc},
and that of obfuscated $\sfE_\sfonefe$, $\sfE_\sfonefe^\ast$, and $\sfE_\sfonefe^{\ast\ast}$ defined in~\cref{fig:enc-one-pkfe,fig:enc-one-sdfe-punc,fig:enc-one-sdfe-punc-last}. Accordingly, we set $\padsize_{\sfS} \seteq \max(\abs{\sfS_{\sfonefe}},\abs{\sfS_{\sfonefe}^*})$ and $\padsize_{\sfE} \seteq \max(\abs{\sfE_{\sfonefe}},\abs{\sfE_{\sfonefe}^*},\abs{\sfE_\sfonefe^{\ast\ast}})$.

The circuits $\sfS_\sfonefe$ and $\sfS_\sfonefe^*$ compute a puncturable PRF over domain $\zo{\ell}$ and a key pair of $\oneFE$, and may have punctured PRF keys and a public key hardwired. The circuits $\sfE_\sfonefe$, $\sfE_\sfonefe^*$, and $\sfE_\sfonefe^{\ast\ast}$ run the circuit $\pkhat$ and compute a puncturable PRF over domain $\zo{\ell}$ and a ciphertext of $\oneSDFE$, and may have punctured PRF keys, tags, plaintexts, and a hard-wired ciphertext. Note that $\ell$ is a polynomial of $\secp$. Thus, it holds that
 \begin{align}
 \pad_{\sfS} & \le \poly(\secp,n,s),\\
 \pad_{\sfE} & \le \poly(\secp,n,s,\abs{\pkhat}).
 \end{align}
 Therefore, we complete the proof of~\cref{thm:SDFE_anti-piracy_from_1SDFE_IO}.
\end{proof}

We obtain the following corollary from~\cref{cor:strong_1SDFE_from_IO_LWE,thm:SDFE_from_1FE_IO,thm:SDFE_anti-piracy_from_1SDFE_IO}.

\begin{corollary}\label{cor:strong_SDFE_from_IO_LWE}
Assuming the existence of sub-exponentially secure IO for $\Ppoly$, and the sub-exponential hardness of QLWE, there exists an SDFE scheme for $\Ppoly$ that satisfies adaptive security (\cref{def:adaptive_SDFE}) and strong $\gamma$-anti-piracy security for any inverse polynomial $\gamma$.
\end{corollary}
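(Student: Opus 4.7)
The plan is to instantiate the generic transformations developed in \cref{sec:const_1key_SDFE,sec:const_multi-key_SDFE} with concrete primitives whose existence follows from sub-exponentially secure IO for $\Ppoly$ and sub-exponential QLWE. First, I would invoke \cref{cor:strong_1SDFE_from_IO_LWE}, which already combines the single-key PKFE construction (\cref{thm:pkfe_from_pke}, instantiated with post-quantum PKE derivable from QLWE) with the randomized-message-strong-anti-piracy SDE scheme of \cref{thm:SDE_rand_from_IO_LWE}, to obtain a single-decryptor FE scheme $\oneSDFE$ for $\Ppoly$ that is simultaneously adaptively single-key secure (via \cref{thm:1SDFE_from_1FE}) and challenge-only strong $\gamma'$-anti-piracy secure for any inverse-polynomial $\gamma'$ (via \cref{thm:1SDFE_anti-piracy_from_SDE}). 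Crucially, since both \cref{thm:SDE_rand_from_IO_LWE} and the PKFE-from-PKE transformation preserve sub-exponential security, $\oneSDFE$ inherits sub-exponential adaptive single-key security whenever we feed it sub-exponentially secure IO and QLWE, and the $\gamma'$-anti-piracy bound can be set to $\gamma/2$ for the final target $\gamma$.

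Next, I would feed this $\oneSDFE$ into the IO-based ``key bundling'' compiler of \cref{sec:const_multi-key_SDFE}, which additionally requires two puncturable PRFs (existing from OWFs via \cref{thm:pprf-owf}, and OWFs follow from QLWE). Invoking \cref{thm:SDFE_from_1FE_IO} with sub-exponential security of $\oneSDFE$, $\iO$, and $\PuncPRF$ yields that the resulting $\SDFE$ scheme is adaptively secure in the sense of \cref{def:adaptive_SDFE}, while \cref{thm:SDFE_anti-piracy_from_1SDFE_IO} yields strong $\gamma$-anti-piracy security provided the underlying $\oneSDFE$ is challenge-only strong $\gamma/2$-anti-piracy secure and the PPRFs and IO are sub-exponentially secure. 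Both hypotheses are satisfied by the previous paragraph, so the two theorems apply simultaneously to the \emph{same} construction (the schemes built in \cref{sec:const_multi-key_SDFE} are identical for both statements).

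The main subtlety I would highlight, rather than a deep obstacle, is the need for sub-exponential security of every building block: the reduction in \cref{thm:SDFE_anti-piracy_from_1SDFE_IO} walks through $O(2^\ell)$ hybrid games indexed by tags in $\{0,1\}^\ell$, so one must set $\ell=\polylog(\secp)$ and ensure that each per-hybrid advantage is sub-exponentially small. This is why the corollary's hypothesis strengthens polynomial QLWE/IO to their sub-exponential variants; under these assumptions, summing the $O(2^\ell)$ negligible terms still yields a negligible overall advantage. Putting the three invocations together gives the claimed $\SDFE$ scheme for $\Ppoly$ with both adaptive security and strong $\gamma$-anti-piracy for any inverse polynomial $\gamma$, completing the proof.
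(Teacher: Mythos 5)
Your proposal is correct and follows exactly the paper's route: the corollary is obtained by instantiating the single-key scheme via \cref{cor:strong_1SDFE_from_IO_LWE} and then applying \cref{thm:SDFE_from_1FE_IO} and \cref{thm:SDFE_anti-piracy_from_1SDFE_IO} to the same key-bundling construction, with sub-exponential security of all building blocks absorbing the $O(2^\ell)$ hybrids. Your parameter bookkeeping (setting the single-key scheme's anti-piracy parameter to $\gamma/2$ and deriving the PPRFs and PKE from QLWE) matches what the paper implicitly does.
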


\fi
	
	\ifnum\anonymous=0
	\ifnum\acknowledgments=1
	\paragraph{{\bf Acknowledgments.}}	
	\acknowledgmenttext
	\fi
	\fi
	\ifnum\llncs=1
	\bibliographystyle{extreme_alpha}
	\bibliography{abbrev3,crypto,siamcomp_jacm,other-bib}
	\else
	\ifnum\choosebibstyle=0
	\else
	\ifnum\choosebibstyle=1
	\bibliographystyle{alpha}
	\else
	\bibliographystyle{abbrv}
	\fi
	\bibliography{abbrev3,crypto,siamcomp_jacm,other-bib}
	\fi
	\fi

	
\ifnum\cameraready=0
	\ifnum\llncs=0
	\appendix

\section{CPA-Style Anti-Piracy Security Notions}\label{appsec:anti-piracy-cpa}

We introduce the CPA-style anti-piracy security for SDE.
\begin{definition}[Anti-Piracy Security, CPA-Style~\cite{C:CLLZ21}]\label{def:anti-piracy_CPA_PKE}
We consider the CPA-style anti-piracy game $\expc{\SDE,\qA}{anti}{piracy}{cpa}(\secp)$ between the challenger and an adversary $\qA$ below.
\begin{enumerate}
\item The challenger generates $(\pk,\sk)\gets \Setup(1^\secp)$.
\item The challenger generates $\qsk \gets \QKeyGen(\sk)$ and sends $(\pk,\qsk)$ to $\qA$.
\item $\qA$ outputs $(m_0,m_1)$ and two (possibly entangled) quantum decryptors $\pirateD_1 = (\qstateq[\qreg{R}_1],\mat{U}_1)$ and $\pirateD_2 = (\qstateq[\qreg{R}_2],\mat{U}_2)$, where $m_0 \ne m_1$, $\abs{m_0}=\abs{m_1}$, $\qstateq$ is a quantum state over registers $\qreg{R}_1$ and $\qreg{R}_2$, and $\mat{U}_1$ and $\mat{U}_2$ are general quantum circuits.
\item The challenger chooses $\coin_1,\coin_2 \chosen \bit$, and generates $\ct_1 \gets \Enc(\pk,m_{\coin_1})$ and $\ct_2 \gets \Enc(\pk,m_{\coin_2})$. The challenger runs $m_i^\prime \gets \pirateD_i(\ct_i)$ for $i \in \setbk{1,2}$.
If $m_i^\prime = m_{\coin_i}$ for $i\in \setbk{1,2}$, the challenger outputs $1$, otherwise outputs $0$.
\end{enumerate}

We say that $\SDE$ is $\gamma$-anti-piracy secure if for any QPT adversary $\qA$, it satisfies that
\[
\Pr[\expc{\SDE,\qA}{anti}{piracy}{cpa}(\sep)=1]  \le  \frac{1}{2} + \gamma(\secp) + \negl(\sep).
\]
\end{definition}

\begin{theorem}[\cite{C:CLLZ21}]\label{thm:strong-cpa-SDE}
If an SDE scheme satisfies strong $\gamma$-anti-piracy security, it also satisfies $\gamma$-anti-piracry security (in the CPA-style).
\end{theorem}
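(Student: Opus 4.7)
The plan is to relate the CPA-style winning probability directly to the outcome distribution of projective implementations of the two testing POVMs, and then use strong $\gamma$-anti-piracy as a tail bound on this distribution.

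First, I would unpack the two definitions in a common language. For $i\in\{1,2\}$ let $\cP_i=(\mat{P}_i,\mat{I}-\mat{P}_i)$ be the mixture of projective measurements from Definition~\ref{def:test_quantum_decryptor} associated to decryptor $\pirateD_i=(\qstateq[\qreg{R}_i],\mat{U}_i)$, the messages $(m_0,m_1)$, and the public key $\pk$; explicitly, $\mat{P}_i$ is the POVM element that encodes ``$\pirateD_i$ correctly identifies a uniformly random challenge.'' Since the $\cP_i$ act on disjoint registers, they commute. The key identity I would then establish is
\[
\Pr\!\left[\expc{\SDE,\qA}{anti}{piracy}{cpa}(\secp)=1\right]
= \Tr\!\bigl((\mat{P}_1\otimes\mat{P}_2)\,\qstateq\bigr),
\]
which follows because in the CPA-style game the challenger applies $\mat{P}_1$ and $\mat{P}_2$ independently to the two registers (with freshly sampled $\coin_1,\coin_2$ and fresh encryption randomness), and the ``both correct'' indicator is exactly the tensor POVM element.

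Next I would invoke Lemma~\ref{lem:commutative_projective_implementation} to obtain the unique projective implementations $\projimp(\cP_1)$ and $\projimp(\cP_2)$. Because $\cP_1$ and $\cP_2$ act on disjoint registers, $\projimp(\cP_1)\otimes\projimp(\cP_2)$ is a joint projective measurement whose outcome is a pair $(p_1,p_2)\in[0,1]^2$, and by the defining property of projective implementations,
\[
\Tr\!\bigl((\mat{P}_1\otimes\mat{P}_2)\,\qstateq\bigr)
=\mathop{\mathbf{E}}_{(p_1,p_2)}\bigl[p_1\cdot p_2\bigr],
\]
where the expectation is over the outcome of this joint projective measurement applied to $\qstateq$. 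Moreover, by Definition~\ref{def:threshold_implementation}, $\TI_{1/2+\gamma}(\cP_i)$ is exactly the projector onto outcomes $p_i\ge 1/2+\gamma$ of $\projimp(\cP_i)$. Hence the strong $\gamma$-anti-piracy advantage equals the probability, under the same joint measurement, that $p_1\ge 1/2+\gamma$ \emph{and} $p_2\ge 1/2+\gamma$; by assumption this probability is $\negl(\secp)$.

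Finally I would split the expectation along the event $E=\{p_1\ge 1/2+\gamma\,\land\,p_2\ge 1/2+\gamma\}$:
\[
\mathbf{E}[p_1 p_2]
= \mathbf{E}[p_1 p_2\mid E]\Pr[E] + \mathbf{E}[p_1 p_2\mid \neg E]\Pr[\neg E]
\le \Pr[E] + (1/2+\gamma),
\]
using that $p_i\in[0,1]$, and that on $\neg E$ at least one of $p_1,p_2$ is strictly below $1/2+\gamma$, so $p_1 p_2\le 1/2+\gamma$. Combining with the negligible bound on $\Pr[E]$ yields the CPA-style bound $1/2+\gamma+\negl(\secp)$, as required.

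The only delicate step is the passage from the CPA-style winning probability to $\mathbf{E}[p_1 p_2]$ via a \emph{joint} projective implementation on the bipartite state $\qstateq$; this is where one must use that $\cP_1,\cP_2$ act on disjoint registers and therefore commute, so that $\projimp(\cP_1)\otimes\projimp(\cP_2)$ is well-defined and its marginals agree with the individual projective implementations guaranteed by Lemma~\ref{lem:commutative_projective_implementation}. Once this is in place, the two remaining ingredients (the $\TI$--tail interpretation of strong anti-piracy, and the elementary two-line bound on $\mathbf{E}[p_1p_2]$) are immediate.
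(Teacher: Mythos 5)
Your proof is correct. The paper itself imports this statement from \cite{C:CLLZ21} without reproducing a proof, and your argument reconstructs exactly the standard one given there: identify the CPA-style winning probability with $\Tr\bigl((\mat{P}_1\otimes\mat{P}_2)\qstateq\bigr)=\mathbf{E}[p_1p_2]$ via the (commuting, register-disjoint) projective implementations, read strong $\gamma$-anti-piracy as the statement that $\Pr[p_1\ge 1/2+\gamma\wedge p_2\ge 1/2+\gamma]$ is negligible, and conclude with the elementary split $\mathbf{E}[p_1p_2]\le\Pr[E]+(1/2+\gamma)$.
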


We can define a similar security notion for SDFE.

\begin{definition}[Anti-Piracy Security for FE, CPA-Style]\label{def:anti-piracy_CPA_FE}
We consider the CPA-style anti-piracy game for FE $\expc{\SDFE,\qA}{anti}{piracy}{cpa}(\secp)$ between the challenger and an adversary $\qA$ below.
\begin{enumerate}
\item The challenger generates $(\pk,\msk)\gets \Setup(1^\secp)$ and sends $\pk$ to $\qA$.
\item $\qA$ sends key queries $f_i$ to the challenger and receives $\qsk_{f_i} \lrun \QKeyGen(\sk_{f_i})$ where $\sk_{f_i}\lrun \KG(\msk,f_i)$.
\item At some point $\qA$ sends a challenge query $f^*$ to the challenger and receives $\qsk_{f^*} \lrun \QKeyGen(\msk,f^\ast)$.
\item Again, $\qA$ sends $f_i$ to the challenger and receives $\qsk_{f_i} \lrun \QKeyGen(\sk_{f_i})$ where $\sk_{f_i}\lrun \KG(\msk,f_i)$.
\item $\qA$ outputs $(x_0,x_1)$ and two (possibly entangled) quantum decryptors $\pirateD_1 = (\qstateq[\qreg{R}_1],\mat{U}_1)$ and $\pirateD_2 = (\qstateq[\qreg{R}_2],\mat{U}_2)$, where $\forall i\ f_i(x_0)=f_i(x_1)$, $f^*(x_0)\ne f^*(x_1)$, $\qstateq$ is a quantum state over registers $\qreg{R}_1$ and $\qreg{R}_2$, and $\mat{U}_1$ and $\mat{U}_2$ are general quantum circuits.
\item The challenger chooses $\coin_1,\coin_2 \chosen \bit$, and generates $\ct_1 \gets \Enc(\pk,x_{\coin_1})$ and $\ct_2 \gets \Enc(\pk,x_{\coin_2})$. The challenger runs $\coin_k^\prime \gets \pirateD_k(\ct_k)$ for $k \in \setbk{1,2}$.
If $\coin_k^\prime = \coin_k$ for $k\in \setbk{1,2}$, the challenger outputs $1$, otherwise outputs $0$.
\end{enumerate}

We say that $\SDFE$ is $\gamma$-anti-piracy secure if for any QPT adversary $\qA$, it satisfies that
\[
\Pr[\expc{\SDFE,\qA}{anti}{piracy}{cpa}(\sep)=1]  \le \frac{1}{2} + \gamma(\secp) + \negl(\sep).
\]
If $\qA$ can send only the challenge query $f^\ast$ during the experiment, we say that $\SDFE$ is challenge-only $\gamma$-anti-piracy secure.
\end{definition}

We can also define the CPA-style anti-piracy for the predictor-based quantum decryptors as~\cref{def:anti-piracy_CPA_PKE}. That is, $\pirateD_i$ outputs $f^\ast(x_{\coin_i})$ for $i \in \setbk{1,2}$ instead of $\coin_i$. We omit the variation.

\begin{theorem}\label{thm:cpa-style_from_anti-piracy}
If an SDFE scheme satisfies strong $\gamma$-anti-piracy security, it also satisfies $\gamma$-anti-piracy security in the CPA-style.
\end{theorem}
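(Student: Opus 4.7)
The plan is to mimic the SDE analogue (\cref{thm:strong-cpa-SDE}), reducing contrapositively: assume $\SDFE$ is not $\gamma$-anti-piracy secure in the CPA style, so there exists a QPT $\qA$ and a non-negligible $\eta$ such that
\[
\Pr[\expc{\SDFE,\qA}{anti}{piracy}{cpa}(\sep)=1] \;\geq\; \tfrac{1}{2}+\gamma+\eta.
\]
We build $\qB$ attacking strong $\gamma$-anti-piracy that simply runs $\qA$ verbatim: $\qB$ forwards every (pre- and post-challenge) key query and the challenge query $f^\ast$ to its own oracle, and outputs the very same $(x_0,x_1,\pirateD_1=(\qstateq[\qreg{R}_1],\mat{U}_1),\pirateD_2=(\qstateq[\qreg{R}_2],\mat{U}_2))$ produced by $\qA$. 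The reduction is perfect on the view of $\qA$, so it suffices to argue that the pair $(\pirateD_1,\pirateD_2)$ passes the joint threshold test with non-negligible probability.

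The key step is to relate the CPA-style winning probability, which is an ordinary product-POVM expectation, to the threshold-implementation probability. Let $\cP_i=(\mat{P}_i,\mat{I}-\mat{P}_i)$ be the mixture POVM from \cref{def:test_quantum_decryptor_FE} associated with $\pirateD_i$, so that $\mat{P}_i=\Exp_{(\coin,\ct)\sim D}[\mat{E}^i_{(\coin,\ct)}]$. Because the challenge bits $\coin_1,\coin_2$ and ciphertexts $\ct_1,\ct_2$ in the CPA-style game are independent, and because $\cP_1,\cP_2$ act on disjoint registers of $\qstateq$,
\[
\Pr[\expc{\SDFE,\qA}{anti}{piracy}{cpa}=1] \;=\; \Tr\!\bigl[(\mat{P}_1\otimes\mat{P}_2)\,\qstateq\bigr].
\]
Using \cref{lem:commutative_projective_implementation}, let $\projimp(\cP_i)$ be the unique projective implementation of $\cP_i$; since the two implementations commute on $\qstateq$, jointly measuring $\projimp(\cP_1)\otimes\projimp(\cP_2)$ produces random variables $(p_1,p_2)\in[0,1]^2$ satisfying $\Tr[(\mat{P}_1\otimes\mat{P}_2)\qstateq]=\Exp[p_1 p_2]$ and $\Tr[\TI_{1/2+\gamma}(\cP_1)\otimes\TI_{1/2+\gamma}(\cP_2)\,\qstateq]=\Pr[p_1\geq\tfrac12+\gamma\ \wedge\ p_2\geq\tfrac12+\gamma]$. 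Call this last event $E$.

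The final step is a short algebraic bound. Conditioned on $\neg E$, at least one of $p_1,p_2$ is strictly below $\tfrac12+\gamma$, while the other is at most $1$, so $p_1 p_2<\tfrac12+\gamma$; conditioned on $E$, trivially $p_1 p_2\leq 1$. Hence
\[
\tfrac12+\gamma+\eta \;\leq\; \Exp[p_1 p_2] \;\leq\; \Pr[E] + (1-\Pr[E])\bigl(\tfrac12+\gamma\bigr),
\]
which rearranges to $\Pr[E]\geq \eta/(\tfrac12-\gamma)\geq 2\eta$, a non-negligible quantity. Therefore $\qB$ causes both threshold tests to pass with non-negligible probability, contradicting strong $\gamma$-anti-piracy of $\SDFE$. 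The only subtlety to check carefully is the tensor-product identity $\Tr[(\mat{P}_1\otimes\mat{P}_2)\qstateq]=\Exp[p_1 p_2]$ for possibly entangled $\qstateq$; this follows because $\projimp(\cP_1)\otimes\projimp(\cP_2)$ is a well-defined joint projective measurement that is equivalent (as a two-step measure-then-sample procedure) to applying $\cP_1\otimes\cP_2$ directly, exactly as in the SDE proof of Coladangelo et al.~\cite{C:CLLZ21}. This is the only real technical point; the rest is bookkeeping about the FE key queries, which the reduction handles transparently.
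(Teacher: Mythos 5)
Your proposal is correct and follows essentially the same route the paper intends: the paper omits this proof, deferring to the identical argument for SDE (\cref{thm:strong-cpa-SDE}, due to Coladangelo et al.~\cite{C:CLLZ21}), which is precisely your trivial forwarding reduction combined with the averaging bound $\Exp[p_1p_2]\le \Pr[E]+(1-\Pr[E])(\tfrac12+\gamma)$ obtained from the commuting projective implementations on disjoint registers. The only point worth noting is that the hypothesis $\Tr[(\mat{P}_1\otimes\mat{P}_2)\qstateq]\ge\tfrac12+\gamma+\eta\le 1$ already forces $\gamma<\tfrac12$, so the final division by $\tfrac12-\gamma$ is legitimate.
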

The proof of this theorem is almost the same as that of~\cref{thm:strong-cpa-SDE}, so we omit it to avoid replication.

\newcommand{\Sel}{\algo{Sel}}
\newcommand{\onect}{\keys{1ct}}
\newcommand{\funcg}{G}
\newcommand{\funcT}{T}
\newcommand{\adaSKL}{\algo{adaSKL}}
\newcommand{\indSKL}{\algo{indSKL}}
\newcommand{\delflag}{\keys{flag}_{\keys{del}}}

\section{Adaptive and Simulation-based Security for Secure Key Leasing}\label{sec:stronger_FESKL_security}
This section presents how to upgrade FE with secure key leasing security.

\subsection{Adaptive Security for Secure Key Leasing}\label{sec:adatpive_FESKL_security}
Ananth et al.~\cite{C:ABSV15} present transformations from selectively secure FE into adaptively secure FE. Almost the same transformation works for FE with secure key leasing.
The adaptive lessor security for SKFE with SKL is almost the same as~\cref{def:sel_lessor_SKFESKL} except that $\qA$ declares $(x_0^\ast,x_1^\ast)$ when $\qA$ sends $\cert$ to the challenger and we separate the key oracle into two types. The formal definition is as follows.

\begin{definition}[Adaptive Lessor Security]\label{def:ada_lessor_SKFESKL}
We say that $\SKFESKL$ is an adaptively lessor secure SKFE-SKL scheme for $\Xs,\Ys$, and $\Fs$, if it satisfies the following requirement, formalized from the experiment $\expb{\qA,\SKFESKL}{ada}{lessor}(1^\secp,\coin)$ between an adversary $\qA$ and a challenger:
        \begin{enumerate}
            \item At the beginning, $\qA$ sends $1^{\numkey+\numkey^\ast}$ to the challenger. The challenger runs $\msk\gets\Setup(1^\secp,1^{\numkey+\numkey^\ast})$.
            Throughout the experiment, $\qA$ can access the following oracles.
            \begin{description}
            \item[$\Oracle{\Enc}(x)$:] Given $x$, it returns $\Enc(\msk,x)$.
            \item[$\Oracle{\qKG}(f,1^{\numct})$:] Given $(f,1^{\numct})$, it generates $(\qfsk,\vk)\la\qKG(\msk,f,1^{\numct})$, and sends $\qfsk$ to $\qA$. $\qA$ can access this oracle at most $\numkey$ times.
            \end{description}
            \item $\qA$ can access the following oracles before $\qA$ declares the challenge plaintexts.
            \begin{description}
            \item[$\Oracle{\qKG}^\ast(f^\ast,1^{\numct^\ast})$:] Given $(f^\ast,1^{\numct^\ast})$, it generates $(\qfsk^\ast,\vk^\ast)\la\qKG(\msk,f^\ast,1^{\numct^\ast})$, adds $(f^\ast,1^{\numct^\ast},\vk^\ast,\bot)$ to $\List{\qKG}$, and sends $\qfsk^\ast$ to $\qA$. $\qA$ can access this oracle at most $\numkey^\ast$ times.
            \item[$\Oracle{\Vrfy}(f^\ast,\cert^\ast)$:] Given $(f^\ast,\cert^\ast)$, it finds an entry $(f^\ast,1^{\numct^\ast},\vk^\ast,M)$ from $\List{\qKG}$. (If there is no such entry, it returns $\bot$.) If $\top=\Vrfy(\vk^\ast,\cert^\ast)$ and the number of queries to $\Oracle{\Enc}$ at this point is less than $\numct^\ast$, it returns $\top$ and updates the entry into $(f^\ast,1^{\numct^\ast},\vk^\ast,\top)$. Otherwise, it returns $\bot$.
            \end{description}
            \item When $\qA$ sends $(x_0^\ast,x_1^\ast)$ to receive the challenge ciphertext, if there exists $(f^\ast,1^{\numct^\ast},\vk^\ast,\bot)$ in $\List{\qKG}$, or $\Oracle{\qKG}$ received $(f,1^{\numct})$ such that $f(x_0^\ast) \ne f(x_1^\ast)$, the challenger outputs $0$ as the final output of this experiment. Otherwise, the challenger generates $\ct^*\la\Enc(\msk,x_\coin^*)$ and sends $\ct^*$ to $\qA$. Hereafter, $\Oracle{\qKG}$ rejects a query $(f,1^n)$ such that $f(x_0^\ast)\ne f(x_1^\ast)$ and $\qA$ cannot access $\Oracle{\qKG}^\ast$ and $\Oracle{\Vrfy}$.
            \item $\qA$ outputs a guess $\coin^\prime$ for $\coin$. The challenger outputs $\coin'$ as the final output of the experiment.

        \end{enumerate}
        For any QPT $\qA$, it holds that
\ifnum\llncs=0        
\begin{align}
\advb{\SKFESKL,\qA}{ada}{lessor}(\secp) \seteq \abs{\Pr[\expb{\SKFESKL,\qA}{ada}{lessor} (1^\secp,0) \out 1] - \Pr[\expb{\SKFESKL,\qA}{ada}{lessor} (1^\secp,1) \out 1] }\leq \negl(\secp).
\end{align}
\else
\begin{align}
\advb{\SKFESKL,\qA}{ada}{lessor}(\secp) 
&\seteq \abs{\Pr[\expb{\SKFESKL,\qA}{ada}{lessor} (1^\secp,0) \out 1] - \Pr[\expb{\SKFESKL,\qA}{ada}{lessor} (1^\secp,1) \out 1] }\\
&\leq \negl(\secp).
\end{align}
\fi
\end{definition}

\begin{remark}
The definition above considers two key oracles $\Oracle{\qKG}$ and $\Oracle{\qKG}^\ast$ unlike~\cref{def:sel_lessor_SKFESKL} to distinguish a functional decryption key which is deleted from one which is not deleted. The oracle $\Oracle{\qKG}^\ast$ accepts a query $(f^\ast,1^{\numct^\ast})$ such that $f^\ast(x_0^\ast) \ne f^\ast(x_1^\ast)$. We need to distinguish these two types of queries since we need to know whether $f(x_0^\ast) = f(x_1^\ast)$ or not when an adversary sends a query in the security proof of the construction below. It is not clear whether we can achieve a stronger adaptive security where adversaries access a single key generation oracle as~\cref{def:sel_lessor_SKFESKL}. We leave it as an open question.
\end{remark}

We construct an adaptively secure SKFE scheme with secure key leasing $\SKFESKL=(\Setup,\qKG,\Enc,\allowbreak \qDec,\qcert,\Vrfy)$ using the following tools:
\begin{itemize}
\item A selectively secure SKFE scheme with secure key leasing $\Sel.\SKFE=\Sel.(\Setup,\allowbreak \qKG, \Enc, \qDec,\allowbreak \qcert,\Vrfy)$.
\item An adaptively single-ciphertext secure SKFE scheme $\OneCT{}.\SKFE=\OneCT{}.(\Setup,\allowbreak\KG,\Enc,\Dec)$.
\item A pseudorandom-secure SKE scheme $\SKE=\SKE.(\E,\D)$.
\item A PRF $\prf$.
\end{itemize}

Let $\ell_1$ and $\ell_2$ be the length of $\SKE$ ciphertexts and the length of random tags, respectively.
The definition of adaptive lessor security is a natural adaptive variant of~\cref{def:sel_lessor_SKFESKL}.
The description of $\SKFESKL$ is as follows. 
\begin{description}

 \item[$\Setup(1^\secp)$:] $ $
 \begin{itemize}
 \item Generate $\sel.\msk \gets \Sel.\Setup(1^\secp)$.
 \item Output $\msk \seteq \sel.\msk$.
 \end{itemize}
 \item[$\qKG(\msk,f,1^\numct)$:] $ $
 \begin{itemize}
 \item Parse $\sel.\msk \gets \msk$.
 \item Generate $\ct_\ske \gets \zo{\ell_1}$ and choose $\gtag \chosen \zo{\ell_2}$.
 \item Compute $(\sel.\qsk_\funcg,\sel.\vk) \gets \Sel.\qKG(\sel.\msk,\funcg[f,\ct_\ske,\tau],1^\numct)$, where the circuit $\funcg$ is described in~\cref{fig:absv-conversion}.
 \item Output $\qsk_f \seteq \sel.\qsk_{\funcg}$ and $\vk \seteq \sel.\vk$.
 \end{itemize}
 \item[$\Enc(\msk,x)$:] $ $
 \begin{itemize}
 \item Parse $\sel.\msk \gets \msk$.
 \item Generate $\onect.\msk \gets \OneCT{}.\Setup(1^\secp)$ and choose $\prfkey \chosen \zo{\secp}$.
 \item Generate $\onect.\ct \gets \OneCT{}.\Enc(\onect.\msk,x)$.
 \item Generate $\sel.\ct \gets \Sel.\Enc(\sel.\msk,(\onect.\msk,\prfkey,0^\secp,0))$.
 \item Output $\ct_x \seteq (\onect.\ct,\sel.\ct)$.
 \end{itemize}
\item[$\qDec(\qsk_f,\ct_x)$:] $ $
\begin{itemize}
\item Parse $\sel.\qsk_\funcg \gets \qsk_f$ and $(\onect.\ct,\sel.\ct) \gets \ct_x$.
\item Compute $\onect.\sk_f \gets \Sel.\qDec(\sel.\qsk_{\funcg},\sel.\ct)$.
\item Output $y \gets \OneCT{}.\Dec(\onect.\sk_f, \onect.\ct)$.
\end{itemize}
\item[$\qcert(\qsk_f)$:] $ $
\begin{itemize}
\item Parse $\sel.\qsk_f \gets \qsk_f$.
\item Output $\cert \seteq \sel.\cert\gets\Sel.\qcert(\sel.\qsk_{f})$.
\end{itemize}
\item[$\Vrfy(\vk,\cert)$:] $ $
\begin{itemize}
    \item Parse $\sel.\vk = \vk$ and $\sel.\cert = \cert$.
\item Output $\top/\bot\gets\Sel.\Vrfy(\sel.\vk,\sel.\cert)$.
\end{itemize}
\end{description}

\protocol
{Key Generation Circuit $\funcg [f,\ct_\ske,\gtag](\onect.\msk,\prfkey,K_\ske,\beta)$}
{Description of $\funcg[f,\ct_\ske,\gtag]$.
}
{fig:absv-conversion}
{
\ifnum\llncs=1
\scriptsize
\else
\fi
\begin{description}
\setlength{\parskip}{0.3mm} 
\setlength{\itemsep}{0.3mm} 
\item[Hardwired:] function $f$, SKE ciphertext $\ct_\ske$, tag $\gtag$.
\item[Input:] master secret key $\onect.\msk$, PRF key $\prfkey$, SKE secret key $K_\ske$, bit $\beta$.
\end{description}
\begin{enumerate}
\setlength{\parskip}{0.3mm} 
\setlength{\itemsep}{0.3mm} 
\item If $\beta =0$, output $\onect.\sk_f \gets \OneCT{}.\KG(\onect.\msk,f;\prf_\prfkey(\gtag))$.
\item Else, output $\SKE.\D(K_\ske,\ct_\ske)$.
\end{enumerate}
}

We emphasize that $\OneCT{}.\SKFE$ is a classical standard SKFE scheme, and $\OneCT{}.\KG$ is a classical algorithm.
This does not spoil the lessor security of $\Sel.\SKFE$ because $\onect.\sk_f$ generated by the circuit $\funcg$ depends on $\onect.\msk$ and $\prfkey$ from the encryption algorithm and $\gtag$ from the key generation algorithm.
At the decryption phase, we can obtain $\onect.\sk_f$ under $\onect.\msk$ from $\Sel.\qKG(\sel.\msk,\funcg,1^\numct)$ and
\[(\OneCT{}.\Enc(\onect.\msk,x),\Sel.\Enc(\sel.\msk,(\onect.\msk,\prfkey,0^\secp,0))).
\]
We can keep a copy of $\onect.\sk_f$. However, it does not help us to decrypt another ciphertext
\[(\OneCT{}.\Enc(\onect.\msk^\prime,x^\prime),\Sel.\Enc(\sel.\msk,(\onect.\msk^\prime,\prfkey^\prime,0^\secp, 0))).
\]
Note that $\onect.\msk$ is freshly generated at the encryption phase.
That is, a functional decryption key $\onect.\sk_f$ is specific to only one pair of $\qsk_f$ and $\ct_x$

Thus, after we securely delete $\qsk_f$ by using the deletion algorithm $\Sel.\qcert$ of $\Sel.\SKFE$, we can no longer decrypt a fresh ciphertext generated after the deletion. Since the deletion power is preserved, we can upgrade the selective security of $\Sel.\SKFE$ to adaptive security by leveraging the adaptive security of $\OneCT{}.\SKFE$ as the transformation by Ananth et al.~\cite{C:ABSV15}.

\begin{theorem}
If $\Sel.\SKFE$ is selectively lessor secure, $\OneCT{}.\SKFE$ is adaptively single-ciphertext secure and collusion-resistant, $\SKE$ is pseudorandom-secure, and $\prf$ is a pseudorandom function, $\SKFESKL$ above is adaptively lessor secure.
\end{theorem}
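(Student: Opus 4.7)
The plan is to follow the ABSV-style ``trojan'' transformation, adapted to the secure-key-leasing setting. The key observation is that $\funcg[f,\ct_\ske,\gtag]$ has two branches: an honest branch (when $\beta=0$) reconstructing $\onect.\sk_f$ from $(\onect.\msk,\prfkey)$ via the PRF, and a trojan branch (when $\beta=1$) outputting $\SKE.\D(K_\ske,\ct_\ske)$. I will use the selective lessor security of $\Sel.\SKFE$ to switch $\sel.\ct^*$ into trojan mode; once there, the only remaining secret is $\onect.\msk^*$ inside $\onect.\ct^*$, which is protected by the adaptive single-ciphertext collusion-resistant security of $\OneCT{}.\SKFE$. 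The $\SKE$ pseudorandomness and PRF security will serve to align the key-side $\ct_\ske$ values with the challenge-side $\onect.\sk_f^*$ values so that, right before the mode switch, the trojan branch reproduces exactly the honest functional output.

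Concretely, starting from the real game $\hybi{0}$ at $\coin=0$, I would first move to $\hybi{1}$ which pre-samples $(\onect.\msk^*,\prfkey^*,K_\ske)$ at setup and uses them for the eventual challenge ciphertext; this is a purely syntactic change. In $\hybi{2}$, for every $\Oracle{\qKG}$ query $f$ (the non-deleted keys), I would replace the uniform $\ct_\ske$ hardwired inside $\funcg$ by $\SKE.\E(K_\ske,\onect.\sk_f^*)$ where $\onect.\sk_f^*\seteq\OneCT{}.\KG(\onect.\msk^*,f;\prf_{\prfkey^*}(\gtag))$, leaving $\Oracle{\qKG}^\ast$ queries with a uniform $\ct_\ske$; since $K_\ske$ is still hidden at this point, indistinguishability follows from pseudorandomness of $\SKE$ ciphertexts. $\hybi{3}$ switches $\sel.\ct^*$ from encoding $(\onect.\msk^*,\prfkey^*,0^\secp,0)$ to encoding $(0^\secp,0^\secp,K_\ske,1)$, by a reduction to the selective lessor security of $\Sel.\SKFE$ that declares these two vectors as its selective challenge pair and relays $\qA$'s $\Oracle{\qKG}$ and $\Oracle{\qKG}^\ast$ queries through $\Sel.\SKFE$'s key oracle on $\funcg[f,\ct_\ske,\gtag]$, forwarding verification queries to $\Sel.\SKFE$'s $\Oracle{\Vrfy}$. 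By construction of $\ct_\ske$ in $\hybi{2}$, every $\Oracle{\qKG}$-generated $\funcg$ evaluates to the same value $\onect.\sk_f^*$ on both $\Sel.\SKFE$-challenge plaintexts; the $\Oracle{\qKG}^\ast$-generated $\funcg$'s (whose equality can fail) are guaranteed by the adaptive lessor game to be deleted before the challenge, which exactly matches the deletion constraint inside $\Sel.\SKFE$'s selective lessor game. In $\hybi{4}$ I would apply PRF security to replace $\prf_{\prfkey^*}$ by a truly random function, so that each $\onect.\sk_f^*$ is generated with uniform coins on the distinct tags. Finally in $\hybi{5}$ I would switch $\onect.\ct^*$ from $x_0^*$ to $x_1^*$ by reducing to adaptive single-ciphertext security of $\OneCT{}.\SKFE$: the reduction obtains $\onect.\sk_f^*$ from $\OneCT{}$'s key oracle for each $\Oracle{\qKG}$ query (legal since $f(x_0^*)=f(x_1^*)$ for such queries), locally forms $\ct_\ske=\SKE.\E(K_\ske,\onect.\sk_f^*)$, and uses uniform $\ct_\ske$ for $\Oracle{\qKG}^\ast$ queries without any $\OneCT{}$ interaction. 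A mirror sequence of hybrids then unwinds everything to the real game at $\coin=1$.

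The main obstacle will be the dual treatment of $\Oracle{\qKG}$ and $\Oracle{\qKG}^\ast$. The latter can query $f$ with $f(x_0^*)\ne f(x_1^*)$, so its $\funcg$ fails the functional-equivalence check that would naively be required both when switching $\sel.\ct^*$ via $\Sel.\SKFE$ and when switching $\onect.\ct^*$ via $\OneCT{}.\SKFE$. The resolution is that such keys are forced by the adaptive lessor game to be deleted before the challenge: in the $\Sel.\SKFE$ reduction they are legitimately routed through the lessor-security game's deletion path, with certificates forwarded to $\Sel.\SKFE$'s $\Oracle{\Vrfy}$; and in the $\OneCT{}.\SKFE$ reduction they never need to touch $\OneCT{}$'s key oracle at all, since a uniform $\ct_\ske$ suffices once deletion has rendered the key useless against $\ct^*$. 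Making the abort bookkeeping of the two nested games stay consistent across the forward and backward hybrids---in particular ensuring the verification-oracle simulation correctly tracks which certificates have been accepted and that the ``output $0$'' branches of both games are triggered on exactly the same adversarial behavior---is the most delicate part of the argument.
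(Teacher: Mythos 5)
Your proposal is correct and follows essentially the same route as the paper, which itself defers to the ABSV trojan-method proof and explicitly flags the one adaptation you identify: embedding $\onect.\sk_f$ into $\ct_\ske$ for $\Oracle{\qKG}$ queries while leaving $\ct_\ske$ junk for $\Oracle{\qKG}^\ast$ queries, with the two-oracle formulation of adaptive lessor security existing precisely so the reduction knows which path to take. Your hybrid ordering (SKE alignment, then the $\Sel.\SKFE$ trojan switch with deleted keys routed through its verification oracle, then PRF, then the $\OneCT{}.\SKFE$ switch) and your handling of the deletion bookkeeping match the intended argument.
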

Since the proof is almost the same as that by Ananth et al.~\cite[Theorem 4 in the eprint ver.]{C:ABSV15}, we omit it to avoid the replication.
Note that the reduction needs to embed $\onect.\sk_f \gets \OneCT{}.\KG(\onect.\msk,f;\prf_\prfkey(\gtag))$ into $\ct_{\ske}$ for a key query such that $f(x_0^\ast)=f(x_1^\ast)$, but $\ct_{\ske}$ should be a junk ciphertext for a key query such that $f(x_0^\ast)\ne f(x_1^\ast)$.
The reduction knows which simulation-way is correct since~\cref{def:ada_lessor_SKFESKL} distinguishes two types of queries.

\subsection{Simulation-based Security for Secure Key Leasing}\label{sec:simulation_FESKL_security}
Although we present indistinguishability-based security definitions for SKFE with secure key leasing in~\cref{sec:def_SKFE_SKL}, we can also consider simulation-based security definitions for SKFE with secure key leasing.
We present the definition of adaptive lessor simulation-security.

\begin{definition}[Adaptive Lessor Simulation-security]
Let $\SKFESKL$ be an SKFE scheme with secure key leasing.
For a stateful QPT adversary $\qA=(\qA_1,\qA_2,\qA_3)$ and a stateful QPT Simulator $\qS=(\qSEnc,\qSKG)$, we consider the two experiments described in~\cref{fig:FE_SKL_adsim_sec_experiments},
where
\begin{itemize}
\item $\qA$ is allowed to make total at most $\numkey$ queries to $\qKG(\msk,\cdot,\cdot)$ (resp.  $\qKG(\msk,\cdot,\cdot)$ and $\Oracle{\qSKG}(\cdot,\cdot)$) in $\sfreal{ad}{sim}_{\SKFESKL,\qA}(\sep)$ (resp. $\sfsim{ad}{sim}_{\SKFESKL,\qA,\qS}(\sep)$),
\item $\qA$ is allowed to make total at most $\numkey^\ast$ queries to $\Oracle{\qKG}^\ast(\cdot,\cdot)$, which is the same as $\qKG(\msk,\cdot,\cdot)$ except that it also adds $(f^\ast,1^{\numct^\ast},\vk^\ast,\bot)$ to $\List{\qKG}$ when it is invoked,
\item $\qA$ is allowed to make queries to $\Oracle{\Vrfy}(\cdot,\cdot)$, which is given $(f^\ast,\cert^\ast)$, it finds an entry $(f^\ast,1^{\numct^\ast},\vk^\ast,M)$ from $\List{\qKG}$, and does the following: If there is no such entry, it returns $\bot$. If $\top=\Vrfy(\vk^\ast,\cert^\ast)$ and the number of queries to $\Oracle{\Enc}$ at this point is less than $\numct^\ast$, it returns $\top$ and updates the entry into $(f^\ast,1^{\numct^\ast},\vk^\ast,\top)$. Otherwise, it returns $\bot$,
\item $\Oracle{\qSKG}(f,1^\numct)=\qSKG(\state,f,1^{\numct},f(x^*))$.
\end{itemize}

\medskip
\begin{figure}
\centering
 \begin{tabular}{r@{\ }p{0.44\textwidth}r@{\ }p{0.44\textwidth}}\toprule
 \Heading{$\sfreal{ad}{sim}_{\SKFESKL,\qA}(\sep)$}{$\sfsim{ad}{sim}_{\SKFESKL,\qA,\qS}(\sep)$}\midrule
   \setcounter{expitem}{0}
    \expitem{$1^{\numkey+\numkey^\ast}\gets\qA_1(1^\secp)$}{$1^{\numkey+\numkey^\ast}\gets\qA_1(1^\secp)$}
   \expitem{$\msk \gets \Setup(1^\secp,1^{\numkey+\numkey^\ast})$}{$\msk \gets \Setup(1^\secp,1^{\numkey+\numkey^\ast})$}
                   \expitem{$x^\ast \gets \qA_2^{\Enc(\msk,\cdot),\qKG(\msk,\cdot,\cdot),\Oracle{\qKG}^\ast(\cdot,\cdot),\Oracle{\Vrfy}(\cdot,\cdot)}$}{$x^\ast \gets \qA_2^{\Enc(\msk,\cdot),\qKG(\msk,\cdot,\cdot),\Oracle{\qKG}^\ast(\cdot,\cdot),\Oracle{\Vrfy}(\cdot,\cdot)}$}
                       \expitem{Output $0$ if there exists $(f^\ast,1^{\numct^\ast},\vk^\ast,\bot)$ in $\List{\qKG}$. Otherwise, go to the next step.}{Output $0$ if there exists $(f^\ast,1^{\numct^\ast},\vk^\ast,\bot)$ in $\List{\qKG}$. Otherwise, go to the next step.}
                       \expitem{}{Let $\calQ$ be the query/answer list for $\Enc(\msk,\cdot)$}
                       \expitem{}{Let $(f_i,1^{\numcti{i}})_{i\in[\numkey]}$ be the queries for $\KG(\msk,\cdot,\cdot)$}
                       \expitem{}{$y_i:=f_i(x^*)$ for every $i\in[\numkey]$}
                       \expitem{$\ct^*\gets\Enc(\msk,x^*)$}{$(\ct^*,\state)\gets\qSEnc(\msk,\calQ,(f_i,1^{\numcti{i}},y_i)_{i\in[\numkey]})$}
                       \expitem{Output $b\gets \qA_3(\ct^*)^{\Enc(\msk,\cdot),\qKG(\msk,\cdot,\cdot)}$}{Output $b\gets \qA_3(\ct^*)^{\Enc(\msk,\cdot),\Oracle{\qSKG}(\cdot,\cdot)}$}
            \bottomrule
 \end{tabular}
   \caption{Security experiments for adaptively lessor simulation-secure SKFE-SKL}\label{fig:FE_SKL_adsim_sec_experiments}
\end{figure}

\medskip

We say that $\SKFESKL$ is adaptively lessor simulation-secure if there exists a QPT simulator $\qS$ such that for any QPT adversary $\qA$, it satisfies that
\[
\abs{\Pr[\sfreal{ad}{sim}_{\SKFESKL,\qA}(\sep)=1] - \Pr[\sfsim{ad}{sim}_{\SKFESKL,\qA,\qS}(\sep)=1]} \le \negl(\sep).
\]

\end{definition}
This is a natural simulation-based variant of adaptive lessor (indistinguishability-)security.
The simulator $\qS=(\qSEnc,\qSKG)$ does not have $f^{\ast}_j(x^\ast)$, and the simulated ciphertext does not reveal information about $f^\ast_j(x^\ast)$.
\begin{remark}
We consider two key oracles $\qKG$ and $\Oracle{\qKG}^\ast$ as~\cref{def:ada_lessor_SKFESKL} to distinguish two types of queries.
We can achieve this simulation-based security since we use adaptively lessor secure SKFE-SKL in the sense of~\cref{def:ada_lessor_SKFESKL} as a building block below. It is also an open question to achieve a stronger adaptive simulation-based security where adversaries access a single key generation oracle as~\cref{def:sel_lessor_SKFESKL}.
\end{remark}

De Caro, Iovino, Jani, O'Neil, Paneth, and Persiano~\cite{C:DIJOPP13} present a transformation from indistinguishability-based secure FE to simulation-based secure FE. The transformation works if the number of key queries before a challenge ciphertext is given is a-priori bounded.
We can apply almost the same transformation to SKFE with secure leasing since we modify a plaintext in the ciphertext and a function embedded in a functional decryption key.

We construct an adaptively lessor simulation-secure SKFE scheme with secure key leasing $\adaSKL=\adaSKL.(\Setup,\qKG,\Enc,\qDec,\qcert,\Vrfy)$ using the following tools:
\begin{itemize}
\item An adaptively lessor (indistinguishability-)secure SKFE scheme with secure key leasing $\indSKL=\indSKL.(\Setup, \qKG, \Enc,\allowbreak \qDec,\qcert,\Vrfy)$.
\item A pseudorandom-secure SKE scheme $\SKE=\SKE.(\E,\D)$.
\end{itemize}
Let $\ell_1$, $\ell_2$, and $m$ be the length of $\SKE$ ciphertexts, the length of random tags, and the output length, respectively.
The description of $\adaSKL$ is as follows. 
\begin{description}

 \item[$\adaSKL.\Setup(1^\secp)$:] $ $
 \begin{itemize}
 \item Generate $\ind.\msk \gets \indSKL.\Setup(1^\secp)$.
 \item Output $\msk \seteq \ind.\msk$.
 \end{itemize}
 \item[$\adaSKL.\qKG(\msk,f,1^\numct)$:] $ $
 \begin{itemize}
 \item Parse $\ind.\msk\gets \msk$.
 \item Generate $\ct_\ske \gets \zo{\ell_1}$ and choose $\gtag \chosen \zo{\ell_2}$.
 \item Compute $(\ind.\qsk_\funcT,\ind.\vk) \gets \indSKL.\qKG(\sel.\msk,\funcT[f,\ct_\ske,\tau],1^\numct)$, where the circuit $\funcT$ is described in~\cref{fig:sim-conversion}.
 \item Output $\qsk_f \seteq \ind.\qsk_{\funcT}$ and $\vk \seteq \ind.\vk$.
 \end{itemize}
 \item[$\adaSKL.\Enc(\msk,x)$:] $ $
 \begin{itemize}
 \item Parse $(\ind.\msk, K_\ske) \gets \msk$.
 \item Set $x^\prime \seteq (x,0^{\secp},(0^{\ell_2},0^{m}),\ldots,(0^{\ell_2},0^m),0)$, where $m$ is the output length of functions.
 \item Generate $\ind.\ct \gets \indSKL.\Enc(\ind.\msk,x^\prime)$.
 \item Output $\ct_x \seteq \ind.\ct$.
 \end{itemize}
\item[$\adaSKL.\qDec(\qsk_f,\ct_x)$:] $ $
\begin{itemize}
\item Parse $\ind.\qsk_\funcT \gets \qsk_f$ and $\ind.\ct \gets \ct_x$.
\item Output $y \gets \indSKL.\qDec(\ind.\qsk_\funcT, \ind.\ct)$.
\end{itemize}
\item[$\adaSKL.\qcert(\qsk_f)$:] $ $
\begin{itemize}
\item Parse $\ind.\qsk_f \gets \qsk_f$.
\item Output $\cert \seteq \ind.\cert\gets\indSKL.\qcert(\ind.\qsk_{f})$.
\end{itemize}
\item[$\adaSKL.\Vrfy(\vk,\cert)$:] $ $
\begin{itemize}
    \item Parse $\ind.\vk = \vk$ and $\ind.\cert = \cert$.
\item Output $\top/\bot\gets\indSKL.\Vrfy(\ind.\vk,\ind.\cert)$.
\end{itemize}
\end{description}

\protocol
{Trapdoor Circuit $\funcT [f,\ct_\ske,\gtag](x,K_\ske,(\gtag_1,y_1),\ldots,(\gtag_q,y_q),\beta)$}
{Description of $\funcT[f,\ct_\ske,\gtag]$.
}
{fig:sim-conversion}
{
\ifnum\llncs=1
\scriptsize
\else
\fi
\begin{description}
\setlength{\parskip}{0.3mm} 
\setlength{\itemsep}{0.3mm} 
\item[Hardwired:] function $f$, SKE ciphertext $\ct_\ske$, tag $\gtag$.
\item[Input:] plaintext $x$, SKE secret key $K_\ske$, $q$ pairs of tag and output $(\gtag_1,y_1),\ldots,(\gtag_q,y_q)$, bit $\beta$.
\end{description}
\begin{enumerate}
\setlength{\parskip}{0.3mm} 
\setlength{\itemsep}{0.3mm} 
\item If $\beta =1$, do the following
\begin{itemize}
 \item If there exists $i$ such that $\gtag=\gtag_i$, output $y_i$.
 \item Else output $y^\prime \gets \SKE.\D(K_\ske,\ct_\ske)$.
 \end{itemize} 
\item Else, output $f(x)$.
\end{enumerate}
}

As we see above, we expand the plaintext space and use the trapdoor circuit $\funcT$ as De Caro et al.~\cite{C:DIJOPP13}.
We can construct a QPT simulator as follows.
\begin{enumerate}
\item Generates $\ind.\msk \gets \indSKL.\Setup(1^\secp)$ and $K_\ske \chosen \zo{\secp}$.
\item Receives $\qsk_{f_i} = (\ind.\qsk_{f_i},\ct_{\ske,i},\gtag_i)$ for $i\in [q_{\mathsf{pre}}]$ and outputs for pre-challenge key queries, that is, $f_1(x^\ast),\ldots,f_{q_{\mathsf{pre}}}(x^\ast)$.
\item Generates a challenge ciphertext $\indSKL.\Enc(\ind.\msk,(0,K_\ske,(\gtag_1,f_1(x^\ast)),\ldots, \allowbreak(\gtag_{q_{\mathsf{pre}}},f_{q_{\mathsf{pre}}}(x^\ast)),1))$.
\item Simulates functional decryption keys for post-challenge key queries as follows. When an output value $f^\prime_i(x^\ast)$ for a key query $f^\prime_i$ is given, generates $(\ind.\qsk_\funcT,\ind.\vk) \gets \Sel.\qKG(\sel.\msk,\funcT[f^\prime_i,\ct^\prime_{\ske,i},\gtag^\prime_i],1^\numct)$ where $\ct^\prime_{\ske,i} \gets \SKE.\E(K_\ske,f^\prime_i(x^\ast))$ and $\gtag^\prime_i \chosen \zo{\ell_2}$.
\end{enumerate}

Note that the simulator should not have $f^\ast_j(x^\ast)$ for any $j\in [\numkey^\ast]$ since $\setbk{\qsk_{f^\ast_j}}_{j\in [\numkey^\ast]}$ are deleted and the challenge ciphertext should not reveal information about $\setbk{f^\ast_j(x^\ast)}_{j\in[\numkey^\ast]}$, where $\setbk{f^\ast_j}_{j\in[\numkey^\ast]}$ are the challenge functions, and the simulation-security is satisfied.

Since the flag $\beta=1$ in the simulated challenge ciphertext, the trapdoor circuit $\funcT$ works at the trapdoor branch.
For functional decryption keys before the challenge ciphertext, $\funcT$ works at the first item in the $\beta=1$ branch since $\gtag_i$ is a tag embedded in the functional decryption keys for pre-challenge key queries. So, the output of $\funcT$ is an embedded $f_i(x^\ast)$. This is consistent.
For functional decryption keys after the challenge ciphertext, $\funcT$ works at the second item in the $\beta=1$ branch since $\gtag^\prime_i$ is uniformly random and must be different from $\gtag_i$ except negligible probability. In addition, $y^\prime = f^\prime_i(x^\ast)$ since $\ct^\prime_\ske \gets \SKE.\E(K_\ske,f^\prime_i(x^\ast))$. This is consistent. Thus, the simulated ciphertext does not include information about $\setbk{f^\ast_j(x^\ast)}_{j\in[\numkey^\ast]}$.

In reducing to the adaptive lessor (indistinguishability)-security, the simulator can simulate the key generation and encryption oracles in the simulation-based game by using the oracles in the indistinguishability-based game.

\begin{theorem}
If $\indSKL$ is adaptively lessor (indistinguishability-)secure and $\SKE$ is pseuodrandom-secure, $\adaSKL$ is adaptively lessor simulation-secure.
\end{theorem}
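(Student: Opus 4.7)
The plan is a hybrid argument that transforms $\sfreal{ad}{sim}_{\adaSKL,\qA}$ into $\sfsim{ad}{sim}_{\adaSKL,\qA,\qS}$ in two conceptually distinct stages: first use pseudorandomness of $\SKE$ to replace each post-challenge hard-wired string $\ct'_{\ske,i}$ with a genuine encryption $\SKE.\E(K_\ske, f'_i(x^*))$ under a freshly sampled key $K_\ske$, and then invoke the adaptive lessor indistinguishability security of $\indSKL$ to switch the challenge plaintext from the real input $x' = (x^*, 0^\secp, (0^{\ell_2}, 0^m), \ldots, 0)$ to the trapdoor input $x'' = (0, K_\ske, (\gtag_1, f_1(x^*)), \ldots, (\gtag_{q_{\mathsf{pre}}}, f_{q_{\mathsf{pre}}}(x^*)), 1)$, where the $\gtag_i$ are exactly the tags drawn for the pre-challenge $\qKG$ queries.

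Concretely, $\hybi{0}$ is the real experiment and $\hybi{1}$ merely samples $K_\ske \chosen \zo{\secp}$ upfront without otherwise using it, so $\hybi{0}$ and $\hybi{1}$ are identically distributed. Then a sequence of sub-hybrids, indexed by the post-challenge $\qKG$ queries in the order they arrive, replaces each $\ct'_{\ske,i}$ from a uniform string to $\SKE.\E(K_\ske, f'_i(x^*))$; each step is computationally indistinguishable via pseudorandomness of $\SKE$ under $K_\ske$. Crucially, $K_\ske$ is never revealed in these sub-hybrids because the challenge plaintext still has flag $\beta=0$, so the decryption branch of $\funcT$ is never executed by any key the adversary holds. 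After all sub-hybrids the experiment differs from $\sfsim{ad}{sim}_{\adaSKL,\qA,\qS}$ only in the challenge plaintext.

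The final hop switches the challenge plaintext from $x'$ to $x''$ using adaptive lessor indistinguishability security of $\indSKL$ with challenge pair $(x', x'')$. The reduction samples every $\gtag_i, \gtag'_i$ and every $\ct_{\ske,i}$ itself, wraps every function queried by $\qA$ in the corresponding $\funcT[\cdot,\cdot,\cdot]$ circuit, and forwards to its own $\qKG$, $\Oracle{\qKG}^\ast$, $\Oracle{\Vrfy}$, and $\Enc$ oracles. Admissibility of the IND game requires $\funcT[f, \ct, \gtag](x') = \funcT[f, \ct, \gtag](x'')$ for every $\qKG$ query: for a pre-challenge query the tag $\gtag_i$ embedded in the key coincides with the tag in $x''$ paired with $f_i(x^*)$, matching the $\beta=0$ evaluation on $x'$; for a post-challenge query the fresh random $\gtag'_i$ avoids every tag embedded in $x''$ (except with negligible probability by a union bound over polynomially many tags in $\zo{\ell_2}$), so the $x''$-evaluation takes the decryption branch and outputs $\SKE.\D(K_\ske, \SKE.\E(K_\ske, f'_i(x^*))) = f'_i(x^*)$ by correctness of $\SKE$, matching the $x'$-evaluation. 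The $\Oracle{\qKG}^\ast$ queries are unconstrained because each such key must be deleted before the challenge, with the deletion check simulated by forwarding certificates through the reduction's own $\Oracle{\Vrfy}$ oracle.

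The main obstacle will be the tight ordering of hybrids: every post-challenge SKE ciphertext must already be a real $\SKE$ encryption of $f'_i(x^*)$ \emph{before} the IND hop, because otherwise the IND admissibility condition fails on post-challenge $\qKG$ queries. Since post-challenge queries are chosen adaptively, the pseudorandomness replacement has to be applied dynamically within the hybrid, each time a new post-challenge $\qKG$ query is issued, and the reduction to pseudorandomness of $\SKE$ must answer the verification oracle and future $\qKG$ queries consistently even though it does not know $K_\ske$ itself. A secondary, more routine, issue is bounding the tag-collision probability by choosing $\ell_2$ super-logarithmic. Once all hybrids are composed, the final distribution is syntactically that of $\sfsim{ad}{sim}_{\adaSKL,\qA,\qS}$ with the simulator $\qS = (\qSEnc, \qSKG)$ described above the theorem statement, completing the proof.
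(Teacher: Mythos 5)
Your proposal is correct and follows essentially the same route as the paper, which itself omits the details and defers to the De~Caro et al.\ transformation: the simulator you instantiate, the branch-consistency analysis of $\funcT$ on pre- versus post-challenge tags, and the two-stage hybrid (SKE pseudorandomness replacement of the post-challenge $\ct'_{\ske,i}$ while the challenge flag is still $\beta=0$, followed by a single hop on the $\indSKL$ challenge plaintext) are exactly the intended argument. Your observations about the ordering constraint and the negligible tag-collision probability over $\zo{\ell_2}$ are the right places to be careful, and they are handled as you describe.
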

Since the proof is almost the same as that by De Caro et al.~\cite[Theorem 14 in the eprint ver.]{C:DIJOPP13}, we omit it to avoid the replication.

\else
	\newpage
	 	\appendix
	 	\setcounter{page}{1}
 	{
	\noindent
 	\begin{center}
	{\Large SUPPLEMENTAL MATERIALS}
	\end{center}
 	}
	\setcounter{tocdepth}{2}
	 	\ifnum\noaux=1
 	\else
{\color{red}{We attached the full version of this paper as a separated file (auxiliary supplemental material) for readability. It is available from the program committee members.}}
\fi

	\setcounter{tocdepth}{1}
	\tableofcontents

	\fi
	\else
\fi
	
\end{document}